\declaretheorem[numberwithin=section,refname={Theorem,Theorems},Refname={Theorem,Theorems}]{theorem}
\declaretheorem[numberlike=theorem]{lemma}
\declaretheorem[numberlike=theorem]{proposition}
\declaretheorem[numberlike=theorem]{corollary}
\declaretheorem[numberlike=theorem]{claim}
\declaretheorem[numberlike=theorem]{fact}
\declaretheorem[numberlike=theorem,style=definition]{definition}
\declaretheorem[numberlike=theorem,style=definition]{problem}
\declaretheorem[numberlike=theorem,style=remark]{remark}
\declaretheorem[numberlike=theorem]{observation}
\def\final{0}
\newcommand{\jiale}[1]{{\color{red} [\textbf{Jiale}: #1]}}
\newcommand{\tawei}[1]{{\color{purple} [\textbf{Ta-Wei}: #1]}}
\newcommand{\sidford}[1]{{\color{green} [\textbf{Aaron}: #1]}}
\newcommand{\todo}[1]{{\color{blue} [\textbf{TODO}: #1]}}
\newcommand{\jiale}[1]{}
\newcommand{\tawei}[1]{}
\newcommand{\sidford}[1]{}
\newcommand{\todo}[1]{}
\newcommand{\expnear}{41}
\newcommand{\expalmost}{17}
\DeclarePairedDelimiter{\norm}{\lVert}{\rVert}
\newcommand{\OPT}{\mathsf{OPT}}
\newcommand{\bb}{\boldsymbol{b}}
\newcommand{\bc}{\boldsymbol{c}}
\newcommand{\bd}{\boldsymbol{d}}
\newcommand{\Bf}{\boldsymbol{f}}
\newcommand{\bg}{\boldsymbol{g}}
\newcommand{\bx}{\boldsymbol{x}}
\newcommand{\bw}{\boldsymbol{w}}
\newcommand{\by}{\boldsymbol{y}}
\newcommand{\byz}{\boldsymbol{s}}
\newcommand{\bp}{\boldsymbol{p}}
\newcommand{\br}{\boldsymbol{r}}
\newcommand{\bz}{\boldsymbol{z}}
\newcommand{\bB}{\boldsymbol{B}}
\newcommand{\poly}{\mathrm{poly}}
\newcommand{\supp}{\mathrm{supp}}
\newcommand{\eps}{\varepsilon}
\newcommand{\nnz}{\mathrm{nnz}}
\newcommand{\entropy}{H}
\newcommand{\R}{\mathbb{R}}
\newcommand{\N}{\mathbb{N}}
\newcommand{\Z}{\mathbb{Z}}
\newcommand{\M}{\mathcal{M}}
\newcommand{\B}{\mathcal{B}}
\newcommand{\X}{\mathcal{X}}
\renewcommand{\P}{\mathcal{P}}
\newcommand{\A}{\mathcal{A}}
\renewcommand{\S}{\mathcal{S}}
\renewcommand{\O}{\mathcal{O}}
\renewcommand{\L}{\mathcal{L}}
\newcommand{\otilde}{\widetilde{O}}
\newcommand{\defeq}{\stackrel{\mathrm{{\scriptscriptstyle def}}}{=}}
\newcommand{\init}{\mathsf{init}}
\renewcommand{\output}{\mathsf{output}}
\newcommand{\update}{\mathsf{update}}
\DeclareMathOperator*{\argmax}{argmax}
\Crefname{algocf}{Algorithm}{Algorithms}
\Crefname{claim}{Claim}{Claims} 
\Crefname{problem}{Problem}{Problems}
\Crefname{fact}{Fact}{Facts}
\Crefname{observation}{Observation}{Observations}
\title{Entropy Regularization and Faster Decremental\\ Matching in General Graphs}
\author{
Jiale Chen \\ Stanford University \\ \texttt{jialec@stanford.edu} 
\and
Aaron Sidford \\ Stanford University \\ \texttt{sidford@stanford.edu}
\and
Ta-Wei Tu \\ Stanford University \\ \texttt{taweitu@stanford.edu}}
\date{}
\begin{document}

\begin{titlepage}
  \maketitle \pagenumbering{roman}

  \begin{abstract}
  We provide an algorithm that maintains, against an adaptive adversary, a $(1-\eps)$-approximate maximum matching in $n$-node $m$-edge general (not necessarily bipartite) undirected graph undergoing edge deletions with high probability with (amortized) $O(\poly(\eps^{-1}, \log n))$ time per update. We also obtain the same update time for maintaining a fractional approximate weighted matching (and hence an approximation to the value of the maximum weight matching) and an integral approximate weighted matching in dense graphs.\footnote{Independently and concurrently, Aditi Dudeja obtained new decremental weighted matching results for general graphs~\cite{Dudeja24b}.}
  Our unweighted result improves upon the prior state-of-the-art which includes a $\poly(\log{n}) \cdot 2^{O(1/\eps^2)}$ update time [Assadi--Bernstein--Dudeja 2022] and an $O(\sqrt{m} \eps^{-2})$ update time [Gupta--Peng 2013], and our weighted result improves upon the $O(\sqrt{m}\eps^{-O(1/\eps)}\log{n})$ update time due to [Gupta--Peng 2013].
  
  To obtain our results, we generalize a recent optimization approach to dynamic algorithms from [Jambulapati--Jin--Sidford--Tian 2022]. We show that repeatedly solving entropy-regularized optimization problems yields a lazy updating scheme for fractional decremental problems with a near-optimal number of updates. To apply this framework we develop optimization methods compatible with it and new dynamic rounding algorithms for the matching polytope.
\end{abstract}

  \setcounter{tocdepth}{3}
  \newpage
  \tableofcontents
  \newpage
\end{titlepage}
	
\newpage
\pagenumbering{arabic}

\section{Introduction} \label{sec:intro}

\emph{Dynamic matching} is a fundamental and incredibly well-studied problem in data structure design and dynamic graph algorithms.
In this problem, there is a sequence of modifications, e.g., edge insertion and deletions, to an undirected graph $G = (V, E)$ with vertices $V$ and edges $E$.
The goal is to maintain a \emph{matching $M \subseteq E$}, i.e., a subset of edges that are pairwise disjoint, of approximately maximum size, i.e., $|M| \geq (1 - \eps) M^*(G)$ where $\eps$ is a specified accuracy parameter and $M^*(G)$ is the size of the maximum matching in the current graph, $G$.
In this paper we consider solving this problem with the goal of obtaining improved (amortized) update time per operation.\footnote{Our new algorithms periodically rebuild a fractional matching when its value degrades. Consequently, their update times are (na\"{i}vely) amortized instead of worst-case.
We leave de-amortizing our results as an open problem.}

The dynamic matching problem is also notoriously challenging to solve efficiently.
In the general setting of maintaining such a $(1-\eps)$-approximate maximum matching in a general undirected graph undergoing edge insertions and deletions, the state-of-the-art includes
an $O\left(\sqrt{m}\eps^{-2}\right)$ update time algorithm of \cite{GuptaP13}
for all $\eps$ and
an $n / (\log^{*}{n})^{\Omega(1)}$ update time algorithm of \cite{AssadiBKL23} for $\eps=\Omega((\log^* n)^{-c})$ where $c>0$ is some small constant.
Subsequent to the initial release of our paper, the work of \cite{AssadiK24}, building on top of \cite{BehnezhadG24}, obtained an $O\left(n^{o(1)} \cdot \mathsf{ORS}(n, \Theta_{\eps}(n))\right)$ update time algorithm for constant $\eps$, where $\mathsf{ORS}(n, r)$ is the maximum possible density of the so-called \emph{ordered Ruzsa-Szemer\'{e}di Graphs}: the current, state-of-the-art bound is $n^{o(1)} \leq \mathsf{ORS}(n, \Theta_{\eps}(n)) \leq n^{1-o(1)}$ for constant $\eps$~\cite{BehnezhadG24}.
Additional improvements are only known in special cases. For example, when there are vertex (instead of edge) updates to a bipartite graph then \cite{BosekLSZ14,ZhengH23} provide $\widetilde{O}(\eps^{-1})$ time\footnote{In this paper, we use $\widetilde{O}(\cdot)$ to hide $\poly(\log(n), \log(\eps^{-1}))$ factors and $\widehat{O}(\cdot)$ to hide subpolynomial, $(n\eps^{-1})^{o(1)}$ factors in the $O(\cdot)$ notation. $\widetilde{\Omega}(\cdot)$, $\widehat{\Omega}(\cdot)$, $\widetilde{\Theta}(\cdot)$, and $\widehat{\Theta}(\cdot)$ are defined analogously.} algorithms.
Additionally, recent breakthrough results maintain better-than-$2$ approximations to the size of the maximum matching (rather than the matching itself) in near-optimal update time~\cite{Behnezhad23,BhattacharyaKSW23dynamic2,AzarmehrBR24} and better approximations in sub-linear update time~\cite{BhattacharyaKS23dynamic1}. 

Towards explaining the challenging nature of dynamic matching, recent work established conditional lower bounds on the problem's complexity.
To maintain exactly the maximum matching, there are conditional lower bounds on the update times of $\Omega(\sqrt m)$~\cite{AbboudW14,HenzingerKNS15,KopelowitzPP16} and $\Omega(n^{1.407})$~\cite{BrandNS19}. To maintain approximate maximum matchings, very recently, \cite{Liu24} ruled out the possibility of a truly sublinear, $\eps^{-O(1)} n^{1-O(1)}$ update time algorithm in the fully dynamic setting, assuming a new dynamic approximate \textsf{OMv} conjecture. Additionally, assuming only the standard \textsf{OMv} conjecture, \cite{Liu24} ruled out the possibility of truly sublinear update time algorithms for the closely related problem of maintaining a $(1+\eps)$-approximate vertex cover.

Corresponding to the difficulty of dynamic matching, recent work studied the \emph{decremental setting} where starting from an initial graph only edge deletions are permitted until the graph is deleted.
Excitingly, \cite{BernsteinGS20} provided an $\widetilde{O}(\eps^{-4})$ time per update algorithm for this problem. \cite{JambulapatiJST22} then obtained update times of $\widetilde{O}(\eps^{-3})$ and $\widehat{O}(\eps^{-2})$ for the problem of maintaining fractional matchings. \cite{BhattacharyaKSW23rounding} then obtained update times of $\widetilde{O}(\eps^{-3})$ and $\widehat{O}(\eps^{-2})$ for maintaining integral matchings through improved dynamic rounding algorithms for fractional matchings.
Recently, \cite{AssadiBD22} obtained an $\widetilde{O}(1) \cdot 2^{O(1/\eps^{2})}$ update time for general (not necessarily bipartite) graphs. 

Strikingly, these results show that decremental matching can be solved near-optimally in \emph{bipartite graphs} for $\eps = 1/\poly(\log n)$.
However, in \emph{general graphs}, for such $\eps$, the state-of-the-art is still $O(\sqrt{m}\eps^{-2})$ time \cite{GuptaP13}.
A central question motivating this work is whether it is possible to narrow this gap and develop improved decremental matching algorithms for general graphs. 
One of our main results is an affirmative answer to this question; we provide $\widetilde{O}(\eps^{-\expnear})$ and $\widehat{O}(\eps^{-\expalmost})$ update time algorithms for maintaining a $(1 - \eps)$-approximate matching in a general graph undergoing edge deletions.
Our algorithm succeeds with high probability (w.h.p.\footnote{We use \emph{w.h.p.} in this paper as an abbreviation of \emph{with high probability}, meaning the success probability can be made at least $1 - n^{-c}$ for any constant $c > 0$ by increasing the runtime by a constant factor.}) against an adaptive adversary that has access to the internal randomness used by our algorithm and can design its future updates based on that.
This is the first decremental dynamic matching algorithm for general graphs that achieves subpolynomial accuracy and subpolynomial update time simultaneously.\footnote{More precisely, when $\eps = \Theta(1/\poly(\log(n)))$, or even $\eps = \Theta(1/n^{o(1)})$, previous algorithms require $\poly(n)$ or even $\exp(n)$ update times, while our algorithms achieve an $n^{o(1)}$ update time.}

We obtain our results through the development of a general framework that also allows us to obtain improved runtimes in the decremental setting of the more challenging dynamic \emph{weighted} matching as well.
In this problem we must maintain a $(1-\eps)$-approximate maximum weight matching given polynomially bounded edge weights $\bw \in \N^E$. 
Prior to our work, the state-of-the-art was an $\widetilde{O}(\sqrt{m}\eps^{-O(1/\eps)})$ time ~\cite{GuptaP13} and an $\widetilde{O}(\eps^{-3})$ time per update algorithm for maintaining fractional matchings in bipartite graphs \cite{BhattacharyaKS23} (which also applied to a broader class of partially-dynamic packing/covering linear programs).
In contrast, we provide algorithms which decrementally maintain $(1-\eps)$-approximate fractional matchings in general weighted graphs in $\widetilde{O}(\eps^{-\expnear})$ and $\widehat{O}(\eps^{-\expalmost})$ time per update and integral $(1-\eps)$-approximate matchings in dense graphs in the same update times.
We also provide an $\widetilde{O}(\eps^{-O(1/\eps)})$ update time integral matching algorithm, which is not limited to dense graphs, using a weight reduction framework of \cite{GuptaP13}.
These are the first near-optimal partially dynamic algorithms for weighted matchings with edge updates.

\subsection{Approach}

\paragraph{Lazy Updates and Congestion Balancing.} 
Our algorithms follow a natural, time-tested \emph{lazy} approach to solving dynamic matching problems~\cite{GuptaP13,BernsteinS15,BernsteinS16,BernsteinGS20,JambulapatiJST22,AssadiBD22}. Broadly, we compute an approximate fractional matching, delete edges from it as needed, and then,  when the updates cause the solution to change, we \emph{rebuild}, computing a new fractional matching. By efficiently computing the fractional matching, computing fractional matchings that limit the number of rebuilds, and efficiently rounding, we obtain our dynamic matching algorithms. 

More specifically, our algorithms follow a template common to \cite{BernsteinGS20, JambulapatiJST22,AssadiBD22}
(see \Cref{sec:tech:lazy} for a more precise description). In this framework, we first, compute a $(1 - \delta)$-approximate fractional matching. Then, when edges are deleted, the corresponding fractional assignment on those edges is removed as well. Once the value of the fractional matching decreases by $(1-\delta)$ multiplicatively, a new  $(1 - \delta)$-approximate fractional matching is computed and the process is repeated; we call each computation of a $(1 - \delta)$-approximate fractional matching a \emph{rebuild}. Any algorithm following this \emph{lazy update framework} clearly maintains a $(1 - \delta)^2$-approximate fractional matching and consequently, by picking, e.g., $\delta = \eps / 2$, this algorithm maintains a $(1-\eps)$-approximate fractional solution. What is perhaps unclear, is how to make this approach efficient. 

The update time of an algorithm following the lazy update framework is governed by:\footnote{Additionally, the algorithm must remove fractional assignments to the deleted edges from the fractional matching, but that is trivial to implement in $O(1)$ per update.}
\begin{enumerate}
    \item \label{item:lazy:rebuild_count} the number of rebuilds, i.e., the number of approximate fractional matchings computed,
    \item \label{item:lazy:rebuild_cost} the cost per rebuilding, i.e., the cost of computing each fractional matching, and
    \item \label{item:lazy:round_cost} the cost of rounding, i.e., the cost of turning these dynamically maintained fractional matchings into dynamically maintained integral matchings. 
\end{enumerate}
Note that computing a $(1 - \eps)$-approximate matching takes $\widetilde{\Omega}(m)$ time in the worst case~\cite{BhattacharyaKS23sublinear}.
Consequently, if each rebuild is computed from scratch (as they are in our algorithms) then the cost of each rebuild is $\Omega(m)$ (\Cref{item:lazy:rebuild_cost}) and to obtain an $O(\poly(\log n, \eps^{-1}))$-update time it must be that the number of rebuilds (\Cref{item:lazy:rebuild_count}) is $O(\poly(\log n, \eps^{-1}))$.
However, it is unclear, just from the approach, whether or why this should be obtainable. 

Nevertheless, in a striking result, \cite{BernsteinGS20} showed that it was possible to follow this framework and obtain $\widetilde{O}(\eps^{-4})$ time per update for bipartite graphs.
The algorithm had $\widetilde{O}(\eps^{-3})$ rebuilds (\Cref{item:lazy:rebuild_count}) at a cost of $\widetilde{O}(m\eps^{-1})$ time per rebuild\footnote{The $\widetilde{O}(m\eps^{-1})$ runtimes stems from running a push-relabel-style flow algorithm to find an approximate flow.
Using recent almost linear time maximum flow algorithms~\cite{ChenKLPGS22,BrandCLPGSS23}, this can be improved to be $\widehat{O}(m)$, leading to an $\widehat{O}(\eps^{-3})$ update time algorithm for maintaining fractional matchings. See \cite[Lemma 5.2]{BernsteinGS20}.}
(\Cref{item:lazy:rebuild_cost}) for a total runtime of $\widetilde{O}(m\eps^{-4})$ and, therefore, an update time of $\widetilde{O}(\eps^{-4})$ for maintaining fractional matchings.
The technique they used to construct the fractional matchings they call \emph{congestion balancing}.
Furthermore, \cite{AssadiBD22} generalized this approach to non-bipartite graphs.
Their algorithm had $\widetilde{O}(2^{O(1/\eps^2)})$ rebuilds (\Cref{item:lazy:rebuild_count}) at a cost of $\widetilde{O}(m\eps^{-1})$ per rebuild (\Cref{item:lazy:rebuild_cost}) for a total runtime of $\widetilde{O}(m \cdot 2^{O(1/\eps^2)})$ and, therefore, an update time of $\widetilde{O}(2^{O(1/\eps^2)})$ for maintaining fractional matchings.
To obtain integral matchings, both papers applied known dynamic rounding algorithms~\cite{Wajc20} to solve \Cref{item:lazy:round_cost}.

\paragraph{Entropy Regularization and Weighted Matching.}
A key question motivating our result is, \emph{how powerful and general is the lazy update framework for decremental problems}?
Recent work of \cite{JambulapatiJST22} opened the door to studying this question.
This work showed that, for bipartite graphs, to bound the number of rebuilds (\Cref{item:lazy:rebuild_count}) it sufficed to set the fractional matchings to be sufficiently accurate solutions to natural, regularized optimization problems.
The current state-of-the-art decremental bipartite matching algorithms follow this framework~\cite{JambulapatiJST22,BhattacharyaKSW23rounding}. 

In this paper, we provide a broad generalization of this result.
We show that for any non-negative, non-degenerate, compact, downward closed, and convex $\X \subseteq \R^d_{\geq 0}$, to maintain approximate maximizers of $\bw^\top \bx$ for $\bx \in \X$ under deletions of coordinates to $\X$ one can apply the lazy update framework and rebuild only $\otilde(\eps^{-2})$ times!
Furthermore, we show that this rebuild count is obtained so long as the algorithm solves certain entropy-regularized versions of the problem, i.e., finding an $\bx \in \X$ approximately maximizing $\bw^\top \bx + \mu \cdot H(\bx)$ for some trade-off parameter $\mu$, where $H(\bx)$ is an appropriately weighted and scaled measure of the entropy of $\bx$. 

This result, when combined with \cite{ChenKLPGS22} and \cite{BhattacharyaKSW23rounding}, recovers the $\widehat{O}(\eps^{-2})$ update time of \cite{BhattacharyaKSW23rounding, JambulapatiJST22} and enables our main results on decremental matching in general graphs.
Letting $\X = \M_G$ be the non-bipartite matching polytope of the input graph $G$, this result implies that to obtain a fractional solution we simply need to repeatedly solve an entropy-regularized matching problem.
Additionally, our entropy regularization result immediately implies the same for maintaining a weighted fractional matching!
This leads to improved algorithms even for bipartite graphs, where the previous best algorithm has an $\widetilde{O}(\eps^{-3})$ update time~\cite{BhattacharyaKSW23dynamic2}.

However, to obtain efficient decremental fractional algorithms for general graphs, we still need to efficiently solve these entropy-regularized problems over the matching polytope (\Cref{item:lazy:rebuild_cost}).
In \cref{sec:decremental-fractional} we show how to solve these entropy-regularized problems to $(1-\delta)$ accuracy in $\widetilde{O}(m \cdot \delta^{-13})$ and $\widehat{O}(m \cdot \delta^{-5})$ time for all $\delta\geq \widetilde\Omega(n^{-1/2})$.
To maintain a $(1-\eps)$-approximate matching, it suffices to set $\delta = 1/\poly(\eps^{-1},\log n)$.
We obtain this by modularizing and generalizing the framework of \cite{AhnG14} for capacitated weighted general $\bb$-matching and provide two different instantiations of the framework.
In one approach we apply the recent convex flow algorithm of \cite{ChenKLPGS22}, leading to the running time of $\widehat{O}(m \cdot \delta^{-5})$.
In the other one we reduce entropy regularization to capacity-constrained weighted general matching, leading to the other running time of $\widetilde{O}(m \cdot \delta^{-13})$.

\paragraph{Dynamic Rounding.}
To turn our dynamic fractional matching results into dynamic integral matching results we develop two unweighted rounding algorithms for general graphs, one randomized and the other deterministic.
For the randomized algorithm, we analyze the standard sampling procedure for rounding in bipartite graphs and argue that, excitingly, it can also be used in general graphs.
On the other hand, our deterministic algorithm is based on a recently dynamized pipage rounding procedure~\cite{BhattacharyaKSW23rounding}.
However, we differ from \cite{BhattacharyaKSW23rounding} by stopping their algorithm earlier and running static algorithms on the result periodically. 
Although the randomized rounding algorithm does not have a strictly better update time than the deterministic one, it has stronger guarantees that we leverage to round weighted matchings. 
We remark that while \cite{AssadiBD22} also required rounding, the fractional matching they maintained has a special property (in particular it is $\poly(\eps)$-restricted~\cite[Definition 1]{BernsteinS16}) that allows them to argue the bipartite rounding algorithm of \cite{Wajc20} directly applies to the fractional matching they found.
In contrast, our rounding algorithms generically work for \emph{any} fractional matching in general graphs, and are the first to achieve such a guarantee.
(See \cref{subsec:related} for a more detailed comparison of rounding algorithms.)

\paragraph{Optimality of Entropy Regularization.} 
Given the utility of entropy regularization and the lazy update framework, we ask, \emph{can we further decrease the number of rebuilds (\Cref{item:lazy:rebuild_count})?}
Interestingly, we show that this is not the case for decremental matching, even on unweighted bipartite graphs.
We show that for any $n \geq 1$ and $\eps=\Omega(n^{-1/2})$, there is an adversarial choice of the initial graph with $n$ vertices and a sequence of deletions such that, regardless of what fractional matching the algorithm maintains, there are at least $\Omega(\log^2(\eps^2n)\cdot \eps^{-2})$ rebuilds.
This shows the optimality of entropy regularization in both $\log n$ and $\eps^{-1}$ factors for $\eps = \Omega(n^{-1/2+\delta})$ for constant $\delta > 0$.

\paragraph{Summary.}
We obtain improved results for decremental matching in general graphs by building new tools to follow the lazy update framework: we prove a general bound on the number of rebuilds when using entropy regularization (\Cref{item:lazy:rebuild_count}), we develop efficient algorithms for solving entropy-regularized optimization problems over the matching polytope (\Cref{item:lazy:rebuild_cost}), and we develop new dynamic rounding algorithms for fractional matchings in general graphs (\Cref{item:lazy:round_cost}).
We think that each tool could be of independent interest.
Additionally, given the generality and optimality (in terms of the number of rebuilds) of our entropy regularization approach to decremental problems, we hope that our results may open the door to new dynamic algorithms in broader settings. 

\paragraph{Paper Organization.}
In the remainder of this introduction we present our results in \cref{subsec:results} and compare and survey previous work in \cref{subsec:related}. We then cover preliminaries in \cref{sec:prelim} and provide a technical overview of our approach in \cref{sec:tech-overview}.
In \cref{sec:congestion-balancing} we show the robustness and optimality of entropy regularization for the decremental linear optimization problem.
We then turn our attention to the special case of decremental matching in non-bipartite graphs, presenting our algorithms for solving entropy-regularized matchings in \cref{sec:decremental-fractional} and our rounding algorithms in \cref{sec:rounding}. 
In \cref{appendix:entropy,appendix:rounding-proofs} we provide additional proofs that are included for completeness. 

\subsection{Results}\label{subsec:results}

In this paper we consider the \emph{decremental matching problem} formally defined as follows.
Note that in this definition, when $W = 1$, the problem is the aforementioned \emph{unweighted} matching problem in which the maintained matching is simply a $(1-\eps)$-approximate maximum (cardinality) matching.

\begin{problem}[Decremental Matching]
  In the \emph{decremental matching} problem, we are given an $n$-vertex $m$-edge graph $G = (V, E)$, integer\footnote{Note that the assumption that $\bw$ is integral is without loss of generality as it can be achieved by scaling.
  For instance, we can first make the minimum weight $1$, and then scale each entry to the nearest value of $\left\lceil (1+O(\eps))^{i}\right\rceil$ since we are only aiming for an approximate solution.} edge weights $\bw \in \N^E$ bounded by $W = \poly(n)$, and an accuracy parameter $\eps \in (0, 1)$.
  The goal is to maintain a $(1-\eps)$-approximate maximum weight matching $M \subseteq E$  at all times under deletions to $E$ until $G$ becomes empty.
  \label{prob:dec-matching}
\end{problem}

We develop a variety of randomized algorithms to solve \Cref{prob:dec-matching}.
Due to the use of randomization, it is important to distinguish between different kinds of update sequences that they can support.
We say a dynamic algorithm is \emph{output-adaptive} (respectively, \emph{fully-adaptive}) if it works for update sequences that are chosen adaptively based on the output (respectively, internal randomness) of the algorithm.
Note that a fully-adaptive algorithm is automatically output-adaptive.
These are in contrast to the \emph{oblivious} algorithms which only work for updated sequences that are fixed in advance.\footnote{It is also common in the dynamic algorithm literature to model the adaptiveness of an algorithm in the \emph{adversarial} setting in which the algorithm is working \emph{against} an \emph{adversary} designing the update sequence either on-the-fly (in which case it is an \emph{adaptive} adversary) or in advance (an \emph{oblivious} adversary). We also remark that the term \emph{adaptive} is widely used in the literature but has mixed meanings and can refer to either output-adaptive or fully-adaptive based on the context. Consequently, in this paper we make an explicit distinction between the two notions.}

We now describe our main results.
Recall that in the unweighted case, the previous state-of-the-art algorithms that solve \Cref{prob:dec-matching} are an $\widetilde{O}(1) \cdot 2^{O(1/\eps^{2})}$ update time algorithm of \cite{AssadiBD22} and an $O(\sqrt{m}\eps^{-2})$ update time algorithm of \cite{GuptaP13}.

\begin{restatable}[Unweighted Decremental Matching]{theorem}{MainUnweighted}
  There are randomized fully-adaptive $\widetilde{O}(\eps^{-\expnear})$ and $\widehat{O}(\eps^{-\expalmost})$ update time algorithms that solve \cref{prob:dec-matching} in the unweighted case w.h.p.
  \label{thm:main-unweighted}
\end{restatable}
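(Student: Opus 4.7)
The plan is to assemble the lazy update framework introduced above from the three components developed later in the paper: the entropy regularization bound on rebuild count (\cref{sec:congestion-balancing}), the entropy-regularized fractional matching solvers (\cref{sec:decremental-fractional}), and the dynamic rounding procedure for general graphs (\cref{sec:rounding}). Since the matching polytope $\M_G$ is a non-negative, non-degenerate, compact, downward-closed, convex subset of $\R^E_{\geq 0}$, the generalized entropy regularization result directly applies and guarantees only $\otilde(\delta^{-2})$ rebuilds when each rebuild recomputes a $(1-\delta)$-approximate maximizer of the entropy-regularized matching objective over $\M_G$ restricted to the surviving edges.

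To obtain the stated exponents, I would set $\delta = \Theta(\eps^{c})$ for a suitable constant $c$ (the body of the introduction suggests $c = 3$), chosen so that the compounded losses---the $(1-\delta)^{2}$ slack the lazy update scheme accumulates between rebuilds, the $(1-\delta)$ accuracy at which each rebuild solves its regularized matching problem, and the additional approximation loss incurred by the dynamic rounding stage---multiply to a $(1-\eps)$-approximate integral matching. With this choice, the rebuild count is $\otilde(\eps^{-2})$, while each rebuild costs either $\widetilde{O}(m \delta^{-13}) = \widetilde{O}(m \eps^{-39})$ via the reduction of entropy regularization to capacity-constrained weighted $\bb$-matching, or $\widehat{O}(m \delta^{-5}) = \widehat{O}(m \eps^{-15})$ via the instantiation using the convex flow solver of \cite{ChenKLPGS22}. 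Amortizing the total rebuild cost over the $m$ deletions in the decremental stream, and adding the amortized cost of the dynamic rounding procedure, yields the claimed update times of $\widetilde{O}(\eps^{-\expnear})$ and $\widehat{O}(\eps^{-\expalmost})$. Full adaptivity against an adversary with access to the algorithm's internal randomness then follows because the regularized solver is deterministic up to black-box subroutines whose randomness can be boosted to w.h.p.\ guarantees, and because the rounding procedures are explicitly designed in \cref{sec:rounding} to be fully adaptive.

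The main obstacle I anticipate is making the rounding stage mesh cleanly with the lazy scheme without inflating the budget. Each rebuild wholesale replaces the fractional matching, so a naive rounding scheme would pay $\Omega(n)$ work per rebuild, and the induced recourse on the integral matching could dwarf per-update cost even though there are only $\otilde(\eps^{-2})$ rebuilds. Reconciling this presumably requires charging a full integral recomputation against the rebuild cost itself, and then updating incrementally between rebuilds with controlled recourse---this is exactly where the modified pipage-style deterministic rounding and the sampling-based randomized rounding of \cref{sec:rounding} must be carefully leveraged. A secondary subtlety is that dynamic rounding in general graphs must contend with blossom constraints, so the rounding analysis is not a straightforward bipartite-style sampling argument and is compatible only with fractional matchings having sufficiently small tolerances; this is where the polynomial gap $\delta = \Theta(\eps^{c})$ with $c > 1$ originates, and pinning down the precise value of $c$ to match the target exponents $\expnear$ and $\expalmost$ is the part I would expect to require the most bookkeeping.
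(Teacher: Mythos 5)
Your proposal is correct and follows essentially the same route as the paper: combine the $\widetilde{O}(\eps^{-2})$ rebuild bound with $\X=\M_G$, the entropy-regularized solvers run to accuracy $\delta=\widetilde\Theta(\eps^3)$ (giving exactly $2+13\cdot 3=\expnear$ and $2+5\cdot 3=\expalmost$), and the deterministic rounding algorithm fed the $\widetilde{O}(m\eps^{-2})$ total recourse of the fractional matching, which is precisely how the paper resolves your anticipated obstacle about rebuilds replacing the fractional solution wholesale. The only omission is the regime $\eps < n^{-1/6}$, where $\delta=\widetilde\Theta(\eps^3)$ violates the solver's requirement $\delta\geq\widetilde\Omega(n^{-1/2})$; the paper handles this trivially by re-running the static $(1-\eps)$-approximation algorithm after every update.
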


For the weighted case, the algorithm of \cite{AssadiBD22} does not apply, and the previous state-of-the-art, even for bipartite graphs, is an $\widetilde{O}(\sqrt{m}\eps^{-O(1/\eps)})$ update time algorithm, also by \cite{GuptaP13}.
Our algorithms have near-optimal update time when either $\eps$ is a constant or the input graph is dense.

\begin{restatable}[Weighted Decremental Matching]{theorem}{MainWeighted}
  There is a randomized fully-adaptive $\widetilde{O}(\eps^{-O(1/\eps)})$ update time algorithm that solves \cref{prob:dec-matching} w.h.p.
  Additionally, when $m = \widetilde{\Theta}(n^2)$, there are randomized output-adaptive $\widetilde{O}(\eps^{-\expnear})$ and $\widehat{O}(\eps^{-\expalmost})$ update time algorithms.
  \label{thm:main-weighted}
\end{restatable}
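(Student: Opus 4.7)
The proof naturally splits into two parts corresponding to the two statements in the theorem: the fully-adaptive $\widetilde{O}(\eps^{-O(1/\eps)})$ algorithm for arbitrary $m$, which reduces weighted to unweighted matching via the framework of \cite{GuptaP13}, and the output-adaptive $\widetilde{O}(\eps^{-\expnear})$ and $\widehat{O}(\eps^{-\expalmost})$ algorithms for $m = \widetilde{\Theta}(n^2)$, which directly apply the entropy-regularization pipeline developed in the paper.

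For the dense-graph part, the plan is to instantiate the lazy-update framework of \cref{sec:congestion-balancing} with $\X = \M_G$, the non-bipartite matching polytope, under the objective $\bw^\top \bx$. Since $\M_G$ is non-negative, downward closed, compact, and convex for any $G$, and since the weights only change the linear objective and not $\X$ itself, the generalization of entropy regularization proved in \cref{sec:congestion-balancing} delivers $\widetilde{O}(\eps^{-2})$ rebuilds as in the unweighted case. Each rebuild reduces to approximately maximizing $\bw^\top \bx + \mu \entropy(\bx)$ over $\M_G$ to accuracy $\delta = \poly(\eps, \log^{-1} n)$, which, by \cref{sec:decremental-fractional}, takes $\widetilde{O}(m\delta^{-13})$ or $\widehat{O}(m\delta^{-5})$ time. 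Amortizing over the deletions between consecutive rebuilds yields an $\widetilde{O}(\eps^{-\expnear})$ or $\widehat{O}(\eps^{-\expalmost})$ fractional update time. The final step is to turn this fractional matching into an integral one via the randomized rounder of \cref{sec:rounding}, which applies to weighted fractional matchings in general graphs and is output-adaptive (hence the output-adaptive qualifier in the theorem). When $m = \widetilde{\Theta}(n^2)$, the $\widetilde{O}(m)$ reinitialization cost of the rounder at each rebuild is dominated by the $\widetilde{\Omega}(\eps m)$ deletions between rebuilds, so the integral update time matches the fractional one up to polylogarithmic and $\poly(\eps^{-1})$ factors.

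For the fully-adaptive bound on arbitrary graphs, the plan is to invoke the weight reduction of \cite{GuptaP13}. That reduction maintains an unweighted decremental matching on each of $O(\eps^{-1} \log W)$ geometrically spaced threshold subgraphs and assembles the answers through a stacking procedure that incurs an $\eps^{-O(1/\eps)}$ multiplicative overhead. Plugging in the fully-adaptive unweighted algorithm of \cref{thm:main-unweighted} then produces the claimed $\widetilde{O}(\eps^{-O(1/\eps)})$ weighted update time. The main obstacle in this part is to confirm that the reduction preserves the fully-adaptive adversary model and the w.h.p. success guarantee: one must verify that the stacking only consults the outputs of the underlying unweighted matchers (never their internal randomness), and that the high-probability bounds for each of the $O(\eps^{-1}\log W)$ subroutines compose cleanly via a polynomial union bound. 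Once these are checked, correctness and the $(1-\eps)$-approximation follow directly from the analysis of \cite{GuptaP13}.
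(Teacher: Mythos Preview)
Your proposal diverges from the paper in both halves, and each divergence hides a real gap.

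\textbf{Fully-adaptive part.} The paper does \emph{not} reduce weighted to unweighted. \Cref{prop:weight-reduce-eps} (the \cite{GuptaP13} reduction as stated here) takes a \emph{weighted} decremental algorithm and returns another weighted one with $W \le \eps^{-O(1/\eps)}$; it does not decompose the problem into unweighted threshold subgraphs. After applying it, the paper runs the \emph{weighted} fractional algorithm of \cref{thm:decremental-fractional} and then the \emph{deterministic} rounding of \cref{thm:rounding}, whose $\widetilde{O}(W\eps^{-4})$ update time becomes $\widetilde{O}(\eps^{-O(1/\eps)})$ once $W$ is bounded. The deterministic rounding is precisely what makes the pipeline fully-adaptive. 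Your route through \cref{thm:main-unweighted} would need a black-box weighted-to-unweighted reduction that is not what \cref{prop:weight-reduce-eps} provides.

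\textbf{Dense-graph part.} Your cost accounting ``$\widetilde{\Omega}(\eps m)$ deletions between rebuilds'' is false for weighted matching: a single deletion of a heavy edge can drop $\bw^\top \bx$ by an $\Omega(\eps)$ fraction and trigger a fractional rebuild. The paper instead invokes \cref{thm:weighted-rounding}, whose proof (Section~6.3) is considerably more delicate: it discards edges with $\bx_e < \Theta(\eps/n)$, maintains a \emph{degree-sparsifier} (\cref{def:sparsifier}) via \cref{lemma:rand-sparsifier}, and uses the primal-dual structure of the entropy-regularized optimum (\cref{lemma:primal-dual relation}--\cref{lemma:weighted sparsifier}) to show that an unweighted degree-sparsifier already preserves the \emph{weighted} value of the particular $\bx$ the algorithm maintains. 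The $(n^2/m)\eps^{-6}$ overhead arises because there are $\widetilde{O}(n/\eps)$ integral recomputations per fractional rebuild, each costing $\widetilde{O}(n\eps^{-3})$, over $\widetilde{O}(\eps^{-2})$ rebuilds and $m$ total deletions. None of this structure is visible in your sketch; in particular, the claim that the randomized rounder ``applies to weighted fractional matchings'' generically is not something the paper establishes---it holds only for the entropy-regularized $\bx$, via the dual analysis.
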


For dense bipartite graphs we obtain an even better update time.

\begin{restatable}[Weighted Bipartite Decremental Matching]{theorem}{MainWeightedBipartite}
  For bipartite graphs with $m = \widehat{\Theta}(n^2)$ there is an output-adaptive randomized $\widehat{O}(\eps^{-6})$ update time algorithm that solves \cref{prob:dec-matching} w.h.p.
  \label{thm:main-weighted-bipartite}
\end{restatable}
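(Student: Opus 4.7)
The plan is to specialize the entropy-regularization-based lazy update framework to the bipartite weighted matching polytope, implementing rebuilds via a convex min-cost flow computation and integrality via a dynamic bipartite rounding subroutine. Let $\X=\M_G$ denote the fractional bipartite matching polytope of the current graph $G$. Since $\M_G$ is non-negative, downward-closed, compact, and convex, the general entropy regularization theorem of \cref{sec:congestion-balancing} guarantees that if each rebuild produces a $(1-\delta)$-approximate maximizer of $\bw^\top\bx+\mu\entropy(\bx)$ for an appropriately chosen $\mu$, the lazy update framework triggers only $\widetilde{O}(\delta^{-2})$ rebuilds over the entire deletion sequence. Taking $\delta=\Theta(\eps)$ yields $\widetilde{O}(\eps^{-2})$ rebuilds and a $(1-\eps)$-approximate fractional weighted matching at all times.

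For each rebuild I would formulate the entropy-regularized weighted bipartite matching LP as a convex min-cost $s$-$t$ flow on the standard bipartite flow network for $G$, using crucially that in the bipartite setting the matching polytope coincides with the $b$-matching polytope for $b=\mathbf{1}$ (no odd-set cuts). Invoking the convex min-cost flow algorithm of \cite{ChenKLPGS22} on a piecewise-linear truncation of the entropy term, in the spirit of what is done for general graphs in \cref{sec:decremental-fractional}, should produce a $(1-\delta)$-accurate solution in $\widehat{O}(m\eps^{-4})$ time per rebuild. Combining with the rebuild bound gives a total fractional cost of $\widetilde{O}(\eps^{-2})\cdot\widehat{O}(m\eps^{-4})=\widehat{O}(m\eps^{-6})$, which amortizes to $\widehat{O}(\eps^{-6})$ per deletion.

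For integrality I would bucket the edge weights into $\widetilde{O}(\eps^{-1})$ geometric classes $[(1+\eps)^i,(1+\eps)^{i+1})$ and apply the dynamic bipartite rounding scheme of \cref{sec:rounding} to each bucket independently. The dense hypothesis $m=\widetilde{\Theta}(n^2)$ is used here to amortize the per-rebuild rounding overhead cleanly against the $\Theta(n^2)$ deletions, yielding $\widehat{O}(\eps^{-O(1)})$ additional time per update while preserving the approximation factor; the randomized output-adaptive guarantee of the rounding matches the adversary model claimed in the theorem.

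The main obstacle I expect is justifying the $\widehat{O}(m\eps^{-4})$ rebuild cost. The standard entropy $\entropy(\bx)=-\sum_e\bx_e\log\bx_e$ is not self-concordant near the boundary of $\M_G$, so a direct plug-in of the \cite{ChenKLPGS22} IPM does not immediately give a $\poly(\eps^{-1})$ overhead; a piecewise-linear truncation is needed to bound the self-concordance parameter, and the resolution of the truncation must be carefully balanced against the induced condition number to land on the exponent $\eps^{-4}$ rather than a larger polynomial. Getting this exponent is what distinguishes the bipartite case from the general-graph case, where encoding odd-set constraints contributes additional overhead and leads to the larger exponents appearing in \cref{thm:main-unweighted,thm:main-weighted}.
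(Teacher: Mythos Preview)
Your high-level plan (entropy-regularized lazy updates with $\widetilde{O}(\eps^{-2})$ rebuilds, convex-flow rebuilds via \cite{ChenKLPGS22}, then rounding) matches the paper, but the decomposition of where the $\eps^{-6}$ comes from, and the rounding mechanism, are both off.

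First, the rebuild cost. For bipartite graphs the entropy-regularized objective is a convex min-cost flow problem to which \cite[Theorem~10.16]{ChenKLPGS22} applies directly, using the explicit self-concordant barrier for $x\log x$ that they supply; no piecewise-linear truncation is needed, and the rebuild runs in $\widehat{O}(m)$ to high accuracy. So the fractional side costs only $\widehat{O}(m\eps^{-2})$ in total, i.e., $\widehat{O}(\eps^{-2})$ amortized (this is exactly the bipartite clause of \cref{thm:decremental-fractional}). The ``obstacle'' you identify is not one; in the paper the entire $\eps^{-6}$ comes from the \emph{rounding}, not the rebuild.

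Second, the rounding. Your proposal to bucket weights into $\widetilde{O}(\eps^{-1})$ geometric classes and round each bucket independently does not yield a single integral matching: the per-bucket matchings may share vertices, and you give no mechanism to reconcile them while preserving $(1-\eps)$ of the total weight. The paper's route is different and is the real content behind \cref{thm:main-weighted-bipartite}. It exploits the KKT/primal-dual structure of the entropy-regularized optimum (\cref{lemma:primal-dual relation,lemma:covering of the edges}) to show that an \emph{unweighted} $\eps$-degree-sparsifier of the (large-entry part of the) maintained fractional matching already preserves its \emph{weighted} value (\cref{lemma:weighted sparsifier}). One then samples such a degree sparsifier via \cref{lemma:rand-sparsifier}, runs \cite{DuanP14} on it, and recomputes only when the integral matching has lost an $\Theta(\eps)$ fraction of its weight. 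Because every edge kept has $\bx_e \geq \Omega(\eps/n)$, each such recomputation is charged to an $\Omega(\eps^2/n)$ drop in the fractional value, giving $\widetilde{O}(n/\eps)$ recomputations per rebuild, $\widetilde{O}(\eps^{-2})$ rebuilds, and $\widetilde{O}(n\eps^{-3})$ time per recomputation, for a total of $\widetilde{O}(n^2\eps^{-6})$. This is where the dense hypothesis $m=\widetilde{\Theta}(n^2)$ enters: it makes $(n^2/m)\eps^{-6}=\widetilde{O}(\eps^{-6})$, and the whole statement is just the bipartite clause of \cref{thm:weighted-rounding}.
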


\cref{thm:main-unweighted} to \ref{thm:main-weighted-bipartite}
are all obtained by solving the intermediate problem of decremental fractional matching (\cref{prob:dec-frac-matching}) and then converting the fractional results to integral via rounding algorithms.
Following the entropy regularization approach in the lazy update framework, the fractional matching algorithms we develop achieve an upper bound of $\widetilde{O}(\eps^{-2})$ on the number of rebuilds (\cref{item:lazy:rebuild_count}) in \cref{thm:upper bound on reconstruction rounds approximate multiple phases}.
We investigate whether this bound on the number of rebuilds can be improved.
Interestingly, we show that this is not possible and that entropy regularization is a near-optimal strategy for the lazy update framework in the decremental setting for any $n$ and $\eps= \Omega(n^{-1/2})$.
This lower bound holds even in the simple case of unweighted bipartite matching.

\begin{theorem}[informal, see \cref{thm:lower-bound}]
  For any $n \in \N$, $\eps \geq 2/\sqrt{n}$, and any output-adaptive algorithm implementing the rebuilding subroutine in the lazy update framework, there exists an unweighted bipartite graph of $n$ vertices on each side such that an output-adaptively chosen sequence of edge deletions causes $\Omega(\log^2(\eps^2 n) \cdot \eps^{-2})$ rebuilds.
\end{theorem}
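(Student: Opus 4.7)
The plan is to exhibit, for each $n$ and $\eps \geq 2/\sqrt{n}$, an initial unweighted bipartite graph $G_0$ with $n$ vertices on each side together with an output-adaptive adversarial deletion strategy that forces any algorithm following the lazy update framework (which rebuilds whenever its fractional matching value multiplicatively decreases by a $(1-\delta)$ factor, for some $\delta = \Theta(\eps)$) to perform $\Omega(\log^2(\eps^2 n) \cdot \eps^{-2})$ rebuilds. The structure is hierarchical: the adversary proceeds through $L = \Theta(\log(\eps^2 n))$ \emph{phases}, and in each phase it forces $\Theta(\log(\eps^2 n) \cdot \eps^{-2})$ rebuilds; the product yields the claimed bound.

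Concretely, I would take $G_0$ to be a dense bipartite graph such as $K_{n,n}$ or a sufficiently pseudorandom bipartite graph of large minimum degree. In phase $\ell$ the adversary works inside a subgraph $G_\ell \subseteq G_0$ with maximum matching $\mu_\ell \approx n / 2^{\ell}$ that retains enough density on $\Theta(\mu_\ell)$ vertices to permit the argument to recurse. The phase ends once $\mu(G_\ell)$ falls to about $\mu_\ell / 2$, at which point the adversary restricts attention to a smaller subinstance carved out of the remaining edges, initiating phase $\ell + 1$.

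Within a phase, after each rebuild the adversary inspects the algorithm's current fractional matching $\bx$, which has value at least $(1-\eps)\mu_\ell$ and is visible since the algorithm is only required to be output-adaptive. It then selects an edge set $S$ whose deletion (i) drains at least $\delta \mu_\ell$ units of fractional mass from $\bx$, pushing its value below the rebuild threshold, yet (ii) decreases the maximum matching of $G_\ell$ by only $O(\delta^2 \mu_\ell / \log(\eps^2 n))$. With these per-rebuild quantities the adversary triggers $\Theta((\mu_\ell/2) / (\delta^2 \mu_\ell / \log(\eps^2 n))) = \Theta(\log(\eps^2 n) \cdot \eps^{-2})$ rebuilds before the max matching halves, and summing across the $L$ phases gives the claimed total.

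The main obstacle is proving the existence of such an edge set $S$ for an arbitrary fractional matching $\bx$. The key structural claim is that every $(1-\eps)$-approximate fractional matching on a sufficiently dense bipartite graph admits a \emph{targetable} set of edges whose removal drains mass from $\bx$ faster than it shrinks the max matching. To prove it, I would bucket the edges of $\bx$ by per-edge mass into $O(\log(\eps^2 n))$ logarithmic scales, locate the scale that contributes the largest share of the total mass, and use an averaging or probabilistic argument within that scale to produce $S$; the density of $G_\ell$ then guarantees, via LP duality or a Hall-type argument, that enough of the max matching survives in $G_\ell \setminus S$. The scale-bucketing is precisely the source of the extra $\log(\eps^2 n)$ factor in the per-phase count. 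A final check, needed to close the argument, is that at each phase boundary the remaining graph still supports a subinstance with the required density; this is where the assumption $\eps \geq 2/\sqrt{n}$ enters, since otherwise the recursion eventually produces subinstances too small to sustain another phase.
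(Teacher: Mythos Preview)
Your proposal has a genuine gap: the choice of $K_{n,n}$ (or a dense pseudorandom graph) as the initial graph cannot work, because such graphs allow the algorithm to defeat the adversary by spreading mass uniformly. Concretely, if the \texttt{Rebuild()} subroutine returns the uniform fractional matching $\bx_e = 1/n$ on $K_{n,n}$ (which is a $(1-\eps/2)$-approximate matching), then every edge carries mass $1/n$, so draining $\Theta(\eps n)$ mass forces the adversary to delete $\Theta(\eps n^2)$ edges. After $O(1/\eps)$ such rounds the graph is empty, so the total number of rebuilds the adversary can force is $O(1/\eps)$, not $\Omega(\log^2(\eps^2 n)\cdot\eps^{-2})$. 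Your ``key structural claim'' --- that any near-maximum fractional matching on a dense graph admits a targetable set $S$ draining $\Theta(\eps\mu)$ mass while costing only $O(\eps^2\mu/\log)$ in maximum matching --- therefore fails precisely for the uniform matching, and the bucketing-by-mass idea cannot rescue it since all edges lie in a single bucket.

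The missing idea is \emph{rigidity}: one needs a graph in which every near-maximum fractional matching is forced to concentrate mass in a predictable region, so the adversary always has a target regardless of what the algorithm outputs. The paper achieves this with the ``triangular'' bipartite graph $G_k$ on $2k$ vertices with edges $\{i^\ell,j^r\}$ for $1\le j\le i\le k$, whose unique perfect matching is the diagonal. A short averaging argument (Lemma~4.10) shows that any fractional matching of size $(1-\eta)k$ places at least $(1-(1+\delta^{-1})\eta)k$ of its mass on edges with $0\le i-j\le\delta k$. The adversary then repeatedly deletes a $\Theta(\eps)$-fraction of the $\Theta(\eps k)$ near-diagonal ``stripes'' $M_p=\{\{i,j\}:i=j+p\}$; each such batch drains $\Theta(\eps k)$ mass and causes a rebuild, yielding $\Theta(\log(\eps^2 k)/\eps)$ rebuilds before the stripes run out. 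Crucially, the far-off-diagonal edges (those with $i-j\ge 4\eps k$) are untouched, so a copy of $G_{(1-4\eps)k}$ survives for the next phase. This gives $\Theta(\log(\eps^2 n)/\eps)$ phases each contributing $\Theta(\log(\eps^2 n)/\eps)$ rebuilds --- a different split from your $\log\times(\log/\eps^2)$, and one that is actually realizable.
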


\subsection{Related Work} \label{subsec:related}

Here we give a more extensive summary of previous work related to our results in this paper.

\paragraph{General Matching.}

Due to the existence of odd cycles and blossoms, matching problems in general graphs are often considerably more challenging than in bipartite graphs.
Starting from the blossom algorithm of Edmonds~\cite{Edmonds1965}, with additional ideas and techniques, several works culminated in general matching algorithms that are equally efficient as classic bipartite algorithms, both in the exact~\cite{MicaliV80,GabowT91,DuanPS17} and approximate~\cite{MicaliV80,DuanP14,AhnG14} settings (we make particular use of \cite{AhnG14} in our dynamic algorithms).
Yet, it is still open whether modern optimization-based algorithms for bipartite matching~\cite{Madry13,LeeS14,Madry16,BrandLNPSS0W20,KathuriaLS20,AxiotisMV21,ChenKLPGS22} can lead to runtime improvements for computing matchings in general graphs.

\paragraph{Incremental Matching.}
Another studied partially-dynamic matching problem is the \emph{incremental matching problem}, where instead of edge deletions, there are only edge insertions.
In this setting, near-optimal results are known for obtaining $(1-\eps)$-approximate matchings, achieving $\widetilde{O}(\poly(\eps^{-1}))$ and recently $O(\poly(\eps^{-1}))$ update times in bipartite graphs~\cite{Gupta14,BlikstadK23} and $\eps^{-O(1/\eps)}$ update time in general graphs~\cite{GrandoniLSSS19}.
For bipartite graphs, an $\widetilde{O}(\poly(\eps^{-1}))$ update time is also obtainable by a more general partially dynamic packing/covering LP algorithm \cite{BhattacharyaKSW23dynamic2}.

\paragraph{Additional Matching Results.}
Aside from $(1-\eps)$-approximations, dynamic matching with other approximation ratios has been studied, particularly in the fully dynamic setting.
Notably, for $1/2$-approximate, maximal matching, a line of work culminated into optimal, constant-update-time algorithms~\cite{OnakR10,BaswanaGS11,BhattacharyaHI15,Solomon16,BhattacharyaHN17,BhattacharyaCH17,BhattacharyaK19,BehnezhadDHSS19}.
The work of \cite{BernsteinS15} introduced the notion of edge-degree constrained subgraphs (EDCS) and with this initiated a line of work on non-trivial algorithms to maintain $(2/3-\eps)$ approximate matchings~\cite{BernsteinS16,GrandoniSSU22} and beyond~\cite{BehnezhadK22}.
Improvements in other directions such as derandomization, de-amortization, and frameworks converting unweighted results to the weighted case have also been studied~\cite{BhattacharyaHN16,StubbsW17,CharikarS18,BhattacharyaK19,BernsteinFH19,BernsteinDL21,BhattacharyaK21,Kiss22,RoghaniSW22,BhattacharyaKSW23rounding}.
Dynamic rounding algorithms which reduce dynamic integral matching algorithms to dynamic fractional matching algorithms are also well-studied in various regimes~\cite{ArarCCSW18,Wajc20,BhattacharyaK21,Kiss22,BhattacharyaKSW23rounding}.

\paragraph{Entropy Regularization.}
Even in prior dynamic matching results that do not explicitly use entropy regularization, entropy does play a role implicitly. More precisely,~\cite{Gupta14,BhattacharyaKS23,ZhengH23} all used the multiplicative weight update (MWU) method, which can be viewed as an iterative method for optimization method which uses an entropy regularizer to determine the steps it makes.

\paragraph{Comparison to \cite{AssadiBD22}.}
The $\widetilde{O}(1) \cdot 2^{O(1/\eps^2)}$ update time algorithm of \cite{AssadiBD22} used the congestion balancing approach introduced in \cite{BernsteinGS20} to implement rebuilds in the lazy update framework.
Informally, in congestion balancing, a capacity constraint $\bc \in [0, 1]^{E}$ is maintained, and in each iteration the goal is to find a matching $M_{\bc}$ respecting this capacity constraint of size comparable to the actual maximum matching or, in that case that such a matching does not exist, find a set of edges whose capacity constraint is ``critical'' to $M_{\bc}$ being small.
In the first case, $M_{\bc}$ is used as the output fractional matching until future deletions decrease its value significantly.
In the second case, the capacities of these critical edges are increased to accommodate larger matchings.
The exponential dependence on $\eps^{-1}$ of \cite{AssadiBD22}'s algorithm stems from the difficulty of solving the above subroutine in general graphs.
As we will show later in this paper, there is indeed an $\widetilde{O}(m \cdot \poly(\eps^{-1}))$ algorithm for the capacity-constrained matching problem in general graphs (see \cref{lemma:capacitated-weighted-matching}), but \cite{AssadiBD22} additionally needed a dual certificate (obtainable from, e.g., \cite{DuanP14}, on uncapacitated graphs) of the matching problem to identify critical edges.
Interestingly, obtaining such certificates is still left open by our work.\footnote{More precisely, \cite{AssadiBD22} ran the algorithm of \cite{DuanP14} only on a carefully sampled subgraph of $G$, and thus the dual certificate they obtained and used is different from that of the capacity-constrained matching problem in \cref{prob:capacitated-weighted-matching}. Interestingly, one can show that a dual certificate to \cref{prob:capacitated-weighted-matching} also suffices to identify critical edges, so this leaves extracting the dual from our algorithm the final step toward speeding up \cite{AssadiBD22}.}

\paragraph{Comparison to Previous Randomized Rounding Algorithms.}
Our randomized rounding algorithm adopts the same strategy central to previous algorithms~\cite{ArarCCSW18,Wajc20}.
Their algorithms build upon the subgraph, hereafter referred to as a \emph{sparsifier}, obtained by sampling each edge, either independently or dependently, with probability proportional to the fractional mass assigned to it.
Informally, \cite{ArarCCSW18} showed that the independently sampled sparsifier preserves\footnote{More precisely
the sparsifier contains an integral matching whose size is the same as the fractional one, up to $(1-\delta)$ multiplicatively for any $\delta$ (on which the algorithm's runtime depends).} approximately maximal fractional matchings which are $(1/2-\eps)$-approximate.
This holds in general graphs as well.
\cite{Wajc20}, on the other hand, studied fully-adaptive rounding, in which case updating the sparsifier only partially as in \cite{ArarCCSW18} no longer works.
\cite{Wajc20} therefore designed a \emph{dependent} sampling scheme that has marginal the same as the independent one, which turned out to be more efficiently sample-able from scratch.
Their analysis also showed that the sparsifier preserves $(1-\eps)$-approximate fractional matchings in \emph{bipartite} graphs.
\cite{AssadiBD22} then extended the analysis to show that \emph{$O(\eps)$-restricted} matchings in general graphs are preserved in \cite{Wajc20}'s sparsifier as well.
Finally, our rounding algorithm goes back and considers the independently sampled sparsifier which can now be used in the (output-)adaptive setting due to the dynamic sampler of \cite{BhattacharyaKSW23rounding}.
Adopting a more direct analysis, we show that this sparsifier can in fact preserve \emph{arbitrary} fractional matchings in general graphs.
Given that the sampling distribution of~\cite{Wajc20} is, in essence, an easier-to-sample-from but correlated version of the distribution we consider in this paper, we suspect their analysis can be extended to work for any fractional matching in general graphs, perhaps at the cost of a larger runtime.
After the initial publication of our manuscript, \cite{Dudeja24a} showed that indeed the rounding algorithm of \cite{Wajc20} can be extended to work in general graphs, albeit with a slightly larger update time.

\section{Preliminaries} \label{sec:prelim}

\paragraph{Notation.}
We let $[d] \defeq \{1, 2, \ldots, d\}$ for $d \in \N$, $\R_\infty \defeq \R \cup \{\infty\}$, and $\R_{-\infty} \defeq \R \cup \{-\infty\}$.
We use $\log(\cdot)$ to denote logarithm base $2$ and $\ln(\cdot)$ for natural logarithm.
We let $\llbracket \phi \rrbracket$ be evaluated to $1$ if the expression $\phi$ is true and $0$ otherwise.

Consider a finite set $U$ and $S \subseteq U$.
Let $\boldsymbol{0}^{S}$ and $\boldsymbol{1}^{S}$ be the all-zero and all-one vector in $\R^S$, respectively.\footnote{When $S$ is clear from context, we may drop the superscript and simply use $\boldsymbol{0}$ and $\boldsymbol{1}$.}
Let $\Delta^S$ denote the \emph{simplex} in $\R^S$, i.e., $\Delta^S \defeq \{\bx \in \R_{\geq 0}^{S}: \norm*{\bx}_1 = 1\}$.
For $\bx \in \R^{U}$, let $\bx_S \in \R^S$ be $\bx$ with coordinates restricted to $S$, i.e., $(\bx_S)_i = \bx_i$ for all $i \in S$.
For $\X \subseteq \R^{U}$, let $\X_S$ be $\X$ restricted to coordinates $S$, i.e., $\X_S \defeq \{\bx_S: \bx \in \X\}$, and $\X_{S,+} \defeq \{\bx \in \X_S: \bx_i > 0\;\forall\;i \in S\}$ be the subset of $\X_S$ in $\R_{>0}^S$.
For $\bx \in \R^S$ and $i \in U$, let $\bx \setminus \{i\} \in \R^{S \setminus \{i\}}$ be $\bx$ excluding coordinate $i$, i.e., $(\bx \setminus \{i\})_j = \bx_j$ for all $j \in S \setminus \{i\}$.
For $S^\prime \subseteq S$ and $\bx \in \R_{S^\prime}$, let $\bx^S \in \R^{S}$ be $\bx$ extended to $R^{S}$, i.e., $\bx^S_i = \bx_i$ for $i \in S^\prime$ and $\bx^S_i = 0$ for $i \in S \setminus S^\prime$.

\paragraph{Runtimes.}
In this paper we use the standard word-RAM model where basic arithmetic operations over $O(\log{n})$-bit words can be performed in constant time.
When the input size $n$ is clear from context, we say an $x \in \R$ is \emph{polynomially bounded} if $|x| \in \{0\} \cup [n^{-O(1)}, n^{O(1)}]$. 

\paragraph{Graphs.}
All graphs in this paper are undirected,
simple, and not necessarily bipartite, unless stated otherwise.
For a graph $G = (V, E)$, let $E_v \defeq \{e \in E: v \in e\}$ be the set of edges incident to $v \in V$, and $E[B] \defeq \{e \in E: e \subseteq B\}$ be the set of edges whose endpoints are both in $B \subseteq V$.
For $F \subseteq E$ let $F_v \defeq F \cap E_v$ and $F[B] \defeq F \cap E[B]$.
For $\bx \in \R_{\geq 0}^{E}$, let $\bx(v) \defeq \sum_{e \in E_v}\bx_e$ for $v \in V$ and $\bx(B) \defeq \sum_{e \in E[B]}\bx_e$ for $B \subseteq V$.
Let $\bB_G \in \{-1, 0, 1\}^{E \times V}$ be the \emph{(edge vertex) incidence matrix} of $G$, where there are exactly two non-zero entries per row $e$, one at entry $(e, u)$ with value $-1$ and the other at entry $(e, v)$ with value $1$, for an arbitrary orientation $(u, v)$ of $e = \{u, v\}$.
Let $M^{*}(G)$ be the size of the maximum matching in $G$ and $M^{*}_{\bw}(G)$ for $\bw \in \R_{\geq 0}^{E}$ be the value of the maximum weight matching in $G$ for the weights $\bw$.

\paragraph{Matching Polytope.}
For an undirected graph $G = (V, E)$, the \emph{matching polytope} of $G$ is the convex hull of the indicator vectors of matchings in $G$.
Let
\begin{equation}
\P_{G} \defeq
\left\{
\begin{array}{ll}
\sum_{e \in E_v}x_e \leq 1, & \forall\;v \in V \\
\end{array}
\right\} \cap \R_{\geq 0}^E.
\end{equation}
It is a standard fact that when $G$ is bipartite, $\P_G$ is the matching polytope of $G$.
When $G$ is non-bipartite, we need to further consider odd-set constraints.
Formally, we let 
\[
\O_G \defeq \left\{B \subseteq V: |B| \geq 3\;\text{and $|B|$ is odd}\right\}
\]
be the collections of \emph{odd sets} and define
\begin{equation}
\M_{G} \defeq
\P_G \cap \left\{
\begin{array}{ll}
\bx(B) \leq \left\lfloor\frac{|B|}{2}\right\rfloor, & \forall\;B \in \O_G \\
\end{array}
\right\}.
\label{eq:matching-polytope}
\end{equation}
It is known that $\M_G$ is the matching polytope of $G$~\cite{schrijver2003combinatorial}.
We often consider the relaxation of \eqref{eq:matching-polytope} to only contain small odd sets $\O_{G,\eps} \defeq \{B \in \O_G: |B| \leq 1/\eps\}$ denoted by
\begin{equation}
\M_{G,\eps} \defeq
\P_G \cap \left\{
\begin{array}{ll}
\bx(B) \leq \left\lfloor\frac{|B|}{2}\right\rfloor, & \forall\;B \in \O_{G,\eps} \\
\end{array}
\right\}
\label{eq:matching-polytope-small}
\end{equation}
when dealing with $(1-\eps)$-approximatation algorithms.
The following fact about $M_{G}$ versus $M_{G,\eps}$ is folklore and key to our algorithm development.

\begin{fact}[see, e.g., \cite{AssadiBD22}]
  For $\eps > 0$ and $\bx \in \M_{G,\eps}$ it holds that $\frac{\bx}{1+3\eps} \in \M_G$.
  \label{fact:scaled-down}
\end{fact}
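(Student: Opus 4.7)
The plan is to show $\bx/(1+3\eps) \in \M_G$ by checking each of the defining constraints of $\M_G$ in turn. Non-negativity is immediate from $\bx \geq \boldsymbol{0}$. The degree constraints $\sum_{e \in E_v} (\bx_e/(1+3\eps)) \leq 1$ for $v \in V$ follow from $\bx \in \P_G$ together with $1/(1+3\eps) \leq 1$. For every odd set $B \in \O_G$ with $|B| \leq 1/\eps$, the corresponding constraint $\bx(B) \leq \lfloor |B|/2 \rfloor$ already holds for $\bx$ itself by virtue of $\bx \in \M_{G,\eps}$, and hence also for the scaled vector $\bx/(1+3\eps)$. Thus the only nontrivial content is handling the ``large'' odd sets, $B \in \O_G \setminus \O_{G,\eps}$, i.e., those with odd $|B| = k > 1/\eps$.

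For any such $B$ I would first derive a uniform degree-based upper bound on $\bx(B)$: summing the inequality $\sum_{e \in E_v} \bx_e \leq 1$ over all $v \in B$ counts each edge of $E[B]$ exactly twice on the left-hand side, while all additional terms (coming from edges with exactly one endpoint in $B$) are non-negative. This yields $2\bx(B) \leq k$, i.e., $\bx(B) \leq k/2$. It then remains to check that $k/(2(1+3\eps)) \leq (k-1)/2 = \lfloor k/2 \rfloor$, which after clearing denominators rearranges to the arithmetic inequality $k \geq 1/(3\eps) + 1$.

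This final arithmetic step is where the only real content of the fact sits, though it is hardly an obstacle. For $\eps \leq 2/3$ one has $1/(3\eps) + 1 \leq 1/\eps$ (equivalently $2/(3\eps) \geq 1$), so the strict inequality $k > 1/\eps$ forces $k \geq 1/(3\eps) + 1$. For $\eps > 2/3$ the bound $1/(3\eps) + 1 < 3/2$ is strictly beaten by the minimum odd-set size $k \geq 3$. Assembling these two regimes disposes of the remaining case and completes the verification that $\bx/(1+3\eps) \in \M_G$.
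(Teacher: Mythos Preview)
Your argument is correct. The paper itself does not supply a proof of this fact; it is stated as folklore with a citation to \cite{AssadiBD22}. Your proof is the standard one: the only nontrivial constraints are the large odd sets $|B| = k > 1/\eps$, where the degree bound gives $\bx(B) \leq k/2$, and you verify the elementary inequality $k/(1+3\eps) \leq k-1$ by the case split on $\eps \leq 2/3$ versus $\eps > 2/3$. Both cases are handled cleanly, and the rearrangement to $k \geq 1/(3\eps)+1$ is accurate.
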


We may refer to an $\bx \in \M_G$ as a \emph{(fractional) matching} in $G$ and $\bx \in \P_G$ as a \emph{relaxed (fractional) matching}\footnote{We remark that an $\bx \in \P_G$  is often referred to as a \emph{fractional matching} even in general graphs in the literature (see, e.g., \cite{ArarCCSW18,BhattacharyaKSW23rounding}). However, we deviate from this convention so that there is no integrality gap between fractional and integral matchings.} in $G$.
When $\bx \in \M_G$ or $\bx \in \P_G$ is clear from context, we may refer to $\bx_e$ as the \emph{mass} the matching $\bx$ puts on edge $e$.

\paragraph{Miscellaneous.}
The \emph{recourse} of a dynamic algorithm is the total number of changes it makes to its output.
When working with $(1-\eps)$-approximation, in the remainder of the paper we may assume without loss of generality that $\eps$ is upper-bounded by an explicit constant.
This only incurs a constant increase in runtimes.

\section{Technical Overview}
\label{sec:tech-overview}

In this section we introduce the problems that we consider in this paper, present the main results and technical tools of each section, and illustrate high-level ideas towards proving them.
At the end of this section, we utilize these results and tools to prove our main theorems stated in \cref{subsec:results}.

\subsection{Lazy Updates for Decremental Linear Optimization} \label{sec:tech:lazy}

In this paper we consider a unifying framework of congestion balancing for solving  \emph{decremental linear optimization problems} formally defined as follows.

\begin{restatable}[Decremental Linear Optimization]{problem}{DecLinearOpt}
	\label{prob:dec_linear}
	In the \emph{decremental linear optimization} problem we are given a positive weight vector $\bw \in \R_{> 0}^d$ and a non-negative, non-degenerate\footnote{That is, for each coordinate $i\in[d]$ there exists an $\bx\in\X$ such that $\bx_i>0$. This is a natural assumption since we can always ignore the degenerate dimensions.}, compact\footnote{Note that this is equivalent to being bounded and closed, which implies $\max_{\bx \in \X}\bw^\top \bx$ is bounded as well.}, downward closed\footnote{That is, for each $x \in \X$ and $y \in \R_{\geq 0}^{d}$ with $y \leq x$ entry-wise, we have $y \in \X$ also.}, convex $\mathcal{X} \subseteq \R^d_{\geq 0}$ where $d > 1$.\footnote{The assumption of $d > 1$ is natural as otherwise the problem degenerates into a 1D optimization and becomes not decremental in essense.}
    Starting from the entire coordinate set $S = [d]$,
    under a sequence of deletions of coordinates from $S$ we must maintain an $\bx \in \X_S$ such that 
	\[
	\bw_S^\top \bx
	\geq (1 - \eps) \max_{\bx^\prime \in \X_S} \bw_S^\top \bx^\prime
	\]
    for a given accuracy $\eps > 0$ until $S$ becomes empty.
\end{restatable}

The framework for solving \cref{prob:dec_linear} that we study in this paper is a generalization of the lazy update scheme
that is widely used for dynamic matching problems as we discussed in \cref{sec:intro}.
Specifically, we consider algorithms that maintain an approximate solution $\bx \in \X_S$ and use it as the solution until its value drops by an $O(\eps)$ fraction at which point we perform a rebuild.
The following \cref{alg:meta-general-ds} is a template for the lazy update approaches, for which we will later specify what approximate solutions $\bx$ will be used in $\texttt{Rebuild()}$ at Line~\ref{line:recomputation}.

\begin{algorithm2e}[!ht]
  \caption{\textsc{LazyUpdate($\X$)}} \label{alg:meta-general-ds}
  
  \SetEndCharOfAlgoLine{}
  \SetKwInput{KwData}{Input}
  \SetKwInput{KwResult}{Output}
  \SetKwInOut{State}{global}
  \SetKwProg{KwProc}{function}{}{}
  \SetKwFunction{Initialize}{Initialize}
  \SetKwFunction{Delete}{Delete}
  \SetKwFunction{Rebuild}{Rebuild}

  \State{weight vector $\bw \in \R_{> 0}^{d}$ and accuracy parameter $\eps \in (0, 1)$.}
  \State{current coordinates $S \subseteq [d]$ and solution $\bx \in \X_S$ with ``rebuild'' value $\nu$.}
  \State{number of rebuilds $t \in \Z_{\geq 0}$.}
  \State{snapshots $x^{(t)}$ and $S^{(t)}$ for analysis.}

  \vspace{0.4em}
  
  \KwProc{\Initialize{$\bw \in \R^{d}_{>0}, \eps \in (0, 1)$}} {
    Save $\bw$ and $\eps$ as global variables.\;
    Initialize $S \gets [d]$ and $t \gets 0$.\;
    $(\bx, \nu) \gets \texttt{Rebuild()}$.\;
  }
  
  \vspace{0.4em}

  \KwProc{\Delete{$i \in [d]$}} {
    Set $\bx \gets \bx \setminus \{i\}$ and $S \gets S \setminus \{i\}$.\;
    \lIf{$\bw_S^\top \bx < \left(1 - \frac{\eps}{2}\right) \nu$} {
      $(\bx, \nu) \gets \texttt{Rebuild()}$.
    }
  }

  \vspace{0.4em}

  \KwProc{\Rebuild{}} {
    \tcp{What $\bx^{(t)}$ is computed below depends on the specific algorithm}
    Set $\bx^{(t)}$ to an element in $\X_S$ with $\bw^\top_S \bx^{(t)} \geq \left(1 - \frac{\eps}{2}\right)\max_{\bx^\prime \in \X_S}{\bw}^\top_S \bx^\prime$. \label{line:recomputation}\;
    Set $\nu^{(t)} \gets \bw^\top_S \bx^{(t)}$ and $S^{(t)} \gets S$.\;
    \textbf{return} $(\bx^{(t)}, \nu^{(t)})$ and set $t \gets t + 1$.\;
  }
\end{algorithm2e}

\begin{observation}
  \cref{alg:meta-general-ds} solves the decremental linear optimization problem (\cref{prob:dec_linear}).
\end{observation}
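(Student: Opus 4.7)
The plan is to establish the claimed $(1-\eps)$-approximation by combining two simple inequalities: one guaranteed by the Rebuild step, and one enforced by the check in Delete. First I would define $\mathrm{OPT}(S) \defeq \max_{\bx' \in \X_S} \bw_S^\top \bx'$ and observe the monotonicity property that $\mathrm{OPT}(S) \leq \mathrm{OPT}(S_0)$ whenever $S \subseteq S_0$. This is where downward closedness of $\X$ enters: given any $\bx \in \X_S$, there is some $\bz \in \X$ with $\bz_S = \bx$, and setting to zero the coordinates of $\bz$ outside $S$ yields (by downward closedness) an element $\by \in \X$ whose restriction $\by_{S_0}$ extends $\bx$ by zeros; this extension lies in $\X_{S_0}$ and has the same objective value $\bw_S^\top \bx = \bw_{S_0}^\top \by_{S_0}$.

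Next, I would maintain the invariant that at every point in time (after Initialize and after every Delete), the current pair $(\bx, S)$ satisfies $\bw_S^\top \bx \geq (1 - \eps/2)\,\nu$, where $\nu$ is the rebuild value stored at the last call to Rebuild. The base case follows from Initialize (which calls Rebuild, so $\bw_S^\top \bx = \nu$). For the inductive step, a call to Delete first updates $\bx$ and $S$ and then explicitly checks the invariant; if it fails, Rebuild is called and restores $\bw_S^\top \bx = \nu$. Hence the invariant holds continuously.

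To convert this into the approximation guarantee for $\mathrm{OPT}(S)$, let $S_0 \supseteq S$ denote the coordinate set at the last Rebuild; by the specification of Rebuild at Line~\ref{line:recomputation}, $\nu = \bw_{S_0}^\top \bx^{(t)} \geq (1 - \eps/2)\,\mathrm{OPT}(S_0)$. Combined with monotonicity $\mathrm{OPT}(S_0) \geq \mathrm{OPT}(S)$ and the maintained invariant, one gets
\[
\bw_S^\top \bx \;\geq\; \left(1 - \tfrac{\eps}{2}\right) \nu \;\geq\; \left(1 - \tfrac{\eps}{2}\right)^2 \mathrm{OPT}(S) \;\geq\; (1 - \eps)\,\mathrm{OPT}(S),
\]
using $(1-\eps/2)^2 = 1 - \eps + \eps^2/4 \geq 1 - \eps$. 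The only mildly delicate point, and the one I would be most careful with, is the downward-closedness argument used for monotonicity of $\mathrm{OPT}$, since coordinate deletion is formally a projection rather than a zero-substitution; this step relies on $\X$ being defined over $\R^d_{\geq 0}$ so that the zero-extended vector is a valid elementwise lower bound of a point in $\X$. The empty-$S$ base case is handled by noting that $\X_\emptyset$ trivializes and both sides of the approximation inequality are zero. Conditional termination of the while-loop induced by repeated deletions is immediate since each Delete strictly shrinks $S$.
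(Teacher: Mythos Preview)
Your proof is correct and follows essentially the same approach as the paper's: both arguments combine the $(1-\eps/2)$-approximation guarantee at rebuild time with the invariant enforced by the check in \texttt{Delete}, together with monotonicity of the optimum under deletions, to obtain $(1-\eps/2)^2 \geq 1-\eps$. You are somewhat more careful than the paper in spelling out the monotonicity argument via downward closedness and zero-extension; the paper simply asserts that deletions can only decrease $\max_{\bx'\in\X_S}\bw_S^\top\bx'$. One small omission: the paper explicitly notes (and you do not) that $\bx\in\X_S$ remains feasible after each deletion, though this is immediate from the definition of restriction.
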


\begin{proof}
Note that $\bx \in \X_S$ at all times since $\X$ is downward closed. 
The vector $\bx^{(t)}$, when constructed in $\texttt{Rebuild()}$, is an $\left(1-\frac{\eps}{2}\right)$-approximate solution with value $\nu^{(t)}$.
Since the $\texttt{Delete}$ operations only decrease $\max_{\bx^\prime \in \X_S}\bw_S^\top \bx^\prime$, as long as $\bw_S^\top \bx \geq \left(1-\frac{\eps}{2}\right)\nu^{(t)}$, we have
\[
\bw_S^\top \bx \geq \left(1-\frac{\eps}{2}\right)\nu^{(t)} \geq (1-\eps)\max_{\bx^\prime \in \X_S}\bw_S^\top \bx^\prime.
\]
On the other hand, whenever $\bw^\top_S \bx < \left(1-\frac{\eps}{2}\right)\nu^{(t)}$, we call $\texttt{Rebuild()}$.
\cref{alg:meta-general-ds} thus maintains a $(1-\eps)$-approximate solution at all times.
\end{proof}

\subsection{Entropy Regularization}

To solve the above decremental linear optimization problem using the lazy update scheme, we apply a variant of the entropy regularization strategy previously used decremental dynamic matching in unweighted bipartite graphs~\cite{JambulapatiJST22}.
Intuitively, to avoid the adversary from deleting large weight from our solution at once, the idea is to find an $\bx \in \X_S$ with uniformly distributed value on each coordinate.
As such, the approach is to use the entropy-regularized solution as $\bx$ in $\texttt{Rebuild()}$, prioritizing vectors with higher entropy when they have similar weights.

To formally describe our results, consider a fixed positive weight $\bw \in \R_{>0}^{d}$.
For $S \subseteq [d]$, $\mu,\gamma \in \R_{> 0}$, 
we define our \emph{entropy-regularized objective} $f_{S,\gamma}^{\mu}: \X_S \to \R$  for all $\bx\in \X_S$ by
\begin{equation}
f_{S,\gamma}^{\mu}(\bx)\defeq \bw^{\top}_S\bx+\mu \cdot \sum_{i\in S}\bw_{i}\bx_{i}\log \frac{\gamma}{\bw_i\bx_{i}}
\label{eq:entropy-objective}
\end{equation}
and let
\begin{equation}
\bx^{\mu}_{S,\gamma} \defeq \argmax_{\bx \in \X_S}f^{\mu}_{S,\gamma}(\bx)
\label{eq:def-x}
\end{equation}
be the optimal solution to \eqref{eq:entropy-objective}.
The main result we show later in \cref{sec:congestion-balancing} is that solutions to \eqref{eq:entropy-objective} with $\mu = \widetilde{\Theta}(\eps)$ lead to a lazy update scheme with bounded rebuilds.

\begin{restatable}{lemma}{UpperBoundReconstruction}\label{lemma:upper bound on reconstruction rounds}
    For any $\alpha \geq 0$, accuracy parameter $\eps > 0$, and $0 < \mu\leq \frac{\eps}{8\log d}$,
    if the subroutine \emph{\texttt{Rebuild()}} in \cref{alg:meta-general-ds} returns $\bx_{S^{(t)},\gamma}^{\mu}$ as $\bx^{(t)}$ for $\gamma \defeq \alpha$, then \emph{\texttt{Rebuild()}} will be called at most
    $O(\frac{\log d}{\mu\cdot \eps})$ times before $\max_{\bx^\prime\in \X_S}\bw_S^\top \bx^\prime$ drops from at most $\alpha$ to below $\alpha/d$.
\end{restatable}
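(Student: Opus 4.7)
The plan is a potential-function argument with $\Phi^{(t)} \defeq f_{S^{(t)},\gamma}^{\mu}(\bx^{(t)})$, showing that $\Phi^{(t+1)} \leq (1 - c\mu\eps)\Phi^{(t)}$ for an absolute constant $c>0$ at every rebuild. First I would set up sandwich bounds comparing $\Phi^{(t)}$ to $V^{(t)} \defeq \max_{\bx \in \X_{S^{(t)}}} \bw_S^\top \bx$. The lower bound $\Phi^{(t)} \geq V^{(t)}$ follows from non-negativity of the entropy term $g(\bx) \defeq \sum_{i} \bw_i \bx_i \log(\gamma/(\bw_i\bx_i))$, which holds because $\bw_i\bx^{(t)}_i \leq \nu^{(t)} \leq \gamma$. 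For the upper bound, maximizing $g$ subject to $\bw^\top\bx = V$ fixed is achieved by a uniform allocation and yields $g(\bx) \leq V \log(\gamma d/V)$; in the phase $V^{(t)} \in [\alpha/d,\alpha]$ with $\gamma=\alpha$, this is at most $2V^{(t)}\log d$, so $\mu g(\bx^{(t)}) \leq (\eps/4) V^{(t)}$ by the hypothesis on $\mu$. Hence $V^{(t)} \leq \Phi^{(t)} \leq (1+\eps/4)V^{(t)}$, which also gives $\nu^{(t)} = \bw^\top\bx^{(t)} \geq (1-\eps/4)\Phi^{(t)}$.

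For the contraction step I would use first-order optimality of $\bx^{(t)}$ together with the Bregman identity of $-g$. By non-degeneracy of $\X$ and the fact that $\partial g/\partial \bx_i \to +\infty$ as $\bx_i \to 0^+$, one has $\bx^{(t)}_i > 0$ for every $i \in S^{(t)}$, so $\nabla f$ is well-defined and the variational inequality $\nabla f(\bx^{(t)})^\top(\bx - \bx^{(t)}) \leq 0$ holds for all $\bx \in \X_{S^{(t)}}$. Combined with the exact expansion $f(\bx) = f(\bx^{(t)}) + \nabla f(\bx^{(t)})^\top(\bx - \bx^{(t)}) - \mu D_{-g}(\bx, \bx^{(t)})$, this yields $f(\bx) \leq f(\bx^{(t)}) - \mu D_{-g}(\bx, \bx^{(t)})$ for every feasible $\bx$. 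Applying it to $\tilde\bx^{(t+1)} \defeq (\bx^{(t+1)})^{S^{(t)}}$, which lies in $\X_{S^{(t)}}$ by downward closure, and noting that $f_{S^{(t)},\gamma}^{\mu}(\tilde\bx^{(t+1)}) = \Phi^{(t+1)}$ under the convention $0\log(\gamma/0)=0$, gives
\[
\Phi^{(t+1)} \leq \Phi^{(t)} - \mu\, D_{-g}(\tilde\bx^{(t+1)}, \bx^{(t)}).
\]
A direct computation in the variables $y_i = \bw_i\bx_i$ identifies $D_{-g}$ as the generalized KL divergence $\sum_i [y_i \log(y_i/y'_i) + y'_i - y_i]$, whose summands are individually non-negative; moreover, for a deleted coordinate $i \in S^{(t)}\setminus S^{(t+1)}$ the summand reduces to exactly $\bw_i \bx^{(t)}_i$. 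The rebuild trigger then forces $\sum_{i \in S^{(t)} \setminus S^{(t+1)}} \bw_i \bx^{(t)}_i > (\eps/2)\nu^{(t)}$, giving $D_{-g}(\tilde\bx^{(t+1)}, \bx^{(t)}) > (\eps/2)\nu^{(t)}$, and combining with the sandwich bound yields $\Phi^{(t+1)} \leq (1 - c\mu\eps)\Phi^{(t)}$ for a suitable constant $c$.

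Telescoping then closes the argument: $\Phi^{(0)} \leq 2\alpha$ and, for the last rebuild $T$ inside the phase, $\Phi^{(T)} \geq V^{(T)} \geq \alpha/d$, so $(1-c\mu\eps)^T \geq 1/(2d)$ and $T = O(\log d/(\mu\eps))$. The main obstacle I anticipate lies in the contraction step, namely in identifying a strong-concavity-type structure on $-g$ that translates an $L_1$-flavored deleted-mass guarantee into a matching lower bound on $D_{-g}$. Here we are saved by the specific form of the negative-entropy Bregman divergence: its deleted-coordinate summands equal $\bw_i \bx^{(t)}_i$ \emph{exactly}, so the trigger inequality transfers into $D_{-g}$ with no loss (no Pinsker-style step is needed), and the analysis of $\X$ never uses more than convexity, non-degeneracy, and downward closure.
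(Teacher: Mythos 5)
Your proposal is correct and follows essentially the same route as the paper: the potential $\Phi^{(t)}=f^{\mu}_{S^{(t)},\gamma}(\bx^{(t)})$ coincides with the paper's $Z^{\mu}_{S^{(t)},\gamma}$ at the exact maximizer, your sandwich bounds are the paper's \cref{lemma:entropy-property,lemma:absolute entropy bound,lemma:valid approximation}, the optimality-plus-Bregman inequality is \cref{lemma:Bregman divergence property}, and the exact identification of deleted-coordinate mass inside the (generalized KL) divergence is precisely the computation in \cref{lemma:entropy-decreases}, followed by the same telescoping. The only nitpick is that your stated bound $\mu g(\bx^{(t)})\leq 2\mu V^{(t)}\log d$ silently drops the additive $\mu\nu_S^{*}$ term arising from $x\log(1/x)\leq 1$ (cf.\ \cref{lemma:absolute entropy bound}), which only changes the constant in the sandwich and is immaterial to the final $O(\log d/(\mu\eps))$ bound.
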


Analogous to \cite{JambulapatiJST22}, to prove \Cref{lemma:upper bound on reconstruction rounds}, we use $f_{S,\gamma}^{\mu}(\bx_{S,\gamma}^{\mu})$, the optimal value of the entropy-regularized objective on $\X_S$, as a potential function to capture the progress that \cref{alg:meta-general-ds} makes. Applying the optimality conditions for concave optimization to $f_{S,\gamma}^{\mu}(\bx_{S,\gamma}^{\mu})$ allows us to lower bound the decrease from $f_{S,\gamma}^{\mu}(\bx_{S,\gamma}^{\mu})$ to $f_{S^\prime,\gamma}^{\mu}(\bx_{S^\prime,\gamma}^{\mu})$ using the Bregman divergence of the entropy regularizer, which has a close relationship to the weighted value of the deleted coordinates in $S\setminus S^\prime$. The choice of $\mu$ and $\gamma$ is to guarantee that the entropy-regularized solution serves as a valid solution for $\texttt{Rebuild()}$.

In case the exact solution $\bx_{S,\gamma}^{\mu}$ is computationally expensive to compute, we also show that an accurate enough approximation to it admits the same robustness property, by combining the proof of \cref{lemma:upper bound on reconstruction rounds} and the strong concavity of the entropy-regularized objective. 
In the following, we say an $\bx \in \X_{S}$ is a \emph{$(1-\delta)$-approximate solution to $f_{S,\gamma}^{\mu}$} if $f_{S,\gamma}^{\mu}(\bx) \geq (1-\delta) f_{S,\gamma}^{\mu}(\bx_{S,\gamma}^{\mu})$.

\begin{restatable}{lemma}{UpperBoundReconstructionApprox}
  For any $\alpha \geq 0$, accuracy parameter $\eps > 0$, and $0 < \mu\leq \frac{\eps}{128\log d}$,
  if the subroutine \emph{\texttt{Rebuild()}} in \cref{alg:meta-general-ds} returns any $(1-\frac{\mu\eps^2}{512})$-approximate solution to $f_{S^{(t)},\gamma}^{\mu}$ as $\bx^{(t)}$ for $\gamma \defeq \alpha$, then \emph{\texttt{Rebuild()}} will be called at most
  $O(\frac{\log d}{\mu\cdot \eps})$ times
  before $\max_{\bx^\prime\in \X_S}\bw_S^\top \bx^\prime$ drops from at most $\alpha$ to below $\alpha/d$.
  \label{lemma:upper bound on reconstruction rounds approximate}
\end{restatable}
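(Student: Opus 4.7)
The plan is to adapt the proof of \cref{lemma:upper bound on reconstruction rounds} and absorb the additional slack produced by using an approximate maximizer. As in the exact case, I would choose the \emph{true} optimal value $\Phi^{(t)} \defeq f_{S^{(t)},\gamma}^{\mu}(\bx_{S^{(t)},\gamma}^{\mu})$ as the potential (\emph{not} the value attained by the approximate $\bx^{(t)}$) and show that each rebuild still decreases $\Phi$ by $\Omega(\mu\eps\alpha)$. Since one can argue $\Phi^{(0)}\leq \alpha(1+\mu\log d)=O(\alpha\log d)$ and $\Phi$ stays non-negative throughout the phase $\max_{\bx^\prime\in\X_S}\bw_S^\top\bx^\prime\geq \alpha/d$, such a per-rebuild drop immediately yields the stated bound of $O(\log d/(\mu\eps))$ rebuilds.

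The per-rebuild progress in the exact case goes roughly as follows: when a rebuild is triggered, the set $D^{(t)}\defeq S^{(t)}\setminus S^{(t+1)}$ of coordinates removed during the window carries weighted mass at least $\tfrac{\eps}{2}\nu^{(t)}$ under $\bx^{(t)}$, and the first-order optimality condition at $\bx_{S^{(t)},\gamma}^{\mu}$ together with the concavity of $f_{S^{(t)},\gamma}^{\mu}$ converts this deleted mass into a lower bound on the Bregman divergence, hence on $\Phi^{(t)}-\Phi^{(t+1)}$. In the approximate setting, however, the triggering condition is measured at the approximate $\bx^{(t)}$, so I need to show that the removed mass \emph{as measured at the true optimizer} $\bx_{S^{(t)},\gamma}^{\mu}$ is still $\Omega(\eps\alpha)$. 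This is where strong concavity enters: the Hessian of the regularizer contributes diagonal entries $\mu\bw_i/\bx_i$, so a Pinsker-type argument (or, equivalently, a direct second-order Taylor expansion of $f_{S^{(t)},\gamma}^{\mu}$ at $\bx_{S^{(t)},\gamma}^{\mu}$ along $\bx^{(t)}-\bx_{S^{(t)},\gamma}^{\mu}$) turns the hypothesis $f_{S^{(t)},\gamma}^{\mu}(\bx^{(t)})\geq (1-\tfrac{\mu\eps^2}{512})\Phi^{(t)}$ into a weighted $\ell_1$-closeness tight enough that $|\bw_{D^{(t)}}^\top(\bx^{(t)}-\bx_{S^{(t)},\gamma}^{\mu})|\leq \tfrac{\eps}{4}\nu^{(t)}$.

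With this closeness in hand the exact-proof argument goes through verbatim: the true optimizer still loses at least $\tfrac{\eps}{4}\nu^{(t)}=\Omega(\eps\alpha)$ of mass across $D^{(t)}$, and the Bregman term from the entropy regularizer translates this into $\Phi^{(t)}-\Phi^{(t+1)}=\Omega(\mu\eps\alpha)$. The constants $\mu\leq \tfrac{\eps}{128\log d}$ and the approximation factor $\tfrac{\mu\eps^2}{512}$ are calibrated precisely so that (i) the initial potential is $O(\alpha\log d)$, (ii) the weighted-norm perturbation between $\bx^{(t)}$ and $\bx_{S^{(t)},\gamma}^{\mu}$ stays small enough to preserve the triggering inequality at the true optimizer, and (iii) the per-rebuild drop dominates the accumulated approximation error over the entire phase.

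The main obstacle I anticipate is making the strong-concavity step quantitatively tight. The entropy regularizer is only \emph{locally} strongly concave: the per-coordinate curvature $\mu\bw_i/\bx_i$ blows up when $\bx_i$ is small (helpful) but is weak on large coordinates, and converting $(1-\delta)$-suboptimality into a usable bound on $\|\bx^{(t)}-\bx_{S^{(t)},\gamma}^{\mu}\|$ requires care precisely at coordinates where $\bx_{S^{(t)},\gamma}^{\mu}$ is close to the boundary of $\X_{S^{(t)}}$. I would handle this by splitting coordinates into ``large'' and ``small'' buckets, using the curvature bound on the small ones and the boundedness of $f$ on the large ones, and then checking that the chosen $\tfrac{\mu\eps^2}{512}$ is exactly at the right scale to absorb both contributions without inflating the rebuild count.
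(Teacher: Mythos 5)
Your overall strategy matches the paper's: use the optimal regularized value $Z_{S,\gamma}^{\mu}=f_{S,\gamma}^{\mu}(\bx_{S,\gamma}^{\mu})$ as the potential, use strong concavity to convert $(1-\tfrac{\mu\eps^2}{512})$-suboptimality into weighted-$\ell_1$ closeness of $\bx^{(t)}$ to the true maximizer, transfer the deleted mass from $\bx^{(t)}$ to $\bx^{\mu}_{S^{(t)},\gamma}$ by the triangle inequality, and then run the exact-case Bregman-divergence argument. Two corrections, though. First, your potential accounting is off: the per-rebuild drop is \emph{multiplicative}, $Z_{S^{(t+1)},\gamma}^{\mu}\leq(1-\Omega(\mu\eps))Z_{S^{(t)},\gamma}^{\mu}$, because the Bregman bound only gives a decrease of $\Omega(\mu\eps)\cdot\bw_{S^{(t)}}^{\top}\bx_{S^{(t)},\gamma}^{\mu}=\Omega(\mu\eps\,\nu_{S^{(t)}}^{*})$, and $\nu_{S^{(t)}}^{*}$ decays from $\alpha$ down to $\alpha/d$ within the phase. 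Your claimed fixed additive drop of $\Omega(\mu\eps\alpha)$ is false once the potential itself has fallen well below $\alpha$; replacing it with the correct additive bound $\Omega(\mu\eps\alpha/d)$ would yield only $O(d/(\mu\eps))$ rebuilds. The multiplicative form together with the range $[\alpha/d,(1+\eps/2)\alpha]$ of the potential gives $\log_{1-\Omega(\mu\eps)}(O(d))=O(\log d/(\mu\eps))$, which is what the paper does.

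Second, the obstacle you anticipate in the strong-concavity step is not actually there. The right move is to measure strong concavity in the weighted $\ell_1$ norm $\norm{\bz}_{\bw,S}=\sum_i\bw_i|\bz_i|$: by Cauchy--Schwarz, $\sum_i\frac{\mu\bw_i}{\bx_i}\bz_i^2\geq\mu\norm{\bz}_{\bw,S}^2/\norm{\bx}_{\bw,S}\geq\frac{\mu}{\nu_S^{*}}\norm{\bz}_{\bw,S}^2$ uniformly over all feasible $\bx$, since $\norm{\bx}_{\bw,S}=\bw_S^\top\bx\leq\nu_S^{*}$ everywhere on $\X_S$. So no large/small bucketing is needed, and the $\ell_1$ (rather than $\ell_2$) closeness is exactly the form needed to control the deleted mass $\bw_{D^{(t)}}^{\top}$ term. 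One further small item to include: you must also verify that the approximate $\bx^{(t)}$ itself satisfies the $(1-\eps/2)$-approximation guarantee required of \texttt{Rebuild()}, which follows from the same $\ell_1$ closeness combined with \cref{lemma:valid approximation}.
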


In the general case that $\max_{\bx^\prime \in \X_S}\bw_S^\top \bx^\prime$ drops by more than a factor of $d$, we can simply re-run the algorithm with different values of $\gamma$. \cref{alg:rebuild} below implements this strategy and \cref{thm:upper bound on reconstruction rounds approximate multiple phases} bounds its performance when used as the $\texttt{Rebuild()}$ subroutine in  \cref{alg:meta-general-ds}.

\begin{algorithm2e}[!ht]
  \caption{Implementation of \texttt{Rebuild()} for \cref{thm:upper bound on reconstruction rounds approximate multiple phases}.} \label{alg:rebuild}
  
  \SetEndCharOfAlgoLine{}
  \SetKwInput{KwData}{Input}
  \SetKwInput{KwResult}{Output}
  \SetKwInOut{State}{global}
  \SetKwProg{KwProc}{function}{}{}
  \SetKwFunction{Rebuild}{Rebuild}

  \State{weight vector $\bw \in \R_{> 0}^{d}$, accuracy parameter $\eps \in (0, 1)$, and $\mu \in (0, 1)$.}
  \State{current coordinates $S \subseteq [d]$.}
  \State{number of rebuilds $t \in \Z_{\geq 0}$ and snapshots $x^{(t)}$ and $S^{(t)}$ for analysis.}
  \State{an estimate $\widetilde{\nu}$ for the current phase, initially set to $\alpha$}

  \vspace{0.4em}

  \KwProc{\Rebuild{}} {
    Let $\bx^{(t)} \in \X_S$ be an arbitrary $\left(1-\frac{\mu\eps^2}{512}\right)$-approximate solution to $f_{S,\gamma}^{\mu}$ for $\gamma \defeq \widetilde{\nu}$.\;
    \While{$\bw^\top \bx^{(t)} < \widetilde{\nu}/d$} {\label{line:while-loop}
      $\widetilde{\nu} \gets \frac{1}{1-\eps}\widetilde{\nu}/d$.\label{line:decrease}\;
      Recompute $\bx^{(t)} \in \X_S$ as a $\left(1-\frac{\mu\eps^2}{512}\right)$-approximate solution to $f_{S,\gamma}^{\mu}$ for $\gamma \defeq \widetilde{\nu}$.\;
    }
    Set $\nu^{(t)} \gets \bw^\top_S \bx^{(t)}$ and $S^{(t)} \gets S$.\;
    \textbf{return} $(\bx^{(t)}, \nu^{(t)})$ and set $t \gets t + 1$.\;
  }
\end{algorithm2e}

\begin{theorem}
For parameters $\alpha$, $k$, $\eps > 0$, and $0 < \mu\leq \frac{\eps}{128\log d}$, using \cref{alg:rebuild} as the \emph{\texttt{Rebuild()}} subroutine in \cref{alg:meta-general-ds}, before $\max_{\bx^\prime \in \X_S}\bw^\top_S\bx^\prime$ drops from at most $\alpha$ to below $\alpha/k$ there will be at most $O(\frac{\log k}{\mu \cdot \eps})$ calls to \emph{\texttt{Rebuild()}}.
Moreover, the value $\gamma$ in \emph{\texttt{Rebuild()}} satisfies $\max_{\bx^\prime \in \X_S}\bw^\top_S \bx^\prime \leq \gamma \leq \alpha$, and the while-loop in Line~\ref{line:while-loop} will be run at most $O(\log_d k)$ times in total.
  \label{thm:upper bound on reconstruction rounds approximate multiple phases}
\end{theorem}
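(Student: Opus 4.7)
The plan is to partition the execution into phases by the value of $\widetilde{\nu}$ (equivalently $\gamma$) and then invoke \cref{lemma:upper bound on reconstruction rounds approximate} within each phase. Let $\gamma_0 \defeq \alpha$ and $\gamma_{j+1} \defeq \gamma_j/((1-\eps)d)$, so that $\gamma_j$ is exactly the value of $\widetilde{\nu}$ after the $j$-th execution of the decrement at Line~\ref{line:decrease}; I call the period during which $\widetilde{\nu} = \gamma_j$ the \emph{$j$-th phase}. The key invariant I plan to maintain is that at every point in the execution, $\max_{\bx^\prime \in \X_S}\bw^\top_S \bx^\prime \leq \widetilde{\nu} \leq \alpha$. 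This invariant immediately yields the middle conclusion $\max_{\bx^\prime \in \X_S}\bw^\top_S \bx^\prime \leq \gamma \leq \alpha$, since $\gamma$ is always set to the current $\widetilde{\nu}$. The upper bound $\widetilde{\nu} \leq \alpha$ is clear because $\widetilde{\nu}$ is initialized to $\alpha$ and only decreased by Line~\ref{line:decrease}.

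For the lower bound in the invariant, I would induct on the number of while-loop iterations executed so far. The base case holds because initially $\max_{\bx^\prime \in \X_S}\bw^\top_S \bx^\prime \leq \alpha = \widetilde{\nu}$ by hypothesis. For the inductive step, suppose $\widetilde{\nu} = \gamma_j \geq \max_{\bx^\prime \in \X_S}\bw^\top_S \bx^\prime$ just before a while-loop iteration. Such an iteration fires because $\bw^\top_S \bx^{(t)} < \gamma_j/d$ for the most recently computed $(1-\mu\eps^2/512)$-approximate solution $\bx^{(t)}$ to $f^{\mu}_{S,\gamma_j}$. Since the parameters match those of \cref{lemma:upper bound on reconstruction rounds approximate} and its proof already establishes that such approximate solutions are valid outputs for \texttt{Rebuild()} in \cref{alg:meta-general-ds}, I plan to extract from that analysis the quantitative guarantee $\bw^\top_S \bx^{(t)} \geq (1-\eps)\max_{\bx^\prime \in \X_S}\bw^\top_S \bx^\prime$ (when $\gamma_j$ upper bounds $\max_{\bx^\prime}\bw^\top_S\bx^\prime$ and the stated bound on $\mu$ holds). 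Combining with $\bw^\top_S \bx^{(t)} < \gamma_j/d$ then gives $\max_{\bx^\prime \in \X_S}\bw^\top_S \bx^\prime < \gamma_j/((1-\eps) d) = \gamma_{j+1}$, which equals $\widetilde{\nu}$ after the iteration, maintaining the invariant.

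With the invariant in hand, the two counting bounds follow directly. Since $\widetilde{\nu} \geq \max_{\bx^\prime \in \X_S}\bw^\top_S \bx^\prime$, the assumption that $\max_{\bx^\prime}\bw^\top_S \bx^\prime$ has not yet dropped below $\alpha/k$ forces $\widetilde{\nu} \geq \alpha/k$; as each while-loop iteration multiplies $\widetilde{\nu}$ by $1/((1-\eps)d)$, the total number of iterations is at most $\log_{(1-\eps)d}(k) = O(\log_d k)$, proving the while-loop bound. Within the $j$-th phase, the precondition $\max_{\bx^\prime}\bw^\top_S \bx^\prime \leq \gamma_j$ and the constant choice $\gamma = \gamma_j$ exactly match the hypothesis of \cref{lemma:upper bound on reconstruction rounds approximate} (with its $\alpha$ taken to be $\gamma_j$), bounding the number of \texttt{Rebuild()} calls in that phase by $O(\log d/(\mu\eps))$ before $\max_{\bx^\prime}\bw^\top_S \bx^\prime$ drops below $\gamma_j/d$; after this point, the very next \texttt{Rebuild()} has $\bw^\top_S \bx^{(t)} \leq \max_{\bx^\prime}\bw^\top_S \bx^\prime < \gamma_j/d$ and thus immediately triggers the while-loop, ending the phase after at most one additional call. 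Multiplying by the $O(\log_d k)$ phases yields $O(\log k/(\mu\eps))$ total \texttt{Rebuild()} calls. The main anticipated obstacle is the inductive step's conversion from an approximate regularized solution to a linear one; this requires carefully bounding the entropy contribution to $f^{\mu}_{S,\gamma_j}$ using $\gamma_j \geq \max_{\bx^\prime}\bw^\top_S\bx^\prime$ and $\mu = O(\eps/\log d)$, but this work is already essentially subsumed by the proof of \cref{lemma:upper bound on reconstruction rounds approximate}.
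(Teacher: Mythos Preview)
Your proposal is correct and takes essentially the same approach as the paper. Both partition the run into phases by the value of $\widetilde{\nu}$, maintain the invariant $\widetilde{\nu} \geq \max_{\bx' \in \X_S}\bw_S^\top\bx'$ via the same contrapositive step, bound the number of phases by $O(\log_d k)$, and invoke \cref{lemma:upper bound on reconstruction rounds approximate} within each phase to get $O(\log d/(\mu\eps))$ rebuilds per phase.
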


\begin{proof}
Let us assume $\eps \leq 1/3$.
Consider dividing \cref{alg:meta-general-ds} into phases, where each phase ends when $\widetilde{\nu}$ is decreased in Line~\ref{line:decrease}.
At the start of each phase, we maintain the invariant that $\widetilde{\nu} \geq \max_{\bx^\prime \in \X_S}\bw_S^\top \bx^\prime$.
The invariant implies that the number of phases is $O(\log_d k)$.

Within each phase, as long as $\bw_S^\top \bx^{(t)} \geq \widetilde{\nu}/d$, we know that $\max_{\bx^\prime \in \X_S}\bw_S^\top \bx^\prime \geq \widetilde{\nu} / d$.
\cref{lemma:upper bound on reconstruction rounds approximate} therefore shows that the vector $\bx$ we maintain throughout this phase is a valid approximation.
On the other hand, when $\bw^\top \bx^{(t)} < \widetilde{\nu} / d$, we know that $\max_{\bx^\prime \in \X_S}\bw_S^\top \bx^\prime$ must fall below $\frac{\widetilde{\nu} / d}{1-\eps} \leq \frac{3\widetilde{\nu}}{4}$.
Hence, the new $\widetilde{\nu}$ remains an upper bound on it and the invariant holds.
\cref{lemma:upper bound on reconstruction rounds approximate} also shows that within each phase the subroutine $\texttt{Rebuild()}$ will be called at most $O\left(\frac{\log d}{\mu \cdot \eps}\right)$ times.
As such throughout the $O(\log_d k)$ phases the number of rebuilds is at most $O\left(\frac{\log k}{\mu \cdot \eps}\right)$.
\end{proof}

To complement our algorithmic results, we investigate the limit of the lazy update framework. It turns out that entropy regularization is a nearly optimal strategy for the lazy update framework in its dependence on both $\log n$ and $\eps^{-1}$ for the decremental matching problem for any given $n$ and $\eps= \Omega(n^{-1/2})$, even in unweighted bipartite graphs.

\begin{restatable}{theorem}{LowerBound}
  For any $n \in \N$, $2/\sqrt{n}\leq \eps\leq 1/4$, and implementation of \emph{\texttt{Rebuild()}} in \cref{alg:meta-general-ds}, there exists a bipartite graph $G$ with $n$ vertices on each side and an output-adaptively chosen sequence of deletions that when $\X \defeq \P_G$ \cref{alg:meta-general-ds}  calls \emph{\texttt{Rebuild()}} $\Omega(\log^2(\eps^2 n) \cdot \eps^{-2})$ times before $G$ is empty.
  \label{thm:lower-bound}
\end{restatable}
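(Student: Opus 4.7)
My plan is to construct a hard bipartite instance $G$ together with an output-adaptive deletion sequence, and to analyze the rebuild count by a scale decomposition that extracts a factor of $\log(\eps^2 n)$ from the number of matching-size scales and another factor of $\log(\eps^2 n)\cdot \eps^{-2}$ from the rebuilds per scale.

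I would take $G$ to be the vertex-disjoint union of bipartite ``scale gadgets'' $G_1,\dots,G_L$ with $L=\Theta(\log(\eps^2 n))$, where each $G_i$ has $n_i\defeq\Theta(n/2^i)$ vertices per side and is an explicit pseudorandom bipartite graph (e.g., a random biregular graph of appropriate degree, or an explicit Ramanujan-type expander). The key structural requirement is that $G_i$ admits many edge-disjoint near-perfect matchings, so that even after the adversary performs $\Omega(\eps^{-2})$ rounds of deletions inside $G_i$, the matching value $M^{*}(G_i)$ remains close to its initial value $n_i$. The vertex counts sum geometrically to $O(n)$. The adversary processes the scales one at a time in decreasing order of $n_i$, so that while scale $i$ is being attacked, $M^{*}(G)=\Theta(n_i)$ and all deletions are confined to $G_i$.

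Within scale $i$, after each rebuild returning $\bx^{(t)}$, the adversary inspects $\bx^{(t)}|_{G_i}$ and deletes a carefully chosen set of edges in $G_i$ whose total $\bx^{(t)}$-mass is just over $\tfrac{\eps}{2}\cdot\bw^{\top}\bx^{(t)}|_{G_i}$, which forces the next rebuild. The choice of deletions is output-adaptive: if $\bx^{(t)}$ places heavy mass on a few edges, the adversary removes those directly, and each such edge can cost $M^{*}(G_i)$ at most $1$; if $\bx^{(t)}$ is spread close to uniformly, the pseudorandom structure of $G_i$ is used to locate a matching-like subset of near-average-mass edges whose removal reduces $M^{*}(G_i)$ by only $O(1)$. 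To count rebuilds at scale $i$, I would carry a KL-divergence / entropy potential $\Phi_i$ between $\bx^{(t)}|_{G_i}$ and a uniform target on the remaining $G_i$-edges, in direct analogy with the upper-bound argument of \cref{lemma:upper bound on reconstruction rounds}. Each rebuild resets $\Phi_i$ by at most $O(\log(\eps^2 n))$, while each forced deletion consumes $\Omega(\eps)$ of $\Phi_i$; since $G_i$ survives $\Omega(\eps^{-1})$ rebuilds before $M^{*}(G_i)$ meaningfully collapses, this gives $\Omega(\eps^{-2}\log(\eps^2 n))$ rebuilds per scale and, summing across $L$ scales, the claimed $\Omega(\log^{2}(\eps^2 n)\cdot \eps^{-2})$ bound.

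The principal obstacle is the combined combinatorial argument that, for \emph{every} algorithm output $\bx^{(t)}$, the adversary can simultaneously (a) extract the required $\tfrac{\eps}{2}$-fraction of mass and (b) preserve $M^{*}(G_i)$ for the entire scale. This requires a delicate choice of gadget density: dense enough that many disjoint near-perfect matchings persist (granting longevity of $M^{*}(G_i)$), yet sparse enough that the $\log|E(G_i)|$ appearing in the potential analysis is $\Theta(\log(\eps^2 n))$ (to match the claimed exponent on $\log$). A secondary subtlety is cleanly charging each rebuild to a scale when the algorithm spreads its mass across multiple gadgets; this is handled by running the adversary strictly sequentially and arguing that, for the scale currently under attack, the algorithm's restriction $\bx^{(t)}|_{G_i}$ must carry a $(1-O(\eps))$ fraction of the overall matching value, so the per-scale accounting captures essentially all of the rebuilds.
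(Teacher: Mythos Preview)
Your gadget choice is the wrong one, and it is the heart of the matter. On a pseudorandom or Ramanujan-type $d$-regular bipartite graph $G_i$, the algorithm is free to output the \emph{uniform} fractional matching $\bx_e=1/d$ (this is essentially the entropy-regularized solution on such a graph). Against this $\bx$, any edge set carrying mass $\tfrac{\eps}{2}n_i$ contains $\tfrac{\eps}{2}n_i d$ edges; since $G_i$ has only $n_i d$ edges, you can force at most $O(1/\eps)$ rebuilds before the gadget is empty, regardless of how robust $M^{*}(G_i)$ is along the way. Your case split ``heavy edges versus spread-out'' does not help here: the bad case is precisely the uniform one, and your proposed response---delete one near-perfect matching---removes only $n_i/d$ mass, which forces $d\le 2/\eps$ and hence at most $2/\eps$ edge-disjoint matchings to spend. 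Across your $L=\Theta(\log(\eps^2 n))$ scales this yields $O(\eps^{-1}\log(\eps^2 n))$ rebuilds, a factor $\eps^{-1}\log(\eps^2 n)$ short of the target.

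The entropy/KL potential you invoke is an upper-bound device: it shows that if the \emph{algorithm} plays entropy-regularized solutions then rebuilds are few. It does not, and cannot, lower-bound rebuilds for an \emph{arbitrary} algorithm. What is actually needed is a graph whose structure forces \emph{every} near-optimal fractional matching to concentrate on a small, decomposable region. The paper uses the lower-triangular bipartite graph $G_k$ with edges $\{(i^\ell,j^r):j\le i\}$, proves a concentration lemma (any $(1-\Theta(\eps))$-approximate fractional matching puts $\Omega(k)$ mass on the diagonal strip $\{0\le i-j\le \Theta(\eps)k\}$), and then deletes a $\Theta(\eps)$ fraction of the $\Theta(\eps k)$ diagonal matchings per round; this gives $\Omega(\eps^{-1}\log(\eps^2 k))$ rebuilds while $M^{*}$ stays near $k$, and is iterated over $\Omega(\eps^{-1}\log(\eps^2 n))$ phases by shrinking $k\mapsto(1-4\eps)k$. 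The forced-concentration property of the triangular graph is exactly what your expander gadgets lack.
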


\begin{figure}[ht]
\centering
\begin{tikzpicture}[x=0.75pt,y=0.75pt,yscale=-1,xscale=1]

\draw  [fill={rgb, 255:red, 0; green, 0; blue, 0 }  ,fill opacity=0.5 ] (130,130.5) -- (211,211.5) -- (130,211.5) -- cycle ;
\draw   (211.08,211.42) -- (130,130.5) -- (211,130.42) -- cycle ;
\draw  [fill={rgb, 255:red, 80; green, 227; blue, 194 }  ,fill opacity=0 ][dash pattern={on 3.75pt off 3pt on 7.5pt off 1.5pt}] (121.09,119.37) .. controls (121.65,118.8) and (122.57,118.8) .. (123.14,119.36) -- (221.04,216.34) .. controls (221.61,216.91) and (221.61,217.82) .. (221.05,218.39) -- (217.98,221.49) .. controls (217.42,222.06) and (216.5,222.06) .. (215.93,221.5) -- (118.03,124.52) .. controls (117.46,123.96) and (117.46,123.04) .. (118.02,122.47) -- cycle ;

\draw   (411.08,210.42) -- (330,129.5) -- (411,129.42) -- cycle ;
\draw  [fill={rgb, 255:red, 0; green, 0; blue, 0 }  ,fill opacity=0.5 ] (330.08,129.42) -- (411.08,210.42) -- (330.08,210.42) -- cycle ;
\draw  [fill={rgb, 255:red, 255; green, 255; blue, 255 }  ,fill opacity=1 ][dash pattern={on 3.75pt off 3pt on 7.5pt off 1.5pt}] (321.09,118.37) .. controls (321.65,117.8) and (322.57,117.8) .. (323.14,118.36) -- (421.04,215.34) .. controls (421.61,215.91) and (421.61,216.82) .. (421.05,217.39) -- (417.98,220.49) .. controls (417.42,221.06) and (416.5,221.06) .. (415.93,220.5) -- (318.03,123.52) .. controls (317.46,122.96) and (317.46,122.04) .. (318.02,121.47) -- cycle ;
\draw   (330.08,210.42) -- (330.08,129.42) -- (411,129.42) ;
\draw   (411,129.42) -- (410.77,210.64) -- (330.08,210.42) ;
\draw    (230,180) -- (308,180) ;
\draw [shift={(310,180)}, rotate = 180.72] [color={rgb, 255:red, 0; green, 0; blue, 0 }  ][line width=0.75]    (10.93,-3.29) .. controls (6.95,-1.4) and (3.31,-0.3) .. (0,0) .. controls (3.31,0.3) and (6.95,1.4) .. (10.93,3.29)   ;

\draw (270,165) node   [align=left] {\begin{minipage}[lt]{45.56pt}\setlength\topsep{0pt}
Deletions
\end{minipage}};
\end{tikzpicture}
\caption{Illustration of the adversarial strategy for proving the lower bound.}
\label{fig:lower bound}
\end{figure}
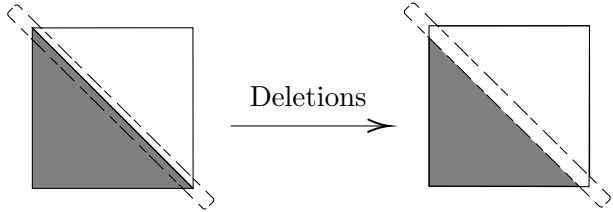

To give an intuition of the construction, \cref{fig:lower bound} shows the choice of the initial graph and the output-adaptive deletion strategy. We use the square to represent the adjacency matrix of the chosen bipartite graph, where the gray area represents edges between row vertices and column vertices. In this graph, the only maximum matching is the one on the diagonal of the square. We generalize this observation to any $(1-O(\eps))$-approximate fractional maximum matching and show that a constant fraction of the mass will be concentrated around the diagonal. Thus edge deletion near the diagonal suffices to cause rebuilds of the algorithm.
We show that within each phase, which is defined by deleting everything in the dashed rectangle, an output-adaptive deletion sequence can cause $\Omega(\log(\eps^2n)/\eps)$ calls to \texttt{Rebuild()}.
Additionally, there are $\Omega(\log(\eps^2n)/\eps)$ such phases in total, which together establish the lower bound.

\subsection{Dynamic Fractional Matching}

We obtain our algorithms for \cref{prob:dec-matching} by first solving the following intermediate fractional version of the problem.

\begin{problem}[Decremental Fractional Matching]
  In the \emph{decremental fractional matching} problem, we are given the same set of inputs as \cref{prob:dec-matching} does, and the goal is to maintain a $(1-\eps)$ approximate maximum weight \emph{fractional} matching, i.e., an $\bx \in \M_G$ such that $\sum_{e \in E}\bw_e\bx_e \geq (1-\eps)M_{\bw}^{*}(G)$, at all times under deletions to $E$ until $G$ becomes empty.
  \label{prob:dec-frac-matching}
\end{problem}

Our dynamic algorithms use the framework established in the preceding sections, and in particular they apply \cref{thm:upper bound on reconstruction rounds approximate multiple phases} with $\X = \M_G$.
This suffices to solve \cref{prob:dec-frac-matching} provided we can solve the following entropy-regularized problem efficiently.

\begin{restatable}[Entropy-Regularized Matching]{problem}{EntropyMatching}
  Given an $n$-vertex $m$-edge graph $G = (V, E)$ with edge weights $\bw \in \N^{E}$ bounded by $W = \poly(n)$, a trade-off parameter $\mu \leq 1$, a $\gamma \in \R_{\geq 0}$ such that $M_{\bw}^{*}(G) \leq \gamma \leq n^3 W$, and an accuracy parameter $\eps \in (0, 1)$, the \emph{entropy-regularized matching problem} is to compute a $(1-\eps)$-approximate solution $\bx \in \M_G$ to
  \begin{equation}
    \max_{\bx \in \M_G}\left\{\bw^\top \bx + \mu \cdot \sum_{e \in E}\bw_e\bx_e\log\frac{\gamma}{\bw_e\bx_e}\right\}.
    \label{eq:entropy-matching}
  \end{equation}
  \label{problem:entropy-matching}
\end{restatable}

We develop two algorithms for solving the entropy-regularized matching problem as specified in the following theorem.

\begin{restatable}{theorem}{EntropySolver}
  For any $\eps \geq \widetilde{\Omega}(n^{-1/2})$, there are randomized $\widetilde{O}\left(m\eps^{-6} + n\eps^{-13}\right)$ and $\widehat{O}\left(m\eps^{-5}\right)$ time algorithms that solve \cref{problem:entropy-matching} w.h.p.
  \label{thm:entropy-solver}
\end{restatable}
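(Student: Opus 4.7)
The plan is to reduce the entropy-regularized matching problem to a sequence of simpler subproblems via a modularized version of the Ahn-Guha framework for approximate matching in general graphs, and then to instantiate the subproblem oracle in two different ways to obtain the two runtime bounds. As a first step, I would use \cref{fact:scaled-down} to replace the feasible set $\M_G$ with the small-odd-set relaxation $\M_{G,\eps}$, losing only a $(1+O(\eps))$ factor in the objective. Since the entropy-regularized objective in \eqref{eq:entropy-matching} is concave and separable, and its gradient at any positive interior point depends only logarithmically on the iterate, standard first-order convergence guarantees become available once we have a tractable linear-optimization oracle over $\M_{G,\eps}$.

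The first instantiation, giving the $\widehat{O}(m\eps^{-5})$ bound, would use the almost-linear-time convex min-cost flow solver of \cite{ChenKLPGS22} as the inner oracle. After absorbing the active odd-set constraints into Lagrangian penalties, each inner subproblem is a convex-cost flow problem with separable convex objective, solvable in $m^{1+o(1)}$ time. An outer loop implementing iterative refinement of the entropy regularizer together with lazy odd-set constraint generation then adds a $\poly(\eps^{-1})$ factor, yielding $\widehat{O}(m\eps^{-5})$ overall. The second instantiation, giving the $\widetilde{O}(m\eps^{-6} + n\eps^{-13})$ bound, would instead call a capacity-constrained weighted general matching oracle (of the kind captured by \cref{lemma:capacitated-weighted-matching}) at each iteration. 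The entropy term would be handled by discretizing it along $\widetilde{O}(\eps^{-O(1)})$ weight buckets and reducing to a sequence of capacitated weighted matching instances, each solvable in $\widetilde{O}(m\poly(\eps^{-1}))$ time; the additive $n\eps^{-13}$ term would arise from bookkeeping over the $n^{O(1)} \cdot \eps^{-O(1)}$ discovered small odd sets whose total description size is $\widetilde{O}(n\poly(\eps^{-1}))$.

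The main obstacle is controlling the interaction between the unbounded logarithmic gradient of the entropy term, the odd-set constraints from non-bipartiteness, and the wide dynamic range $\gamma \in [1, n^3 W]$ of the regularization scale. A naive Frank-Wolfe or multiplicative-weights iteration converges too slowly against an objective whose gradient can be as large as $\mu \log(\gamma/(w_e x_e))$, and truncating iterates away from zero must be done without destroying matching feasibility on $\M_{G,\eps}$. I expect to address this by a staged algorithm that first computes a warm start with all coordinates bounded below by $\poly(\eps/n)$ (so that the entropy gradient has $\widetilde{O}(1)$ magnitude on the support), and then performs $\poly(\eps^{-1})$ corrective iterations, each using either the convex-flow oracle or the capacitated matching oracle. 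Verifying that the iteration count depends only on $\poly(\eps^{-1}, \log n)$ rather than on $\log \gamma$, and that the odd-set constraints can be discovered and maintained with only $\widetilde{O}(\poly(\eps^{-1}))$ amortized additions, will be the most delicate part of the analysis.
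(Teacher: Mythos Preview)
Your high-level architecture is right: both runtimes come from the generalized Ahn--Guha framework (\cref{lemma:main-general}), with the $\widehat{O}(m\eps^{-5})$ bound instantiating the inner oracle via the convex flow solver of \cite{ChenKLPGS22} and the $\widetilde{O}(m\eps^{-6}+n\eps^{-13})$ bound via a reduction to capacity-constrained weighted matching. But you have inverted where the difficulty lies, and consequently your ``main obstacle'' and your proposed fix (warm starts, gradient control, worrying about $\log\gamma$ dependence) are not what the paper does and would not be needed.

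The key point you are missing is that \cref{lemma:main-general} treats the concave objective $f$ as a \emph{black box}: the outer MWU loop of Ahn--Guha is purely over the odd-set constraints and is oblivious to $f$; each iteration calls a $(\beta,T_\A,\zeta)$-oracle that approximately maximizes $f$ minus a linear penalty over the \emph{relaxed} polytope $\P_G$ (no odd sets). So the entropy term never interacts with the odd-set handling, and there is no ``iterative refinement of the entropy regularizer'' in the outer loop. For the $\widehat{O}$ bound, the oracle is just a single call to the convex-flow solver, which handles the entropy directly via the self-concordant barrier $-\log(\bw_e x)-\log(y/\mu-\bw_e x\log(\bw_e x))$ (see \cref{lemma:almost-linear-solver}); no warm start or gradient truncation is required. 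For the $\widetilde{O}$ bound, the paper piecewise-linearizes each $f_e$ with $\widetilde{O}(\eps^{-1})$ breakpoints (\cref{lemma:near-linear-reduction}), which converts the problem to a single capacitated weighted matching instance on a multigraph with $\widetilde{O}(m\eps^{-1})$ edges, solved by \cref{lemma:capacitated-weighted-matching}; again, no iteration on the entropy.

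Finally, your attribution of the $n\eps^{-13}$ term is wrong. It does not come from bookkeeping over discovered odd sets. It comes from the \emph{evaluator} inside the MWU: each of the $\widetilde{O}(\eps^{-4})$ MWU iterations must find all near-maximally-violated small odd-set constraints, which is done via $\kappa$-partial Gomory--Hu trees with $\kappa=O(\eps^{-4})$ in time $\widetilde{O}(m+n\kappa^2)=\widetilde{O}(m+n\eps^{-8})$ per iteration (\cref{lemma:evaluators}), yielding the $n\eps^{-9}\cdot\eps^{-4}=n\eps^{-13}$ term in \cref{lemma:main-general}. The $\widehat{O}$ bound avoids this by using the almost-linear Gomory--Hu tree of \cite{AbboudLPS23} instead.
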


This when combined with \cref{thm:upper bound on reconstruction rounds approximate multiple phases} immediately implies algorithms for solving the decremental fractional matching problem.

\begin{restatable}{theorem}{DecrementalFractional}
  There are fully-adaptive randomized algorithms that, for $\eps \geq \widetilde{\Omega}(n^{-1/6})$, solve \cref{prob:dec-frac-matching} w.h.p. with amortized update times $\widetilde{O}(\eps^{-\expnear})$ and $\widehat{O}(\eps^{-\expalmost})$.
  Additionally, if $G$ is bipartite, then there is such an algorithm that works for any $\eps \geq 1/\poly(n)$ with amortized update time $\widehat{O}(\eps^{-2})$.
  The recourse of the algorithms is $\widetilde{O}(m\eps^{-2})$.
  \label{thm:decremental-fractional}
\end{restatable}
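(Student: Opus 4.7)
The plan is to instantiate \cref{alg:meta-general-ds} with $\X \defeq \M_G$ and edge-weight vector $\bw$ (so $d = m$), and to use \cref{alg:rebuild} as the \texttt{Rebuild()} subroutine, where the required $\left(1 - \mu\eps^2/512\right)$-approximate maximizer of $f^{\mu}_{S,\gamma}$ in each iteration is produced by the entropy-regularized matching solver of \cref{thm:entropy-solver}. Correctness of the resulting dynamic algorithm follows immediately from the observation after \cref{alg:meta-general-ds}, so everything that remains is quantitative.

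I would set $\mu = \Theta(\eps/\log m)$ so that the hypothesis $\mu \leq \eps/(128\log d)$ demanded by \cref{lemma:upper bound on reconstruction rounds approximate} (and hence by \cref{thm:upper bound on reconstruction rounds approximate multiple phases}) holds. This forces each call to \cref{thm:entropy-solver} to use accuracy parameter $\eps^\prime \defeq \mu\eps^2/512 = \widetilde{\Theta}(\eps^3)$, and the restriction $\eps^\prime \geq \widetilde{\Omega}(n^{-1/2})$ from that solver translates into the hypothesis $\eps \geq \widetilde{\Omega}(n^{-1/6})$ in the statement of the theorem. Since $M^{*}_{\bw}(G) \leq nW = \poly(n)$ initially, applying \cref{thm:upper bound on reconstruction rounds approximate multiple phases} with $k = \poly(n)$ bounds the total number of rebuilds by $O(\log k / (\mu \eps)) = \widetilde{O}(\eps^{-2})$.

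Multiplying the per-rebuild cost from \cref{thm:entropy-solver} by this rebuild count gives total work $\widetilde{O}\bigl((m(\eps^\prime)^{-6} + n(\eps^\prime)^{-13}) \cdot \eps^{-2}\bigr) = \widetilde{O}(m\eps^{-20} + n\eps^{-41})$ using the first runtime and $\widehat{O}\bigl(m(\eps^\prime)^{-5} \cdot \eps^{-2}\bigr) = \widehat{O}(m\eps^{-17})$ using the second. Amortizing over the $m$ edge deletions, and using $n = O(m)$ once isolated vertices are discarded, produces the claimed amortized update times of $\widetilde{O}(\eps^{-\expnear})$ and $\widehat{O}(\eps^{-\expalmost})$. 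The recourse bound $\widetilde{O}(m\eps^{-2})$ is immediate, since each of the $\widetilde{O}(\eps^{-2})$ rebuilds outputs a fractional matching whose support has size at most $m$. Fully-adaptive correctness follows from a union bound of the high-probability guarantee of \cref{thm:entropy-solver} over the $\widetilde{O}(\eps^{-2})$ invocations: conditioning on the current $G_S$, the solver's failure probability is unaffected by the (possibly adaptive) history that produced $G_S$.

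For the bipartite case, $\M_G = \P_G$ and the entropy-regularized problem \cref{problem:entropy-matching} becomes a strictly concave separable objective over the bipartite matching polytope, which can be solved to any inverse-polynomial accuracy in $\widehat{O}(m)$ time by formulating it as a convex min-cost flow and invoking \cite{ChenKLPGS22}. Plugging this $\widehat{O}(m)$ per-rebuild cost into the same framework, and noting that the $n^{-1/2}$ hypothesis of \cref{thm:entropy-solver} is no longer in play, yields total work $\widehat{O}(m\eps^{-2})$ and hence the claimed $\widehat{O}(\eps^{-2})$ amortized update time for any $\eps \geq 1/\poly(n)$. The main subtlety I anticipate is the cubic accuracy blow-up from passing through \cref{lemma:upper bound on reconstruction rounds approximate}: each $\eps$-factor lost in the translation between the lemma and the input accuracy of \cref{thm:entropy-solver} enters the exponent of the final amortized update time, so recovering \expnear\ and \expalmost\ on the nose requires careful bookkeeping of the $\mu$, $\eps^2$, and logarithmic factors that compose through \cref{thm:upper bound on reconstruction rounds approximate multiple phases} and \cref{thm:entropy-solver}.
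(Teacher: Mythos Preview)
Your proposal is correct and follows essentially the same approach as the paper: instantiate \cref{alg:meta-general-ds} with $\X=\M_G$, use \cref{alg:rebuild} with $\mu=\Theta(\eps/\log m)$ and accuracy $\eps'=\widetilde{\Theta}(\eps^3)$, invoke \cref{thm:entropy-solver} (or \cite[Theorem~10.16]{ChenKLPGS22} in the bipartite case) for each rebuild, bound the number of rebuilds by $\widetilde{O}(\eps^{-2})$ via \cref{thm:upper bound on reconstruction rounds approximate multiple phases}, and amortize over the $m$ deletions. Your arithmetic recovering the exponents $\expnear$ and $\expalmost$ is exactly the intended one, and your explicit remark on fully-adaptive correctness via a union bound is a detail the paper leaves implicit.
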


\begin{proof}
  Letting $\X \defeq \M_G$ for the decremental matching problem, \cref{thm:upper bound on reconstruction rounds approximate multiple phases} shows that by using \cref{problem:entropy-matching} with $\mu \defeq \frac{\eps}{128\log{m}}$ and accuracy parameter $\eps^\prime \defeq \frac{\mu\eps^2}{512} = \widetilde{\Theta}(\eps^3)$ inside \cref{alg:rebuild} as the subroutine \texttt{Rebuild()}, there will be at most $\widetilde{O}(\eps^{-2})$ calls to $\texttt{Rebuild()}$ throughout the algorithm, as the weight of any non-empty matching is at most $nW$ and at least one.
  For general graphs, we have $\eps^\prime \geq \widetilde{\Omega}(\eps^{-1/2})$, and the update times of $\widetilde{O}(\eps^{-\expnear})$ and $\widehat{O}(\eps^{-\expalmost})$ can be obtained by running \cref{thm:entropy-solver} as the subroutine $\texttt{Rebuild()}$ and amortizing over the $m$ updates.
  For bipartite graphs, we can use \cite[Theorem 10.16]{ChenKLPGS22} to solve \cref{problem:entropy-matching} to high accuracy in $\widehat{O}(m)$ time, resulting in the amortized update time of $\widehat{O}(\eps^{-2})$.
  The recourse is $\widetilde{O}(m\eps^{-2})$ since there are at most $\widetilde{O}(\eps^{-2})$ different matchings.
\end{proof}

Our algorithms for \cref{thm:entropy-solver} build upon the MWU-based algorithm for weighted non-bipartite $\bb$-matching by \cite{AhnG14}.
Informally, \cite{AhnG14} showed that the weighted $\bb$-matching problem in non-bipartite graphs reduces to solving a sequence of the same problem in bipartite graphs, possibly with different weights.
We observe that the analysis of \cite{AhnG14} seamlessly extends to general concave objective optimization over the non-bipartite matching polytope.
We then leverage the recent almost-linear time convex flow algorithm of \cite{ChenKLPGS22} for our almost-linear time algorithm for entropy-regularized matching.
Alternatively, by approximating the concave weight with piecewise linear functions and splitting each edge into multiple copies, we reduce \cref{problem:entropy-matching} to a capacity-constrained maximum weight matching problem, which is then solved by similarly applying the generalized approach of \cite{AhnG14}.
The runtime of this algorithm does not have the subpolynomial factor incurred by the use of \cite{ChenKLPGS22} but suffers from a larger dependence on $\eps^{-1}$.

\subsection{Dynamic Rounding of Fractional Matchings} \label{subsec:overview:rounding}

The results in previous sections show that we can solve the decremental fractional matching problem adaptively.
To turn the fractional matching into an integral one, we further design dynamic rounding algorithms for general graphs.
To present our algorithms in a unified way, we consider the following weighted definition of dynamic rounding algorithms.

\begin{restatable}[Dynamic Rounding Algorithm]{definition}{DefRounding}
  A \emph{dynamic rounding algorithm}, for a given $n$-vertex graph $G = (V, E)$, edge weights $\bw \in \N^{E}$ bounded by $W = \poly(n)$, and accuracy parameter $\eps > 0$, initializes with an $\bx \in \M_G$ and must maintain an integral matching $M \subseteq \supp(\bx)$ with $\bw(M) \geq (1-\eps)\bw^\top \bx$ under entry updates to $\bx$ that guarantee $\bx \in \M_G$ after each operation.
  \label{def:rounding}
\end{restatable}

We prove the following deterministic rounding algorithm which has near-optimal overhead in the unweighted case.

\begin{restatable}{theorem}{Rounding}
  There is a deterministic dynamic rounding algorithm for general graphs with amortized update time $\widetilde{O}(W\eps^{-4})$.
  \label{thm:rounding}
\end{restatable}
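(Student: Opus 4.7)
The plan is to reduce weighted dynamic rounding to $O(W)$ independent unweighted dynamic rounding instances, and then solve each unweighted instance by combining the deterministic dynamic pipage-rounding procedure of \cite{BhattacharyaKSW23rounding} with periodic static approximate matching computations on a small residual support.

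\textbf{Step 1: Weighted-to-unweighted reduction.} For each integer weight level $w \in [W]$, I would maintain the sub-vector $\bx^{(w)}$ supported on edges of weight at least $w$, which lies in $\M_{G^{(w)}}$ for the subgraph $G^{(w)}$ containing exactly those edges. Since $\bw^\top \bx = \sum_{w=1}^{W} \boldsymbol{1}^\top \bx^{(w)}$, producing a $(1-\eps)$-approximate integral matching $M^{(w)}$ at every level and then combining them through a lazily re-run static weighted-matching call on $\bigcup_w M^{(w)}$ (a graph of size $O(Wn)$) yields the required $(1-\eps)$-approximate weighted matching. Each entry update to $\bx_e$ propagates to at most $\bw_e \le W$ level sub-instances, giving an $O(W)$ multiplicative overhead on top of the per-level unweighted cost, plus a near-linear amortized merge cost.

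\textbf{Step 2: Unweighted rounding.} At each unweighted level, I would modify the dynamic pipage-rounding scheme of \cite{BhattacharyaKSW23rounding}. Recall that pipage rounding iteratively finds an even alternating cycle or path in the fractional support and shifts mass along it until one edge becomes integral, preserving the matching value. In bipartite graphs this terminates with a fully integral matching of value at least $\boldsymbol{1}^\top \bx$; in general graphs the procedure can stall on odd components. My strategy is to run dynamic pipage rounding only as long as it makes ``bipartite-like'' progress, stopping once the non-integral residual support has value $O(\eps \cdot \boldsymbol{1}^\top \bx)$ (or once an odd-component obstruction appears), and then, every $\Theta(\eps \cdot \boldsymbol{1}^\top \bx)$ updates, recompute a $(1-\eps)$-approximate matching on the union of the integral part and the small residual support using a static general-graph matcher such as \cite{DuanP14}. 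A single static call on this restricted graph costs $\widetilde{O}(\eps \cdot \boldsymbol{1}^\top \bx \cdot \eps^{-1}) = \widetilde{O}(\boldsymbol{1}^\top \bx)$ and amortizes to $\widetilde{O}(\eps^{-1})$ per update between recomputations; combined with the amortized per-update cost of \cite{BhattacharyaKSW23rounding}'s pipage data structure, this gives $\widetilde{O}(\eps^{-4})$ per unweighted update, and therefore $\widetilde{O}(W \eps^{-4})$ after Step 1.

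\textbf{Main obstacle.} The principal difficulty is that pipage rounding in general graphs stalls at odd components, so the strict shrinking invariant underlying \cite{BhattacharyaKSW23rounding}'s bipartite analysis does not transfer as-is. My plan is to formalize a bipartite-progress invariant under which dynamic pipage continues as long as the residual support admits an even alternating structure, and otherwise hands control off to the static routine. The quantitative heart of the argument is (i) showing that the non-integral support decays to $O(\eps \cdot \boldsymbol{1}^\top \bx)$ edges within a number of pipage steps whose amortized cost is consistent with the target bound, and (ii) proving the static recomputation is invoked rarely enough that its cost is absorbed by amortization against the $\Theta(\eps \cdot \boldsymbol{1}^\top \bx)$ updates between consecutive calls. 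This interface between a dynamic pipage data structure on ``easy'' support and a black-box static matcher on the small ``hard'' core is the key technical point of the proof.
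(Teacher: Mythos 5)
Your proposal has two genuine gaps, one in each step. \textbf{Step 1 (the weight-level reduction) is incorrect as stated.} Per-level $(1-\eps)$-approximate \emph{unweighted} matchings need not combine into a $(1-\eps)$-approximate \emph{weighted} matching, even after running an exact weighted matcher on their union. Concretely, take the $6$-cycle $v_1\cdots v_6$ with $\bw_{12}=\bw_{34}=\bw_{56}=2$, $\bw_{23}=\bw_{45}=\bw_{61}=1$, and $\bx_e=1/2$ on every edge (this lies in $\M_G$ and has $\bw^\top\bx=9/2$). A valid level-$1$ rounder may output $M^{(1)}=\{23,45,61\}$ (size $3=\norm{\bx^{(1)}}_1$) and a valid level-$2$ rounder may output $M^{(2)}=\{12,34\}$ (size $2\geq(1-\eps)\cdot\tfrac32$). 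The union omits edge $56$, and its maximum weight matching has value $4$, an $8/9$-approximation independent of $\eps$. Since the reduction must tolerate arbitrary valid outputs of the black-box unweighted rounders (which drift independently under updates), this constant-factor loss is fatal; fixing it would require a nestedness condition across levels that the unweighted rounders do not provide. Note the paper never performs such a decomposition: its sparsifier construction is weight-aware (the degree-split step keeps the heavier half), and the factor $W$ in the bound arises only from the amortization frequency, because a single deletion can remove weight $W$ from the maintained matching.

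\textbf{Step 2 also misses the key idea.} Your plan hinges on pipage rounding ``making progress until the residual support is small,'' but in general graphs pipage both stalls on odd components and, more importantly, its mass-shifting steps do not preserve membership in $\M_G$ (only in $\P_G$), so the intermediate object can violate odd-set constraints and the residual need not certify a large integral matching. The paper's actual mechanism is different and is the crux of \cref{thm:rounding}: run the bit-by-bit degree-split scheme of \cite{BhattacharyaKSW23rounding} only down to level $L_{\min}=\Theta(\log(\eps^{-2}\log(nW/\eps)))$, so that every edge value is perturbed additively by at most $\eps^2/144$. This per-edge bound implies every odd-set constraint on sets of size $O(1/\eps)$ is violated by at most $O(\eps|B|)$, so scaling by $1+O(\eps)$ restores feasibility in $\M_G$ (\cref{lemma:sparsifier-condition} via \cref{fact:scaled-down}), yielding a deterministic $\widetilde{O}(\eps^{-2})$-sparse sparsifier on which \cite{DuanP14} is periodically re-run (\cref{lemma:rounding-via-sparsifier}). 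Your ``bipartite-progress invariant'' is a plan rather than an argument, and without something like the small-perturbation bound there is no reason the union of the integral part and the stalled residual contains a $(1-\eps)$-approximate matching of the \emph{general-graph} polytope.
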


Our algorithm for \cref{thm:rounding} builds on top of the pipage-rounding algorithm recently dynamized by \cite{BhattacharyaKSW23rounding} for bipartite graphs.
Their algorithm circumvents the inherent barrier of the periodic-recomputation approaches by directly rounding to integral matchings without creating an intermediate sparsifier.
Although this approach does not generalize to non-bipartite graphs due to odd-set constraints, we observe that terminating their algorithm early in fact creates a good sparsifier for general graphs.
The main intuition is that the first few rounds of their algorithm only have small additive effects on the value $\bx_e$, and perturbing each edge slightly indeed does not have a huge impact on odd-set constraints.
While \cref{def:rounding} is weighted, we remark that the algorithm of \cref{thm:rounding} is essentially unweighted in its design, hence the linear dependence on $W$.
We thus apply it in the unweighted case or through reductions to the case when $W$ is small.
See \cref{subsec:everything} for more details.

We further consider rounding with better dependence on $W$.
For this we directly adopt the standard sampling approach of creating a matching sparsifier.
Similar approaches were studied before in, e.g., \cite{ArarCCSW18,Wajc20,BhattacharyaKSW23rounding}, for rounding matchings in bipartite graphs or certain structured matchings in general graphs. 
Our analysis of the sparsifier, however, differs from the previous ones in that we directly analyze the violation of each odd-set constraint while previous work used various proxies (e.g., kernels or $\eps$-restrictness) when arguing the integrality gap.
In particular, by standard Chernoff bounds, we show that the sparsifier maintains (i) the degree of the vertices, (ii) the total edge mass in odd sets, and (iii) the unweighted matching size. Coupled with a dynamic set sampler from, e.g., \cite{BhattacharyaKSW23rounding},
we obtain from the sampling approach an unweighted rounding algorithm for general graphs.
Though the sampling scheme itself does not lead to a runtime improvement over \cref{thm:rounding}, we further show that in the decremental setting, surprisingly, properties (i) and (ii) suffice to round the entropy-regularized fractional matching maintained by \cref{thm:decremental-fractional} in weighted graphs with large $W$.
The resulting rounding algorithm in \cref{thm:weighted-rounding} below has a near-optimal overhead in dense graphs.

\begin{restatable}{theorem}{WeightedRounding}
  There are randomized output-adaptive algorithms that solve \cref{prob:dec-matching} w.h.p. with amortized update times $\widetilde{O}(\eps^{-\expnear} + (n^2/m) \cdot \eps^{-6})$ and $\widehat{O}(\eps^{-\expalmost} + (n^2/m) \cdot \eps^{-6})$. Additionally, if $G$ is bipartite then there is such an algorithm with amortized update time $\widehat{O}((n^2/m) \cdot \eps^{-6})$.
  \label{thm:weighted-rounding}
\end{restatable}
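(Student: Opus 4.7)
The plan is to combine the decremental fractional matching algorithm of \cref{thm:decremental-fractional} with a randomized sampling-based rounding scheme. First, I would maintain a $(1-\eps/2)$-approximate fractional matching $\bx \in \M_G$ via \cref{thm:decremental-fractional}, paying $\widetilde{O}(\eps^{-\expnear})$ (or $\widehat{O}(\eps^{-\expalmost})$) per update for general graphs, and $\widehat{O}(\eps^{-2})$ per update for bipartite graphs. Crucially, \cref{thm:decremental-fractional} also guarantees that only $\widetilde{O}(\eps^{-2})$ distinct fractional matchings are produced across the entire deletion sequence (corresponding to its $\widetilde O(m\eps^{-2})$ recourse bound).

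Second, at each rebuild I would construct a random sparsifier $H \subseteq \supp(\bx)$ by independently including each edge $e$ with probability $p_e \defeq \min\{1, C\bx_e \log n / \eps^3\}$ for a sufficiently large constant $C$, using the dynamic set sampler of \cite{BhattacharyaKSW23rounding} to maintain $H$ efficiently as $\bx$ changes. Standard Chernoff bounds imply that w.h.p.\ the rescaled vector $\hat\bx_e \defeq (\bx_e / p_e)\cdot\llbracket e \in H\rrbracket$ satisfies (i) $\hat\bx(v)\in (1\pm\eps)\bx(v)$ for every $v\in V$, and (ii) $\hat\bx(B)\in (1\pm\eps)\bx(B)$ for every $B\in\O_{G,\eps}$, where (ii) uses a union bound over the $n^{O(1/\eps)}$ small odd sets. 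Scaling $\hat\bx$ down by $(1-O(\eps))$ and invoking \cref{fact:scaled-down} produces a feasible matching in $\M_G$ supported entirely on $H$ with value $(1-O(\eps))\bw^\top \bx$. Since $\M_G$ is integral, $H$ therefore contains an integral matching of the same value, which I extract by running a static $(1-\eps)$-approximate weighted matching algorithm such as \cite{DuanP14} on $H$.

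For the runtime accounting, $E[|H|]\leq (C\log n/\eps^3)\sum_e \bx_e \leq O(n\log n/\eps^3)$, so a single rebuild costs roughly $\widetilde O(n\cdot |H|) = \widetilde O(n^2/\eps^3)$ time plus dynamic-sampler overhead; multiplied by the $\widetilde O(\eps^{-2})$ rebuilds and amortized over the $m$ deletions, this yields the additive term $\widetilde O((n^2/m)\cdot\eps^{-6})$ once the remaining polylog factors from the sampler and from the static weighted-matching subroutine are absorbed. The bipartite case is identical, except that property (ii) is vacuous and we substitute the $\widehat O(\eps^{-2})$ bipartite fractional subroutine of \cref{thm:decremental-fractional}, which eliminates the $\eps^{-\expnear}$ term entirely.

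The main obstacle I anticipate is the weighted case: preserving $\bw^\top\bx$ after sampling requires a Chernoff bound on $\sum_{e\in H}\bw_e\bx_e/p_e$ whose variance scales with $\max_e \bw_e$, which a priori can be much larger than $\bw^\top\bx/\log n$. This is exactly where the entropy-regularized structure of the specific fractional matching produced by \cref{thm:decremental-fractional} should help: the KKT conditions of \eqref{eq:entropy-matching} force the products $\bw_e\bx_e$ to be roughly uniform across the active support, which bounds the per-edge variance contribution and makes the weighted concentration argument go through. This is the ``surprising'' content alluded to in \cref{subsec:overview:rounding}, and contrasts with prior sampling-based rounding analyses \cite{ArarCCSW18,Wajc20,AssadiBD22} which relied on extra structural hypotheses (kernel-closure or $\eps$-restrictness) to avoid exactly this difficulty.
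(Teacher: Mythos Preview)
Your high-level architecture is right, and you correctly isolate the weighted concentration step as the crux. But the fix you propose does not work, and it is not what the paper does.

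\textbf{The uniformity claim is false.} From the KKT conditions (\cref{lemma:primal-dual relation}) one gets $\bw_e\bx_e^{*} = \gamma\cdot 2^{(1-\byz_e/\bw_e)/\mu - 1}$. \cref{lemma:covering of the edges} only pins down $\byz_e/\bw_e \in [1-\Theta(\eps),\,1+\Theta(\eps)]$ on the large-mass edges, and since $\mu = \Theta(\eps/\log(nW))$ the exponent $(1-\byz_e/\bw_e)/\mu$ ranges over $\pm\Theta(\log(nW))$. Hence the products $\bw_e\bx_e^{*}$ can differ by polynomial factors across the support, and a direct Chernoff bound on $\bw^\top\hat\bx$ (with per-edge contribution $\bw_e\gamma$, which can be as large as $W\gamma$) does not concentrate.

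\textbf{What the paper actually does.} The paper never attempts weighted concentration. Instead it uses the dual decomposition: on edges with $\bx_e \geq \Omega(\eps/n)$ one has $\bw_e \approx \byz_e = \by_u+\by_v+\sum_{B\ni e}\bz_B$, so
\[
\bw^\top\bx^{(H)} \approx \sum_{v}\by_v\,\bx^{(H)}(v) + \sum_{B}\bz_B\,\bx^{(H)}(B).
\]
An \emph{unweighted} degree-sparsifier (\cref{def:sparsifier}) preserves every $\bx(v)$ and every $\bx(B)$ up to additive $O(\eps)$ and $O(\eps|B|)$ respectively, and combining this with $\sum_v\by_v+\sum_B\bz_B\lfloor|B|/2\rfloor \leq Z^{\mu}_{E,\gamma}$ (\cref{cor:lower-bound-opt}) shows the right-hand side is preserved up to $O(\eps)\cdot Z^{\mu}_{E,\gamma}$. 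This is \cref{lemma:weighted sparsifier}. The point is that approximate complementary slackness converts the weighted objective into a linear combination of the (unweighted) constraint values, which \emph{are} what the sampler concentrates.

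\textbf{A second gap in your accounting.} You only recompute the integral matching at each fractional rebuild, but the adversary can delete matching edges in between. The paper handles this by restricting the sparsifier to edges with $\bx_e \geq \Omega(\eps/n)$ (which \cref{lemma:structured subgraph} shows loses only $O(\eps)$ weight, again via the dual argument). Then every unit of weight deleted from the integral matching $M$ also removes $\Omega(\eps/n)$ weighted fractional mass, so at most $\widetilde O(n/\eps)$ integral recomputations occur per fractional phase. This, not the cost of a single sparsify-and-round, is where the $(n^2/m)\eps^{-6}$ term comes from: $\widetilde O(\eps^{-2})$ fractional phases $\times\ \widetilde O(n/\eps)$ integral recomputations each $\times\ \widetilde O(n\eps^{-3})$ time per recomputation, amortized over $m$ deletions.
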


\subsection{Putting Everything Together} \label{subsec:everything}

We conclude this overview by using the previously stated results to prove our main theorems.

\MainUnweighted*

\begin{proof}
  For $\eps < n^{-1/6}$, the update times of $\widetilde{O}(\eps^{-\expnear})$ and $\widehat{O}(\eps^{-\expalmost})$ can be obtained by re-running the static algorithm of \cite{DuanP14} after each update.
  As a result we assume $\eps \geq n^{-1/6}$ in the remainder of the proof.
  By \cref{thm:decremental-fractional} with accuracy parameter $\eps/2$, we can maintain a $\left(1-\frac{\eps}{2}\right)$-approximate fractional matching in amortized update times $\widetilde{O}(\eps^{-\expnear})$ and $\widehat{O}(\eps^{-\expalmost})$.
  We then apply \cref{thm:rounding} with accuracy parameter $\eps/2$ to round the fractional matching we maintain to a $(1-\eps)$-approximate integral matching.
  Since the recourse of \cref{thm:decremental-fractional} is $\widetilde{O}(m\eps^{-2})$, there will be $\widetilde{O}(m\eps^{-2})$ updates to \cref{thm:rounding} in total, incurring an additional $\widetilde{O}(\eps^{-6})$ amortized time per update that is subsumed by the update time of \cref{thm:decremental-fractional}.
  Since \cref{thm:rounding} is deterministic, our final algorithm works against a fully-dynamic adversary like \cref{thm:decremental-fractional} does.
\end{proof}

\MainWeighted*

We make use of the following weight reduction framework from \cite{GuptaP13} which allows us to assume that the maximum weight is bounded by $\eps^{-O(1/\eps)}$.

\begin{proposition}[\cite{GuptaP13}]
  Given a fully-dynamic/incremental/decremental algorithm for $(1-\eps)$-approximate maximum weighted matching on $n$-vertex $m$-edge graphs of maximum weight $W$ with worst-case/amortized update time $T(n, m, \eps, W)$, there is a fully-dynamic/incremental/decremental algorithm for the same task with worst-case/amortized update time $\widetilde{O}\left(T\left(n, m, \Theta(\eps), \eps^{-O(1/\eps)}\right)\right)$.
  \label{prop:weight-reduce-eps}
\end{proposition}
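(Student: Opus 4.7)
The plan is to implement the weight-reduction approach of~\cite{GuptaP13}: run polylogarithmically many instances of the base algorithm in parallel, each on a ``truncated'' weight function whose maximum weight is at most $\eps^{-O(1/\eps)}$, and at every time step return the matching of largest original weight among the copies.

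First, I would snap each edge weight $w_e$ down to the nearest power of $(1+\eps)$; this costs only a $(1+\eps)$-factor in the final approximation and lets me index each edge by a level $\ell(e)\defeq \lfloor\log_{1+\eps}(w_e)\rfloor$. Fix a window width $L=\Theta(\eps^{-1}\log(1/\eps))$ so that (i) $(1+\eps)^L\le \eps^{-O(1/\eps)}$ and (ii) the pruning step below loses at most an $O(\eps)$-fraction of any matching's weight. For every shift $j$ in a family $J$ of offsets chosen so that every range of $L$ consecutive levels is contained in some window $[j,j+L-1]$ (taking $J$ to be multiples of $L/2$ up to $\lceil\log_{1+\eps}W\rceil$ gives $|J|=\widetilde O(1)$, since $\log W=O(\log n)$), define the truncated weight
\[
    \widetilde{w}^{(j)}_e = \begin{cases} (1+\eps)^{\ell(e)-j}, & \text{if } \ell(e)\in[j,j+L-1],\\ 0, & \text{otherwise,}\end{cases}
\]
and maintain an instance $A_j$ of the given base algorithm on the edge-induced subgraph of $\supp(\widetilde w^{(j)})$. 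Each $A_j$ sees maximum weight $\le \eps^{-O(1/\eps)}$ and so updates in $T(n,m,\Theta(\eps),\eps^{-O(1/\eps)})$ per operation; an edge update in $G$ triggers at most one update in each $A_j$, yielding total update time $\widetilde O(T(n,m,\Theta(\eps),\eps^{-O(1/\eps)}))$ as required.

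The correctness check is the heart of the reduction. Given any current optimum matching $M^\ast$ under $\bw$, I would argue that the edges of $M^\ast$ with weight below $w_{\max}(M^\ast)\cdot(1+\eps)^{-L}$ contribute total weight at most $n\cdot w_{\max}(M^\ast)\cdot(1+\eps)^{-L}\le \eps\cdot W(M^\ast)$ by the choice of $L$; the ``surviving'' edges of $M^\ast$ thus lie in a contiguous range of at most $L$ consecutive levels. By the covering property of $J$, some $j^\ast\in J$ satisfies that $[j^\ast,j^\ast+L-1]$ contains this range, whence $\widetilde w^{(j^\ast)}$ restricted to the surviving edges equals a fixed scalar times $\bw$. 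The output of $A_{j^\ast}$, being a $(1-\Theta(\eps))$-approximate maximizer for $\widetilde w^{(j^\ast)}$, therefore pulls back to a $(1-O(\eps))$-approximate maximizer for $\bw$, and the best-of-$J$ aggregation only improves this.

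The main obstacle is calibrating $L$ and $J$ simultaneously so that the pruning-and-pigeonhole machinery yields a clean $(1-O(\eps))$-approximation while the target weight bound $\eps^{-O(1/\eps)}$ is respected; a subtle point is that the $0$-truncation in $\widetilde w^{(j)}$ is essential, since a naive wrap-around version of the hallucinated weights would let $A_j$ spuriously prefer light edges that are boosted by the modulus. Once the window argument is in place, updates, queries, and output are standard bookkeeping on top of the $|J|$ parallel base-algorithm invocations, and the reduction applies uniformly to the fully-dynamic, incremental, and decremental regimes without further modification.
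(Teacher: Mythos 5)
There is a genuine gap, and it sits exactly at the "heart of the reduction" you identify. Your pruning step claims that with $L=\Theta(\eps^{-1}\log(1/\eps))$ both (i) $(1+\eps)^L\le \eps^{-O(1/\eps)}$ and (ii) edges of $M^\ast$ below $w_{\max}(M^\ast)\cdot(1+\eps)^{-L}$ contribute at most an $\eps$-fraction of $\bw(M^\ast)$. Claim (ii) needs $n\cdot(1+\eps)^{-L}\le\eps$, i.e.\ $(1+\eps)^{L}\ge n/\eps$, which forces the truncated instances to have weight ratio $\mathrm{poly}(n)$, not $\eps^{-O(1/\eps)}$; with your choice of $L$ the dropped edges can carry almost all of the weight. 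Concretely, take constant $\eps$ (the regime where the proposition is actually used in this paper), one edge of weight $n$ and $n/2$ vertex-disjoint edges of weight $1$: every window of weight ratio $\eps^{-O(1/\eps)}=O(1)$ contains either the heavy edge or the light edges but not both, so the best single instance $A_{j^\ast}$ recovers at most a $2/3$-fraction of $M^\ast_{\bw}(G)$. This also shows the failure is not a matter of recalibrating $L$ and $J$: no "best-of-$J$" aggregation over single windows can give $(1-\eps)$-approximation, because a near-optimal matching may genuinely need edges from weight levels spanning a $\mathrm{poly}(n)$ ratio.

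The actual reduction of \cite{GuptaP13} (which this paper invokes without reproving) is structurally different in two ways. First, it does not return the matching of one band; it runs the base algorithm on \emph{every} band of the partition and \emph{combines} the per-band matchings into a single matching, resolving conflicts in favor of heavier edges. The combination loses only $O(\eps)$ because consecutive retained bands are separated by a deleted gap of weight-factor $\Omega(1/\eps)$, so the edges a kept edge displaces in lighter bands form a geometric series summing to $O(\eps)$ times its own weight. Second, the weight classes deleted to create these gaps are not controlled by an "$n$ times the cutoff" bound but by a shifting argument: one considers $\Theta(1/\eps)$ (or similarly few) offsets of the gap pattern, and by averaging, some offset deletes at most an $\eps$-fraction of the weight of the current optimum; the band width that results from gaps of $\Theta(\eps^{-1}\log(1/\eps))$ classes recurring every $\Theta(1/\eps)$ gaps is $\Theta(\eps^{-2}\log(1/\eps))$ classes, which is precisely where the $\eps^{-O(1/\eps)}$ bound on the per-instance weight ratio comes from. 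Your proposal is missing both the cross-band combination (with its conflict-charging argument) and the offset-averaging justification of the deletions, and these are the ideas that make the stated weight bound compatible with a $(1-O(\eps))$ guarantee.
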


\begin{proof}[Proof of \cref{thm:main-weighted}]
  As in the proof of \cref{thm:main-unweighted} we assume $\eps \geq n^{-1/6}$.
  For the first algorithm, we apply the weight reduction framework of \cite{GuptaP13} in \cref{prop:weight-reduce-eps} to make $W \leq \eps^{-O(1/\eps)}$.
  Again, running \cref{thm:decremental-fractional} with accuracy parameter $\eps/2$ we maintain a $\left(1-\frac{\eps}{2}\right)$-approximate fractional matching in amortized update time $\widetilde{O}(\eps^{-\expnear})$.
  The rounding algorithm \cref{thm:rounding} now has amortized update time $\widetilde{O}(\eps^{-O(1/\eps)})$, incurring an additional $\widetilde{O}(\eps^{-O(1/\eps)})$ amortized time per update which subsumes the update time of the fractional matching.
  The algorithms for dense graphs follow from \cref{thm:weighted-rounding}.
\end{proof}

Finally, the bipartite result \cref{thm:main-weighted-bipartite} also follows from \cref{thm:weighted-rounding}.

\MainWeightedBipartite*

\section{Entropy Regularization for Decremental Linear Optimization}\label{sec:congestion-balancing}

In this section we analyze our entropy regularization strategy for the lazy update framework that solves the decremental linear optimization problem.
In~\cref{subsec:robustness,subsec:robustness-approximate} we show the robustness of the entropy regularization strategy, and in \cref{subsec:optimality} we prove its optimality.
We consider a fixed instance of the decremental linear optimization problem, including the $d$-dimensional convex set $\X$ and the input weight $\bw \in \R_{> 0}^{d}$.

\subsection{Notation and General Setup}

Before showing the robustness of our framework, we first set up the notation and various optimization constructs that will be used throughout the section.
For $S \subseteq [d]$ and $\gamma \geq 0$, we consider the entropy regularizer $r_{S,\gamma}: \X_S \to \R$ defined by
\begin{equation}
r_{S,\gamma}(\bx)\defeq \sum_{i\in S}\bw_{i}\bx_{i}\log\frac{\gamma}{\bw_i\bx_i},
\label{eq:entropy}
\end{equation}
which is a weighted and scaled version of the original entropy function $\entropy_S(\by) \defeq \sum_{i\in S}\by_i\log\frac{1}{\by_i}$ for $\by\in\Delta^S$ that is usually applied on the simplex.
For intuition, observe that for an $\bx \in \X_S$ with $\sum_{i \in S}\bw_i \bx_i = \gamma$,  we have $r_{S,\gamma}(\bx) = \gamma \cdot \entropy_S(\Pi(\bx))$, where $\Pi(\bx) \in \Delta^S$ with $(\Pi(\bx))_i \defeq \bw_i\bx_i / \gamma$.
Below we give upper and lower bounds on the value of  $r_{S,\gamma}(\bx)$ which generalize known properties of entropy on the simplex.

\begin{lemma}
  For $\bx \in \X_S$ with $\bw_S^\top \bx = \nu$, we have $\nu\log(\gamma/\nu) \leq r_{S,\gamma}(\bx) \leq \nu\log(d\gamma/\nu)$.
  \label{lemma:entropy-property}
\end{lemma}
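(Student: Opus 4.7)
The plan is to reduce the statement to the classical fact that the Shannon entropy of a distribution on a set of size at most $d$ lies in $[0, \log d]$. Concretely, I will reparameterize the problem onto the simplex: define $p_i \defeq \bw_i \bx_i / \nu$ for $i \in S$ (assuming $\nu > 0$; the $\nu = 0$ case is handled separately below). Since $\bw > 0$ and $\bx \geq 0$, we have $p \in \R_{\geq 0}^S$, and by the hypothesis $\bw_S^\top \bx = \nu$ we have $\sum_{i \in S} p_i = 1$, so $p \in \Delta^S$.

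Next, I would substitute $\bw_i \bx_i = \nu p_i$ into the definition \eqref{eq:entropy} of $r_{S,\gamma}$ and split the logarithm:
\[
r_{S,\gamma}(\bx) = \sum_{i \in S} \nu p_i \log\frac{\gamma}{\nu p_i} = \nu \log\frac{\gamma}{\nu} + \nu \cdot \entropy_S(p),
\]
where $\entropy_S(p) = \sum_{i \in S} p_i \log(1/p_i)$ is the standard Shannon entropy. The bounds then follow immediately from the standard two-sided bound $0 \leq \entropy_S(p) \leq \log|S| \leq \log d$, which holds for any distribution on a set of size at most $d$ (the lower bound because each summand is non-negative, the upper bound by Jensen's inequality applied to the concave function $\log$). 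Multiplying through by $\nu$ and folding the $\nu \log d$ term into the logarithm yields the claimed inequalities $\nu \log(\gamma/\nu) \leq r_{S,\gamma}(\bx) \leq \nu \log(d\gamma/\nu)$.

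The only subtlety is handling degenerate boundary cases. If $\nu = 0$, then $\bw > 0$ forces $\bx_i = 0$ for all $i \in S$, and with the standard convention $0 \log(\gamma/0) = 0$ (justified by $\lim_{t \to 0^+} t \log(1/t) = 0$) both $r_{S,\gamma}(\bx)$ and the bounds $\nu \log(\gamma/\nu)$, $\nu \log(d\gamma/\nu)$ are zero, so the inequality holds trivially. Individual coordinates with $\bx_i = 0$ are handled by the same convention inside the sum. Since the main calculation is a one-line algebraic manipulation and the entropy bounds are textbook, I do not anticipate any genuine obstacle; the proof is essentially bookkeeping to make the reduction to $\entropy_S$ explicit.
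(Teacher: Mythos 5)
Your proof is correct and takes essentially the same route as the paper: the normalization $r_{S,\gamma}(\bx) = \nu\log(\gamma/\nu) + \nu\cdot \entropy_S(p)$ with $p_i = \bw_i\bx_i/\nu$ is exactly the identity the paper states as motivation before the lemma, and your lower bound (nonnegativity of each entropy summand, i.e.\ $\bw_i\bx_i \leq \nu$) coincides with the paper's. The only cosmetic difference is that for the upper bound the paper re-derives $\entropy_S(p) \leq \log d$ from first principles by relaxing to an $\ell_1$-constrained maximization and applying optimality conditions, whereas you invoke the textbook entropy bound directly; both are fine.
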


\begin{proof}
  For the upper bound, consider the relaxation of the problem
  \begin{equation}
  \max_{\bx^\prime \in \R^S: \norm*{\bx^\prime}_1=\nu/\gamma}\gamma\cdot \sum_{i\in S}\bx^\prime_{i}\log \frac{1}{\bx^\prime_{i}}.
  \label{eq:relaxation}
  \end{equation}
  For any fixed $\bx\in\X_S$, there is a feasible point $\by^\prime$ of \eqref{eq:relaxation} with $\by^\prime_i\defeq\bw_i\bx_i/\gamma$, and $\gamma\cdot \sum_{i\in S}\by^\prime\log\frac{1}{\by^\prime}=r_{S,\gamma}(\bx)$.
  Thus the optimal value of \eqref{eq:relaxation} is an upper bound on $r_{S,\gamma}(\bx)$.
  Let $\bg_{\bx^\prime}\in\R^{S}$ with $\left(\bg_{\bx^\prime}\right)_{i}=-(1+\log \bx^\prime_{i})$ be the gradient of the objective of \eqref{eq:relaxation} at $\bx^\prime$.
  Note that the optimality conditions of the problem are that $\bg_{\bx^\prime}\perp\ker({\boldsymbol{1}^S}^{\top})$ or equivalently that $\bg_{\bx^\prime}=\alpha\cdot \boldsymbol{1}^S$ for some $\alpha\in\R$.
  Thus, it holds that $\bx^\prime=\beta\cdot \boldsymbol{1}^S$ for some $\beta\in\R$.
  Combining with $\norm*{\bx^\prime}_1=\nu/\gamma$, we have $\bx^\prime_{i}=\nu/(d\gamma)$, implying that the maximizing value of \eqref{eq:relaxation} is $\nu\log(d\gamma/\nu)$.
  
  For the lower bound, since $\bw^\top_S \bx=\nu$, we have $\max_{i \in S}\bw_i\bx_i\leq \nu$, and thus
  \[ r_{S,\gamma}(\bx) \geq \sum_{i \in S}\bw_i\bx_i\log{(\gamma/\nu)} = \nu\log(\gamma/\nu). \]
\end{proof}

Using $r_{S,\gamma}$ we may rewrite the entropy-regularized objective defined in \eqref{eq:entropy-objective} as $f_{S,\gamma}^{\mu}(\bx) \defeq \bw_S^\top\bx + \mu \cdot r_{S,\gamma}(\bx)$.
Note that $\mu>0$ in the definition.
Let $Z_{S,\gamma}^{\mu} \defeq \max_{\bx \in \X_S}f_{S,\gamma}^{\mu}(\bx)$ be the maximum entropy-regularized objective value and $\nu_{S}^{*} \defeq \max_{\bx \in \X_S}\bw^\top_S \bx$ be the maximum of the actual linear objective.
Below are properties of $f_{S,\gamma}^{\mu}$ that we will use in this section.

\begin{lemma}\label{lemma:entropy-regularized objective property}
The entropy-regularized objective function $f_{S,\gamma}^\mu$ admits the following properties:
\begin{enumerate}[(i)]
    \item\label{item:unique maximizer}$f_{S,\gamma}^\mu$ has a unique maximizer $\bx_{S,\gamma}^\mu$ on $\X_S$.
    \item\label{item:differentiability at maximizer}$\bx_{S,\gamma}^{\mu}$ has positive coordinates, i.e., $\bx_{S,\gamma}^{\mu}\in \X_{S,+}$, and thus $f_{S,\gamma}^\mu$ and $r_{S,\gamma}$ are differentiable at $\bx_{S,\gamma}^{\mu}$.
    \item\label{item:quadratic upper bound}
    $f_{S,\gamma}^{\mu}(\bx)\leq f_{S,\gamma}^{\mu}\left(\bx_{S,\gamma}^{\mu}\right)-\frac{\mu}{2\nu_{S}^{*}}\norm*{\bx-\bx_{S,\gamma}^{\mu}}^2_{\bw,S}$
    for all $\bx \in \X_S$, where $\norm{\bx}_{\bw,S} \defeq \sum_{i \in S}\bw_i |\bx_i|$ is the weighted $\ell_1$-norm.\footnote{$\norm{\cdot}_{\bw,S}$ is indeed a norm since $\bw_i > 0$ for all $i \in [d]$.}
\end{enumerate}
\end{lemma}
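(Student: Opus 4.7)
The plan is to handle parts (i), (ii), and (iii) in order, exploiting that $r_{S,\gamma}$ is a weighted entropy and therefore strictly and strongly concave.

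For (i), I would observe that each coordinate summand $t \mapsto w_i t \log(\gamma/(w_i t))$ in $r_{S,\gamma}$ is strictly concave on $[0,\infty)$ (extended by continuity with value $0$ at $t=0$), so $r_{S,\gamma}$, and hence $f_{S,\gamma}^{\mu} = \bw_S^\top(\cdot) + \mu r_{S,\gamma}$, is strictly concave on $\X_S$. Since $\X_S$ is compact as the image of $\X$ under coordinate projection, the continuous strictly concave $f_{S,\gamma}^{\mu}$ attains its maximum at a unique point.

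For (ii), I would argue by contradiction: if $(\bx_{S,\gamma}^{\mu})_i = 0$ for some $i \in S$, then by the non-degeneracy of $\X$ there exists $\by \in \X_S$ with $\by_i > 0$, and the segment $\bx(t) \defeq (1-t)\bx_{S,\gamma}^{\mu} + t \by$ lies in $\X_S$ by convexity for $t \in [0, 1]$. The contribution of coordinate $i$ to $f_{S,\gamma}^{\mu}(\bx(t)) - f_{S,\gamma}^{\mu}(\bx_{S,\gamma}^{\mu})$ equals $\mu w_i \by_i \cdot t \log(\gamma/(w_i t \by_i)) = \Theta(t\log(1/t))$ as $t \to 0^+$, which is positive and dominates the $O(t)$ contributions of all other coordinates for small enough $t$. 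This contradicts optimality of $\bx_{S,\gamma}^{\mu}$. Positivity of $\bx_{S,\gamma}^{\mu}$ then makes $r_{S,\gamma}$ and $f_{S,\gamma}^{\mu}$ smooth at this point.

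For (iii), the key step will be to establish that $r_{S,\gamma}$ is $(1/\nu_{S}^{*})$-strongly concave on $\X_S$ with respect to $\norm{\cdot}_{\bw,S}$. A direct computation gives $\nabla^2 r_{S,\gamma}(\bx) = -\mathrm{diag}(w_i/x_i)$, and Cauchy--Schwarz yields, for all $\bx \in \X_{S,+}$ and $\boldsymbol{h}$,
\[
\bigg(\sum_{i\in S} w_i |h_i|\bigg)^{\!2} = \bigg(\sum_{i\in S} \sqrt{w_i x_i}\cdot \tfrac{w_i |h_i|}{\sqrt{w_i x_i}}\bigg)^{\!2} \leq \bigg(\sum_{i\in S} w_i x_i\bigg)\!\bigg(\sum_{i\in S}\tfrac{w_i h_i^2}{x_i}\bigg) \leq \nu_{S}^{*} \cdot \boldsymbol{h}^\top (-\nabla^2 r_{S,\gamma}(\bx))\,\boldsymbol{h},
\]
using $\bw_S^\top \bx \leq \nu_{S}^{*}$ on $\X_S$. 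Applying Taylor's theorem with remainder at some $\xi \in \X_S$ on the segment between $\bx_{S,\gamma}^{\mu}$ and $\bx$ then gives
\[
f_{S,\gamma}^{\mu}(\bx) \leq f_{S,\gamma}^{\mu}(\bx_{S,\gamma}^{\mu}) + \nabla f_{S,\gamma}^{\mu}(\bx_{S,\gamma}^{\mu})^\top(\bx - \bx_{S,\gamma}^{\mu}) - \tfrac{\mu}{2\nu_{S}^{*}}\norm*{\bx - \bx_{S,\gamma}^{\mu}}_{\bw,S}^{2}.
\]
By (ii), $f_{S,\gamma}^{\mu}$ is differentiable at $\bx_{S,\gamma}^{\mu}$, and the first-order optimality condition for concave maximization over the convex set $\X_S$ makes the linear term nonpositive, yielding the claim.

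The hard part is part (iii): not the Hessian computation, but obtaining strong concavity in the weighted $\ell_1$ norm $\norm{\cdot}_{\bw,S}$ rather than a Euclidean norm that would follow trivially from the diagonal Hessian. The Cauchy--Schwarz step above is the Pinsker-type trick tailored to this reweighting, and it is precisely this norm that later couples with the weighted mass removed per deletion in the potential-function arguments used to prove \Cref{lemma:upper bound on reconstruction rounds,lemma:upper bound on reconstruction rounds approximate}. One must also take care in (ii) to genuinely invoke the non-degeneracy of $\X$, since otherwise a coordinate could be forced to zero by the feasibility constraints and the logarithmic blow-up could not be exploited.
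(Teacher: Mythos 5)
Your proposal is correct and follows essentially the same route as the paper: the same blow-up-of-the-logarithmic-derivative argument for positivity of the maximizer, and the same Cauchy--Schwarz computation showing $\bz^\top\nabla^2 f_{S,\gamma}^{\mu}(\bx)\bz\leq -\frac{\mu}{\nu_S^{*}}\norm*{\bz}_{\bw,S}^2$, combined with first-order optimality (the paper instead derives uniqueness as a corollary of the quadratic bound, whereas you get it upfront from strict concavity, which is equally fine). The only point needing a touch more care is your Lagrange-remainder step when $\bx$ lies on the boundary of $\X_S$, where $f_{S,\gamma}^{\mu}$ is not differentiable at the endpoint; the paper handles this by using the integral form of the remainder on the open segment (which lies in $\X_{S,+}$) and then passing to the boundary by continuity, a patch that applies verbatim to your argument.
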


We provide a fairly standard proof of the properties from first principle.

\begin{proof}
    By the compactness of $\X_S$, $f_{S,\gamma}^{\mu}$ has a maximizer on $\X_S$.
    We prove the following claims.
    \begin{claim}
      Any maximizer of $f_{S,\gamma}^{\mu}$ on $\X_S$ must be in $\X_{S,+}$.
      \label{claim:maximizer}
    \end{claim}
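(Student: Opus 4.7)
The plan is to obtain a contradiction by exploiting the infinite slope of the entropy term $s \mapsto s \log(\gamma/s)$ at $s = 0$. Suppose for contradiction that some maximizer $\bx^{*} \in \X_S$ of $f_{S,\gamma}^{\mu}$ has $\bx^{*}_i = 0$ for some $i \in S$. By the non-degeneracy of $\X$, there exists $\by \in \X$ with $\by_i > 0$, and hence $\by_S \in \X_S$ with $(\by_S)_i > 0$. By convexity, $\bx_t \defeq (1-t)\bx^{*} + t\, \by_S \in \X_S$ for every $t \in [0,1]$. I will show that $f_{S,\gamma}^{\mu}(\bx_t) > f_{S,\gamma}^{\mu}(\bx^{*})$ for all sufficiently small $t > 0$.

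The linear piece contributes $\bw_S^{\top}(\bx_t - \bx^{*}) = t\, \bw_S^{\top}(\by_S - \bx^{*}) = O(t)$. To analyze the entropy piece I split the coordinates $j \in S$ into three cases, using the continuous extension $s \log(\gamma/s) \big|_{s=0} = 0$. Case (i): $\bx^{*}_j > 0$. Then $s \mapsto \bw_j s \log(\gamma/(\bw_j s))$ is smooth in a neighborhood of $\bx^{*}_j$, so the contribution to $r_{S,\gamma}(\bx_t) - r_{S,\gamma}(\bx^{*})$ is $O(t)$. Case (ii): $\bx^{*}_j = 0$ and $(\by_S)_j = 0$, giving $(\bx_t)_j = 0$ and zero contribution. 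Case (iii): $\bx^{*}_j = 0$ and $(\by_S)_j > 0$, giving $(\bx_t)_j = t(\by_S)_j$ and a contribution equal to
\[
\bw_j\, t (\by_S)_j \log\!\frac{\gamma}{\bw_j t (\by_S)_j} \;=\; \bw_j (\by_S)_j\, t \log(1/t) \;+\; O(t).
\]

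Since $i$ falls in case (iii) with $\bw_i > 0$ and $(\by_S)_i > 0$, summing over $j \in S$ yields
\[
f_{S,\gamma}^{\mu}(\bx_t) - f_{S,\gamma}^{\mu}(\bx^{*}) \;\geq\; c\, t \log(1/t) - C t
\]
for some constants $c, C > 0$ independent of $t$. Since $t \log(1/t)$ strictly dominates $t$ as $t \to 0^{+}$, the right-hand side is strictly positive for all sufficiently small $t > 0$, contradicting the maximality of $\bx^{*}$. Hence every maximizer lies in $\X_{S,+}$.

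The main obstacle is ensuring that no case-(iii) term can come with the wrong sign: the $\log(\gamma/(\bw_j (\by_S)_j))$ constant inside the bracket could be negative if $(\by_S)_j$ is large compared to $\gamma$, but this is absorbed into the $O(t)$ remainder, while the $\log(1/t)$ contribution is unambiguously positive once $t < 1/e$. So the careful bookkeeping lies in separating the universally dominant $t \log(1/t)$ term from all bounded-constant contributions, which is straightforward once the three cases are made explicit.
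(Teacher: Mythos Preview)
Your proof is correct and takes essentially the same approach as the paper: both interpolate from the putative boundary maximizer toward a point with positive coordinate(s) and exploit that the entropy term has infinite slope at zero, yielding a $t\log(1/t)$ gain that dominates the $O(t)$ cost. The paper phrases this via the mean value theorem applied to $\alpha \mapsto f_{S,\gamma}^{\mu}(\bx_\alpha)$ and uses a $\by \in \X_{S,+}$ with \emph{all} coordinates positive (obtained by averaging the non-degeneracy witnesses), whereas you compute the difference directly and only need a single positive coordinate, which is why you have an extra case (ii); this is a cosmetic difference, not a substantive one.
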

    \begin{proof}
      Consider the first-order partial derivative of $f_{S,\gamma}^\mu$ with respect to an $\bx \in \X_{S,+}$ and coordinate $i\in S$, which by calculation is
      \[ \partial_i f_{S,\gamma}^{\mu}(\bx) \defeq \frac{\partial f_{S,\gamma}^\mu(\bx)}{\partial \bx_i} =(1-\mu)\bw_i+\mu\bw_i\log\left(\frac{\gamma}{\bw_i \bx_i}\right). \]
      Since $\X$ contains no degenerate dimension and is convex, there is a point $\by \in \X_{S,+}$.
      Consider any point $\bx\in\X_S\setminus\X_{S,+}$, and let $\bx_{\alpha} \defeq \bx + \alpha(\by - \bx)$ for $\alpha \in [0, 1]$.
      Note that $\bx_{\alpha} \in \X_{S,+}$ for $\alpha > 0$.
      By the mean value theorem, for any $\alpha > 0$ there is a $0 < \beta < \alpha$ such that
      \[
      f_{S,\gamma}^\mu(\bx_{\alpha})-f_{S,\gamma}^\mu(\bx)
      =\nabla f_{S,\gamma}^\mu(\bx_{\beta})^\top (\bx_{\alpha}-\bx)=\sum_{i\in S}\partial_if_{S,\gamma}^{\mu}(\bx_{\beta}) \cdot ((\bx_{\alpha})_i-\bx_i).
      \]
      For $i\in S$ with $\bx_i>0$, 
      $|\partial_i f_{S,\gamma}^{\mu}(\bx_{\beta})|$ is bounded since $(\bx_{\beta})_i$ is between $\bx_i$ and $\by_i$.
      For $i\in S$ with $\bx_i=0$, however, when $(\bx_{\beta})_i$ approaches $0$, $\partial_i f_{S,\gamma}^{\mu}(\bx_{\beta})$ goes to infinity.
      This shows that we can pick an $\alpha$ close enough to $0$ so that $f_{S,\gamma}^{\mu}(\bx_{\alpha}) > f_{S,\gamma}^{\mu}(\bx)$, proving that $\bx \in \X_{S} \setminus \X_{S,+}$ is not a maximizer.
    \end{proof}

    \begin{claim}
      For any $\bx, \bz \in \X_{S,+}$ it holds that $\bz^\top \nabla^2 f_{S,\gamma}^{\mu}(\bx)\bz \leq -\frac{\mu}{\nu_S^{*}}\norm*{\bz}_{\bw,S}^{2}$.
      \label{claim:hessian}
    \end{claim}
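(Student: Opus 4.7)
The plan is to prove the claim by computing the Hessian of $f_{S,\gamma}^\mu$ directly and then reducing the resulting quadratic form to an application of Cauchy--Schwarz. The linear term $\bw_S^\top \bx$ contributes nothing to the Hessian, so the entire contribution comes from $\mu \cdot r_{S,\gamma}$. Since $r_{S,\gamma}(\bx) = \sum_{i \in S} \bw_i \bx_i \log \gamma - \sum_{i \in S}\bw_i\bx_i \log(\bw_i \bx_i)$ is separable across coordinates, I expect the Hessian to be diagonal.

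First I would compute $\partial_i^2 r_{S,\gamma}(\bx)$. A direct computation of the first partial yields $\partial_i r_{S,\gamma}(\bx) = \bw_i \log(\gamma/(\bw_i \bx_i)) - \bw_i$, and differentiating once more gives $\partial_i^2 r_{S,\gamma}(\bx) = -\bw_i/\bx_i$, while off-diagonal second partials vanish. Therefore
\[
\bz^\top \nabla^2 f_{S,\gamma}^\mu(\bx)\bz \;=\; -\mu \sum_{i \in S} \frac{\bw_i \bz_i^2}{\bx_i}.
\]

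Next I would relate the right-hand side to $\norm*{\bz}_{\bw,S}^2 = \left(\sum_{i \in S} \bw_i |\bz_i|\right)^2$ by applying Cauchy--Schwarz in the form $\bigl(\sum_i a_i b_i\bigr)^2 \le \bigl(\sum_i a_i^2\bigr)\bigl(\sum_i b_i^2\bigr)$ with $a_i = |\bz_i|\sqrt{\bw_i/\bx_i}$ and $b_i = \sqrt{\bw_i \bx_i}$. This gives
\[
\Bigl(\sum_{i \in S} \bw_i |\bz_i|\Bigr)^2 \;\le\; \Bigl(\sum_{i \in S} \frac{\bw_i \bz_i^2}{\bx_i}\Bigr) \cdot \Bigl(\sum_{i \in S} \bw_i \bx_i\Bigr).
\]
Since $\bx \in \X_{S,+} \subseteq \X_S$, the second factor satisfies $\sum_{i \in S}\bw_i\bx_i = \bw_S^\top \bx \le \nu_S^*$. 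Rearranging and multiplying through by $-\mu$ yields exactly the desired bound $\bz^\top \nabla^2 f_{S,\gamma}^\mu(\bx) \bz \le -\frac{\mu}{\nu_S^*}\norm*{\bz}_{\bw,S}^2$.

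There is no real obstacle here; the argument is essentially a bookkeeping computation followed by a standard Cauchy--Schwarz step. The only minor subtlety is ensuring the weights are split correctly between the two Cauchy--Schwarz factors so that the weighted $\ell_1$-norm appears on the left-hand side, and using $\bx \in \X_{S,+}$ (from \eqref{item:differentiability at maximizer}) to justify that the Hessian is well-defined at the points of interest. This claim will then combine with \cref{claim:maximizer} and a mean-value argument to establish strong concavity and, in turn, items \eqref{item:unique maximizer}--\eqref{item:quadratic upper bound} of \cref{lemma:entropy-regularized objective property}.
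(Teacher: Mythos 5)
Your proposal is correct and follows essentially the same route as the paper: compute the (diagonal) Hessian $\partial_i^2 f_{S,\gamma}^\mu(\bx) = -\mu\bw_i/\bx_i$, apply Cauchy--Schwarz to get $\norm*{\bz}_{\bw,S}^2 \le \bigl(\sum_{i\in S}\bw_i\bz_i^2/\bx_i\bigr)\norm*{\bx}_{\bw,S}$, and bound $\norm*{\bx}_{\bw,S} = \bw_S^\top\bx \le \nu_S^*$. The only cosmetic difference is that you make the Cauchy--Schwarz factorization explicit where the paper states it in one line.
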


    Note that $f_{S,\gamma}^{\mu}$ is twice-differentiable on $\X_{S,+}$ and thus \cref{claim:hessian} is well-defined.

    \begin{proof}
      The second-order partial derivatives of $f_{S,\gamma}^{\mu}$ satisfy
      \[\frac{\partial^2 f_{S,\gamma}^{\mu}(\bx)}{\partial \bx_i^2}=-\frac{\mu \bw_i}{\bx_i}\quad\text{and}\quad\frac{\partial^2 f_{S,\gamma}^{\mu}(\bx)}{\partial \bx_{i}\partial \bx_{j}}=0\;\text{for}\;j \neq i.\]
      Using Cauchy-Schwarz inequality, for any $\bx,\bz\in \X_{S,+}$ we have
      \[
        \bz^\top \nabla^2f_{S,\gamma}^{\mu}(\bx)\bz=\sum_{i \in S}-\frac{\mu \bw_i}{\bx_i}\bz_i^2\leq -\mu\frac{\norm{\bz}_{\bw,S}^2}{\norm{\bx}_{\bw,S}}\leq -\frac{\mu}{\nu_{S}^{*}}\norm{\bz}_{\bw,S}^2,
      \]
      as claimed.
    \end{proof}

    Now consider a maximizer $\by$ of $f_{S,\gamma}^{\mu}$ and an $\bx \in \X_{S,+}$.
    Since $\by \in \X_{S,+}$, $f_{S,\gamma}^{\mu}$ is differentiable at $\by$.
    Letting $\by_{\alpha} \defeq \by + \alpha (\bx - \by)$ for $\alpha \in [0, 1]$, we have
    \begin{align}
      f_{S,\gamma}^{\mu}(\bx)
      &= f_{S,\gamma}^{\mu}(\by) + \left(\nabla f_{S,\gamma}^{\mu}(\by)\right)^\top(\bx - \by) + \int_{0}^{1}\int_{0}^{t}(\bx - \by)^\top \nabla^2 f_{S,\gamma}^{\mu}(\by_{\alpha}) (\bx - \by)\mathrm{d}\alpha\mathrm{d}t \\
      &\stackrel{(i)}{\leq} f_{S,\gamma}^{\mu}(\by) - \frac{\mu}{\nu_S^{*}}\norm*{\bx - \by}_{\bw,S}^2 \cdot  \int_{0}^{1}\int_{0}^{t}\mathrm{d}\alpha\mathrm{d}t = f_{S,\gamma}^{\mu}(\by) - \frac{\mu}{2\nu_S^{*}}\norm*{\bx - \by}_{\bw,S}^2,\label{eq:f-interior}
    \end{align}
    where (i) uses the optimality conditions of $f_{S,\gamma}^{\mu}$ at $\by$ and \cref{claim:hessian}.
    On the other hand, for $\bx \in \X_S \setminus \X_{S,+}$, there is a sequence $\{\bx_n\} \subseteq \X_{S,+}$ approaching $\bx$.
    By continuity of $f_{S,\gamma}^{\mu}$ we have
    \begin{equation}
      f_{S,\gamma}^{\mu}(\bx) = \lim_{n \to \infty}f_{S,\gamma}^{\mu}(\bx_n) \leq \lim_{n \to \infty}f_{S,\gamma}^{\mu}(\by) - \frac{\mu}{2\nu_S^{*}}\norm*{\bx_n - \by}_{\bw,S}^2 = f_{S,\gamma}^{\mu}(\by) - \frac{\mu}{2\nu_S^{*}}\norm*{\bx - \by}_{\bw,S}^2.
      \label{eq:f-boundary}
    \end{equation}
    This implies that there is a unique maximizer $\bx_{S,\gamma}^{\mu}$ of $f_{S,\gamma}^{\mu}$ on $\X_S$.
    The rest of the lemma follows from \cref{claim:maximizer} and \cref{eq:f-interior,eq:f-boundary}.
\end{proof}

Finally, our analysis of the framework uses the Bregman divergence of $r_{S,\gamma}$ as a proxy to bound the decrease of a certain potential.

\begin{definition}[Bregman divergence]
  For differentiable $r: \X \to \R$ and $\bx, \by \in \X$ for a domain $\X$, the \emph{Bregman divergence} of $r$ \emph{from $\by$ to $\bx$} is 
  \[ V_{\by}^{r}(\bx) \defeq r(\bx) - \left(r(\by) + \left(\nabla r(\by)\right)^\top \left(\bx-\by\right)\right).\]
\end{definition}

Overloading notation, let $V_{\by}^{S,\gamma}(\bx)$ for $S \subseteq [d], \bx\in \X_S$ and $\by \in \X_{S,+}$ be the Bregman divergence induced by $r_{S,\gamma}$, which is non-positive since $r_{S,\gamma}$, as a generalization of the entropy function, is concave.
We have by direct calculation from the definition that

\begin{align}
V_{\by}^{S,\gamma}(\bx) &\defeq V_{\by}^{r_{S,\gamma}}(\bx)
= r_{S,\gamma}(\bx)-\left(r_{S,\gamma}(\by)+\left(\nabla r_{S,\gamma}(\by)\right)^\top\left(\bx-\by\right)\right)\\
& =\left(\sum_{i\in S}\bw_{i}\bx_{i}\log \frac{\gamma}{\bw_i\bx_{i}}\right)-\left[\left(\sum_{i\in S}\bw_{i}\by_{i}\log \frac{\gamma}{\bw_i\by_{i}}\right)-\left(\sum_{i\in S}\bw_{i}\left(1+\log \frac{\bw_i\by_{i}}{\gamma}\right)(\bx_{i}-\by_{i})\right)\right]\\
 & =\sum_{i\in S}\bw_{i}\bx_{i}\log\frac{\by_i}{\bx_i}+\sum_{i\in S}\bw_{i}(\bx_{i}-\by_{i}).\label{eq:divergence}
\end{align}

Note that we need $\by\in\X_{S,+}$ because the gradient does not exist on $\X_S\setminus \X_{S,+}$.
Indeed, in the remainder of the section we will only use $V_{\by}^{S,\gamma}(\bx)$ for $\by=\bx^{\mu}_{S,\gamma}$, which lies in $\X_{S,+}$ by \cref{lemma:entropy-regularized objective property}\labelcref{item:differentiability at maximizer}.
The following lemma bounds the entropy-regularized objective by the Bregman divergence from $\bx_{S,\gamma}^{\mu}$.

\begin{lemma}\label{lemma:Bregman divergence property}
    For any $\bx\in\X_S$, we have $f_{S,\gamma}^\mu(\bx)\leq f_{S,\gamma}^\mu(\bx_{S,\gamma}^\mu)+\mu V_{\bx_{S,\gamma}^\mu}^{S,\gamma}(\bx)$.
\end{lemma}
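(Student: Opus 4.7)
The plan is to rearrange the claimed inequality using the explicit definition of the Bregman divergence, reduce it to a first-order optimality condition at $\bx_{S,\gamma}^\mu$, and then invoke \cref{lemma:entropy-regularized objective property} to justify that first-order condition.

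First I would expand the right-hand side. By the definition of the Bregman divergence,
\[
  f_{S,\gamma}^\mu(\bx_{S,\gamma}^\mu) + \mu\, V_{\bx_{S,\gamma}^\mu}^{S,\gamma}(\bx)
  = \bw_S^\top \bx_{S,\gamma}^\mu + \mu\, r_{S,\gamma}(\bx_{S,\gamma}^\mu) + \mu\bigl[r_{S,\gamma}(\bx) - r_{S,\gamma}(\bx_{S,\gamma}^\mu) - \nabla r_{S,\gamma}(\bx_{S,\gamma}^\mu)^\top(\bx - \bx_{S,\gamma}^\mu)\bigr].
\]
Subtracting $f_{S,\gamma}^\mu(\bx) = \bw_S^\top \bx + \mu\, r_{S,\gamma}(\bx)$, the target inequality becomes
\[
  \bigl(\bw_S + \mu\, \nabla r_{S,\gamma}(\bx_{S,\gamma}^\mu)\bigr)^\top(\bx - \bx_{S,\gamma}^\mu) \;\leq\; 0,
\]
i.e., $\nabla f_{S,\gamma}^\mu(\bx_{S,\gamma}^\mu)^\top (\bx - \bx_{S,\gamma}^\mu) \leq 0$.

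Next I would justify that gradient expression and the required first-order inequality. By \cref{lemma:entropy-regularized objective property}\labelcref{item:differentiability at maximizer}, $\bx_{S,\gamma}^\mu \in \X_{S,+}$, so $r_{S,\gamma}$ (and hence $f_{S,\gamma}^\mu$) is differentiable there and the gradient $\nabla f_{S,\gamma}^\mu(\bx_{S,\gamma}^\mu) = \bw_S + \mu\, \nabla r_{S,\gamma}(\bx_{S,\gamma}^\mu)$ is well-defined. Since $\X_S$ is convex and $\bx_{S,\gamma}^\mu$ is the (unique) maximizer of the concave function $f_{S,\gamma}^\mu$ on $\X_S$ (\cref{lemma:entropy-regularized objective property}\labelcref{item:unique maximizer}), applying standard first-order optimality one gets that for every $\bx\in \X_S$ the directional derivative in the feasible direction $\bx-\bx_{S,\gamma}^\mu$ is non-positive, which is exactly $\nabla f_{S,\gamma}^\mu(\bx_{S,\gamma}^\mu)^\top (\bx - \bx_{S,\gamma}^\mu) \leq 0$. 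To be explicit, if this quantity were strictly positive for some $\bx$, then looking at $\by_\alpha \defeq \bx_{S,\gamma}^\mu + \alpha(\bx - \bx_{S,\gamma}^\mu) \in \X_S$ for small $\alpha > 0$ and applying a first-order Taylor expansion would contradict the maximality of $\bx_{S,\gamma}^\mu$.

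There is no real obstacle here: the statement is essentially the standard fact that a concave function lies below its linearization at a constrained maximizer up to the regularizer's Bregman divergence. The only mild subtlety is that $f_{S,\gamma}^\mu$ is not differentiable on all of $\X_S$, but this is harmless because we only need differentiability at the single point $\bx_{S,\gamma}^\mu$, which lies in the open set $\X_{S,+}$ by \cref{lemma:entropy-regularized objective property}\labelcref{item:differentiability at maximizer}.
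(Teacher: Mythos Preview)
Your proposal is correct and follows essentially the same approach as the paper: both reduce the inequality to the first-order optimality condition $\nabla f_{S,\gamma}^\mu(\bx_{S,\gamma}^\mu)^\top(\bx-\bx_{S,\gamma}^\mu)\le 0$ at the maximizer, invoking \cref{lemma:entropy-regularized objective property}\labelcref{item:differentiability at maximizer} for differentiability there, and then unwinding the definition of the Bregman divergence. The only cosmetic difference is the order of the algebra (you subtract $f_{S,\gamma}^\mu(\bx)$ first, the paper expands $f_{S,\gamma}^\mu(\bx^*)-f_{S,\gamma}^\mu(\bx)$ directly), but the content is identical.
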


\begin{proof}
For clarity let us write $\bx^{*} \defeq \bx_{S,\gamma}^{\mu}$.
Optimality conditions for concave optimization applied to $f_{S,\gamma}^\mu(\bx^{*})$ imply that

\begin{equation}
(\bw_S + \mu\nabla r_{S,\gamma}(\bx^{*})^\top (\bx-\bx{*})=\nabla f_{S,\gamma}^{\mu}(\bx^{*})^\top (\bx-\bx^{*})\leq 0.
\label{eq:opt-condition}
\end{equation}

As such, we have
\begin{align*}
f_{S,\gamma}^{\mu}(\bx^{*})-f_{S,\gamma}^{\mu}(\bx)
&\stackrel{(i)}{=}\bw^\top_S (\bx^{*}-\bx)+\mu(r_{S,\gamma}(\bx^{*})-r_{S,\gamma}(\bx))\\
&\stackrel{(ii)}{=}(\bw_S+\mu\nabla r_{S,\gamma}(\bx^{*}))^\top (\bx^{*} - \bx)-\mu V^{S,\gamma}_{\bx^{*}}(\bx)\stackrel{(iii)}{\geq} -\mu V^{S,\gamma}_{\bx^{*}}(\bx),
\end{align*}
where (i) is by definition of $f_{S,\gamma}^{\mu}$, (ii) is by definition of the Bregman divergence, and (iii) follows from \eqref{eq:opt-condition}.
\end{proof}

The following \cref{tab:notation} summarizes the notation introduced here for future reference.

\begin{table}[ht]
\centering
\small
\def\arraystretch{1.5}
\begin{tabular}{l|l|l}
 Notation & Definition & Description \\
\hline
$r_{S,\gamma}(\bx)$ & $\sum_{i\in S}\bw_{i}\bx_{i}\log\frac{\gamma}{\bw_i\bx_i}$ & entropy regularizer\\
\hline
$V_{\by}^{S,\gamma}(\bx)$ & $\sum_{i\in S}\bw_{i}\bx_{i}\log(\by_{i}/\bx_{i})+\sum_{i\in S}\bw_{i}(\bx_{i}-\by_{i})$ & Bregman divergence of entropy regularizer\\
\hline
$f_{S,\gamma}^{\mu}(\bx)$ & $\bw^{\top}_S\bx+\mu \cdot r_{S,\gamma}(\bx)$ & entropy-regularized objective\\
\hline
$\bx^{\mu}_{S,\gamma}$ & $\argmax_{\bx \in \X_S}f^{\mu}_{S,\gamma}(\bx)$ & optimal entropy-regularized solution \\
\hline
$\nu_{S}^{*}$ & $\max_{\bx \in \X_S}\bw^\top_S \bx$ & optimal linear objective value\\
\hline
$Z_{S,\gamma}^{\mu}$ & $\max_{\bx \in \X_S}f_{S,\gamma}^{\mu}(\bx)$ & optimal entropy-regularized objective value\\
\hline
$\norm{\bx}_{\bw,S}$ & $\sum_{i \in S}\bw_i|\bx_i|$ & norm of $\bx \in \X_S$ induced by $\bw$ \\
\end{tabular}
\caption{Frequently used notation}
\label{tab:notation}
\end{table}

\subsection{Robustness of Entropy Regularization}\label{subsec:robustness}

In this section we show that if the subroutine $\texttt{Rebuild()}$ in \cref{alg:meta-general-ds} returns the exact maximizer of the entropy-regularized objective, then $\texttt{Rebuild()}$ will be called at most $\otilde (\eps^{-2})$ times.
The first step is to derive a proper choice of $\gamma$ and $\mu$ so that $\bx_{S,\gamma}^{\mu}$ becomes an approximate maximum weight solution that can be used in \cref{alg:meta-general-ds}.
We call a tuple containing an accuracy parameter $\eps > 0$, coordinate subset $S \subseteq [d]$, and estimate
\begin{equation}
  \nu_{S}^{*} \leq \gamma \leq d \cdot \nu_{S}^{*}
  \label{eq:gamma}
\end{equation}
a \emph{valid iterate} which will repeatedly appear as input to lemmas in the remainder of this section.

\begin{lemma}\label{lemma:absolute entropy bound}
    For any valid iterate $(\eps, S, \gamma)$ and $0 < \mu \leq \frac{\eps}{8\log{d}}$, it holds that $0\leq \mu \cdot r_{S,\gamma}(\bx) \leq \frac{\eps}{2} \cdot \nu_{S}^{*}$ for all $\bx \in \X_S$.
\end{lemma}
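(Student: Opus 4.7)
The plan is to derive both inequalities directly from \cref{lemma:entropy-property}, which sandwiches $r_{S,\gamma}(\bx)$ between $\nu\log(\gamma/\nu)$ and $\nu\log(d\gamma/\nu)$, where $\nu \defeq \bw_S^\top \bx \in [0, \nu_S^*]$. The only remaining task is to convert the valid-iterate hypothesis $\nu_S^* \leq \gamma \leq d\nu_S^*$ into the claimed numerical bounds on $\mu \cdot r_{S,\gamma}(\bx)$.

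For the lower bound, since $\gamma \geq \nu_S^* \geq \nu$, we have $\log(\gamma/\nu) \geq 0$, and therefore $r_{S,\gamma}(\bx) \geq \nu \log(\gamma/\nu) \geq 0$ (using the convention $0 \log(\cdot) = 0$ when $\nu = 0$). This immediately gives $\mu \cdot r_{S,\gamma}(\bx) \geq 0$ since $\mu > 0$.

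For the upper bound, I will show that $g(\nu) \defeq \nu \log(d\gamma/\nu) \leq 2 \nu_S^* \log d$ uniformly over $\nu \in (0, \nu_S^*]$, after which the choice $\mu \leq \eps/(8\log d)$ yields $\mu \cdot r_{S,\gamma}(\bx) \leq 2\mu\nu_S^* \log d \leq \eps\nu_S^*/4 \leq \eps\nu_S^*/2$. Since $g$ is strictly concave in $\nu$ with unconstrained maximum at $\nu^\star = d\gamma/e$, I will split into two cases. If $\nu^\star \geq \nu_S^*$, then $g$ is increasing on $(0,\nu_S^*]$, so $g(\nu) \leq g(\nu_S^*) = \nu_S^* \log(d\gamma/\nu_S^*) \leq \nu_S^* \log(d^2) = 2\nu_S^* \log d$, where the last inequality uses $\gamma \leq d\nu_S^*$. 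If $\nu^\star < \nu_S^*$, then $g(\nu) \leq g(\nu^\star)$, which equals $\Theta(\nu^\star) \leq \Theta(\nu_S^*)$ (the hidden constant depending on the logarithm base), and this is still at most $2\nu_S^* \log d$ for $d \geq 2$.

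No step presents a substantive obstacle; the argument is a direct application of \cref{lemma:entropy-property} together with elementary one-variable calculus, with care taken only to verify that the constants arising from the logarithm base do not affect the final conclusion.
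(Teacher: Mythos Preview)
Your proposal is correct and follows essentially the same approach as the paper: both invoke \cref{lemma:entropy-property} for the lower and upper bounds and then reduce the upper bound to controlling $g(\nu)=\nu\log(d\gamma/\nu)$ over $\nu\in(0,\nu_S^*]$. The only cosmetic difference is that the paper bounds $g$ via the algebraic split $\log(d\gamma/\nu)=\log(d\gamma/\nu_S^*)+\log(\nu_S^*/\nu)$ together with $x\log(1/x)\le 1$, whereas you do a two-case analysis on the location of the unconstrained maximizer $\nu^\star=d\gamma/e$; both routes yield $g(\nu)\le 2\nu_S^*\log d$ and hence $\mu\cdot r_{S,\gamma}(\bx)\le \tfrac{\eps}{4}\nu_S^*\le \tfrac{\eps}{2}\nu_S^*$.
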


\begin{proof}
    The lower bound follows from \cref{lemma:entropy-property} and the assumption of $\gamma$ in \eqref{eq:gamma}.
	For the upper bound, let $\bx \in \X_S$ be arbitrary and define $\nu \defeq \bw^\top_S \bx$.
    \cref{lemma:entropy-property} implies that
    \begin{align*}
        \mu \cdot r_{S,\gamma}(\bx)
        &\leq \mu\nu\log\frac{d\gamma}{\nu}
        =\mu\nu\log\frac{d\gamma}{\nu_S^{*}}+\mu\nu_{S}^*\cdot \frac{\nu}{\nu_S^{*}}\log\frac{\nu_S^{*}}{\nu} \\
        &\stackrel{(i)}{\leq} \mu \nu \cdot (2\log d)  + \mu\nu_{S}^* \stackrel{(ii)}{\leq} \frac{\varepsilon}{2}\cdot \nu_{S}^*,
    \end{align*}
    where (i) comes from $x\log(1/x) \leq 1$ for all $x \in \R_{\geq 0}$ and the assumption of $\gamma$ in \eqref{eq:gamma} and (ii) is by the assumption of $\mu$ and $\nu\leq \nu_{S}^{*}$.
\end{proof}

\begin{lemma}\label{lemma:valid approximation}
    For any valid iterate $(\eps, S, \gamma)$ and $0 < \mu \leq \frac{\eps}{8\log{d}}$, it holds that
    \[\bw^\top_S \bx_{S,\gamma}^{\mu}\geq \left(1-\frac{\eps}{2}\right)\cdot \nu^*_{S}.\]
\end{lemma}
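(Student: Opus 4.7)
The plan is to compare the entropy-regularized objective values at two points: the optimizer $\bx_{S,\gamma}^{\mu}$ of $f_{S,\gamma}^{\mu}$ and any maximizer $\bx^{*}$ of the linear objective $\bw_S^\top \bx$ over $\X_S$ (which exists by compactness). By definition of $\bx_{S,\gamma}^{\mu}$ we have $f_{S,\gamma}^{\mu}(\bx_{S,\gamma}^{\mu}) \geq f_{S,\gamma}^{\mu}(\bx^{*})$, which rearranges to
\[
\bw_S^\top \bx_{S,\gamma}^{\mu} \geq \bw_S^\top \bx^{*} + \mu \cdot r_{S,\gamma}(\bx^{*}) - \mu \cdot r_{S,\gamma}(\bx_{S,\gamma}^{\mu}) = \nu_S^{*} + \mu \cdot r_{S,\gamma}(\bx^{*}) - \mu \cdot r_{S,\gamma}(\bx_{S,\gamma}^{\mu}).
\]

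Next I would invoke \cref{lemma:absolute entropy bound} twice. First, applied to $\bx^{*} \in \X_S$, it yields $\mu \cdot r_{S,\gamma}(\bx^{*}) \geq 0$, so this term can only help. Second, applied to $\bx_{S,\gamma}^{\mu} \in \X_S$, it yields $\mu \cdot r_{S,\gamma}(\bx_{S,\gamma}^{\mu}) \leq \frac{\eps}{2}\cdot \nu_S^{*}$, which bounds the only negative contribution. Combining these inequalities with the display above gives
\[
\bw_S^\top \bx_{S,\gamma}^{\mu} \geq \nu_S^{*} + 0 - \frac{\eps}{2}\cdot \nu_S^{*} = \left(1-\frac{\eps}{2}\right)\cdot \nu_S^{*},
\]
which is the desired conclusion.

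There is essentially no obstacle here — the whole proof is a one-line optimality-of-$\bx_{S,\gamma}^{\mu}$ argument followed by plugging in the two bounds of \cref{lemma:absolute entropy bound}. The only thing to check is that the hypotheses of \cref{lemma:absolute entropy bound} (namely $(\eps, S, \gamma)$ being a valid iterate with $\mu \leq \eps/(8\log d)$) match those of the current lemma, which they do verbatim. The role of the assumption $\nu_S^{*} \leq \gamma \leq d\cdot \nu_S^{*}$ in \eqref{eq:gamma} enters only implicitly through \cref{lemma:absolute entropy bound}, ensuring both the non-negativity of $r_{S,\gamma}$ on $\X_S$ and the $O(\mu \log d)$ upper bound on the regularization penalty relative to $\nu_S^{*}$.
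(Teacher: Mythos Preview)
Your proof is correct and takes essentially the same approach as the paper: both use the optimality of $\bx_{S,\gamma}^{\mu}$ for $f_{S,\gamma}^{\mu}$ together with the two bounds of \cref{lemma:absolute entropy bound}. The paper phrases it contrapositively (showing any $\bx$ with $\bw_S^\top\bx<(1-\eps/2)\nu_S^*$ has $f_{S,\gamma}^{\mu}(\bx)<f_{S,\gamma}^{\mu}(\bx_{S,\gamma}^{\mu})$), while you compare directly at $\bx_{S,\gamma}^{\mu}$ and a linear maximizer $\bx^*$; the underlying inequality chain is identical.
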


\begin{proof}
    For any $\bx\in \X_S$ such that $\bw^\top_S \bx<(1-\eps/2)\cdot \nu^*_{S}$, we have
    \[
        f_{S,\gamma}^{\mu}\left(\bx_{S,\gamma}^{\mu}\right)-f_{S,\gamma}^{\mu}(\bx)\stackrel{(i)}{\geq} \nu_{S}^{*}-\left(\bw^\top_S \bx+\frac{\eps}{2} \cdot \nu_{S}^{*}\right)\stackrel{(ii)}{>} 0,
    \]
    where (i) comes from \Cref{lemma:absolute entropy bound} (note that the non-negativity of the entropy-regularizer implies $f_{S,\gamma}^{\mu}(\bx_{S,\gamma}^{\mu}) \geq \nu_S^{*}$) and (ii) comes from the assumption that $\bw^\top_S \bx < (1-\eps/2)\cdot \nu_{S}^{*}$.
    The lemma follows since $x_{S,\gamma}^{\mu}$  is a maximizer of $f_{S,\gamma}^{\mu}$.
\end{proof}

\Cref{lemma:valid approximation} shows that for $\gamma$ being a $d$-approximate upper bound of $\nu_{S}^*$, if we choose $\mu$ to be roughly proportional to $\eps$, then the corresponding value of $\bw^\top_S \bx_{S,\gamma}^{\mu}$ is a $(1-\eps/2)$-approximation to $\nu_{S}^*$.
Thus we can use the entropy-regularized solution for the $\texttt{Rebuid()}$ procedure in \Cref{alg:meta-general-ds}.
Moreover, as we will show below, a decrease in the linear objective value of $\bx_{S,\gamma}^{\mu}$ implies a decrease in the entropy-regularized objective globally.
This enables the use of the latter quantity as a potential to bound the number of calls to $\texttt{Rebuild()}$ \cref{alg:meta-general-ds} needs.

\begin{lemma}
  For any valid iterate $(\eps, S, \gamma)$, $0 < \mu \leq \frac{\eps}{8\log{d}}$, and $S^\prime \subseteq S$, if
  \[ {\bw}^\top_{S^\prime} \left(\bx_{S,\gamma}^{\mu}\right)_{S^\prime} < \left(1-\frac{\eps}{2}\right)\cdot \bw^\top_S \bx_{S,\gamma}^{\mu}, \]
  then $Z_{S^\prime,\gamma}^{\mu} \leq \left(1-\frac{\mu\eps}{3}\right)Z_{S,\gamma}^{\mu}$.
  \label{lemma:entropy-decreases}
\end{lemma}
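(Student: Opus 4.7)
My plan is to use \cref{lemma:Bregman divergence property} to upper bound $Z^\mu_{S',\gamma}$ in terms of $Z^\mu_{S,\gamma}$, via the Bregman divergence from $\bx^{\mu}_{S,\gamma}$ to a suitable zero-extension of the $S'$-maximizer. The hypothesis will translate into a sufficiently negative Bregman divergence, and the final multiplicative bound will follow by using \cref{lemma:absolute entropy bound,lemma:valid approximation} to show that $\bw_S^\top \bx^\mu_{S,\gamma}$ and $Z^\mu_{S,\gamma}$ agree up to a $(1-\eps/2)$ factor.

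\textbf{Key steps.} First, let $\bx^\star \defeq \argmax_{\bx \in \X_{S'}} f^\mu_{S',\gamma}(\bx)$ and let $\bz \in \X_S$ be its zero-extension to the coordinates of $S \setminus S'$. This extension is feasible because $\X$ is downward closed: lifting $\bx^\star$ to any preimage in $\X$ and then zeroing the entries outside $S'$ stays in $\X$. Since $0\log(\gamma/0)=0$ by convention, $f^\mu_{S,\gamma}(\bz) = f^\mu_{S',\gamma}(\bx^\star) = Z^\mu_{S',\gamma}$, so \cref{lemma:Bregman divergence property} applied to $\bz$ yields
\[
Z^\mu_{S',\gamma} \;\leq\; Z^\mu_{S,\gamma} + \mu\, V^{S,\gamma}_{\bx^{\mu}_{S,\gamma}}(\bz).
\]

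Second, I would upper bound $V^{S,\gamma}_{\bx^{\mu}_{S,\gamma}}(\bz)$. Write $\bx^* \defeq \bx^\mu_{S,\gamma}$. Using formula~\eqref{eq:divergence} and $\bz_i=0$ on $S \setminus S'$,
\[
V^{S,\gamma}_{\bx^*}(\bz) \;=\; \sum_{i\in S'} \bw_i \bz_i \log\frac{\bx^*_i}{\bz_i} \;+\; \bw_{S'}^\top \bz_{S'} \;-\; \bw_S^\top \bx^*.
\]
Applying $\log u \leq u-1$ coordinatewise (with the convention $0\log 0 = 0$), the logarithmic sum is at most $\bw_{S'}^\top \bx^*_{S'} - \bw_{S'}^\top \bz_{S'}$, which cancels the $\bw_{S'}^\top \bz_{S'}$ term, so
\[
V^{S,\gamma}_{\bx^*}(\bz) \;\leq\; \bw_{S'}^\top \bx^*_{S'} - \bw_S^\top \bx^* \;<\; -\tfrac{\eps}{2}\,\bw_S^\top \bx^*,
\]
by the hypothesis of the lemma.

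Third, I would convert this bound into a multiplicative statement about $Z^\mu_{S,\gamma}$. By \cref{lemma:absolute entropy bound}, $Z^\mu_{S,\gamma} \leq \bw_S^\top \bx^* + \tfrac{\eps}{2}\nu^*_S$, and by \cref{lemma:valid approximation}, $\bw_S^\top \bx^* \geq (1-\eps/2)\nu^*_S$. Chaining these gives $\bw_S^\top \bx^* \geq (1-\eps/2) Z^\mu_{S,\gamma}$. Using the global assumption that $\eps$ is below a small constant (so $\tfrac{1}{2}(1-\eps/2) \geq \tfrac{1}{3}$), we conclude
\[
Z^\mu_{S',\gamma} \;\leq\; Z^\mu_{S,\gamma} - \tfrac{\mu\eps}{2}\bw_S^\top \bx^* \;\leq\; \left(1 - \tfrac{\mu\eps}{3}\right) Z^\mu_{S,\gamma}.
\]

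\textbf{Main obstacle.} The nontrivial step is the Bregman divergence manipulation: the key observation is that $\log u \leq u - 1$ applied inside the cross-entropy term makes it telescope with the linear part of~\eqref{eq:divergence}, collapsing the divergence to the ``missing mass'' $\bw_{S'}^\top \bx^*_{S'} - \bw_S^\top \bx^*$ that the hypothesis directly controls. Everything else is routine.
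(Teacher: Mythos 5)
Your proof is correct and follows essentially the same route as the paper's: zero-extend the $S'$-maximizer, apply \cref{lemma:Bregman divergence property}, bound the Bregman divergence by the deleted mass $-\sum_{i \in S\setminus S'}\bw_i(\bx^{\mu}_{S,\gamma})_i < -\tfrac{\eps}{2}\bw_S^\top\bx^{\mu}_{S,\gamma}$, and convert to a multiplicative bound via \cref{lemma:absolute entropy bound}. The only cosmetic differences are that you make the non-positivity of the $S'$-block of the divergence explicit via $\log u \le u-1$ (the paper just cites concavity) and you route the final comparison of $\bw_S^\top\bx^{\mu}_{S,\gamma}$ with $Z^{\mu}_{S,\gamma}$ through \cref{lemma:valid approximation}, which is equally valid.
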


\begin{proof}
For clarity let $\bx_{S} \defeq \bx_{S,\gamma}^{\mu}$, and $\bx_{S^\prime} \defeq \left(\bx_{S^\prime,\gamma}^{\mu}\right)^S$ be $\bx_{S^\prime,\gamma}^{\mu}$ extended to have coordinates $S$.
\cref{lemma:Bregman divergence property} shows that 
\[f_{S,\gamma}^{\mu}(\bx_{S^\prime})\leq f_{S,\gamma}^{\mu}(\bx_{S})+\mu V^{S,\gamma}_{\bx_{S}}(\bx_{S^\prime}).\]
Since $S^\prime \subseteq S$, we have
\begin{equation}
Z_{S^\prime,\gamma}^{\mu} = f_{S^\prime,\gamma}^{\mu}\left(\bx_{S^\prime,\gamma}^{\mu}\right) = f_{S,\gamma}^{\mu}(\bx_{S^\prime})\leq f_{S,\gamma}^{\mu}(\bx_{S})+\mu V_{\bx_{S}}^{S,\gamma}(\bx_{{S^\prime}}) = Z_{S,\gamma}^{\mu} + \mu V_{\bx_S}^{S,\gamma}(\bx_{S^\prime}).
\label{eq:f-bound}
\end{equation}
Further, by above letting $D \defeq S \setminus S^\prime$ and therefore $\bx_{{S^\prime},i} = 0$ for $i \in D$, we have from \eqref{eq:divergence} and the non-positiveness of the Bregman divergence for concave functions that
\begin{align*}
V_{\bx_{S}}^{S,\gamma}(\bx_{{S^\prime}}) & =\sum_{i\in S}\bw_{i}(\bx_{{S^\prime}})_i\log\frac{(\bx_{S})_i}{(\bx_{{S^\prime}})_i}+\sum_{i\in S}\bw_{i}\left((\bx_{{S^\prime}})_i-(\bx_{S})_i\right)\\
 & \leq \sum_{i\in D}\bw_{i}(\bx_{{S^\prime}})_i\log\frac{(\bx_S)_i}{(\bx_{S^\prime})_i}+\sum_{i\in D}\bw_{i}\left((\bx_{{S^\prime}})_i-(\bx_{S})_i\right)\\
 & =-\sum_{i\in D}\bw_i (\bx_{S})_i\leq-\frac{\eps}{2}\cdot \bw^{\top}_S\bx_{S},
\end{align*}
which when combined with \eqref{eq:f-bound} and \cref{lemma:absolute entropy bound} shows that
\begin{align*}
Z_{S^\prime,\gamma}^{\mu}\leq Z_{S,\gamma}^{\mu} - \frac{\mu\eps}{2} \cdot \bw^{\top}_S\bx_{S}\leq \left(1-\frac{\mu \eps}{(1+\eps/2) \cdot 2}\right)\cdot Z_{S,\gamma}^{\mu}\leq \left(1-\frac{\mu \eps}{3}\right)\cdot Z_{S,\gamma}^{\mu}.
\label{eq:f-bound}
\end{align*}
\end{proof}

We now conclude the number of rebuilds before the actual optimal value $\nu_{S}^*$ drops to a certain number, if we use an entropy-regularized solution $\bx_{S,\gamma}^{\mu}$ for $\texttt{Rebuild()}$.

\UpperBoundReconstruction*

\begin{proof}
Note that while $\nu_{S}^{*}$ is between $\alpha/d$ and $\alpha$, $\gamma \defeq \alpha$ is a $d$-approximate upper bound of $\nu_{S}^{*}$ and thus the preceding lemmas apply.
By \cref{lemma:valid approximation}, $\bx^{(t)}$ is a $\left(1-\frac{\eps}{2}\right)$-approximate solution to the linear objective ${\bw}^\top_{S^{(t)}} \bx$ before $\nu_{S}^{*}$ drops below $\alpha$.
Fix a round $t$ and consider $S^{(t+1)}$ which is obtained by deleting some coordinates from $S^{(t)}$ so that
\[{\bw}^{\top}_{S^{(t+1)}}\bx^{(t)}_{S^{(t+1)}}<\left(1-\frac{\eps}{2}\right){\bw}^{\top}_{S^{(t)}}\bx^{(t)}.\]
In other words, the quality of the current solution $\bx^{(t)}$ decreases by a multiplicative factor of $1-\frac{\eps}{2}$ when we go from $S^{(t)}$ to $S^{(t+1)}$.
\cref{lemma:entropy-decreases} then implies
\[
  Z_{S^{(t+1)},\gamma}^{\mu} \leq \left(1-\frac{\mu\eps}{3}\right)Z_{S^{(t)},\gamma}^{\mu}
\]
for each round $t$.
This allows us to bound the number of calls to $\texttt{Rebuild()}$ as follows.
Initially when $\max_{x \in \X_S}\bw^\top_S \bx \leq \alpha$, the optimal entropy-regularized objective is no more than $(1+\eps/2)\cdot \alpha$ by \cref{lemma:absolute entropy bound}.
Likewise, at the end before $\nu_{S}^{*}$ drops below $\alpha/d$, the objective is at least $\alpha/d$.
Thus there will be at most
\[
  \log_{1 - \frac{\mu\eps}{3}}\left((1 + \eps/2)d\right) = O\left(\frac{\log d}{\mu\eps}\right)
\]
calls to $\texttt{Rebuild()}$.
\end{proof}

\subsection{Sufficiency of Approximate Solutions} \label{subsec:robustness-approximate}

From \cref{subsec:robustness} we have seen that the maximizer of the entropy-regularized objective solves the decremental linear optimization problem.
However, exact maximizers are not always easy to obtain, and therefore in this section, we show that any accurate enough approximate maximizer of $f_{S,\gamma}^\mu$ suffices for the lazy update framework to work efficiently.
This is by the following lemma which states that such a solution is also close in $\norm{\cdot}_{\bw,S}$-distance to the actual maximizer $\bx_{S,\gamma}^{\mu}$.

\begin{lemma}\label{lemma:approximate solution}
For any valid iterate $(\eps, S, \gamma)$, $0 < \mu \leq \frac{\eps}{8\log{d}}$, and $\bx \in \X_S$ with $f_{S,\gamma}^{\mu}(\bx) \geq \left(1 - \frac{\mu\eps^2}{2}\right)Z_{S,\gamma}^{\mu}$, it holds that $\norm*{\bx - \bx_{S,\gamma}^{\mu}}_{\bw,S} \leq \eps\cdot Z_{S,\gamma}^{\mu}$.
\end{lemma}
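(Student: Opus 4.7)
The strategy is to combine the quadratic upper bound for $f_{S,\gamma}^{\mu}$ at its maximizer (Lemma~\ref{lemma:entropy-regularized objective property}\labelcref{item:quadratic upper bound}) with the optimality-gap hypothesis on $\bx$, and then bound the resulting quantity in terms of $Z_{S,\gamma}^{\mu}$.

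First I would invoke the strong-concavity-style inequality
\[
f_{S,\gamma}^{\mu}(\bx_{S,\gamma}^{\mu}) - f_{S,\gamma}^{\mu}(\bx) \;\geq\; \frac{\mu}{2\nu_{S}^{*}}\,\norm*{\bx - \bx_{S,\gamma}^{\mu}}_{\bw,S}^{2},
\]
which is Lemma~\ref{lemma:entropy-regularized objective property}\labelcref{item:quadratic upper bound} rearranged. The hypothesis $f_{S,\gamma}^{\mu}(\bx) \geq \bigl(1 - \tfrac{\mu\eps^{2}}{2}\bigr) Z_{S,\gamma}^{\mu}$ bounds the left-hand side above by $\tfrac{\mu\eps^{2}}{2}\,Z_{S,\gamma}^{\mu}$. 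Solving for the squared norm yields
\[
\norm*{\bx - \bx_{S,\gamma}^{\mu}}_{\bw,S}^{2} \;\leq\; \eps^{2}\,\nu_{S}^{*}\,Z_{S,\gamma}^{\mu}.
\]

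Second, I would compare $\nu_{S}^{*}$ with $Z_{S,\gamma}^{\mu}$. Since the iterate is valid, $\gamma \geq \nu_{S}^{*}$, so by Lemma~\ref{lemma:entropy-property} the entropy regularizer $r_{S,\gamma}$ is nonnegative on $\X_{S}$; evaluating $f_{S,\gamma}^{\mu}$ at the linear optimizer therefore gives $Z_{S,\gamma}^{\mu} \geq \nu_{S}^{*}$. Substituting this into the previous display and taking square roots yields exactly $\norm*{\bx - \bx_{S,\gamma}^{\mu}}_{\bw,S} \leq \eps \cdot Z_{S,\gamma}^{\mu}$, which is the claim.

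There is no real obstacle here — the lemma is essentially a standard ``approximate minimizers of strongly concave objectives are close to the true minimizer'' argument, and all the needed ingredients (strong-concavity-type quadratic bound in $\norm{\cdot}_{\bw,S}$, nonnegativity of $r_{S,\gamma}$ under a valid iterate) are already established earlier in the section. The only minor care is in bookkeeping constants: the factor of $\tfrac{1}{2}$ in the hypothesis pairs with the factor of $\tfrac{1}{2}$ from the quadratic upper bound so that the two $\mu$'s cancel cleanly, producing a bound in terms of $\eps^{2} \nu_{S}^{*} Z_{S,\gamma}^{\mu}$ rather than in terms of $\norm{\bx - \bx_{S,\gamma}^{\mu}}_{\bw,S}^{2}$ times a $\mu$-dependent constant. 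The hypothesis $\mu \leq \eps/(8\log d)$ is not actually used in the conclusion beyond ensuring that a valid iterate has the properties already exploited above.
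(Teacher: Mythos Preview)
Your proposal is correct and follows essentially the same approach as the paper: invoke the quadratic upper bound from Lemma~\ref{lemma:entropy-regularized objective property}\labelcref{item:quadratic upper bound}, combine it with the optimality-gap hypothesis to obtain $\norm*{\bx-\bx_{S,\gamma}^{\mu}}_{\bw,S}^{2}\leq \eps^{2}\nu_{S}^{*}Z_{S,\gamma}^{\mu}$, and finish using $\nu_{S}^{*}\leq Z_{S,\gamma}^{\mu}$. The paper cites Lemma~\ref{lemma:absolute entropy bound} for the last step while you cite Lemma~\ref{lemma:entropy-property}, but both yield the needed nonnegativity of $r_{S,\gamma}$ under a valid iterate; your observation that the bound on $\mu$ is not actually needed here is also accurate.
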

\begin{proof}
    \cref{lemma:entropy-regularized objective property}\labelcref{item:quadratic upper bound} states that
    \[ f_{S,\gamma}^{\mu}(\bx)\leq f_{S,\gamma}^{\mu}\left(\bx_{S,\gamma}^{\mu}\right)-\frac{\mu}{2\nu_{S}^{*}}\norm*{\bx-\bx_{S,\gamma}^{\mu}}^2_{\bw,S}\]
    for all $\bx \in \X_S$.
    Thus for every $\bx\in \X_S$ such that $f_{S,\gamma}^{\mu}(\bx)\geq \left(1-\frac{\mu\eps^2}{2}\right)\cdot Z_{S,\gamma}^{\mu}$, we have
    \[\norm*{\bx-\bx_{S,\gamma}^{\mu}}_{\bw,S}\leq \sqrt{\frac{2\nu_{S}^{*}}{\mu}\left(f_{S,\gamma}^{\mu}\left(\bx_{S,\gamma}^{\mu}\right)-f_{S,\gamma}^{\mu}(\bx)\right)}\leq \sqrt{\varepsilon^2\cdot \nu_{S}^{*}\cdot Z_{S,\gamma}^{\mu}}\leq \varepsilon\cdot Z_{S,\gamma}^{\mu},\]
    where the last inequality follows from \Cref{lemma:absolute entropy bound} and therefore $\nu_{S}^{*} \leq Z_{S,\gamma}^{\mu}$.
\end{proof}

The closeness of an approximate maximizer $\bx$ to the actual one $\bx_{S,\gamma}^{\mu}$ allows us to bound the decrease in the objective value of $\bx_{S,\gamma}^{\mu}$ that is hidden to us.
This establishes the number of rebuilds needed if we only have an accurate enough approximation to $f_{S,\gamma}^{\mu}$.

\UpperBoundReconstructionApprox*

\begin{proof}
  Setting $\eps^\prime \defeq \eps/4$ and $\eps^{\prime\prime} \defeq \eps/16$, we have $\mu \leq \frac{\eps^{\prime\prime}}{8\log{d}} \leq \frac{\eps^{\prime}}{8\log{d}}$ and $\bx^{(t)}$ being a $\left(1-\frac{\mu\eps^{\prime\prime}}{2}\right)$-approximate solution to $f_{S^{(t)},\gamma}^{\mu}$.
  By \cref{lemma:valid approximation} with accuracy parameter $\eps^\prime$ we know that $\bx_{S^{(t)},\gamma}^{\mu}$ is a $\left(1 - \frac{\eps}{8}\right)$-approximate solution to the linear objective ${\bw}^\top_{S^{(t)}} \bx$ before $\nu_{S}^*$ drops below $\alpha/d$.
  By triangle inequality of the norm $\norm{\cdot}_{\bw,S^{(t)}}$ and \cref{lemma:approximate solution} with accuracy parameter $\eps^{\prime\prime}$, the value of ${\bw}^\top_{S^{(t)}} \bx^{(t)}$ satisfies
  \begin{align}
    {\bw}^\top_{S^{(t)}} \bx^{(t)} = \norm*{\bx^{(t)}}_{\bw,S^{(t)}} &\geq \norm*{\bx_{S^{(t)},\gamma}^{\mu}}_{\bw,S^{(t)}} - \norm*{\bx_{S^{(t)},\gamma}^{\mu} - \bx^{(t)}}_{\bw,S^{(t)}} \\
    &\geq \left(1 - \frac{\eps}{8}\right)\nu_{S^{(t)}}^{*} - \frac{\eps}{16}Z_{S^{(t)},\gamma}^{\mu} \geq \left(1 - \frac{\eps}{4}\right)\nu_{S^{(t)}}^{*}, \label{eq:value}
  \end{align}
  where the last inequality uses that $Z_{S^{(t)},\gamma}^{\mu} \leq \left(1 + \frac{\eps}{8}\right)\nu_{S^{(t)}}^{*} \leq 2\nu_{S^{(t)}}^{*}$ by \cref{lemma:absolute entropy bound} with accuracy parameter $\eps^\prime$.
  This shows that $\bx^{(t)}$ is indeed a $\left(1 - \frac{\eps}{2}\right)$-approximate solution to the linear objective, as required by \cref{alg:meta-general-ds}.
  Fix a round $t$ and consider $S^{(t+1)}$  which is obtained by deleting some coordinates from $S^{(t)}$ so that
  \[
    {\bw}^\top_{S^{(t+1)}} \bx^{(t)}_{S^{(t+1)}} < \left(1 - \frac{\eps}{2}\right){\bw}^\top_{S^{(t)}} \bx^{(t)}.
  \]
  This implies
  \begin{align*}
    {\bw}^\top_{S^{(t+1)}} \left(\bx^{\mu}_{S^{(t)},\gamma}\right)_{S^{(t+1)}} &= \norm*{\left(\bx^{\mu}_{S^{(t)},\gamma}\right)_{S^{(t+1)}}}_{\bw,S^{(t+1)}}  \\
    &\leq \norm*{\left(\bx^{(t)}\right)_{S^{(t+1)}}}_{\bw,S^{(t+1)}} + \norm*{\left(\bx^{\mu}_{S^{(t)},\gamma} - \bx^{(t)}\right)_{S^{(t+1)}}}_{\bw,S^{(t+1)}} \\
    &< \left(1-\frac{\eps}{2}\right){\bw}^\top_{S^{(t)}} \bx^{(t)} + \norm*{\bx^{\mu}_{S^{(t)},\gamma} - \bx^{(t)}}_{\bw,S^{(t)}} \\
    &\stackrel{(i)}{\leq} \left(\frac{1-\eps/2}{1-\eps/4}\right){\bw}^\top_{S^{(t)}} \bx_{S^{(t)},\gamma}^{\mu} + \frac{\eps}{16}\cdot Z_{S^{(t)},\gamma}^{\mu} \\
    &\stackrel{(ii)}{\leq} \left(\frac{1-\eps/2}{1-\eps/4}\right){\bw}^\top_{S^{(t)}} \bx_{S^{(t)},\gamma}^{\mu} + \frac{\eps}{16(1-\eps/8)}{\bw}^\top_{S^{(t)}} \bx_{S^{(t)},\gamma}^{\mu} \\
    &\stackrel{(iii)}{\leq} \left(1-\frac{\eps}{8}\right){\bw}^\top_{S^{(t)}} \bx^{\mu}_{S^{(t)},\gamma},
  \end{align*}
  where (i) is by \eqref{eq:value}, (ii) is by \cref{lemma:absolute entropy bound} with accuracy parameter $\eps^\prime$, and (iii) uses $\frac{1-\eps/2}{1-\eps/4} \leq 1-\eps/4$ for $\eps \in (0, 1)$.
  The theorem then follows from \cref{lemma:entropy-decreases} with accuracy parameter $\eps^\prime$ and the same reasoning that proves \cref{lemma:upper bound on reconstruction rounds}.
\end{proof}

\subsection{Near-Optimality of Entropy Regularization} \label{sec:lowerbound}\label{subsec:optimality}

We have shown earlier in this section that \cref{alg:meta-general-ds} with the entropy regularization strategy \cref{alg:rebuild} solves the decremental linear optimization problem with at most $O(\log^2 d/\varepsilon^2)$ calls to $\texttt{Rebuild()}$ before the optimal linear objective drops from $d\cdot \alpha$ to $\alpha$. Complementing this result, in this section we show that this bound is near optimal for a certain range of $\eps$.

More specifically, we focus on the special case of unweighted bipartite matching, i.e., when $\X = \P_G = \M_G$ for some bipartite graph $G$, and consider any algorithm that implements the $\texttt{Rebuild()}$ subroutine. In this decremental unweighted bipartite matching problem, \cref{thm:upper bound on reconstruction rounds approximate multiple phases} gives an upper bound of $O(\log^2 n/\varepsilon^2)$ on the number of calls to \texttt{Rebuild()}, where $n$ is the number of vertices.
The following theorem establishes an $\Omega(\log^2(\eps^2 n)/\eps^2)$ lower bound in the regime of $\eps \geq \Omega(1/\sqrt{n})$ against an output-adaptive adversary.

\LowerBound*

We first introduce a graph structure $G_k$ that will repeatedly appear during the deletion process.

\begin{definition}
  For $k \in \N$, $G_k$ is a bipartite graph with $k$ vertices on each side, and the edge set $E(G_k)$ of $G_k$ is $\{\{i^{\ell},j^r\} \mid 1\leq j\leq i\leq k\}$.
  \label{definition: G_k}
\end{definition}

It is straightforward to check that $G_k$ has a unique maximum matching $M=\{\{i^\ell,i^r\} \mid 1\leq i\leq k\}$. The following lemma shows a generalization of the observation, that any large enough fractional matching has a mass concentration on edges with small differences in their endpoint labels.

\begin{lemma} \label{lemma:concentration of fractional matching}
    For $k \in \N$ and $\eta,\delta>0$, any fractional matching $\bx$ of $G_k$ with matching size $\norm*{\bx}_1 \geq \left(1-\eta\right)k$ satisfies that
    \[\sum_{0\leq i-j\leq \delta k}\bx_{\{i,j\}}\geq  \left(1-(1+\delta^{-1})\eta\right)\cdot k.\]
\end{lemma}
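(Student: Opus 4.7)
The plan is to bound the ``long-edge'' mass $L \defeq \sum_{i - j > \delta k}\bx_{\{i,j\}}$ by $\eta k/\delta$. Combined with $\norm*{\bx}_1 \geq (1-\eta)k$, this yields the lemma immediately:
\[
  \sum_{0 \leq i - j \leq \delta k}\bx_{\{i,j\}} = \norm*{\bx}_1 - L \geq (1-\eta)k - \eta k/\delta = \bigl(1 - (1+\delta^{-1})\eta\bigr)k.
\]
The bound on $L$ will come from a Markov-type inequality applied to the \emph{total displacement} $D \defeq \sum_{e}\bx_e\bigl(i(e) - j(e)\bigr)$, which is non-negative since every edge in $G_k$ satisfies $j \leq i$.

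I would first rewrite $D$ as a sum of ``cut masses''. For $t \in [k]$, let $h_t \defeq \sum_{i \geq t,\, j \leq t-1}\bx_{\{i,j\}}$ denote the mass of edges crossing the cut at $t$. Writing $i - j = |\{t : j < t \leq i\}|$ for each edge $\{i,j\}$ gives $D = \sum_{t=1}^k h_t$. The heart of the proof is to show $h_t \leq k - \norm*{\bx}_1 \leq \eta k$ using the degree constraints of the fractional matching. Let $f_i \defeq \bx(i^\ell)$ and $g_j \defeq \bx(j^r)$; then $f_i, g_j \in [0,1]$ and $\sum_i f_i = \sum_j g_j = \norm*{\bx}_1$. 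The upper-triangular structure of $G_k$ ensures that any edge with left endpoint $\leq t-1$ must have right endpoint $\leq t-1$ as well (since $j \leq i$); hence
\[
  h_t = \sum_{j \leq t-1}g_j - \sum_{i \leq t-1}f_i.
\]
Combining $\sum_{j \leq t-1}g_j \leq t - 1$ with $\sum_{i \leq t-1}f_i = \norm*{\bx}_1 - \sum_{i \geq t}f_i \geq \norm*{\bx}_1 - (k - t + 1)$ yields $h_t \leq k - \norm*{\bx}_1 \leq \eta k$.

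Summing gives $D \leq k \cdot \eta k = \eta k^2$. Since every edge contributing to $L$ satisfies $i - j > \delta k$, the Markov-type bound $L \cdot \delta k \leq \sum_{i-j > \delta k}\bx_e(i-j) \leq D$ produces $L \leq \eta k/\delta$, completing the plan. I expect the main obstacle to be the identity $h_t = \sum_{j \leq t-1}g_j - \sum_{i \leq t-1}f_i$: while the degree sums individually are routine, collapsing their difference to the tight value $k - \norm*{\bx}_1$ crucially exploits $G_k$'s upper-triangular structure, which forces edges with small left endpoint to have small right endpoint as well; without this cancellation, naive degree counting would only give the trivial $h_t \leq \min(t-1, k-t+1)$, whose sum over $t$ is $\Theta(k^2)$ and far too weak.
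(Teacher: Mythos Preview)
Your proof is correct and follows the same strategy as the paper: bound the total displacement $D = \sum_e (i-j)\bx_e \leq \eta k^2$ and then apply a Markov-type inequality to conclude $L \leq \eta k/\delta$. The only difference is bookkeeping in how $D$ is bounded---the paper does it in one line via $D = \sum_i i\,\bx(i^\ell) - \sum_j j\,\bx(j^r) \leq \sum_j j\bigl(1-\bx(j^r)\bigr) \leq k(k-\norm*{\bx}_1)$, whereas you reach the identical bound through the cut-mass decomposition $D = \sum_t h_t$ with $h_t \leq k - \norm*{\bx}_1$.
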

\begin{proof}
By \cref{definition: G_k}, for any edge $\{i^\ell,j^r\}$ in $G_k$, we have $i-j\geq 0$.
Also, we can upper-bound the weighted sum of difference by
\[
\sum_{\{i^\ell,j^r\}\in E(G_k)}(i-j)\bx_{\{i^\ell,j^r\}}=\sum_{i\in[k]}i\cdot \bx(i^\ell)-\sum_{j\in[k]}j\cdot\bx(j^r)\leq \sum_{j\in[k]}j\cdot(1-\bx(j^r))\leq k(k-\norm*{\bx}_1)\leq \eta k^2.
\]
By Markov's inequality, $\sum\limits_{i-j>\delta k}\bx_{\{i^\ell,j^r\}}\leq  \eta\delta^{-1}k$, and thus $\sum\limits_{0\leq i-j\leq \delta k}\bx_{\{i^\ell,j^r\}}\geq  \left(1-(1+\delta^{-1})\eta\right)\cdot k$.
\end{proof}

Now we are ready to construct an output-adaptive adversary to achieve the previously claimed lower bound.

\begin{proof}[Proof of \cref{thm:lower-bound}]
Let $\A$ be an instance of \cref{alg:meta-general-ds} with the given implementation of $\texttt{Rebuild()}$.
The adversarial input graph to the algorithm is $G_n$, and the adversary works in phases.
At the beginning of the $t$-th phase, let $k_t$ be the largest number such that $G_{k_t}$ is a subgraph of the current graph. The adversary will guarantee that $k_t = (1-4\eps)^t\cdot n$.
In the $t$-th phase, the adversary will cause $\Omega(\log(\varepsilon^2 k_t)/\varepsilon)$ rebuilds in this phase. Before $k_t$ reaches $4/\varepsilon^2$, there will be at least $\log_{1-4\varepsilon}(4/(\varepsilon^2n))=\Omega(\log(\varepsilon^2 n)/\varepsilon)$ phases, achieving a 
\[\sum_{t=0}^{\Omega(\log (\varepsilon^2n)/\varepsilon)}\Omega\left(\log \left(\varepsilon^2 k_t\right)/\varepsilon\right)=\sum_{t=0}^{\Omega(\log (\varepsilon^2n)/\varepsilon)}\Omega\left(\log \left(\varepsilon^2 (1-4\varepsilon)^t n\right)/\varepsilon\right)=\Omega(\log^2(\varepsilon^2n)/\varepsilon^2)\]
lower bound on the total number of rebuilds.
In the remainder of the proof, we focus on a single phase $t$,
and show how the adversary can cause $\Omega(\log(\varepsilon^2 k_t)/\varepsilon)$ rebuilds in this phase.

\begin{algorithm2e}[!ht]
  \caption{The adversarially chosen sequence of deletions} \label{alg:adversary}
  
  \SetEndCharOfAlgoLine{}
  
  Let $G \gets G_n$ be the initial graph and feed it to $\A$.\;
  Let $t \gets 0$.\;

  \While(\tcp*[f]{phase $t$}){$k_t \defeq (1-4\eps)^t$ satisfies $k_t > 4/\eps^2$} {
    Identify a subgraph $G_{k_t} \subseteq G$, delete $G \setminus G_{k_t}$, and re-label vertices so that $E(G) = \{\{i^{\ell}, j^r\}: 1 \leq j \leq i \leq k_t\}$.  \tcp*{preprocessing}
    Let $\M \defeq \{M_p: 0 \leq p \leq \eps k_t / 2\}$, where $M_p \defeq \{\{i^\ell, j^r\}: i = j + p\}$.\;
    \While(\tcp*[f]{regular deletion}){$|\M| > 1/\eps$} {
      Let $\bx$ be the current matching output by $\A$.\;
      Choose $r \defeq 2\eps|\M|$ matchings $M_{p_1}, \ldots, M_{p_r}$ such that $\norm*{\bx_{M_{p_1}}}_1 + \cdots + \norm*{\bx_{M_{p_r}}}_1 \geq \eps/2 \cdot k_t$.\tcp*{guaranteed by \cref{lemma:concentration of fractional matching}}
      Delete $M_{p_1} \cup \cdots \cup M_{p_r}$ and set
      $\M=\M \setminus \{M_{p_j}: 1 \leq j \leq r\}$.\tcp*{cause a rebuild}
    }
    $t \gets t + 1$.\;
  }
\end{algorithm2e}

\paragraph{Preprocessing.} At the beginning of phase $t$, by definition of $k_t$, $G_{k_t}$ is a subgraph of the current graph $G$. The adversary will first delete edges outside this subgraph, then relabel vertices in this subgraph on each side from $1$ to $k_t$ in a way that the current set of edges is $\{\{i^\ell,j^r\}\mid 1 \leq j \leq i \leq k_t\}$. The adversary will then delete edges $\left\{\{i^\ell,j^r\} \mid j+\eps k_t/2 <i<j+4\eps k_t\right\}$. After that, regular deletion starts.

\paragraph{Regular Deletion.} A regular deletion starts after a preprocessing finishes. During the regular deletion of phase $t$, the adversary will only delete edges in the subgraph $G_{k_t}^\eps \defeq \{\{i^\ell,j^r\} \mid j\leq i\leq j+\eps k_t/2\}$ of $G_{k_t}$, and the deletion continues until the maximum matching size $M^*(G)$ becomes less than $(1-\eps/2)k_t$. After that, the current phase ends. Since no edges in the subgraph $\{\{i^\ell,j^r\}\mid i\geq j+4\eps k_t\}$ are deleted, $G_{(1-4\eps)k_t}=G_{k_{t+1}}$ is a subgraph of the current graph at the beginning of phase $t+1$. It remains to show that $\Omega(\log(\eps^2k_t)/\eps)$ rebuilds could be caused during the deletion described above in $G_{k_t}^\eps$.

Note that $G_{k_t}^\varepsilon$ is the union of $\Theta(\varepsilon k_t)$ matchings where the $p$-th matching is $M_p \defeq \{\{i^\ell,j^r\} \mid i=j+p\}$, each of size at least $(1-\eps/2)k_t$.
Therefore, as long as one of the $M_p$'s remains intact, we have $(1-\eps/2)k_t \leq M^*(G) \leq k_t$.
This implies that the fractional matching $\bx$ that $\A$ maintains must have size at least
\[ \norm*{\bx}_1 \geq \left(1-\frac{\eps}{2}\right)^2 k_t - \frac{\eps}{2} \cdot k_t \geq \left(1-\frac{3\eps}{2}\right)k_t\]
throughout this phase, since by definition of \cref{alg:meta-general-ds} $\bx$ was a $(1-\eps/2)$-approximate matching since the last rebuild, and after $\frac{\eps}{2} \cdot k_t$ units of mass get deleted, a rebuild must be caused.

By the above argument it suffices to delete $\eps/2 \cdot k_t$ units of mass to cause a rebuild from $\A$.
Applying \Cref{lemma:concentration of fractional matching} with $\eta=3\eps/2$ and $\delta=4\eps$ on the fractional matching $\bx$, we know that at least $k_t/4$ units of mass are on $G_{k_t}^{\varepsilon}$, i.e, $\norm*{\bx_{E(G_{k_t}^\eps)}}_1 \geq k_t/4$.
Thus via an averaging argument the adversary can delete $2\eps$ fraction of the matchings $M_p$'s to cause a rebuild.
After the rebuild, we repeat the same argument again, choosing $2\eps$ fraction of the remaining $M_p$'s to cause another rebuild.
Note that after the first batch of deletions, the current graph is no longer $G_{k_t}$.
Nevertheless, we can still apply \cref{lemma:concentration of fractional matching} by interpreting the future $\bx$'s as a fractional matching on $G_{k_t}$ by assigning a mass of zero on deleted edges.
This shows that the mass of $\bx$ is still concentrated on the remaining $M_p$'s.
Before the number of intact matchings in $M_p$'s reaches $1/(2\varepsilon)$, there will be at least $\log_{1-2\varepsilon}(4/(\varepsilon^2k_t))=\Omega(\log(\varepsilon^2 k_t)/\varepsilon)$ reconstructions, if $\eps k_t/2\geq 1/(2\eps)$ or equivalently $k_t\leq 4/\eps^2$.
This completes the proof.
\end{proof}

\section{Decremental Algorithms for Fractional Matching}\label{sec:decremental-fractional}

In \cref{sec:congestion-balancing} we showed that 
the decremental linear optimization problem reduces to computing an approximate solution to the entropy-regularized problem with moderate accuracy. In particular, in the special case of decremental bipartite matching, i.e., when $\X$ is the matching polytope $\M_G$, it suffices to solve the entropy-regularized matching problem recalled below.

\EntropyMatching*

Consequently, in the remainder of the section we focus on solving \cref{problem:entropy-matching}.
We prove the following theorem.

\EntropySolver*

Our approaches for \cref{thm:entropy-solver} follow from a generalization of the MWU-style algorithm of \cite{AhnG14} that solves the uncapacitated and capacitated versions of weighted matching.
Given a graph $G = (V, E)$ and consider a downward closed convex set $\P \subseteq \R_{\geq 0}^E$ of interest.
By modularizing and generalizing the framework of \cite{AhnG14}, we derive the following \cref{lemma:main-general} on concave function optimization over the matching polytope.
Informally speaking, \cref{lemma:main-general} gives a reduction from approximately maximizing a concave function $\P$ \emph{together with} odd-set constraints to the same task just in $\P$ by increasing the dependence on $\eps^{-1}$.
We then, in \cref{subsec:almost-linear-solver,subsec:near-linear-reduction} respectively, instantiate the framework of \cref{lemma:main-general} in two different ways with the entropy-regularized function to obtain the two runtimes stated in \cref{thm:entropy-solver}.

Let $\kappa_{\mathcal{P}}$ be the minimum number such that $\P \subseteq \kappa_{\P} \cdot \M_G$, which for both $\P_G$ and $\P_G^{\bc}$ that we will consider later in this section are $O(1)$.
A function $f: \R_{\geq 0}^{E} \to \R_{-\infty}$ is \emph{coordinate-separable concave/linear} if $f$ is of the form $f(\bx) = \sum_{e \in E}f_e(\bx_e)$ where each $f_e: \R_{\geq 0} \to \R_{-\infty}$ is concave/linear.
Specifically, if $\ell: \R_{\geq 0}^{E} \to \R$ is coordinate-separable linear, we use $\boldsymbol{\ell} \in \R^{E}$ to denote the linear coefficients, i.e., $\ell(\bx) \defeq \boldsymbol{\ell}^\top \bx$.

\begin{definition}
For $\beta \geq 1$ and $0 \leq \zeta \leq 1$, an algorithm $\mathcal{A}$ is a \emph{$(\beta, T_{\mathcal{A}}, \zeta)$-oracle} for $(\P, f)$, where $\P \subseteq \R_{\geq 0}^{E}$ is convex and downward closed and $f$ is concave, if given any concave function $g$ of the form $g = f - \ell$ for some coordinate-separable linear $\ell$ such that each $\boldsymbol{\ell}_e$ is polynomially bounded, with an additional guarantee that
\begin{equation}
\max_{\bx \in \P}g(\bx) \geq \frac{1}{\poly(n)},
\label{eq:assumption-max-g}
\end{equation}
it outputs an $\bx_{g} \in \beta \cdot \mathcal{P}$ with $g(\bx_{g}) \geq \zeta \cdot \max_{\bx \in \mathcal{P}}g(\bx)$ in $T_{\mathcal{A}}(n, m) = \Omega(m)$ time.
\label{def:oracle}
\end{definition}

We defer the proof and discussion on modularization and explicit dependence on $\eps^{-1}$ of the following \cref{lemma:main-general} to \cref{appendix:entropy} for completeness.
It is worth noting that the two stated runtimes below are obtained by leveraging recently developed algorithms in certain components (specifically the computation of Gomory-Hu trees) of \cite{AhnG14}.

\begin{lemma}[adapted from \cite{AhnG14}]
  For any $\eps \geq \widetilde{\Omega}(n^{-1/2})$, downward closed and convex $\P \subseteq \R_{\geq 0}^{E}$, and concave function $f: \R_{\geq 0}^E \to \R_{-\infty}$ such that $\max_{\bx \in \P \cap \M_G}f(\bx) \geq 1$, given an $(\beta, T_{\mathcal{A}}, \zeta)$-oracle $\mathcal{A}$ for $\left(\mathcal{P}, f\right)$ where $\kappa_{\P} \cdot \beta < n$, there are randomized $\widetilde{O}\left(\left(T_{\A}\left(n, m\right) + m\eps^{-1} + n\eps^{-9}\right) \cdot \kappa_{\P} \beta\eps^{-4}\right)$ and $\widehat{O}\left(\left(T_{\A}\left(n, m\right) + m\eps^{-1}\right) \cdot \kappa_{\P} \beta\eps^{-4}\right)$ time algorithms that w.h.p. compute an $\bx_f \in \left(\beta \cdot \mathcal{P}\right) \cap \mathcal{M}_G$ such that $f(\bx_f) \geq \zeta (1-\eps)\max_{\bx \in \mathcal{P} \cap \mathcal{M}_G}f(\bx)$.
  \label{lemma:main-general}
\end{lemma}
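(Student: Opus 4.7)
The plan is to adapt the multiplicative weights update (MWU) framework of \cite{AhnG14} so that it handles an arbitrary concave objective $f$ rather than only a linear one, while still reducing the task over $\P \cap \M_G$ to a sequence of oracle calls on $\P$ alone. Introduce MWU weights $\by^{(t)}$ on the constraints that differentiate $\M_G$ from $\P$: the vertex-degree constraints $\bx(v) \leq 1$ and, via the relaxation $\M_{G,\eps}$, the small odd-set constraints $\bx(B) \leq \lfloor |B|/2 \rfloor$ for $B \in \O_{G,\eps}$. By \cref{fact:scaled-down} a solution in $\M_{G,\eps}$ can be converted into one in $\M_G$ at a $(1+O(\eps))$ cost, which is absorbed into the final $(1-\eps)$ factor. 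At iteration $t$, the weighted constraints contribute a coordinate-separable linear penalty $\boldsymbol{\ell}^{(t)}_e = \by^{(t)}_u + \by^{(t)}_v + \sum_{B \in \O_{G,\eps}: e \subseteq B} \by^{(t)}_B$; invoke the oracle on $g^{(t)} = f - \ell^{(t)}$ to obtain $\bx^{(t)} \in \beta \cdot \P$ with $g^{(t)}(\bx^{(t)}) \geq \zeta \max_{\bx \in \P} g^{(t)}(\bx)$, and update each $\by^{(t+1)}$ multiplicatively by $\exp(\eta \cdot \text{violation}(\bx^{(t)}))$ in the standard MWU way.

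The first technical ingredient is efficient separation for odd-set constraints. A Padberg-Rao style argument reduces the problem of finding an (approximately) most violated odd-set constraint of size at most $1/\eps$ to computing a minimum weighted odd cut in an auxiliary weighted graph whose capacities are derived from $\bx^{(t)}$; this in turn reduces to constructing a Gomory-Hu tree. Using a classical $\widetilde{O}(mn)$-time Gomory-Hu algorithm combined with the small-set restriction yields the additive $\widetilde{O}(n\eps^{-9})$ term, while substituting the recent almost-linear-time Gomory-Hu algorithm based on $\widehat{O}(m)$ max-flow eliminates this additive term and yields the $\widehat{O}$ runtime. Together with an $O(m)$ pass to maintain the penalty vector $\boldsymbol{\ell}^{(t)}$ incrementally, each iteration costs $T_{\A}(n,m) + m\eps^{-1} + (\text{separation cost})$.

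The convergence analysis follows the standard MWU template for packing constraints, adapted to concave $f$ via Jensen's inequality: since $\bx^{(t)} \in \beta \cdot \P \subseteq \kappa_\P \beta \cdot \M_G$, every vertex or odd-set constraint is violated by at most a factor $O(\kappa_\P \beta)$, so the MWU width is $O(\kappa_\P \beta)$. A careful accounting — $O(\log(m) / \eps^2)$ MWU rounds blown up by an extra $\eps^{-2}$ factor coming from boosting oracle accuracy to absorb constraint slack — gives $\widetilde{O}(\kappa_\P \beta \eps^{-4})$ iterations. Jensen's inequality applied to the concave $f$ shows that the average iterate $\bar\bx = \frac{1}{T}\sum_t \bx^{(t)}$ satisfies $f(\bar\bx) \geq \frac{1}{T}\sum_t f(\bx^{(t)}) \geq \zeta(1-O(\eps)) \max_{\bx \in \P \cap \M_{G,\eps}} f(\bx)$. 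Rescaling by $(1+O(\eps))^{-1}$ lands $\bar\bx$ in $\beta \cdot \P \cap \M_G$ at the claimed quality; multiplying iteration count by per-iteration cost recovers both stated runtimes.

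The main obstacle I expect is ensuring that the oracle's validity hypothesis \eqref{eq:assumption-max-g} holds at every iteration. The penalty $\ell^{(t)}$ grows with the MWU weights, so $a\ priori$ $\max_{\bx \in \P} g^{(t)}(\bx)$ could drop below $1/\poly(n)$. Handling this requires maintaining the invariant that the current running average (or a shifted iterate) remains a valid primal witness whose $f$-value lies within a $\poly(n)$ factor of the optimum — combined with the assumption $\max_{\bx \in \P \cap \M_G} f(\bx) \geq 1$, this keeps $\max_{\bx \in \P} g^{(t)}(\bx) \geq n^{-O(1)}$. The lower bound $\eps \geq \widetilde\Omega(n^{-1/2})$ enters precisely here, guaranteeing that the $\M_{G,\eps}$ relaxation is rich enough in general graphs that this witness does not degenerate.
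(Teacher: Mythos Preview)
Your high-level plan---MWU over the (small) odd-set constraints, Padberg--Rao style separation via Gomory--Hu trees, and concavity to certify the quality of the running convex combination---matches the paper's route. But the step where you invoke the oracle is a genuine gap. You call $\A$ directly on $g^{(t)}=f-\ell^{(t)}$, obtaining only
\[
f(\bx^{(t)})-\ell^{(t)}(\bx^{(t)})\;\geq\;\zeta\cdot\max_{\bx\in\P}\bigl(f(\bx)-\ell^{(t)}(\bx)\bigr).
\]
The MWU potential argument inherited from \cite{AhnG14} needs the \emph{separate} guarantee $\ell^{(t)}(\bx^{(t)})\leq\gamma^{(t)}$, where $\gamma^{(t)}=\sum_v \bp_v\widetilde{b}_v+\sum_B\bp_B\widetilde{b}_B$ is the weighted sum of right-hand sides; this is what drives the potential down. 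A single multiplicative-approximate call to $f-\ell$ does not deliver it: since $\bx^{(t)}\in\beta\P$ rather than $\P\cap\M_G$, $f(\bx^{(t)})$ can exceed $f(\bx^*)$ by a large factor, and the inequality above cannot be rearranged to bound $\ell^{(t)}(\bx^{(t)})$ by itself. The paper closes this with a binary-search lemma: for $g_\rho=f-\rho\,\ell$, search $\rho\in[0,\gamma_1/\gamma_2]$ via repeated oracle calls to bracket two outputs $\bx_{\rho^-},\bx_{\rho^+}$ with $\ell(\bx_{\rho^-})>\gamma_2\geq\ell(\bx_{\rho^+})$ and $\rho^+-\rho^-\leq\eps\gamma_1/\gamma_2$; the convex combination landing exactly at $\ell=\gamma_2$ then satisfies both $f\geq\zeta(1-\eps)\gamma_1$ and $\ell\leq\gamma_2$. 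This device also dissolves the obstacle you flag, since every $\rho$ at which the oracle is invoked satisfies $\gamma_1-\rho\gamma_2\geq\Omega(\eps)$, keeping $\max_{\bx\in\P}g_\rho(\bx)\geq 1/\poly(n)$ automatically.

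Two smaller corrections. The lower bound $\eps\geq\widetilde{\Omega}(n^{-1/2})$ is not about witness non-degeneracy; it is a requirement of the width-reduced MWU itself (the perturbed polytope $\widetilde{\M_{G,\eps}}$ together with the choice $\alpha=\Theta(\eps^{-3}\log n)$ needs $\eps>3/\sqrt{n}$ for the initial potential condition $\Phi_\bx\leq\gamma_\bx(1+\eps/\lambda_0)$ to hold). And the $\eps^{-4}$ in the iteration count is not from ``boosting oracle accuracy'': it is the $\widetilde{O}\bigl(\lambda_0(\eps^{-2}+\alpha\eps^{-1})\bigr)$ bound with width $\lambda_0=\kappa_\P\beta$ and $\alpha=\widetilde{\Theta}(\eps^{-3})$, so the dominant term is $\kappa_\P\beta\cdot\eps^{-4}$ directly from MWU convergence.
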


For instance, if we set $\P$ to be $\P_G$,
then $\kappa_{\mathcal{P}} = O(1)$ since for any $\bx \in \mathcal{P}_G$ we have
\[
  \sum_{e \in E[B]}x_e \leq \frac{\sum_{v \in B}\sum_{e \in E_v}x_e}{2} \leq \frac{|B|}{2} \leq 2 \cdot \left\lfloor\frac{|B|}{2}\right\rfloor
\]
for all odd sets $B \in \O_G$.
As such, optimizing concave functions (specifically the entropy-regularized objective) over the general matching polytope reduces to optimizing over $\P_G$, which may be substantially simpler to solve.
In \cref{subsec:almost-linear-solver,subsec:near-linear-reduction} we will develop two incomparable algorithms that both follow \cref{lemma:main-general}.
These ultimately assemble into the two runtimes of \cref{thm:entropy-solver} in \cref{subsec:everything-entropy}.

\subsection{Almost-Linear Time Oracle via Convex Flow Algorithms} \label{subsec:almost-linear-solver}

Using the convex flow algorithm of \cite{ChenKLPGS22}, we can indeed optimize any ``efficiently-computable'' concave function over $\P_G$.
We then use that result to obtain an algorithm optimizing concave functions over the general matching polytope in \cref{lemma:almost-linear-solver}, with the help of \cref{lemma:main-general}.
The algorithm of \cite{ChenKLPGS22} requires as input self-concordant barriers on the domain $\{(x, y): y > h_e(x)\}$, where $h_e$ is the convex edge weight, that satisfy assumptions detailed in \cite[Assumption 10.2]{ChenKLPGS22} which ensures the numbers encountered during the algorithm are quasi-polynomially bounded.

\begin{lemma}[{\cite[Theorem 10.13]{ChenKLPGS22}} derandomized by {\cite{BrandCLPGSS23}}]
  Given $m$-edge graph $G = (V, E)$, demands $\bd \in \R^V$, convex $h_e: \R \to \R_{\infty}$, and $\nu$-self-concordant barriers $\psi_{e}(x, y)$ on the domain $\{(x, y): y > h_e(x)\}$ that satisfy \cite[Assumption 10.2]{ChenKLPGS22},
  there is a deterministic $m^{1+o(1)}$ time algorithm that computes a flow $\Bf \in \R^{E}$ with $\bB_G^\top \Bf = \bd$ such that
  \[
    h(\Bf) \leq \min_{\bB_G^\top \Bf^{*} = \bd}h(\Bf^{*}) + \exp(-\log^C m)
  \]
  for any fixed constant $C > 0$, where $h(\Bf) \defeq \sum_{e \in E}h_e(\Bf_e)$.
  \label{lemma:convex-flow}
\end{lemma}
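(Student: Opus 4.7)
The plan is to invoke the result essentially as a black-box composition of two prior works, since the lemma is attributed directly to \cite{ChenKLPGS22} and \cite{BrandCLPGSS23}. The randomized version, namely \cite[Theorem 10.13]{ChenKLPGS22}, already provides an $m^{1+o(1)}$-time algorithm for convex min-cost flow under the self-concordant barrier hypothesis of their Assumption 10.2, producing a feasible flow whose per-edge cost sum is within additive $\exp(-\log^C m)$ of the optimum for any desired constant $C > 0$. The only additional step is to compose this with the derandomization of \cite{BrandCLPGSS23}, which replaces the randomized spanner and sparsification components underlying \cite{ChenKLPGS22} with deterministic analogs at the cost of at most subpolynomial overhead, so that the overall runtime remains $m^{1+o(1)}$.

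Concretely, I would proceed in three short steps. First, verify that the hypotheses of the lemma as stated here (a demand vector $\bd$, convex per-edge costs $h_e$, and $\nu$-self-concordant barriers on the epigraphs $\{(x,y) : y > h_e(x)\}$ satisfying Assumption 10.2) match verbatim the input requirements of \cite[Theorem 10.13]{ChenKLPGS22}; this is the case by construction. Second, observe that the additive error $\exp(-\log^C m)$ is obtained by tuning the duality-gap threshold of the interior-point method inside \cite{ChenKLPGS22}: each IPM step makes geometric progress toward optimality, so the iteration count scales polylogarithmically in the target accuracy, which is absorbed into the $m^{o(1)}$ factor. Third, apply the derandomization framework of \cite{BrandCLPGSS23} to the min-cost flow variant of the IPM; this replaces the randomized subroutines while preserving both the runtime and the additive-error guarantee.

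The principal (and really the only) obstacle is compatibility bookkeeping between the two sources: one must check that the derandomized implementation respects the self-concordance hypothesis of Assumption 10.2 and still achieves the stated additive accuracy uniformly in $C$, rather than only the polynomial accuracy that suffices for exact linear min-cost flow. Once these compatibility checks are made, no new analytical ingredients are needed; the lemma will then be used in \cref{subsec:almost-linear-solver} as the black-box oracle that instantiates \cref{lemma:main-general} with $\P = \P_G$ and $f$ equal to the entropy-regularized matching objective.
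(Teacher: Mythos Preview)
Your proposal is correct and matches the paper's treatment: the lemma is stated as a direct citation of \cite[Theorem 10.13]{ChenKLPGS22} derandomized by \cite{BrandCLPGSS23}, and the paper provides no proof of its own beyond invoking these references as a black box. Your compatibility checks are reasonable due diligence, but the paper itself simply cites the result and moves on.
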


The following is our main lemma in the section which turns the flow algorithm above into an algorithm for optimizing concave functions over $\M_G$.
For simplicity of the construction of barriers, we allow the functions $f_e$ to be decomposed into $\widetilde{O}(1)$ portions, each with its own barrier.

\begin{lemma}
  Suppose we are given a graph $G=(V,E)$, $\eps \geq \widetilde{\Omega}(n^{-1/2})$, and a coordinate-separable concave function $f: \R^{E}_{\geq 0} \to \R_{-\infty}$ such that $\max_{\bx \in \M_G}f(\bx) \geq 1$, where the function on each edge is given as $f_e(x) \defeq f_e^{(1)}(x) + \cdots + f_e^{(k_e)}(x)$ for some $k_e = \widetilde{O}(1)$, each equipped with a $\nu$-self-concordant barrier $\psi_{e}^{(i)}(x, y)$ on the domain $\{(x, y):  y > -f_e^{(i)}(x)\}$ that satisfy \cite[Assumption 10.2]{ChenKLPGS22}.
  Then, there is a randomized $\widehat{O}(m\eps^{-5})$ time algorithm that w.h.p. computes an $\bx \in \M_G$ such that
  \[
    f(\bx) \geq (1-\eps)\max_{\bx^{*} \in \M_G}f(\bx^{*}).
  \]
  \label{lemma:almost-linear-solver}
\end{lemma}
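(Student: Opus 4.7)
The plan is to instantiate Lemma~\ref{lemma:main-general} with $\P \defeq \P_G$ and the given concave function $f$. Since any $\bx \in \P_G$ satisfies $\bx(B) = \sum_{e \in E[B]} \bx_e \leq \frac{1}{2}\sum_{v\in B} \bx(v) \leq |B|/2 \leq 2\lfloor |B|/2\rfloor$ for every odd set $B$, we have $\kappa_{\P} = O(1)$. It therefore suffices to build a $(1, m^{1+o(1)}, 1-\widetilde{O}(\eps))$-oracle $\mathcal{A}$ for $(\P_G, f)$; plugging into the $\widehat{O}$ bound of Lemma~\ref{lemma:main-general} with $\beta = O(1)$ and $\kappa_{\P}=O(1)$ yields an overall runtime of $\widehat{O}((m^{1+o(1)} + m\eps^{-1})\cdot \eps^{-4}) = \widehat{O}(m\eps^{-5})$, matching the target.

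To build the oracle, given an input $g = f - \ell$ with $\boldsymbol{\ell}\in\R^E$ polynomially bounded, I would reduce $\max_{\bx \in \P_G} g(\bx) = \max_{\bx \in \P_G}\sum_{e\in E}(f_e(\bx_e) - \boldsymbol{\ell}_e \bx_e)$ to a convex min-cost flow instance on an auxiliary graph $H$ built from $G$ by node-splitting: each $v\in V$ is replaced by a pair $v_{\text{in}}, v_{\text{out}}$ joined by a capacity-one arc (enforcing $\sum_{e\in E_v}\bx_e \leq 1$), and each original edge $e=\{u,v\}$ is routed through $v_{\text{in}}/v_{\text{out}}$ and $u_{\text{in}}/u_{\text{out}}$ via a small gadget whose flow variable represents $\bx_e$. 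The edge cost associated with $\bx_e$ is $-g_e(\bx_e) = \sum_i -f_e^{(i)}(\bx_e) + \boldsymbol{\ell}_e \bx_e$, which is convex; the supplied $\nu$-self-concordant barriers $\psi_e^{(i)}$ handle each nonlinear piece, the linear term carries a trivial barrier, and the vertex-capacity arcs receive the standard log-barrier. Applying Lemma~\ref{lemma:convex-flow} then produces a flow with additive error $\exp(-\log^C m)$ in time $m^{1+o(1)}$. Under the precondition \eqref{eq:assumption-max-g} that $\max_{\bx\in\P_G}g(\bx)\geq 1/\poly(n)$, we may choose $C$ large enough that this additive accuracy translates to a multiplicative $(1-\eps)$ approximation, giving the claimed oracle parameters.

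The main technical obstacle is that $G$ is undirected and possibly non-bipartite, so the node-splitting must handle bidirectional flow along each edge without inflating the mass incident to a vertex. One clean approach is to introduce an explicit edge node $n_e$ with incoming arcs from $u_{\text{out}}$ and $v_{\text{out}}$ each carrying $\bx_e$ and an outgoing arc $n_e\to t$ carrying $2\bx_e$, placing the cost $-g_e(y/2)$ (still convex, with barriers obtained from $\psi_e^{(i)}$ by composing with a linear rescaling) on the outgoing arc. An alternative is the bipartite double cover, in which anti-parallel arcs carry equal flow by symmetry of the concave cost. Either way, one must verify that the assembled barriers continue to satisfy Assumption~10.2 of~\cite{ChenKLPGS22} with only polylogarithmic blow-up in $\nu$ and in the encoded bit-lengths, so that the $m^{1+o(1)}$ runtime of Lemma~\ref{lemma:convex-flow} is preserved.

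Everything else is bookkeeping: computing $\max g \geq 1/\poly(n)$ is guaranteed by hypothesis, the $(\beta=1)$ oracle output lies directly in $\P_G$ so no rounding is required, and the $m^{1+o(1)}$ oracle runtime is subsumed in the $\widehat{O}$-notation. Substituting into Lemma~\ref{lemma:main-general} produces the final $\widehat{O}(m\eps^{-5})$ algorithm for $\eps \geq \widetilde{\Omega}(n^{-1/2})$ as required.
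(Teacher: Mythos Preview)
Your plan is the same as the paper's: instantiate Lemma~\ref{lemma:main-general} with $\P=\P_G$ (so $\kappa_\P=O(1)$) and build a $(1,m^{1+o(1)},1-O(\eps))$-oracle by reducing $\max_{\bx\in\P_G}g(\bx)$ to a convex-cost flow solved via Lemma~\ref{lemma:convex-flow}; the runtime arithmetic is correct.

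The one gap is your first gadget. With an edge node $n_e$ fed by arcs from $u_{\text{out}}$ and $v_{\text{out}}$ and draining to a sink, nothing in the flow constraints forces those two incoming arcs to carry equal flow. If they carry $c_{e,u}$ and $c_{e,v}$ and you set $\bx_e=(c_{e,u}+c_{e,v})/2$, then $\bx(u)=\tfrac12\sum_{e\ni u}(c_{e,u}+c_{e,v})$, but only $\sum_{e\ni u}c_{e,u}$ is bounded by the capacity-$1$ arc through $u$; the other half is uncontrolled at $u$, so the extracted $\bx$ need not lie in $\P_G$. Your second option---the double cover---is exactly what the paper uses: a \emph{circulation} on $\{v_{\text{in}},v_{\text{out}}:v\in V\}$ with vertex arcs $(v_{\text{in}},v_{\text{out}})$ carrying only the $[0,1]$-indicator cost $\phi$, and for each $\{u,v\}\in E$ two arcs $(u_{\text{out}},v_{\text{in}})$ and $(v_{\text{out}},u_{\text{in}})$, each with cost $-g_e/2+\phi$. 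The right justification is not symmetry of the optimum but that for \emph{any} feasible circulation $\bf$, setting $\bx_e\defeq(\bf_{e^{(1)}}+\bf_{e^{(2)}})/2$ gives $\bx(v)=\bf_{(v_{\text{in}},v_{\text{out}})}\leq 1$ directly from flow conservation at $v_{\text{in}}$ and $v_{\text{out}}$, and $g_e(\bx_e)\geq -h_{e^{(1)}}(\bf_{e^{(1)}})-h_{e^{(2)}}(\bf_{e^{(2)}})$ by concavity. The paper then splits each such arc into a path of length $k_e+2$ to attach the barriers $\psi_e^{(i)}$, a log barrier for the linear piece $\ell_e/2$, and the capacity barrier for $\phi$, exactly as you anticipate.
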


\begin{proof}
  By \cref{lemma:main-general}, to prove \cref{lemma:almost-linear-solver} it suffices to provide a $(1, m^{1+o(1)}, 1-\eps/2)$-oracle for $(\P_G, f)$.
  We achieve this by reducing the optimization problem over the relaxed matching polytope $\P_G$ to a minimum cost circulation\footnote{A circulation is a flow $\Bf$ such that routes the demand $\boldsymbol{0}$, i.e., $\bB^\top_G \Bf = \boldsymbol{0}$.} problem as follows.
  Consider the following directed graph $\widetilde{G} = (\widetilde{V}, \widetilde{E})$ with $\widetilde{V} \defeq \{v_{\text{in}}: v \in V\} \cup \{v_{\text{out}}: v \in V\}$ and
  \[ \widetilde{E} \defeq \{(v_{\text{in}}, v_{\text{out}}): v \in V\} \cup \{(u_{\text{out}}, v_{\text{in}}): \{u, v\} \in E\} \cup \{(v_{\text{out}}, u_{\text{in}}): \{u, v\} \in E\}. \]
  In other words, each vertex $v$ is split into two copies $v_{\text{in}}$ and $v_{\text{out}}$, and each edge $\{u, v\}$ is directed from $u_{\text{out}}$ to $v_{\text{in}}$ and from $v_{\text{out}}$ to $u_{\text{in}}$.
  For each $e = \{u, v\}$ in $E$, let $e^{(1)}$ and $e^{(2)}$ denote $(u_{\text{out}}, v_{\text{in}}) \in \widetilde{E}$ and $(v_{\text{out}}, u_{\text{in}}) \in \widetilde{E}$ respectively.
  Let $e^{(v)} \in \widetilde{E}$ for $v \in V$ be $(v_{\text{in}}, v_{\text{out}})$. 
  
  Let $g \defeq f - \ell$ be the coordinate-separable concave function that the oracle needs to optimize which we write as $g(\bx) \defeq \sum_{e \in E}g_e(\bx_e)$.
  We translate the concave edge weights $g_e$'s on $E$ to convex $h_{e}$'s on $\widetilde{E}$ as follows.
  For each $e \in E$, let $h_{e^{(1)}}, h_{e^{(2)}}: \R \to \R_{\infty}$ be defined as
  \[ h_{e^{(1)}}(x) = h_{e^{(2)}}(x) \defeq \frac{-g_e(x)}{2} + \phi(x) = \frac{-f_e(x)}{2} + \frac{\ell_e(x)}{2} + \phi(x), \]
  where
  \[
    \phi(x) =
    \begin{cases}
       0, & \text{if $0 \leq x \leq 1$} \\
       \infty, & \text{otherwise}
    \end{cases}
  \]
  is a convex regularizer which enforces each edge having flow at most one.
  For each $v \in V$, let $h_{e^{(v)}}: \R \to \R_{\infty}$ be defined as
  \[ h_{e^{(v)}}(x) \defeq \phi(x). \]

  Observe that we can translate between circulations in $\widetilde{G}$ and relaxed fractional matchings in $G$ as follows.
  For a circulation $\Bf \in \R^{\widetilde{E}}$ with $h(\Bf) < \infty$, by definition of the edge weights we may assume $0 \leq \Bf_{e} \leq 1$ for every $e \in \widetilde{E}$.
  Setting $\bx_e \defeq \frac{\Bf_{e^{(1)}} + \Bf_{e^{(2)}}}{2}$ for each edge $e \in E$, we see that by concavity of $g_e$ that $g_e(\bx_e) \geq -h_{e^{(1)}}(\Bf_{e^{(1)}}) - h_{e^{(2)}}(\Bf_{e^{(2)}})$.
  Conversely, for any relaxed fractional matching $\bx \in \P_G$, let $\Bf_{e^{(1)}} = \Bf_{e^{(2)}} \defeq \bx_e$ and $\Bf_{e^{(v)}} \defeq \bx(v)$.
  It can be easily checked that $\Bf$ is a circulation and has weight $h(\Bf) = -f(\bx)$.
  This shows that the optimal values of these two problems are the same (up to negation).
  Consequently, if we get a $\delta$-additive-approximate minimizing circulation $\Bf$, then the corresponding relaxed fractional matching $\bx$ would satisfy
  \[
    g(\bx) \geq -h(\Bf) \geq -\min_{\bB_G^\top\Bf^{*} = \boldsymbol{0}}h(\Bf^{*}) - \delta = \max_{\bx^{*} \in \P_G}g(\bx^{*}) - \delta.
  \]
  We will then choose $\delta$ to be sufficiently small to make $\bx$ a $(1-\eps/2)$-approximate maximizer of $g$ in $\P_G$.
  
  To apply the convex flow algorithm of \cref{lemma:convex-flow} to minimize $h(\Bf)$, we need to provide self-concordant barriers for the edge weights.
  Consider edges $e^{(1)}, e^{(2)} \in \widetilde{E}$ for some $e \in E$.
  The edge weights $h_{e^{(1)}}$ and $h_{e^{(2)}}$ can be divided into $k_e + 2$ parts which we handle by splitting $e^{(1)}$ and $e^{(2)}$ further into paths of length $k_e + 2$ (recall that $k_e$ the function $f_e$ is given to \cref{lemma:almost-linear-solver} as $k_e$ portions, each with its own barrier).
  Among the first $k_e$ parts, the $i$-th of which has weight $\frac{-f_e^{(i)}(x)}{2}$ which we use $\widetilde{\psi}(x, y) \defeq \psi_{e}^{(i)}(x, 2y)$ as the barrier, where recall that $\psi_{e}^{(i)}$ is the given barrier to the $i$-th portion of edge $e$; the second last part has weight $\frac{\ell_e(x)}{2}$ which we use $\widetilde{\psi}(x, y) \defeq -\log\left(y-\frac{\ell_e(x)}{2}\right)$ as the barrier;
  and the last part has weight $\phi(x)$ which we use $\widetilde{\psi}(x, y) \defeq x^{-\alpha} + (1-x)^{-\alpha}$ for $\alpha \defeq \frac{1}{1000\log mU}$ as the barrier.
  The barrier $-\log\left(y-\frac{\ell_e(x)}{2}\right)$ is the same barrier that \cite[Theorem 10.16]{ChenKLPGS22} used for linear functions (note that $\boldsymbol{\ell}_e$ is polynomially bounded); the barrier $x^{-\alpha} + (1-x)^{-\alpha}$ is the same barrier that \cite{ChenKLPGS22} used in their min-cost flow algorithm to enforce capacity constraints.
  Both of the barriers were shown to satisfy the assumption in \cite{ChenKLPGS22}.
  The barrier for $\phi(x)$ is also used for edges $e^{(v)} \in \widetilde{E}$.

  This gives us an exact reduction from maximizing concave weights over $\P_G$ to finding a circulation minimizing convex weights in a graph with $\widetilde{O}(m)$ edges.
  Applying \cref{lemma:convex-flow} with the constant $C$ chosen in such a way that $\exp(-\log^C m) \leq \eps/2 \cdot \max_{\bx \in \P_G}g(\bx) = \Theta(1/\poly(n))$
  thus results in a $(1, m^{1+o(1)}, 1 - \eps/2)$-oracle for $(\P_G, f)$.
  The theorem then follows from \cref{lemma:main-general} with accuracy parameter $\eps/2$.
\end{proof}

\subsection{Near-Linear Time Reduction to Linear Optimization} \label{subsec:near-linear-reduction}

To remove the $n^{o(1)}$ factors incurred by the almost-linear time flow algorithm in \cref{lemma:almost-linear-solver}, we can instead reduce concave function maximization over the matching polytope directly to a capacity-constrained weighted matching problem at the cost of a larger $\eps^{-1}$ factor in the running time.
For $\bc \in [0, 1]^{E}$, let $\Gamma_{\bc} \defeq \{0 \leq \bx_e \leq \bc_e,\;\forall\;e \in E\} \subseteq \R_{\geq 0}^{E}$ be the capacity-constrained polytope.
Let $\M_G^{\bc} \defeq \M_G \cap \Gamma_{\bc}$ and $\P_G^{\bc} \defeq \P_G \cap \Gamma_{\bc}$.

\begin{problem}
  In the \emph{capacity-constrained weighted matching} problem, we are given a graph $G = (V, E)$, an accuracy $\eps > 0$, edge weights $\bw \in \R_{\geq 0}^{E}$, capacities $\bc \in [0, 1]^{E}$, all polynomially bounded.
  The goal is find an $\bx \in \M_G^{\bc}$ such that
  \[ \bw^\top \bx \geq (1-\eps)\max_{\bx^{*} \in \M_G^{\bc}}\bw^\top \bx^{*}.
  \]
  \label{prob:capacitated-weighted-matching}
\end{problem}

To solve \cref{prob:capacitated-weighted-matching}, we use the following constant-approximate algorithm to the ``relaxed'' capacitated $\bb$-matching problem as an oracle and apply \cref{lemma:main-general}.
Their algorithm works for multigraphs with integral demands and capacities.

\begin{lemma}[{\cite[Theorem 13]{AhnG14}}]
  Given an $m$-edge multigraph $G=(V, E)$, edge weights $\bw \in \R_{\geq 0}^{E}$, demands $\bb \in \Z_{\geq 0}^{V}$, capacities $\bc \in \Z_{\geq 0}^E$, all polynomially bounded entrywise, there is a deterministic $\widetilde{O}(m)$ time algorithm that obtains a $1/8$-approximate maximizer to the following ``relaxed'' capacitated $\bb$-matching problem:
  \begin{equation}
    \begin{array}{llll}
    \mathrm{maximize} &  \displaystyle{\bw^\top\bx} \\
    \\
    \mathrm{subject\;to} & \bx(v) \leq \bb_v, & \forall\;v \in V, \\
    & 0 \leq \bx_e \leq \bc_e, & \forall\;e \in E.\\
    \end{array}
    \label{eq:capacitated-b-matching}
  \end{equation}
  \label{lemma:capacitated-b-matching}
\end{lemma}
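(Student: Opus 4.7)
The plan is to establish this lemma via a bucketed greedy / primal-dual approach for the underlying packing LP. Crucially, the feasible region is a polytope defined only by degree constraints $\bx(v) \leq \bb_v$ and box constraints $0 \leq \bx_e \leq \bc_e$, without the odd-set inequalities that characterize the true non-bipartite matching polytope. The absence of those constraints is what enables purely combinatorial greedy techniques to achieve constant-factor guarantees in near-linear time, in sharp contrast to genuine weighted $\bb$-matching.

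First, I would bucket the edges by weight: since $\bw$ is polynomially bounded and entrywise non-negative, set $E_i \defeq \{e \in E : 2^i \leq \bw_e < 2^{i+1}\}$ for $i = 0, 1, \ldots, O(\log n)$. A standard observation is that if we can compute a constant-factor approximation to the \emph{uniform-weight} version of the LP restricted to each bucket $E_i$, then either selecting the best single bucket's solution or combining the per-bucket outputs with a simple averaging loses only an additional constant factor in the overall approximation ratio.

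For the uniform-weight subproblem on $E_i$, my plan is a single greedy pass: iterate through edges $e = \{u, v\}$ in $E_i$ in arbitrary order and set $\bx_e \gets \min(\bc_e, \bb_u - \bx(u), \bb_v - \bx(v))$. A textbook charging argument then gives at least a $1/2$-approximation to the uniform LP restricted to $E_i$: whenever an edge's capacity is not fully used, at least one endpoint must be saturated, and we can charge that edge's unfilled potential to the earlier edges that saturated the endpoint, using that within a bucket all weights agree up to a factor of $2$. Multiplying together the $1/2$ loss from bucketing, the factor of $2$ from within-bucket weight variation, and the $1/2$ from greedy suboptimality accounts for the $1/8$ ratio in the statement.

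The main obstacle I anticipate is achieving the $\widetilde{O}(m)$ runtime independently of $\norm{\bb}_\infty$ and $\norm{\bc}_\infty$. A naive reduction that replicates each vertex $\bb_v$ times or splits each edge into $\bc_e$ unit-capacity copies would blow up the problem size by $\poly(n)$. The fix is to maintain the residual capacity $\bb_v - \bx(v)$ at each vertex as a single scalar, updated in $O(1)$ time per edge processed, so that the work per bucket is linear in $|E_i|$ after an initial $O(m \log m)$ bucketing sort; summing across the $O(\log n)$ buckets yields the claimed $\widetilde{O}(m)$ total runtime, and every step is deterministic, matching the lemma.
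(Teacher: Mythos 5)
The paper does not actually prove this lemma --- it is imported verbatim from \cite{AhnG14} (their Theorem 13) --- so your proposal is being judged on its own merits rather than against an in-paper argument. Your overall instinct is sound: without odd-set constraints the feasible region is a pure packing polytope, and a combinatorial greedy does give a constant factor in near-linear time. Your single-bucket greedy analysis is also essentially right, though the charging you sketch only yields $1/3$ directly (a unit of greedy mass on $e$ can block a unit of optimal mass at \emph{each} endpoint in addition to occupying the edge's own capacity); getting $1/2$ needs the local-ratio accounting in which the reduction of $\bx^*_e$ is credited against both endpoint reductions simultaneously. Since $1/3$ still beats $1/8$ this is a minor point, but it does break your ``$\tfrac12\cdot\tfrac12\cdot\tfrac12$'' arithmetic.

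The genuine gap is the combination step across weight buckets. You assert that selecting the best single bucket, or averaging the per-bucket outputs, loses only a constant factor. Neither is true. Since the weights are only polynomially bounded there are $\Theta(\log n)$ buckets, and the optimum's weight can be spread evenly across all of them, so the best single bucket captures only a $1/\Theta(\log n)$ fraction of $\mathsf{OPT}$. Summing the per-bucket solutions is infeasible: each bucket's solution may individually saturate $\bx(v) = \bb_v$, so the sum violates the degree constraints by a $\Theta(\log n)$ factor and must be scaled down by that amount. Either way you obtain an $O(\log n)$-approximation, not a $1/8$-approximation. The fix is to abandon per-bucket independence entirely: sort all edges by decreasing weight and run one global greedy pass, $\bx_e \gets \min(\bc_e,\, \bb_u - \bx(u),\, \bb_v - \bx(v))$, with the residual demands shared across weight classes. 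A local-ratio argument (or dual fitting with $\bz_e = \bw_e$ on capacity-saturated edges and $\by_v$ equal to the weight of the edge that saturated $v$) then gives a constant factor, the key point being that when an edge saturates a vertex, every edge it subsequently blocks has no larger weight. This recovers the lemma deterministically in $O(m\log m)$ time; the bucketing detour is both unnecessary and, as written, where your proof fails.
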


The criterion of $\bb$ and $\bc$ being integral in \cref{lemma:capacitated-b-matching} can be relaxed via scaling.

\begin{corollary}
  For polynomially bounded $\bb \in \R_{\geq 0}^{V}$ and $\bc \in \R_{\geq 0}^{E}$
  there is an $\widetilde{O}(m)$ time algorithm that obtains a $1/16$-approximate maximizer to \eqref{eq:capacitated-b-matching}.
  \label{cor:capacitated-b-matching}
\end{corollary}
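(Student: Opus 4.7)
The plan is to reduce the fractional-demand/fractional-capacity case to the integer case handled by \cref{lemma:capacitated-b-matching} via a standard scale-and-round argument, losing only a constant factor in the approximation ratio and no more than polylogarithmic factors in runtime. Since $\bb$ and $\bc$ are polynomially bounded and nonzero entries lie in $[n^{-O(1)}, n^{O(1)}]$, I would first choose an integer $K = \poly(n)$ large enough that every nonzero coordinate of $K\bb$ and $K\bc$ is at least $2$, while still keeping $K\bb$ and $K\bc$ polynomially bounded.

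I would then set $\bb' \defeq \lfloor K\bb \rfloor \in \Z_{\geq 0}^{V}$ and $\bc' \defeq \lfloor K\bc \rfloor \in \Z_{\geq 0}^{E}$, invoke \cref{lemma:capacitated-b-matching} on the instance $(G, \bw, \bb', \bc')$ in $\widetilde{O}(m)$ time to obtain a $\tfrac{1}{8}$-approximate maximizer $\by'$, and output $\bx^{\star} \defeq \by'/K$. Feasibility of $\bx^{\star}$ in the original instance is immediate from $\bb' \leq K\bb$ and $\bc' \leq K\bc$ entrywise. For the approximation ratio, letting $\bx^{\mathsf{opt}}$ denote an optimizer of \eqref{eq:capacitated-b-matching}, I would observe that $\tfrac{K}{2}\bx^{\mathsf{opt}}$ is feasible for the integer instance: whenever $\bb_v > 0$, the choice of $K$ ensures $\bb'_v \geq K\bb_v - 1 \geq \tfrac{K}{2}\bb_v \geq \tfrac{K}{2}\bx^{\mathsf{opt}}(v)$, and an analogous bound holds edgewise for $\bc'$. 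Chaining these, $\bw^{\top}\by' \geq \tfrac{1}{8} \cdot \tfrac{K}{2} \bw^{\top}\bx^{\mathsf{opt}}$, so $\bw^{\top}\bx^{\star} \geq \tfrac{1}{16}\bw^{\top}\bx^{\mathsf{opt}}$.

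There is no genuine obstacle; the only subtlety is calibrating $K$ so that the additive $1$ lost in flooring is dominated by the scaled demands and capacities, which is exactly what the polynomial lower bound on nonzero entries of $\bb$ and $\bc$ buys. With such a $K = \poly(n)$, the scaled instance has polynomially bounded integer data, so the single call to \cref{lemma:capacitated-b-matching} runs in the claimed $\widetilde{O}(m)$ time and no further overhead is incurred.
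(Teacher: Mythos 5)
Your proof is correct and follows essentially the same route as the paper: scale the data to polynomially bounded integers, floor, invoke \cref{lemma:capacitated-b-matching}, and absorb the rounding loss into an extra factor of $2$. The only cosmetic difference is that you use a single uniform scale $K$ for both $\bb$ and $\bc$ (which makes the back-and-forth correspondence between the two polytopes a clean dilation), whereas the paper scales demands and capacities by $1/\bb_{\min}$ and $1/\bc_{\min}$ separately; your variant is, if anything, the more carefully justified of the two.
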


\begin{proof}
  Since $\bb$ and $\bc$ are polynomially bounded, we can scale them to integers by replacing each $\bb_v$ and $\bc_e$ with $\lfloor \bb_v / \bb_{\min} \rfloor$ and $\lfloor \bc_e / \bc_{\min} \rfloor$, respectively.
  Observe that $\lfloor \bb_v / \bb_{\min}\rfloor \geq \bb_v / (2\bb_{\min})$ and $\lfloor \bc_e / \bc_{\min}\rfloor \geq \bc_e / (2\bc_{\min})$, and thus we only lose a factor of $2$ in the approximation ratio from \cref{lemma:capacitated-b-matching}.
\end{proof}

Using \cref{cor:capacitated-b-matching} as an oracle, \cref{lemma:main-general} now implies the following algorithm for \cref{prob:capacitated-weighted-matching}.

\begin{lemma}
  There is a randomized $\widetilde{O}(m\eps^{-5} + n\eps^{-13})$ time algorithm for $\eps \geq \widetilde{\Omega}(n^{-1/2})$ that solves \cref{prob:capacitated-weighted-matching} w.h.p.
  \label{lemma:capacitated-weighted-matching}
\end{lemma}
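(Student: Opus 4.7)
The plan is to instantiate \cref{lemma:main-general} with $\P \defeq \P_G^{\bc}$ and $f(\bx) \defeq \bw^\top \bx$. Note that $\P_G^{\bc}$ is downward closed and convex (as an intersection of $\P_G$ with the box $\Gamma_{\bc}$), and $f$ is concave (in fact, linear). Moreover, $\P \cap \M_G = \M_G^{\bc}$, so the statement $f(\bx_f) \geq \zeta(1-\eps)\max_{\bx \in \P \cap \M_G} f(\bx)$ produced by \cref{lemma:main-general} is exactly the $(1-\eps)$-approximate maximization objective of \cref{prob:capacitated-weighted-matching} (up to absorbing the constant $\zeta$ into the overall error by a standard rescaling of $\eps$).

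Next I would check the two hypotheses needed for \cref{lemma:main-general}. First, $\kappa_{\P_G^{\bc}} = O(1)$: since $\P_G^{\bc} \subseteq \P_G$, it suffices to observe that for any $\bx \in \P_G$ and any $B \in \O_G$ we have $\sum_{e \in E[B]} \bx_e \leq \tfrac{1}{2}\sum_{v \in B}\bx(v) \leq |B|/2 \leq 2\lfloor|B|/2\rfloor$, hence $\P_G \subseteq 2 \cdot \M_G$. Second, $\max_{\bx \in \M_G^{\bc}} f(\bx) \geq 1$ can be ensured by a scaling preprocessing step: if the maximum is $0$, return $\bx = \boldsymbol 0$; otherwise the maximum is at least the smallest positive entry of $\bw$ times the smallest positive entry of $\bc$, which is lower-bounded polynomially, so we can rescale $\bw$ to normalize without changing the approximation guarantee or the runtime.

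The core of the plan is to furnish an $(1, \widetilde{O}(m), 1/16)$-oracle for $(\P_G^{\bc}, f)$. Given a query $g = f - \ell$ with coordinate-separable linear $\ell$, the function $g$ itself is linear: $g(\bx) = \bg^\top \bx$ where $\bg \defeq \bw - \boldsymbol{\ell}$ is polynomially bounded. Maximizing $\bg^\top \bx$ over $\P_G^{\bc}$ coincides, on the support of positive-weight edges, with the relaxed capacitated $\bb$-matching LP \eqref{eq:capacitated-b-matching} with demands $\bb \defeq \boldsymbol{1}^V$ and capacities $\bc$. So the procedure would be: restrict to the subgraph $E^+ \defeq \{e \in E: \bg_e > 0\}$ (setting $\bx_e \defeq 0$ on $E \setminus E^+$ is always optimal by downward closedness), call \cref{cor:capacitated-b-matching} on $(E^+, \bg_{E^+}, \boldsymbol{1}, \bc_{E^+})$ to obtain in $\widetilde O(m)$ time a solution achieving at least $1/16$ of the optimum, and extend by zero to an $\bx \in \P_G^{\bc}$. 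This yields the desired $(1, \widetilde O(m), 1/16)$-oracle.

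Finally, plugging $T_{\A}(n, m) = \widetilde{O}(m)$, $\kappa_{\P} \beta = O(1)$, and $\zeta = 1/16$ into the first runtime bound of \cref{lemma:main-general} (after rescaling $\eps \leftarrow \Theta(\eps)$ to absorb the $\zeta$ factor) gives $\widetilde{O}\!\left((m + m\eps^{-1} + n\eps^{-9}) \cdot \eps^{-4}\right) = \widetilde{O}(m\eps^{-5} + n\eps^{-13})$, which matches the claim. The only subtlety I anticipate is verifying the oracle's precondition $\max_{\bx \in \P} g(\bx) \geq 1/\poly(n)$ in the calls made by \cref{lemma:main-general}; however, this should follow from the polynomial bound on $\boldsymbol{\ell}$ and the normalization $\max_{\bx \in \P \cap \M_G} f(\bx) \geq 1$ combined with the internal scaling of the MWU framework of \cite{AhnG14}, and can be handled by a direct check or by returning $\boldsymbol{0}$ in the (trivially handled) degenerate case.
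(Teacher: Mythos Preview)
Your plan follows the same route as the paper (instantiate \cref{lemma:main-general} with $\P = \P_G^{\bc}$, $f(\bx)=\bw^\top\bx$, and use \cref{cor:capacitated-b-matching} as the oracle), and your side checks on $\kappa_\P=O(1)$, the normalization $\max f\geq 1$, and the restriction to $E^+=\{e:\bg_e>0\}$ are all fine. However, there is a genuine gap in how you handle the approximation factor $\zeta$.

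With a $(1,\widetilde O(m),1/16)$-oracle, \cref{lemma:main-general} only guarantees
\[
f(\bx_f)\;\geq\;\zeta(1-\eps)\max_{\bx\in\M_G^{\bc}} f(\bx)\;=\;\tfrac{1}{16}(1-\eps)\max_{\bx\in\M_G^{\bc}} f(\bx),
\]
which is a constant-factor approximation, not a $(1-\eps)$-approximation. Your claim that the constant $\zeta=1/16$ can be ``absorbed into the overall error by a standard rescaling of $\eps$'' is incorrect: no choice of $\eps'=\Theta(\eps)$ makes $\tfrac{1}{16}(1-\eps')\geq 1-\eps$. That trick only works when $\zeta=1-O(\eps)$.

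The fix is exactly what the paper does: since $g$ is \emph{linear}, multiply the vector returned by \cref{cor:capacitated-b-matching} by $16$. This yields a point in $16\cdot\P_G^{\bc}$ with $g$-value at least $\max_{\bx\in\P_G^{\bc}} g(\bx)$, i.e., a $(16,\widetilde O(m),1)$-oracle. Now \cref{lemma:main-general} gives $f(\bx_f)\geq(1-\eps)\max f$. The runtime is unchanged because $\kappa_\P\beta=O(1)$ either way, and the hypothesis $\kappa_\P\beta<n$ is still satisfied.
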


\begin{proof}
  Consider the function $f_{\bw}(\bx) \defeq \bw^\top \bx$.
  By scaling $\bw$ we may assume $\max_{e \in E}\bw_e\bc_e \geq 1$ and therefore $\max_{\bx \in \P_{G}^{\bc} \cap \M_G}f_{\bw}(\bx) \geq 1$.
  By running \cref{cor:capacitated-b-matching} and returning the vector $\bx$ it outputs by $16$, we get a $(16, \widetilde{O}(m), 1)$-oracle for $(\P_{G}^{\bc}, f_{\bw})$.
  Since $\M_{G}^{\bc} = \P_{G}^{\bc} \cap \M_G$, the lemma follows from \cref{lemma:main-general}.
\end{proof}

Note that an immediate corollary of the above \cref{lemma:capacitated-weighted-matching} is that we can solve the subproblem in \cite{AssadiBD22} of finding an approximate maximum matching obeying capacity and odd-set constraints in $\widetilde{O}\left(m \cdot \poly(\eps^{-1})\right)$ time.
However, we remark again this does not suffice to make their framework run in $\widetilde{O}\left(m \cdot \poly(\eps^{-1})\right)$ completely, as a dual variable to \cref{prob:capacitated-weighted-matching} is still required to identify the set of critical edges along which the capacity is increased.

We now present the reduction from maximizing convex objective to \cref{prob:capacitated-weighted-matching} by approximating the objective with piecewise linear functions, thereby effectively splitting each edge into $\widetilde{O}(\eps^{-1})$ copies of different capacities and weights.
Similar approaches were used before, e.g., in \cite{MaiPRV17}.

\begin{lemma}
  Given an $n$-vertex $m$-edge graph $G = (V, E)$ and a coordinate-separable concave function $f: \R_{\geq 0}^{E} \to \R$ satisfying
  \begin{enumerate}[(1)]
    \item\label{item:poly-bounded} $f_e(x)$ is polynomially bounded for $x \geq 1/\poly(n)$,
    \item each $f_e$ can be evaluated in $O(1)$ time, and
    \item\label{item:maximizer-bounded} $z_e^{*} \defeq \argmax_{x \in [0, 1]}f_e(x)$ is given and satisfies $z_e^{*} \geq 1/\poly(n)$,
  \end{enumerate}
  for any $\eps \geq \widetilde{\Omega}(n^{-1/2})$ there is a randomized $\widetilde{O}(m\eps^{-6} + n\eps^{-13})$ time algorithm that computes an $\bx_f \in \M_G$ such that w.h.p.,
  \[
    f(\bx_f) \geq (1-\eps) \max_{\bx \in \M_G}f(\bx)\,.
  \]
  \label{lemma:near-linear-reduction}
\end{lemma}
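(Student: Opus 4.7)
The plan is to approximate $f$ by a coordinate-separable, piecewise-linear concave surrogate $\tilde f$ and then reduce $\max_{\bx \in \M_G} \tilde f(\bx)$ to an instance of capacity-constrained weighted matching (\cref{prob:capacitated-weighted-matching}) on a blown-up multigraph, which we can solve via \cref{lemma:capacitated-weighted-matching}. For each edge $e$ I would pick breakpoints $0 = t_0^{(e)} < t_1^{(e)} < \cdots < t_{k_e}^{(e)} = z_e^{*}$ with $k_e = O(\log n/\eps) = \widetilde O(\eps^{-1})$ and let $\tilde f_e$ be the piecewise-linear interpolation of $f_e$ at these breakpoints, so that concavity of $f_e$ forces $\tilde f_e$ to be concave with $\tilde f_e \le f_e$ pointwise. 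Restricting $\bx_e$ to $[0, z_e^{*}]$ loses nothing since $f_e$ is concave with maximum at $z_e^{*}$. Spacing the breakpoints geometrically on both the increasing and decreasing sides of $z_e^{*}$ (so that consecutive values of $f_e$ and of $z_e^{*}-t_i$ differ by a factor at most $1+\eps/8$) would yield $\tilde f_e(x) \ge (1-\eps/2)f_e(x)$ on $[0, z_e^{*}]$; by assumptions \labelcref{item:poly-bounded,item:maximizer-bounded}, all relevant scales of $f_e$ and $z_e^{*}$ lie in a $\poly(n)$ range, so $O(\log n/\eps)$ breakpoints suffice.

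Next, I construct the multigraph $G' = (V, E')$ by replacing each $e \in E$ with parallel copies $e_1, \ldots, e_{k_e}$, where $e_i$ has capacity $\bc_{e_i} \defeq t_i^{(e)} - t_{i-1}^{(e)}$ and weight $\bw_{e_i}$ equal to the slope of $\tilde f_e$ on its $i$-th piece; any copy whose weight would be negative (past the peak $z_e^{*}$) is dropped because no optimal solution uses it. Concavity of $\tilde f_e$ makes $\bw_{e_i}$ nonincreasing in $i$, so maximizing $\sum_i \bw_{e_i}\bx'_{e_i}$ over $\bx'_{e_i} \in [0, \bc_{e_i}]$ at fixed total $\bx_e \defeq \sum_i \bx'_{e_i}$ fills the pieces in order of decreasing slope and recovers exactly $\tilde f_e(\bx_e) - \tilde f_e(0)$. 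Vertex-degree and odd-set constraints translate unchanged under the summation $\bx_e = \sum_i \bx'_{e_i}$ between multigraph and simple graph (both sides sum to the same quantity), so
\[
\max_{\bx' \in \M_{G'}^{\bc}} \bw^{\top}\bx' = \max_{\bx \in \M_G}\tilde f(\bx) - \sum_e \tilde f_e(0),
\]
giving an exact equivalence between the two optimization problems.

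The multigraph $G'$ has $n$ vertices and $|E'| = \widetilde O(m\eps^{-1})$ edges, and the weights and capacities remain polynomially bounded by construction, so invoking \cref{lemma:capacitated-weighted-matching} with accuracy $\eps/4$ produces $\bx'$ satisfying $\bw^{\top}\bx' \ge (1-\eps/4)\max_{\bx'' \in \M_{G'}^{\bc}}\bw^{\top}\bx''$ in $\widetilde O(|E'|\eps^{-5} + n\eps^{-13}) = \widetilde O(m\eps^{-6} + n\eps^{-13})$ time. Projecting back to $\bx_f \in \M_G$ via $(\bx_f)_e \defeq \sum_i \bx'_{e_i}$ and chaining $f \ge \tilde f$ with the Step-1 bound yields
\[
f(\bx_f) \ge \tilde f(\bx_f) \ge (1-\eps/4)\max_{\bx \in \M_G}\tilde f(\bx) \ge (1-\eps/4)(1-\eps/2)\max_{\bx \in \M_G}f(\bx) \ge (1-\eps)\max_{\bx \in \M_G}f(\bx).
\]

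The main technical obstacle is Step~1: arguing that $O(\log n/\eps)$ breakpoints per edge suffice for a \emph{multiplicative} $(1-\eps/2)$ approximation even though $f_e$ may vary over a polynomial range of scales. The key insight is that concavity of $f_e$ limits its relative rate of change, so geometrically spaced breakpoints (one pass from $0$ toward $z_e^{*}$ and one pass from $1$ toward $z_e^{*}$, each with ratio $1+\Theta(\eps)$) cover all scales with the claimed count; any region where $f_e$ is smaller than a $\poly(\eps/n)$ fraction of its maximum contributes negligibly to the objective and can be absorbed into one coarse piece.
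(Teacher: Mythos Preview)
Your proposal is correct and follows essentially the same route as the paper: geometrically spaced piecewise-linear approximation with $\widetilde O(\eps^{-1})$ breakpoints per edge, a parallel-edge blow-up, and a call to \cref{lemma:capacitated-weighted-matching}. The paper handles your Step~1 slightly differently---rather than a pointwise bound $\tilde f_e \geq (1-\eps/2)f_e$, it rounds any feasible $\bx_e$ down to the nearest breakpoint $r_e^{(t_e)}$ and uses concavity (implicitly with $f_e(0)\ge 0$) to get $f_e(r_e^{(t_e)})\ge f_e(\bx_e)/(1+\eps')$ directly, which sidesteps the near-zero regime you flag at the end.
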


\begin{proof}
  Let us consider a fixed $\eps^\prime = O(\eps)$ that we will set later.
  For each edge $e \in E$, let $r_e^{(0)} \defeq z_e^{*}$ and $r_e^{(i)} \defeq r_e^{(i-1)} / (1 + \eps^\prime)$ for $i \in \{1, \ldots, k\}$, where $k \defeq \left\lceil\log_{1+\eps^\prime}\frac{n^2}{\eps^\prime}\right\rceil = \widetilde{O}(\eps^{-1})$.
  Let $r_e^{(k+1)} \defeq 0$.
  Splitting each edge $e$ into $k + 1$ copies $e^{(0)}, e^{(1)}, \ldots, e^{(k)}$, we get a graph $G^\prime = (V, E^\prime)$ with $m^\prime = \widetilde{O}(m \eps^{-1})$ edges.
  Define $\bc^\prime \in \mathbb{R}^{E^\prime}$ and $\bw^\prime \in \mathbb{R}^{E^\prime}$ with $\bc^\prime_{e^{(i)}} \defeq r^{(i)}_e - r^{(i+1)}_e$ and $\bw^\prime_{e^{(i)}} \defeq \frac{f_e(r^{(i)}_e) - f_e(r^{(i+1)}_e)}{\bc^\prime_{e^{(i)}}}$ for each $e \in E$ and $i \in \{0, \ldots, k\}$.
  Note that $\bc_{e^{(i)}}$ and $\bw_{e^{(i)}}$ are both polynomially bounded by \labelcref{item:poly-bounded,item:maximizer-bounded}.
  Recall
  that $\M_{G^\prime}^{\bc^\prime}$ is the capacity-constrained matching polytope of $G^\prime$.
  We show the following two claims.
  
  \begin{claim}
  For any $\bx^\prime \in \M^{\bc^\prime}_{G^\prime}$, the vector $\bx \in \R_{\geq 0}^E$ given by $\bx_e \defeq \bx^\prime_{e^{(0)}} + \cdots + \bx^\prime_{e^{(k)}}$ satisfies $\bx \in \M_G$ and $f(\bx) \geq {\bw^{\prime}}^\top \bx^\prime$.
  \label{claim:w-to-g}
  \end{claim}
  \begin{proof}
  That $\bx \in \M_G$ is immediate from $\bx^\prime \in \M^{\bc^\prime}_{G^\prime}$.
  Let
  \[
    \widetilde{f_e}(x) \defeq \int_{0}^{x}\widetilde{\bw}_{e}(y)\;\mathrm{d}y, \quad\text{where}\quad\widetilde{\bw}_e(y) \defeq \bw^\prime_{e^{(i)}}\;\text{for $r^{(i+1)} \leq y < r^{(i)}$}
  \]
  be the piecewise linearized version of $f_e$.
  By concavity of $f_e$, it holds that $\widetilde{f_e}(x) \leq f_e(x)$ for all $x \in [0, z^{*}_e]$.
  We have
  \begin{align*}
    f(\bx) \geq \sum_{e \in E}\widetilde{f_e}(\bx_e) = \sum_{e \in E}\sum_{0 \leq i \leq k}\bw^\prime_{e^{(i)}} \cdot \min\left\{\bc^\prime_{e^{(i)}}, \max\left\{0, \bx_e - r^{(i+1)}_e\right\}\right\} \geq {\bw^\prime}^\top \bx^\prime,
  \end{align*}
  where the last inequality uses the fact that $\bw^\prime_{e^{(k)}} \geq \bw^\prime_{e^{(k-1)}} \geq \cdots \geq \bw^\prime_{e^{(0)}}$ by concavity of $f_e$ so it is always better to saturate $e^{(i)}$ before putting mass on $e^{(i-1)}$.
  \end{proof}
  
  \begin{claim}
  For any $\bx \in \M_G$, there exists an $\bx^\prime \in \M_{G^\prime}^{\bc^\prime}$ with ${\bw^\prime}^\top \bx^\prime \geq (1-2\eps^\prime)f(\bx)$.
  \label{claim:g-to-w}
  \end{claim}
  \begin{proof}
  Let $\widetilde{\bx} \in \R_{\geq 0}^{E^\prime}$ by defined as
  \[
    \widetilde{\bx}_{e^{(i)}} \defeq
      \begin{cases}
        \min\left\{\bc^\prime_{e^{(i)}}, \max\left\{0, \bx_e - r_{e}^{(i+1)}\right\}\right\}, & \text{if $i < k$,} \\
        \bc^\prime_{e^{(k)}}, & \text{if $i = k$.}
      \end{cases}
  \]
  Observe that $\widetilde{\bx}_{e^{(k)}} \leq \eps^\prime/n^2$ for every $e \in E$ by definition.
  As such we have $\widetilde{\bx}_{e^{(0)}} + \cdots + \widetilde{\bx}_{e^{(k)}} \leq \bx_e + \eps^\prime/n^2$, which implies $\widetilde{\bx} \in (1+\eps^\prime)\M_{G^\prime}^{\bc^\prime}$ given $\bx \in \M_G$.
  Letting $0 \leq t_e \leq k$ be the smallest integer such that $r^{(t_e)}_e \leq \max\{\bx_e, \bc^\prime_{e^{(k)}}\}$, we also have
  \begin{align*}
    {\bw^\prime}^\top \widetilde{\bx}
    \geq \sum_{e \in E}\sum_{i=t_e}^{k}\bw^\prime_{e^{(i)}}\bc^\prime_{e^{(i)}} \stackrel{(i)}{=} \sum_{e \in E \setminus F}f_e(r^{(t_e)}_e) &\stackrel{(ii)}{\geq} \frac{1}{1+\eps^\prime} \sum_{e \in E}f_e(\bx_e) \geq (1-\eps^\prime)f(\bx),
  \end{align*}
  where $(i)$ is by definition of $\bw^\prime$ and $\bc^\prime$, and $(ii)$ uses the fact that $r^{(t_e)}_e \geq \bx_e / (1+\eps^\prime)$ by definition and concavity of $f_e$.
  Therefore, the vector $\bx^\prime \defeq \widetilde{\bx}/(1+\eps^\prime)$ satisfies $\bx^\prime \in \M_G^{\bc^\prime}$ and ${\bw^\prime}^\top \bx^\prime \geq (1-2\eps^\prime)f(\bx)$.
  \end{proof}

  Going back to the proof of \cref{lemma:near-linear-reduction}, with $\eps^\prime \defeq \eps/4$, we run \cref{lemma:capacitated-weighted-matching} on the split graph $G^\prime$ with edge weights $\bw^\prime$ and capacities $\bc^\prime$ to accuracy $1-\eps^\prime$ in time $\widetilde{O}\left(m^\prime \eps^{-5} + n\eps^{-13}\right) = \widetilde{O}\left(m\eps^{-6} + n\eps^{-13}\right)$, obtaining an $\bx^\prime \in \M_{G^\prime}^{\bc^\prime}$.
  Letting $\bx \in \M_G$ be derived from $\bx^\prime$ as in \cref{claim:w-to-g}, it follows that
  \begin{align*}
    f(\bx) \geq {\bw^\prime}^\top \bx^\prime \geq (1-\eps^\prime)\max_{\bx^{\prime\prime} \in \M_{G^\prime}^{\bc^\prime}} {\bw^\prime}^\top \bx^{\prime\prime}
    &\geq (1-\eps^\prime)(1-2\eps^\prime)\max_{\bx^{*} \in \M_G} f(\bx^{*})
    &\geq (1-\eps)\max_{\bx^{*} \in \M_G} f(\bx^{*}),
  \end{align*}
  where we used \cref{claim:g-to-w}.
  This concludes the proof.
\end{proof}

\subsection{Putting Everything Together} \label{subsec:everything-entropy}

We can now prove \cref{thm:entropy-solver} by combining \cref{lemma:almost-linear-solver,lemma:near-linear-reduction}.

\begin{proof}[Proof of \cref{thm:entropy-solver}]
  Let $f_e(x) \defeq \bw_e x + \mu \cdot \bw_e x \log\frac{\gamma}{\bw_e x}$ so that the entropy-regularized matching objective is a coordinate-separable concave function $f(\bx) \defeq \sum_{e \in E}f_e(\bx_e)$.
  
  The runtime of $\widetilde{O}(m\eps^{-6} + n\eps^{-13})$ follows from \cref{lemma:near-linear-reduction}, since the function $f$ satisfies
  (1) $\bw_e x \leq f_e(x) \leq \bw_e + \mu \bw_e \log(\gamma/\bw_e)$ is polynomially bounded for $x \geq 1/\poly(n)$,
  (2) each $f_e$ can be evaluated in $O(1)$ time, and
  (3) $z_e^{*} = 1$ since $\mu \leq 1$.
  
  For the runtime of $\widehat{O}(m\eps^{-5})$, we use \cref{lemma:almost-linear-solver}, in which each $f_e(x)$ is decomposed into $f_e^{(1)}(x) + f_e^{(2)}(x)$, where $f_e^{(1)}(x) \defeq (\bw_e + \mu \bw_e \log\gamma)) x$ is linear and $f_e^{(2)}(x) \defeq -\mu \cdot \bw_e x \log \bw_e x$.
  We use the barriers $\psi_e^{(1)}(x, y) \defeq -\log(y + (\bw_e + \mu \bw_e\log\gamma) x)$ for $\{(x, y): y \geq -f_e^{(1)}(x)\}$ and $\psi_e^{(2)}(x, y) \defeq -\log(\bw_e x) - \log\left(y/\mu - \bw_e x\log (\bw_e x)\right)$ for $\{(x, y): y \geq -f_e^{(2)}(x)\}$.
  The barrier $\psi_e^{(1)}$ is the same as the one used in \cite[Theorem 10.16]{ChenKLPGS22} for linear functions, and $\psi^{(2)}$ is the same barrier that \cite[Theorem 10.16]{ChenKLPGS22} used for the entropy term.
  Both of the barriers were shown to satisfy \cite[Assumption 10.2]{ChenKLPGS22} as long as the coefficients are polynomially bounded (note that we apply an affine substitution for $\psi_{e}^{(2)}$ which preserves self-concordance~\cite[Proposition 3.1.1]{nesterov2003introductory}).
\end{proof}

\section{Dynamic Rounding Algorithms}\label{sec:rounding}

In \cref{sec:congestion-balancing,sec:decremental-fractional} we have shown how to solve the decremental fractional matching problem with $\widetilde{O}(\poly(\eps^{-1}))$ amortized update time and $\widetilde{O}(m \cdot \poly(\eps^{-1}))$ recourse.
Here we further show how to obtain an integral matching from the fractional one with dynamic rounding algorithms whose definition is recalled below.

\DefRounding*

Note that although the previous sections of our paper focus on the decremental setting, the rounding algorithm as defined in \cref{def:rounding} and given below in \cref{thm:rounding} is fully dynamic.
In other words, it works under arbitrary updates to the fractional matching $\bx$, irrespective of how the underlying fractional algorithm maintains $\bx$.
Our rounding algorithms are obtained by maintaining a sparse subgraph in which the maximum weight matching is approximately preserved.

\begin{definition}
  For a fractional matching $\bx \in \M_G$, a subgraph $H \subseteq \supp(\bx)$ is an \emph{$s$-sparse $\eps$-sparsifier}
  for $s \defeq s(n, m, \eps)$ of $G$ if $|H| \leq s \cdot \norm{\bx}_1$ and $M^{*}_{\bw}(H) \geq (1-\eps)\bw^\top \bx$. 
  We call a fractional matching $\bx^{(H)} \in \M_H$ a \emph{certificate} of $H$ if $\bw^\top \bx^{(H)} \geq (1-\eps)\bw^\top \bx$.
\end{definition}

Following standard techniques of periodic recomputation, if we can maintain an $\widetilde{O}(\poly(\eps^{-1}))$-sparse
$O(\eps)$-sparsifier under updates to $\bx$, then this gives the desired fully-dynamic rounding algorithm using the below static algorithm of \cite{DuanP14}.

\begin{proposition}[\cite{DuanP14}]
  There is an $\widetilde{O}(m\eps^{-1})$ time algorithm that given an $m$-edge graph $G=(V,E)$ weighted by $\bw \in \R_{\geq 0}^{E}$ computes a matching $M \subseteq E$ such that
  \[ \bw(M) \geq (1-\eps) \cdot \max_{\mathrm{matching}\;M^\prime \subseteq E}\bw(M^\prime).\]
\end{proposition}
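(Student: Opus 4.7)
The statement being cited is attributed to Duan and Pettie~\cite{DuanP14}, so my proof proposal is essentially to invoke their static algorithm as a black box. That said, let me outline the high-level approach I would take if I were to reconstruct it, since the structure is informative for later dynamic applications.

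The plan is to build the algorithm via a \emph{weight scaling} framework. First I would preprocess the instance so that weights are integer and polynomially bounded: if $w_{\max}$ is the largest weight and $w^{*}$ denotes $\max_{e \in E} \bw_e$, then scaling and rounding $\bw_e$ to the nearest multiple of $\eps w^{*}/n$ loses at most an $\eps$ fraction of the optimum (since any matching uses at most $n/2$ edges) and makes weights integers bounded by $O(n/\eps)$. Call this bound $N = O(n\eps^{-1})$. Then the plan is to maintain a matching $M$ together with approximate dual variables $y_v$ for vertices and $z_B$ for blossoms, and iteratively improve them across $O(\log N)$ \emph{scales}, halving an accuracy parameter $\delta$ each time.

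Within each scale, the approach is to run a bounded number of augmentation/rotation searches from scratch on weight-augmented graphs. Concretely I would maintain approximate complementary slackness of the form $y_u + y_v + \sum_{B \ni \{u,v\}} z_B \ge \bw_{\{u,v\}} - \delta$ with equality (up to $\delta$) on matched edges, and in each phase run a single Bellman-Ford-style or Dijkstra-style search (using the dual-adjusted reduced costs) to identify an augmenting or weight-improving alternating walk; I would then contract blossoms à la Edmonds/Gabow as they arise. The key quantitative claim I would need is that each scale requires only $\widetilde{O}(\eps^{-1})$ such passes, and that each pass runs in $\widetilde{O}(m)$ time despite blossom formation, giving $\widetilde{O}(m\eps^{-1})$ per scale. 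Summing over scales and absorbing the $\log N$ factor into the $\widetilde{O}(\cdot)$ yields the claimed $\widetilde{O}(m\eps^{-1})$ running time.

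The main obstacle, and the reason Duan--Pettie's paper is nontrivial, is handling blossoms (odd-set duals) in near-linear time while maintaining only approximate complementary slackness: one must both contract blossoms efficiently as they are discovered and expand them later without spending $\Omega(n)$ per blossom in the worst case. This is handled by a careful amortization and by the observation that the number of nontrivial blossoms whose duals change per scale is bounded. For the purpose of the present paper I would simply cite \cite{DuanP14} and use their algorithm as a black box, since the statement is being invoked only as the static subroutine inside a periodic recomputation step, and no new structural properties beyond the stated runtime and approximation guarantee are needed.
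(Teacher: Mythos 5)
The paper states this proposition as an imported result from \cite{DuanP14} with no proof, and your proposal correctly treats it the same way: cite the static Duan--Pettie algorithm as a black box, which is exactly how it is used (as the static subroutine inside periodic recomputation). Your sketch of the underlying scaling/approximate-complementary-slackness framework is a reasonable summary but is not required here.
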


The notion of sparsifier maintenance is formalized as follows.

\begin{definition}
  An algorithm $\S$ is an \emph{$(s, T_{\init}, T_{\update}, T_{\output})$-sparsifier-maintainer} if given a fractional matching $\bx$ and parameter $\eps > 0$, it initializes in $T_{\init}(n, m, \eps)$ time, processes each entry update to $\bx$ in $T_{\update}(n, m, \eps)$ amortized time, and outputs an $s$-sparse $\eps$-sparsifier $H$ of the current $\bx$ in $T_{\output}(n, m, \eps) \cdot |H|$ time.
\end{definition}

\begin{lemma}[rounding via sparsification]
  Given an $(s, T_{\init}, T_{\update}, T_{\output})$-sparsifier-maintainer $\S$, there is a dynamic rounding algorithm that initializes in $\widetilde{O}\left(m\cdot\eps^{-1} + T_{\init}(n, m, \eps/4)\right)$ time and maintains an integral matching $M \subseteq \supp(\bx)$ with $\bw(M) \geq \left(1-\eps\right)\bw^\top \bx$ in amortized
  \[ \widetilde{O}\left(T_{\update}(n, m, \eps/4) + s(n, m, \eps/4) \cdot W \cdot \left(\frac{T_\output(n, m, \eps/4)}{\eps} + \frac{1}{\eps^2}\right)\right) \]
  time per update to $\bx$.
  The dynamic rounding algorithm has the same properties as $\S$ does in terms of being deterministic/randomized and being fully/output-adaptive.
  \label{lemma:rounding-via-sparsifier}
\end{lemma}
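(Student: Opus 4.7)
The plan is to combine the sparsifier-maintainer $\S$ with periodic recomputation via the static matching algorithm of Duan-Pettie. Concretely, initialization calls $\S.\init$ on $\bx$ with accuracy $\eps/4$ in time $T_{\init}(n,m,\eps/4)$ and independently runs Duan-Pettie at accuracy $\eps/4$ on the subgraph $\supp(\bx)$ in $\widetilde{O}(m/\eps)$ time to produce an initial matching $M\subseteq \supp(\bx)$ with $\bw(M)\geq(1-\eps/4)\bw^\top\bx$; I would also maintain the scalar $\bw^\top\bx$ incrementally. For each entry update I forward it to $\S.\update$, update $\bw^\top\bx$ in $O(1)$, and evict $e$ from $M$ whenever $\bx_e$ drops to $0$. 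I organize the stream into \emph{epochs}: at the start of an epoch $\bx_0$ denotes the current $\bx$ and I fix $\tau \defeq \max\!\left(1,\left\lfloor \tfrac{\eps\,\bw^\top\bx_0}{8W}\right\rfloor\right)$; after $\tau$ updates the epoch ends and I \emph{recompute} $M$ by requesting the current sparsifier $H$ from $\S.\output$ in $T_\output(n,m,\eps/4)\cdot|H|$ time and running Duan-Pettie at accuracy $\eps/4$ on $H$ in $\widetilde{O}(|H|/\eps)$ time, which yields $M\subseteq H\subseteq \supp(\bx)$ with $\bw(M)\geq(1-\eps/4)\bw^\top\bx$ at the moment of recomputation.

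For correctness within an epoch, at most $\tau$ edges are evicted from $M$ and each has weight at most $W$, so $\bw(M)\geq \bw(M_0)-\tau W\geq (1-\eps/4-\eps/8)\bw^\top\bx_0$; each of the $\tau$ entry updates can move $\bw^\top\bx$ by at most $W$, so $\bw^\top\bx\leq (1+\eps/8)\bw^\top\bx_0$, whence $\bw(M)/\bw^\top\bx \geq (1-3\eps/8)/(1+\eps/8)\geq 1-\eps$. For the cost, one recomputation takes $\widetilde{O}(|H|(T_\output+\eps^{-1}))=\widetilde{O}(s\,\norm{\bx_0}_1\,(T_\output+\eps^{-1}))$. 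Assuming $\bw\geq 1$ entrywise (WLOG, since zero-weight edges contribute nothing to any matching's weight and can be internally zeroed in $\bx$ without violating $\bx\in\M_G$ by downward closedness of $\M_G$) we have $\norm{\bx_0}_1\leq \bw^\top\bx_0$; dividing by $\tau=\Theta(\eps\bw^\top\bx_0/W)$ yields an amortized per-update recomputation cost of $\widetilde{O}(sW(T_\output/\eps+1/\eps^2))$, added to the $\widetilde{O}(T_{\update})$ sparsifier-maintenance charge per update. The boundary case $\tau=1$ (when $\bw^\top\bx_0<8W/\eps$) satisfies the same bound directly without amortization.

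The main subtlety is the fully-dynamic nature of the input: $\bw^\top\bx$ can both grow and shrink while $\bw(M)$ only shrinks between recomputations, so the ratio $\bw(M)/\bw^\top\bx$ has two moving components. This is what forces $\tau$ to scale with the \emph{current} $\bw^\top\bx_0$ rather than a global budget and is precisely what produces the $W$ factor (instead of, e.g., $n$) in the amortized bound. Inheritance of the deterministic/randomized and fully/output-adaptive properties from $\S$ is automatic: Duan-Pettie is deterministic and $\S$'s output $H$ is used only as input to this deterministic post-processor, so whatever adversary model $\S$ tolerates is also tolerated by the resulting rounding algorithm.
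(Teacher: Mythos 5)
Your proposal is correct and follows essentially the same approach as the paper: maintain the sparsifier under updates, evict deleted edges from $M$, and periodically recompute $M$ on the sparsifier via \cite{DuanP14}, amortizing each recomputation against the $\Theta(\eps\,\bw^\top\bx/W)$ updates needed to degrade the approximation (the paper triggers recomputation lazily when $\bw(M)<(1-\eps)\bw^\top\bx$ rather than on a fixed epoch schedule, but this is immaterial). The only nit is a constant-tracking slip: the recomputed matching satisfies $\bw(M)\geq(1-\eps/4)^2\bw^\top\bx$, not $(1-\eps/4)\bw^\top\bx$, which still yields the claimed $(1-\eps)$ guarantee.
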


\begin{proof}
  Let $M$ be an initial $\left(1-\frac{\eps}{2}\right)$-approximate maximum weight matching over $\supp(\bx)$ obtained by the static algorithm of \cite{DuanP14}.
  We initialize $\S$ and feed every update to $\bx$ to it to maintain an $\eps/4$-sparsifier $H$ of $\bx$ in $T_{\update}(n, m, \eps/4)$ time per update.
  If $\bx_e$ is set to zero for some $e \in M$, we remove $e$ from $M$.
  Every time $\bw(M) < (1-\eps)\bw^\top \bx$, we query $\S$ to get an $s(n,m,\eps/4)$-sparse $\eps/4$-sparsifier $H$ and re-compute $M$ as a $\left(1-\frac{\eps}{4}\right)$-approximate matching of $H$, again using \cite{DuanP14}.
  This step takes time
  \[ \widetilde{O}\left(T_{\output}(n, m, \eps/4) \cdot s(n, m, \eps/4) \cdot \norm{\bx}_1 + s(n, m, \eps/4) \cdot \norm{\bx}_1 \cdot \eps^{-1}\right). \]
  By definition of an $\eps/4$-sparsifier, we have $\bw(M) \geq \left(1-\frac{\eps}{4}\right)\left(1-\frac{\eps}{4}\right)\bw^\top \bx \geq \left(1-\frac{\eps}{2}\right)\bw^\top \bx$.
  The re-computation happens at most once every $\frac{\eps}{2W} \cdot \bw^\top \bx \geq \frac{\eps}{2W} \cdot \norm*{\bx}_1$ updates, as we need that many updates to either decrease $\bw(M)$ by $\frac{\eps}{2} \cdot \bw^\top \bx$ or increase the value of the maximum weight matching size by $\frac{\eps}{2} \cdot \bw^\top \bx$, so the time of re-computation amortizes to $\widetilde{O}\left(s(n, m, \eps/4) \cdot W \cdot \left(\frac{T_\output(n, m, \eps/4)}{\eps} + \frac{1}{\eps^2}\right)\right)$ time per update.
  Combined with the update time of $\S$, the lemma follows.
\end{proof}

In \cref{subsec:deterministic-sparsifier} we design
a determinstic $(\widetilde{O}(\eps^{-2}), \widetilde{O}(m), \widetilde{O}(W\eps^{-1}), O(1))$-sparsifier-maintainer in \cref{lemma:det-sparsifier}.
This together with \cref{lemma:rounding-via-sparsifier} proves the following theorem.

\Rounding*

In \cref{subsec:weighted rounding} we further obtain a rounding algorithm specifically for \cref{thm:decremental-fractional}, proving the following.

\begin{theorem}[informal, see \cref{thm:weighted-rounding}]
  We can round the entropy-regularized matching maintained in \cref{thm:decremental-fractional} in $\widetilde{O}((n^2/m) \cdot \eps^{-6})$ additional amortized time per update.
  \label{thm:weighted-rounding-informal}
\end{theorem}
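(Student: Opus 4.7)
The plan is to apply the sparsifier-based framework of \cref{lemma:rounding-via-sparsifier} with a simple independent-sampling sparsifier tailored to general graphs, leveraging the structure of entropy-regularized matchings. First, for the fractional matching $\bx$ maintained by \cref{thm:decremental-fractional}, I would build $H \subseteq \supp(\bx)$ by including each edge $e$ independently with probability $q_e \defeq \min\{1, c_\eps \cdot \bx_e\}$ for an oversampling parameter $c_\eps = \Theta(\eps^{-O(1)} \log n)$. Standard Chernoff bounds, combined with a union bound over all $v \in V$ and over the $n^{O(1/\eps)}$ odd sets $B$ with $|B| \leq 1/\eps$ (which suffice by \cref{fact:scaled-down}), would show that w.h.p.\ (i) the sampled mass at each vertex concentrates, i.e.\ $\sum_{e \in H_v}\bx_e \in (1\pm \eps)\, c_\eps\bx(v)$, and (ii) the sampled mass in each relevant odd set concentrates analogously. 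Together, after rescaling by $c_\eps^{-1}$ and absorbing the $(1+\eps)$ slack via \cref{fact:scaled-down}, this gives a valid fractional certificate inside $H$ of value $(1-\eps)\bw^\top \bx$, hence $M^*_{\bw}(H) \geq (1-\eps)\bw^\top\bx$.

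Second, for the weighted case with large $W$, I would argue that properties (i) and (ii) alone suffice (no property (iii) in the weighted analysis of the overview), using the smoothness of the entropy-regularized optimum produced by \cref{thm:decremental-fractional}: the KKT-style optimality conditions of the regularized problem force $\bw_e\bx_e$ to be balanced across incident edges, so no single heavy edge dominates and independent sampling preserves weighted value without paying a factor of $W$. Dynamically, I would maintain $H$ under entry updates to $\bx$ using the output-adaptive independent sampler of \cite{BhattacharyaKSW23rounding} in $\widetilde{O}(1)$ amortized time per change, and whenever $\bw(M) < (1-\eps)\bw^\top \bx$ re-run \cite{DuanP14} on $H$ in $\widetilde{O}(|H|/\eps) = \widetilde{O}(n \eps^{-O(1)})$ time, since $|H| = \widetilde{O}(c_\eps \norm{\bx}_1) = \widetilde{O}(n\eps^{-O(1)})$.

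For the amortization, \cref{thm:decremental-fractional} guarantees $\widetilde{O}(\eps^{-2})$ rebuilds, each of which can trigger one recomputation of $M$. Between rebuilds, $\bw^\top \bx$ is stable but $\bw(M)$ can decrease by a total of at most $|M|\cdot W \leq nW$ over the whole phase, each recomputation requiring a drop of $\Theta(\eps \bw^\top \bx)$; summed over the $\widetilde{O}(\eps^{-1})$ value-scales of $\bw^\top \bx$ this contributes $\widetilde{O}(n\eps^{-O(1)})$ additional recomputations. Multiplying the total $\widetilde{O}(n\eps^{-O(1)})$ recomputations by the $\widetilde{O}(n\eps^{-O(1)})$ cost per recomputation and amortizing over $m$ updates yields the claimed $\widetilde{O}((n^2/m)\eps^{-6})$ overhead, and adding the fractional update times from \cref{thm:decremental-fractional} gives the stated $\widetilde{O}(\eps^{-\expnear})$ and $\widehat{O}(\eps^{-\expalmost})$ bounds. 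For bipartite graphs, the odd-set step is unnecessary, and the $\widehat{O}(\eps^{-2})$ fractional update time is subsumed by $(n^2/m)\eps^{-6}$ when $m = \widetilde{\Theta}(n^2)$.

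The main obstacle is the odd-set analysis: simultaneously preserving all $n^{O(1/\eps)}$ small odd-set constraints via a single sparsifier demands a careful Chernoff bound with the $\eps^{-O(1)}$ oversampling factor in $c_\eps$, and is the reason previous work either restricted to bipartite graphs or to $O(\eps)$-restricted matchings as in \cite{AssadiBD22}. The second delicate point is proving, using entropy-regularization smoothness, that the $W$ factor can be avoided in the weighted case so that only properties (i) and (ii) are needed — this is what makes the rounding overhead independent of $W$ and reduces the algorithm to a genuinely unweighted-style sampling scheme.
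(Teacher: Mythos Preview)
Your sparsifier construction (independent sampling with Chernoff over vertices and small odd sets) matches the paper's randomized degree-sparsifier in \cref{subsec:randomized-sparsifier}, and your intuition that the entropy-regularized structure lets you avoid a $W$ factor is on target. However, two key steps are not actually carried out and, as stated, the amortization fails.

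\textbf{Amortization gap.} Your bound ``$\bw(M)$ can decrease by at most $|M|\cdot W \leq nW$'' followed by division by $\eps\,\bw^\top\bx$ and a value-scale sum does not yield $\widetilde{O}(n\eps^{-O(1)})$ recomputations: $W$ can be $\poly(n)$ while $\bw^\top\bx$ can be $\Theta(1)$, so $nW/(\eps\,\bw^\top\bx)$ is unbounded by $\poly(n/\eps)$. The paper's crucial missing ingredient is to \emph{restrict the sparsifier, and hence $M$, to the large-mass edges} $E^{(t)}_{\eps/8}(\bx) = \{e:\bx_e \geq \Omega(\eps/n)\}$. One first shows (via the dual cover, \cref{lemma:structured subgraph}) that discarding small-mass edges loses only an $O(\eps)$ fraction of $\bw^\top\bx$. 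Then, because every $e\in M$ has $\bx_e \geq \Omega(\eps/n)$, a deletion that removes weight $\bw_e$ from $M$ simultaneously removes $\Omega(\bw_e\cdot\eps/n)$ weighted mass from $\bx$. Hence a drop of $\Theta(\eps\,\bw^\top\bx)$ in $\bw(M)$ forces a drop of $\Omega(\eps^2\,\bw^\top\bx/n)$ in $\bw^\top\bx$, and since $\bx$ can lose only a $\Theta(\eps)$ fraction before a rebuild, there are $O(n/\eps)$ recomputations per rebuild and $\widetilde{O}(n\eps^{-3})$ total. This is what produces the $n^2$ (not $n\cdot nW$) in the final bound.

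\textbf{Weighted preservation gap.} The claim that ``$\bw_e\bx_e$ is balanced across incident edges, so no single heavy edge dominates'' is not the mechanism and is not by itself sufficient. What the paper actually uses (\cref{lemma:primal-dual relation,lemma:covering of the edges,lemma:weighted sparsifier}) is that the KKT dual $(\by,\bz)$ of the entropy-regularized problem satisfies $\bw_e \approx \by_u+\by_v+\sum_{B\ni u,v}\bz_B$ on the large-mass edges. This lets you rewrite $\sum_e \bw_e \bx_e^{(H)} \approx \sum_v \by_v\,\bx^{(H)}(v) + \sum_B \bz_B\,\bx^{(H)}(B)$, and now the degree-sparsifier properties (i) and (ii) \emph{directly} control each term. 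Your sketch never identifies this dual decomposition, which is the actual reason the unweighted degree-sparsifier suffices in the weighted setting.
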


We emphasize that while \cref{thm:weighted-rounding-informal} has a near-optimal dependence on $W$, it is not a generic dynamic rounding algorithm as defined in \cref{def:rounding}.
See \cref{subsec:overview:rounding} for a more detailed exposition on this.
We leave obtaining a generic, fully-dynamic weighted rounding algorithm that has polylogarithmic or even better dependence on $W$ as an important open question.
No such algorithms are known even for bipartite graphs.

\subsection{Deterministic Sparsifier}\label{subsec:deterministic-sparsifier}

To obtain a deterministic sparsifier, we employ the ``bit-by-bit'' rounding approach of \cite{BhattacharyaKSW23rounding} that iteratively sparsifies the support of $\bx$ while maintaining the degree value of each vertex.
For bipartite graphs, \cite{BhattacharyaKSW23rounding} showed that using this approach, we can directly round to an integral matching without resorting to periodic re-computation, e.g., as in \cref{lemma:rounding-via-sparsifier}.
However, it is unclear how to extend this approach to general graphs due to odd-set constraints.
Our key observation here is that, nevertheless, if we stop the bit-by-bit rounding process earlier, we still get a sparsifier on which we can then perform periodic re-computation.

\begin{lemma}
  For any $\eps > 0$,
  given a fractional matching $\bx \in \M_G$ and a vector $\widetilde{\bx} \in \R_{\geq 0}^{E}$ such that (1) $\supp(\widetilde{\bx}) \subseteq \supp(\bx)$, (2) $\bw^\top \widetilde{\bx} \geq (1-\eps/2)\bw^\top \bx$, (3) $\widetilde{\bx}(v) \leq \bx(v) + \eps/4$ for all $v \in V$, and (4) $\left|\bx_e - \widetilde{\bx}_e\right| \leq \frac{\eps^2}{144}$ 
  for all $e \in E(G)$, we can conclude that $H \defeq \supp(\widetilde{\bx})$ is an $\eps$-sparsifier of $\bx$ and it has a certificate $\bx^{(H)}$ satisfying $\bx^{(H)}_e = \Theta(\widetilde{\bx}_e)$ for all $e \in E$.
  \label{lemma:sparsifier-condition}
\end{lemma}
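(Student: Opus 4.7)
The plan is to construct the certificate $\bx^{(H)}$ as a uniform rescaling of $\widetilde{\bx}$, namely $\bx^{(H)} \defeq \widetilde{\bx}/C$ with $C \defeq (1+\eps/4)(1+3\eps)$. Since $C = 1 + O(\eps)$, this immediately yields $\bx^{(H)}_e = \Theta(\widetilde{\bx}_e)$ for every $e$, and by hypothesis (1) we have $\supp(\bx^{(H)}) = \supp(\widetilde{\bx}) = H$. The bulk of the work is to verify $\bx^{(H)} \in \M_G$, since once this holds the support condition lets us interpret $\bx^{(H)}$ as an element of $\M_H$, and integrality of the matching polytope then delivers $M^*_{\bw}(H) = \max_{\bx' \in \M_H}\bw^\top \bx' \geq \bw^\top \bx^{(H)}$.

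I would verify $\bx^{(H)} \in \M_G$ in two stages. First, I would show that $\widetilde{\bx}/(1+\eps/4) \in \M_{G,\eps}$. The degree constraint is immediate from hypothesis (3), since $\widetilde{\bx}(v) \leq \bx(v) + \eps/4 \leq 1 + \eps/4$. For an odd set $B$ with $|B| \leq 1/\eps$, summing hypothesis (4) across $E[B]$ gives
\[
\widetilde{\bx}(B) \leq \bx(B) + |E[B]| \cdot \tfrac{\eps^2}{144} \leq \lfloor |B|/2\rfloor + \tfrac{|B|(|B|-1)\eps^2}{288},
\]
and an elementary calculation shows that the perturbation $\tfrac{|B|(|B|-1)\eps^2}{288}$ is at most $(\eps/4)\lfloor |B|/2\rfloor$ whenever $|B|\eps \leq 36$, which holds since $|B|\eps \leq 1$. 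Hence $\widetilde{\bx}(B)/(1+\eps/4) \leq \lfloor |B|/2\rfloor$ as needed.

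Second, I would apply \cref{fact:scaled-down} to pass from $\M_{G,\eps}$ to $\M_G$, concluding $\bx^{(H)} = \widetilde{\bx}/((1+\eps/4)(1+3\eps)) \in \M_G$. The weight certificate then follows from hypothesis (2):
\[
\bw^\top \bx^{(H)} = \tfrac{\bw^\top \widetilde{\bx}}{C} \geq \tfrac{(1-\eps/2)\bw^\top \bx}{(1+\eps/4)(1+3\eps)} \geq (1-O(\eps))\bw^\top \bx,
\]
which together with the above yields $M^*_{\bw}(H) \geq (1-O(\eps))\bw^\top \bx$, establishing the sparsifier property after rescaling the accuracy constant (a convention used elsewhere in the paper).

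The main subtlety is designing the two-stage scaling, which exploits the particular constants in the hypotheses. The $\eps/4$ in (3) and $\eps^2/144$ in (4) are calibrated so that both per-vertex and per-small-odd-set violations fit inside a $(1+\eps/4)$-rescaled copy of $\M_{G,\eps}$. The approach cannot directly accommodate large odd sets: the per-edge perturbation $\eps^2/144$ summed over the $\Theta(|B|^2)$ edges inside a large $B$ would swamp the headroom $\lfloor|B|/2\rfloor$. These large odd sets are handled for free by \cref{fact:scaled-down}, which absorbs the inherent integrality gap of large odd sets into an additional $(1+3\eps)$ factor, making the two-stage composition precisely the right tool.
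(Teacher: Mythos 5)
Your construction follows essentially the same route as the paper's: a two-stage rescaling in which the first factor absorbs the per-vertex and per-small-odd-set perturbations and \cref{fact:scaled-down} absorbs the large odd sets. The one substantive difference is quantitative, and it means you do not quite prove the lemma as stated. You verify the odd-set constraints only for $|B| \leq 1/\eps$ and therefore must invoke \cref{fact:scaled-down} with parameter $\eps$, paying a factor $(1+3\eps)$; combined with $(1+\eps/4)$ and the hypothesis $\bw^\top\widetilde{\bx} \geq (1-\eps/2)\bw^\top\bx$, your certificate only satisfies $\bw^\top\bx^{(H)} \geq (1-\Theta(\eps))\bw^\top\bx$ with a constant near $4$, whereas the lemma (and its use in \cref{lemma:det-sparsifier}, where it is applied with accuracy $\eps/2$ to conclude an $\eps$-sparsifier) requires exactly $(1-\eps)$. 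The paper avoids this by verifying odd sets up to size $3/(\eps/4) = 12/\eps$ — which is exactly what the bound $\eps^2/144 = (\eps/4)^2/9$ in hypothesis (4) is calibrated for — so that \cref{fact:scaled-down} applies with parameter $\eps/12$ and costs only another $(1+\eps/4)$, giving $(1-\eps/2)(1-\eps/4)^2 \geq 1-\eps$. Your own computation shows the headroom extends to $|B|\eps \leq 36$, so the fix is entirely within your framework: extend the first stage to all odd sets of size at most $12/\eps$ and replace the factor $(1+3\eps)$ by $(1+\eps/4)$. As written, "rescaling the accuracy constant" is not available for free here, since the lemma's constant is consumed downstream.
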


\begin{proof}
  Let $\eps^\prime \defeq \eps/4$ and let $\bx^{\prime} \defeq \frac{\widetilde{\bx}}{1+\eps^\prime}$ whose support is $\supp(\bx^{\prime}) = \supp(\widetilde{\bx}) = H \subseteq \supp(\bx)$.
  We have $\bx^{\prime}(v) = \frac{\widetilde{\bx}(v)}{1 +\eps^\prime} \leq 1$ for each $v \in V$.
  For each odd set $B \in \O_G$ of size $|B| \leq 3/\eps^\prime$, it follows from $|\bx_e - \widetilde{\bx}_e| \leq \frac{{\eps^\prime}^2}{9}$ that $\left|\bx(B) - \widetilde{\bx}(B)\right| \leq \frac{\eps^\prime}{3} |B|$.
  As such, we have $\bx^\prime(B) = \frac{\widetilde{\bx}(B)}{1+\eps^\prime} \leq \frac{\bx(B)+\eps^\prime|B|/3}{1+\eps^\prime} \leq \lfloor |B|/2\rfloor$.
  Therefore by \cref{fact:scaled-down} we have $\bx^{(H)} \defeq \frac{\bx^{\prime}}{1+\eps^\prime}$ satisfies $\bx^{(H)} \in \M_{G}$ and this fractional matching has weight $\bw^\top \bx^{(H)} \geq (1-\eps/2)(1-\eps^\prime)^2\bw^\top \bx \geq (1-\eps)\bw^\top \bx$.
  This shows that $H$ is indeed an $\eps$-sparsifier of $\bx$.
  That $\bx^{(H)}_e = \Theta(\widetilde{\bx}_e)$ is obvious from definition of $\bx^{(H)}$.
\end{proof}

The pipage-rounding algorithm of \cite{BhattacharyaKSW23rounding} is based on the following $\texttt{degree-split}$ algorithm for dividing a graph into a collection of cycles and paths.

\begin{proposition}[{\cite[Proposition 2.4]{BhattacharyaKSW23rounding}}]
  There exists an algorithm \emph{\texttt{degree-split}}, which on multigraph $G = (V, E)$ with maximum edge multiplicity at most two (i.e., no edge has more than two copies) computes in $O(|E|)$ time two (simple) edge-sets $E^{(1)}$ and $E^{(2)}$ of two disjoint sub-graphs of $G$, such that $E^{(1)}$, $E^{(2)}$ and the degrees $d_G(v)$ and $d^{(i)}(v)$ of $v$ in G and $H^{(i)} \defeq (V, E^{(i)})$ satisfy:
  \begin{enumerate}[(1)]
    \item\label{item:edge-set-size} $\left|E^{(1)}\right| = \left\lceil\frac{|E|}{2}\right\rceil$ and $\left|E^{(2)}\right| = \left\lfloor\frac{|E|}{2}\right\rfloor$ and
    \item\label{item:degree} $d^{(i)}(v) \in \left[\frac{d_G(v)}{2} - 1, \frac{d_G(v)}{2} + 1\right]$ for all $v \in V$.
  \end{enumerate}
\end{proposition}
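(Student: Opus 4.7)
The plan is to split $E$ via an Eulerian-tour decomposition of an auxiliary graph. As a preprocessing step, I first handle the parallel edges directly: for each pair of parallel copies of an edge $\{u,v\}$ in $G$, assign one copy to $E^{(1)}$ and the other to $E^{(2)}$. This contributes equally to the degrees of $u$ and $v$ on both sides and guarantees that each resulting $E^{(i)}$ contains at most one copy of every underlying edge (so that $H^{(i)}$ is simple, as required). Let $G' = (V, E')$ be the simple residual graph; it now suffices to split $E'$ subject to the same two guarantees.

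Next, I form a multigraph $G'' = (V \cup \{v^{\star}\}, E'')$ by attaching a single edge from a dummy vertex $v^{\star}$ to every vertex of odd degree in $G'$. Since the number of odd-degree vertices is even, every vertex of $G''$ has even degree, so each connected component of $G''$ admits an Eulerian circuit which I compute with Hierholzer's algorithm in $O(|E''|) = O(|E|)$ total time. Traversing each circuit $e_1, e_2, \ldots, e_\ell$ in order, I assign $e_i$ to $E^{(1)}$ if $i$ is odd and to $E^{(2)}$ otherwise, and finally discard all edges incident to $v^{\star}$ from the output.

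For the degree bound, observe that within each circuit the edges incident to any non-dummy vertex $v$ appear at alternating positions in the traversal, so the red/blue split of $d_{G''}(v)$ is perfectly balanced except for an excess of at most $2$ at the single vertex where an odd-length circuit closes. Discarding the at most one edge from $v$ to $v^{\star}$ shifts this balance by at most one further unit, so the final degree of $v$ in $H^{(i)}$ satisfies $d^{(i)}(v) \in [d_{G'}(v)/2 - 1,\, d_{G'}(v)/2 + 1]$. Combined with the perfectly balanced contribution of the parallel-edge step, this yields the degree guarantee with respect to $d_G(v)$. The total work is $O(|E|)$.

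For the exact cardinality constraint, the alternating coloring together with the removal of $v^{\star}$-edges initially produces $\bigl||E^{(1)}| - |E^{(2)}|\bigr| = O(1)$ per Eulerian component rather than exactly $\lceil|E|/2\rceil$ and $\lfloor|E|/2\rfloor$. To fix this, I perform a final local rebalancing pass: while the imbalance exceeds one, I locate an edge on the larger side both of whose endpoints have strict slack in their $\pm 1$ degree window, and flip it to the other side. I expect the main technical obstacle to be precisely this step: proving that such a slack-preserving flip (or a short augmenting path of flips) always exists whenever the imbalance is at least two, and organizing the search so that the entire rebalancing runs in $O(|E|)$ time. A careful choice that makes $v^{\star}$ the starting vertex of each Eulerian circuit--so that the discarded $v^{\star}$-edges split as evenly as possible between red and blue--should keep the initial imbalance bounded by the number of components and simplify the adjustment considerably.
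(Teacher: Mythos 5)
This proposition is not proved in the paper at all: it is imported verbatim as \cite[Proposition 2.4]{BhattacharyaKSW23rounding}, so there is no in-paper argument to compare against. Your Euler-tour strategy (pair off parallel copies, attach a dummy vertex $v^{\star}$ to the odd-degree vertices, alternately two-color each Euler circuit, discard the dummy edges) is the standard route to such degree-splitting statements, and the degree guarantee is essentially within reach. But as written it has two problems. First, the degree arithmetic does not close in the corner case you gloss over: a vertex that is both the closing vertex of an odd-length circuit (excess $2$ in the split of $d_{G''}(v)$) \emph{and} loses a dummy edge can end up at $d_{G'}(v)/2 + 3/2$, outside the required window. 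This is fixable — if every circuit containing $v^{\star}$ is rooted at $v^{\star}$, the closing-defect and the dummy-removal never hit the same real vertex — but you invoke the ``start at $v^{\star}$'' choice only for the cardinality bookkeeping, not where the degree bound actually needs it.

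Second, and more seriously, the exact cardinality constraint $|E^{(1)}| = \lceil |E|/2\rceil$, $|E^{(2)}| = \lfloor |E|/2\rfloor$ is left unproved, and you say so yourself. This is not a cosmetic requirement: the paper uses Property (1) quantitatively in the recursion $|F_i| \le \lceil(|S_{i+1}|+|F_{i+1}|)/2\rceil$ that bounds $|F_i|$ in the proof of \cref{lemma:update-time}, so a split that is only balanced up to an additive constant per component would not suffice without reworking that analysis. Your proposed repair — repeatedly flipping an edge of the larger side whose two endpoints both have slack in their $\pm 1$ windows — is exactly the step that needs a proof: a single slack-preserving edge need not exist (the surplus vertices need not span an edge of $E^{(1)}$), so one generally needs an alternating-path or component-recoloring argument, and the $O(|E|)$ running time of the search is also unjustified. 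A cleaner closure is available within your own framework: concatenate all Euler circuits into one sequence and alternate colors globally (this balances $|E''|$ exactly and does not disturb the per-vertex analysis), root $v^{\star}$'s circuit at $v^{\star}$ so that its dummy edges occur in consecutive, oppositely-colored pairs up to a single boundary defect, and then absorb the $O(1)$ residual imbalance by swapping the two colors inside a suitably chosen component. Until some such argument is supplied, the proof is incomplete.
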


The rounding algorithm of \cite{BhattacharyaKSW23rounding} works in a bit-by-bit fashion, and thus for $\bx \in [0, 1]^{E}$ we encode each entry $\bx_e$ as a binary string $\bx_e = \sum_{i}(\bx_e)_i \cdot 2^{-i}$.
The following observation suggests that it is without loss of optimality to focus only on the first $\widetilde{O}(\eps^{-1})$ bits of this encoding.

\begin{observation}[similar to {\cite[Observation 2.2]{BhattacharyaKSW23rounding}}]
  For an $\bx \in \M_G$ with $\bw^\top \bx \geq 1$ and accuracy parameter $\eps > 0$, the matching $\bx^\prime$ obtained from $\bx$ by zeroing out edges $e$ with $\bx_e < \frac{\eps}{2n^2W}$ and setting $(\bx^\prime_e)_i = 0$ for all $e \in E$ and $i > L$ for $L \defeq 1 + \left\lceil \log\frac{2n^2W}{\eps}\right\rceil$ satisfies $\bw^\top \bx^\prime \geq (1-\eps)\bw^\top \bx$.
  \label{obs:few-bits}
\end{observation}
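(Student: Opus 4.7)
The plan is to decompose the modification from $\bx$ to $\bx'$ into two steps and bound the weight loss incurred by each. Note throughout that $\bx' \leq \bx$ entrywise, so $\bx' \in \M_G$ automatically by downward closedness of the matching polytope.

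First, I would bound the loss from zeroing out small entries. Let $E_{\text{small}} \defeq \{e \in E : \bx_e < \eps/(2n^2 W)\}$. The total weight removed in this step is at most
\[
\sum_{e \in E_{\text{small}}} \bw_e \bx_e \;\leq\; |E| \cdot W \cdot \frac{\eps}{2n^2 W} \;\leq\; \frac{\eps}{2},
\]
using $|E| \leq n^2$ and that each $\bw_e \leq W$. Since $\bw^\top \bx \geq 1$ by hypothesis, this contributes a multiplicative loss of at most $\eps/2$.

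Second, I would bound the loss from truncating the binary expansion at level $L$. For each remaining edge $e$, the truncation changes $\bx_e$ by at most $2^{-L} \leq \eps/(4n^2 W)$. Summing over all edges, the total weighted loss is at most
\[
\sum_{e \in E} \bw_e \cdot 2^{-L} \;\leq\; |E| \cdot W \cdot \frac{\eps}{4n^2 W} \;\leq\; \frac{\eps}{4},
\]
which, again using $\bw^\top \bx \geq 1$, is at most an $\eps/4$ multiplicative loss. Combining the two steps gives $\bw^\top \bx' \geq \bw^\top \bx - 3\eps/4 \geq (1-\eps)\bw^\top \bx$, as desired.

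There is no real obstacle here; the argument is a direct two-term error estimate, and the choice of the threshold $\eps/(2n^2W)$ and the truncation level $L$ is precisely tuned so that each of the two sources of error is absorbed into a constant fraction of $\eps \cdot \bw^\top \bx$. The only minor care needed is to ensure $|E| \leq n^2$ (true for simple graphs, which is the standing assumption in this paper) and to verify that truncation still leaves a vector in $[0,1]^E$, which holds since $\bx_e \leq 1$ and truncation only decreases entries.
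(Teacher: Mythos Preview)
Your proof is correct and follows essentially the same approach as the paper's: both decompose the loss into the zeroing-out step and the bit-truncation step, bound each by a small fraction of $\eps$ via the crude estimate $|E|\cdot W$ on total edge weight, and then use $\bw^\top \bx \geq 1$ to convert the additive bound into a multiplicative one. The only cosmetic difference is that the paper uses $\binom{n}{2}$ where you use $n^2$, yielding slightly different constants in the intermediate bounds.
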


\begin{proof}
  Direct calculation shows
  \[
    \bw^\top \bx^\prime \geq \bw^\top \bx - W \cdot \left(\binom{n}{2} \cdot \frac{\eps}{2n^2 W} + \sum_{e \in E}\sum_{i > L}2^{-i}\right) \geq \bw^\top \bx - \eps \geq (1-\eps)\bw^\top \bx.
  \]
\end{proof}

\begin{algorithm2e}[!ht]
  \caption{\textsc{DetSparsifier}} \label{alg:det-rounding}
  
  \SetEndCharOfAlgoLine{}
  \SetKwInput{KwData}{Input}
  \SetKwInput{KwResult}{Output}
  \SetKwInOut{State}{global}
  \SetKwProg{KwProc}{function}{}{}
  \SetKwFunction{Initialize}{Initialize}
  \SetKwFunction{Update}{Update}
  \SetKwFunction{Rebuild}{Rebuild}
  \SetKwFunction{Output}{Output}

  \State{edge weights $\bw \in \N^E$ and accuracy parameter $\eps \in (0, 1)$.}
  \State{fractional matching $\bx \in \M_G$.}
  \State{maximum and minimum level $L_{\max}, L_{\min} \in \N$.}
  \State{edge sets $E_i \subseteq \supp_{i}(\bx)$ and $F_i \subseteq E_i$ for $i \in \{L_{\min} + 1, \ldots, L_{\max}\}$.}
  \State{counters $c_i \in \N$ for $i \in \{L_{\min} + 1, \ldots, L_{\max}\}$.}
  
  \vspace{0.4em}

  \KwProc{\Initialize{$G=(V, E, \bw), \bx \in \M_G\;\text{such that}\;(\bx_e)_i = 0\;\text{for all $i > L$}, \eps \in (0, 1)$}} {
    Save $\bw$, $\eps$, and $\bx$ as global variables.\;
    Set $L_{\min} \gets \left\lceil \log\frac{2000L}{\eps^2} \right\rceil$ and $L_{\max} \gets L$.\;
    Call $\texttt{Rebuild}(L_{\max})$.\;
  }

  \vspace{0.4em}

  \KwProc{\Rebuild{$i$}} {
    \lIf{$i \leq L_{\min}$} {
      \textbf{return}.
    }

    $E_i \gets \supp_i(\bx)$ and $c_i \gets 0$.\;
    Let $(E^{(1)}, E^{(2)}) \gets \texttt{degree-split}(G[E_i \cup F_i])$.\;
    Set $F_{i-1}$ to be the $E^{(b)}$ for $b \in \{1, 2\}$ with larger weight, i.e., the one that satisfies $\bw(E^{(b)}) \geq \bw(E^{(3-b)})$.\;

    Call $\texttt{Rebuild(}i-1\texttt{)}$.\;
  }

  \vspace{0.4em}

  \KwProc{\Update{$e, \nu$ such that $(\nu)_i = 0$ for all $i > L_{\max}$}} {
    $\bx_e \gets \nu$.\;
    \For{$i = L_{\max}, L_{\max} - 1, \ldots, L_{\min} + 1$} {
      \lIf{$e \in E_i$} {
        remove $e$ from $E_i$.
      }
      \lIf{$e \in F_{i-1}$} {
        remove $e$ from $F_{i-1}$.
      }
      \lElse {
        remove one edge adjacent to each endpoint of $e$ from $F_{i-1}$ (if there is one).
      }
      $c_i \gets c_{i} + 1$.\;
      \lIf{$c_i > 2^{i-2} \cdot \frac{\eps \cdot \bw^\top \bx}{L_{\max} \cdot W}$} {
        call $\texttt{Rebuild}(i)$ and \textbf{return}.
      }
    }
  }

  \KwProc{\Output{}} {
    \textbf{return} $H \defeq \supp_{0}(\bx) \cup \cdots \cup \supp_{L_{\min}}(\bx) \cup F_{L_{\min}}$ as the sparsifier.\;
  }
\end{algorithm2e}

Our algorithm for deterministically maintaining a dynamic sparsifier is \cref{alg:det-rounding}, which is almost identical to \cite[Algorithm 2]{BhattacharyaKSW23rounding} except that we only run it until level $L_{\min}$ instead of $0$ (modulo rather straightforward changes needed for generalization to the weighted case).
Stopping the algorithm early disallows us from having the same ``straight-to-integral'' property as \cite[Algorithm 2]{BhattacharyaKSW23rounding} does but enables rounding in general graphs.

For simplicity of analysis, we depart from \cite{BhattacharyaKSW23rounding} in the implementation of $\texttt{Rebuild(}i\texttt{)}$ in that we are using a tail-recursion style implementation while they use a loop from $j = i$ to $L_{\min} + 1$.
These two implementations are equivalent, and we choose the current one to emphasize that a call to $\texttt{Rebuild(}i\texttt{)}$ causes a $\texttt{Rebuild(}i-1\texttt{)}$.

\paragraph{Notation from \cite{BhattacharyaKSW23rounding}.}
For $\bx \in \M_G$, let $\supp_{i}(\bx)$ be the set of edges $e$ whose $\bx_e$ has its $i$-th bit set to one, i.e., $\supp_i(\bx) \defeq \{e \in E: (\bx_e)_i = 1\}$.
Let $F_i(v)$ and $S_i(v)$ be the number of edges in $F_i$ and $\supp_i(\bx)$ incident to $v$ respectively.
Let
$F_i(e) \defeq \llbracket e \in F_i \rrbracket$ and $E_i(e) \defeq \llbracket e \in E_i \rrbracket$.
Let $\bx^{(i)} \in \R_{\geq 0}^{E}$ be given by
\begin{equation}
  \bx_e^{(i)} \defeq F_i(e) \cdot 2^{-i} + \sum_{j=0}^{L_{\min}}S_j(e) \cdot 2^{-j} + \sum_{j=L_{\min}+1}^{i}E_j(e) \cdot 2^{-j}.
\end{equation}
for $L_{\min} \leq i \leq L_{\max}$.

\begin{lemma}[similar to {\cite[Lemma 3.8]{BhattacharyaKSW23rounding}}]
  If $\bx \in \M_G$ holds for each update, then $\bx^{(i)}(v) \leq \bx(v) + \eps/2$ for all $L_{\min} \leq i \leq L_{\max}$ holds after each update.
  \label{lemma:degree}
\end{lemma}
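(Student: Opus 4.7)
The plan is to mirror the proof of \cite[Lemma 3.8]{BhattacharyaKSW23rounding}, adjusting for the early termination at level $L_{\min}$. I would argue in two stages: first bound $\bx^{(i)}(v) - \bx(v)$ immediately after a call to $\texttt{Rebuild}(i)$, then bound the additional drift accrued from \texttt{Update} calls between rebuilds.

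For the post-rebuild bound, observe that right after $\texttt{Rebuild}(i)$ (which cascades down to $\texttt{Rebuild}(L_{\min}+1)$), the invariants $E_j = \supp_j(\bx)$ hold for every $L_{\min} < j \leq i$, so $E_j(v) = S_j(v)$, and each $F_{j-1}$ was produced by a \texttt{degree-split} of $E_j \cup F_j$. Property~\labelcref{item:degree} of \texttt{degree-split} gives $F_{j-1}(v) \leq (E_j(v) + F_j(v))/2 + 1$. Unrolling this recurrence from $j = L_{\max}$ downward (using that $F_{L_{\max}} = \emptyset$ at initialization, and inductively handling the state of higher levels that were not refreshed) yields
\[ F_i(v) \cdot 2^{-i} \leq \sum_{j=i+1}^{L_{\max}} S_j(v) \cdot 2^{-j} + O(2^{-L_{\min}}), \]
from which $\bx^{(i)}(v) \leq \bx(v) + O(2^{-L_{\min}}) \ll \eps/2$ by the choice $L_{\min} = \lceil \log(2000 L / \eps^2)\rceil$.

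For the drift between rebuilds, each \texttt{Update} only removes edges from $E_j$ and $F_j$, so their vertex degrees can only decrease; moreover, for bits $j \leq L_{\min}$ the $S_j$ contributions appear identically in both $\bx^{(i)}(v)$ and $\bx(v)$ and cancel in the difference. At each higher level $L_{\min} < j \leq i$, the difference grows only when a bit flip $(\bx_e)_j : 1 \to 0$ occurs at an edge $e \ni v$ with $e \notin E_j$, contributing at most $2^{-j}$ per such event (all other cases either cancel or decrease the difference). The counter threshold $c_j \leq 2^{j-2} \cdot \eps \bw^\top \bx / (L_{\max} W)$ limits the global number of such events, so the per-level contribution is at most $c_j \cdot 2^{-j} = \Theta(\eps \bw^\top \bx / (L_{\max} W))$, and summed over the $L_{\max}$ active levels gives total drift $O(\eps \bw^\top \bx / W)$.

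The main obstacle is the per-vertex accounting: the counter $c_j$ is global, so converting it into a meaningful per-vertex bound requires a charging argument that exploits $\bx(v) \leq 1$ and the binary encoding of $\bx_e$ to limit the number of level-$j$ flips attributable to vertex $v$ between rebuilds (intuitively, each $1$-bit at level $j$ on an edge incident to $v$ occupies at least $2^{-j}$ of $v$'s capacity, so at most $O(2^j)$ such bits coexist at any time). Combined with the $O(2^{-L_{\min}})$ slack from the post-rebuild bound and an induction on $i$ from $L_{\max}$ down to $L_{\min}$ to handle the nested rebuild structure, this yields $\bx^{(i)}(v) \leq \bx(v) + \eps/2$ at all times.
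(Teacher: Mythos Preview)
Your drift analysis has a gap. You track the evolution of $\bx^{(i)}(v) - \bx(v)$ but only account for levels $L_{\min} < j \leq i$, not for $j > i$: the term $-\sum_{j>i} S_j(v)\cdot 2^{-j}$ sits in the difference, and when an update flips bit $j$ of some $e \ni v$ from $1$ to $0$ for $j > i$, this term increases while your ``$F_j$ degrees only decrease'' observation does not by itself compensate. Separately, even for the levels you do treat, the counter $c_j$ is global, and your proposed charging via $\bx(v) \leq 1$ bounds only the number of $1$-bits that \emph{coexist} at any instant, not the total number of $1 \to 0$ flips across a phase---a single vertex can absorb all $c_j$ updates, so no per-vertex bound follows.

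The paper sidesteps both issues by maintaining the single invariant
\[
F_i(v) \;\leq\; 2^i \sum_{j>i} S_j(v)\cdot 2^{-j} \;+\; (L_{\max} - i)
\]
at all times, proved by backward induction on $i$. The key point is that \texttt{Update} removes one edge adjacent to \emph{each endpoint} of $e$ from $F_i$: whenever an edge $e \ni v$ is updated, $F_i(v)$ drops by $1$ (or was already $0$), while the right-hand side can change by strictly less than $1$ in absolute value (a single edge contributes at most $2^i\sum_{j>i}2^{-j} < 1$). Hence the inequality is preserved update by update, with no appeal to counters. Combined with $E_j \subseteq S_j$ at all times (so the remaining terms in $\bx^{(i)}(v)-\bx(v)$ are nonpositive), this yields $\bx^{(i)}(v) - \bx(v) \leq (L_{\max}-i)\cdot 2^{-i} \leq L_{\max}\cdot 2^{-L_{\min}} \leq \eps/2$.
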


\begin{proof}
  We adopt a similar proof strategy except that the $\texttt{degree-split}$ subroutine can only give us a weaker guarantee in non-bipartite graphs.
  By backward induction from $i = L_{\max}$ to $L_{\min} - 1$ we show that
  \begin{equation}
    F_i(v) \leq \left(2^i \cdot \sum_{j=i+1}^{L_{\max}}S_j(v) \cdot 2^{-j}\right) + L_{\max} - i.
    \label{eq:induction}
  \end{equation}
  The base case when $i = L_{\max}$ is trivial as the left-hand side is $0$ and the right-hand side is non-negative.
  Now consider an $i < L_{\max}$ and suppose first that a $\texttt{Rebuild}(i+1)$ happened right before the current moment.
  Then $E_{i+1}$ is set to $S_{i+1}$ and we have
  \begin{align*}
    F_i(v)
    \stackrel{(i)}{\leq} \frac{S_{i+1}(v) + F_{i+1}(v)}{2} + 1
    &\stackrel{(ii)}{\leq} \frac{1}{2}\left(S_{i+1}(v) + \left(2^{i+1} \cdot \sum_{j=i+2}^{L_{\max}}S_j(v) \cdot 2^{-j}\right)\right)  + L_{\max} - (i+1)+1\\
    &\leq \left(2^i \cdot \sum_{j=i+1}^{L_{\max}}S_j(v) \cdot 2^{-j}\right) + L_{\max} - i,
  \end{align*}
  where (i) is by Property~\labelcref{item:degree} and (ii) is by the inductive hypothesis.
  On the other hand, if the $\texttt{Update}(e, \nu)$ operation did not cause a $\texttt{Rebuild}(i+1)$, then the right-hand side might decrease by at most one as an edge incident to $v$ is removed.
  If $e \in F_{i}$, then the left-hand side was also decreased by one so the inequality holds.
  If $e \not\in F_{i}$, then we still decrease the left-hand side by one via removing an edge in $F_i$ adjacent to $v$ unless it is empty, for which the left-hand side was already $0$ so the inequality holds again.
  Observe that by definition
  \[
  \bx(v) = \sum_{j=0}^{L_{\max}}S_j(v) \cdot 2^{-j},
  \]
  and therefore
  \begin{align*}
    \bx^{(i)}(v) - \bx(v) &= F_{i}(v) \cdot 2^{-i} - \left(\sum_{j=L_{\min} + 1}^{L_{\max}}S_j(v)\cdot 2^{-j} - \sum_{j=L_{\min} + 1}^{i}E_j(v) \cdot 2^{-j}\right) \\
    &\stackrel{(i)}{\leq} L_{\max} \cdot 2^{-i} - \left(\sum_{j=L_{\min} + 1}^{i}S_j(v)\cdot 2^{-j} - \sum_{j=L_{\min} + 1}^{i}E_j(v) \cdot 2^{-j}\right) \\
    &\leq \eps/2,
  \end{align*}
  where (i) uses \eqref{eq:induction} and that $S_j(v) \geq E_j(v)$ for all $j$ since $E_j \subseteq \supp_j(\bx)$
  and (ii) is by our choice of $L_{\min}$ and $L_{\max}$.
\end{proof}

Let $\widetilde{\bx} \defeq \bx^{(L_{\min})}$.
In particular \cref{lemma:degree} shows that $\widetilde{\bx}(v) \leq \bx(v) + \eps/2$ for all $v \in V$.
Also observe that $\widetilde{\bx}$ is a vector supported on $H$, the sparsifier outputted by \cref{alg:det-rounding}.
The following helper lemmas analogous to ones in \cite{BhattacharyaKSW23rounding} can be shown, and to avoid repeating their arguments we defer the formal proofs to \cref{appendix:rounding-proofs}.

\begin{restatable}[analogous to {\cite[Lemma 3.10]{BhattacharyaKSW23rounding}}]{lemma}{Support}
  $\supp(\bx^{(i)}) \subseteq \supp(\bx)$ holds for each $L_{\min} \leq i \leq L_{\max}$ after each operation.
  \label{lemma:support}
\end{restatable}

\begin{restatable}[analogous to {\cite[Lemma 3.11]{BhattacharyaKSW23rounding}}]{lemma}{SizeOfMatching}
  $\bw^\top \widetilde{\bx} \geq (1-\eps)\bw^\top \bx$ holds after each operation.\footnote{Unlike \cite{BhattacharyaKSW23rounding}, we  assume the input matching $\bx$ is already preprocessed by \Cref{obs:few-bits}, and thus the bound stated here is $(1-\eps)$ instead of $(1-2\eps)$.}
  \label{lemma:size}
\end{restatable}

\begin{restatable}[analogous to {\cite[Lemma 3.13]{BhattacharyaKSW23rounding}}]{lemma}{UpdateTime}
  The amortized time per \emph{$\texttt{Update}(e, \nu)$} of \cref{alg:det-rounding} is $O(W \cdot \eps^{-1} \cdot L_{\max}^2)$. %
  \label{lemma:update-time}
\end{restatable}

Using these helper lemmas we show that \cref{alg:det-rounding} maintains a deterministic sparsifier.

\begin{lemma}
  There is a deterministic $(\widetilde{O}(\eps^{-2}), \widetilde{O}(m), \widetilde{O}(W \eps^{-1}), O(1))$-sparsifier-maintainer.
  \label{lemma:det-sparsifier}
\end{lemma}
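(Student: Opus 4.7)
The plan is to instantiate \cref{alg:det-rounding} with accuracy parameter $\eps/2$ and invoke \cref{lemma:sparsifier-condition} at the target accuracy $\eps$. Setting $\widetilde{\bx} \defeq \bx^{(L_{\min})}$, I would first verify the four hypotheses of \cref{lemma:sparsifier-condition}: condition (1) that $\supp(\widetilde{\bx}) \subseteq \supp(\bx)$ is exactly \cref{lemma:support} at level $L_{\min}$; conditions (2) and (3) on the weight $\bw^\top \widetilde{\bx} \geq (1-\eps/2)\bw^\top \bx$ and the per-vertex slack $\widetilde{\bx}(v) \leq \bx(v) + \eps/4$ are \cref{lemma:size} and \cref{lemma:degree} applied at the algorithm accuracy $\eps/2$; and condition (4) follows from the direct estimate
\[
|\bx_e - \widetilde{\bx}_e| = \left| F_{L_{\min}}(e)\cdot 2^{-L_{\min}} - \sum_{j=L_{\min}+1}^{L_{\max}}(\bx_e)_j\cdot 2^{-j}\right| \leq 2\cdot 2^{-L_{\min}},
\]
which is bounded by $\eps^2/144$ thanks to the choice $L_{\min} = \lceil\log(8000 L/\eps^2)\rceil$ that \cref{alg:det-rounding} picks when run with accuracy $\eps/2$. \cref{lemma:sparsifier-condition} then directly certifies that the maintained set $H$ is an $\eps$-sparsifier of $\bx$, so no periodic recomputation is needed for correctness.

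Next I would establish the sparsity $|H| = \widetilde{O}(\eps^{-2})\cdot \norm{\bx}_1$ by splitting $H$ into $\bigcup_{j\leq L_{\min}}\supp_j(\bx)$ and $F_{L_{\min}}$. The first part is bounded by $\sum_{j=0}^{L_{\min}}|\supp_j(\bx)| \leq 2^{L_{\min}+1}\norm{\bx}_1$, using that any edge $e$ with $(\bx_e)_j = 1$ contributes at least $2^{-j}$ to $\norm{\bx}_1$ and hence $|\supp_j(\bx)| \leq 2^j \norm{\bx}_1$. For the second part I would unroll the $\texttt{degree-split}$ recursion $|F_{i-1}| \leq \lceil(|E_i|+|F_i|)/2\rceil$ from $L_{\max}$ downward and apply the same bound to get $|F_{L_{\min}}| = O(L_{\max}\cdot 2^{L_{\min}}\norm{\bx}_1)$. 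Combining with $2^{L_{\min}} = \widetilde{O}(\eps^{-2})$ yields $s = \widetilde{O}(\eps^{-2})$.

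The remaining parameters are immediate. Initialization preprocesses $\bx$ via \cref{obs:few-bits} in $O(m)$ time and then runs the cascaded Rebuild($L_{\max}$), whose total work is $O\bigl(\sum_i (|E_i|+|F_i|)\bigr) = \widetilde{O}(m)$ since each $|E_i|, |F_i| \leq O(m)$ over only $\widetilde{O}(1)$ levels; the amortized update time is exactly the content of \cref{lemma:update-time}; and the output time is $O(1)$ per reported edge because $H$ is maintained explicitly as the union of the bit-indexed sets $\supp_j(\bx)$ and the set $F_{L_{\min}}$, with at most $O(L_{\max})$ bookkeeping per update (subsumed by the update-time bound). The only non-routine step is condition (4) together with the $|F_{L_{\min}}|$ recursion unrolling; both reduce to short algebra using the definitions of $L_{\min}$ and the $\texttt{degree-split}$ guarantees.
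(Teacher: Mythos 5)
Your proposal is correct and follows essentially the same route as the paper's proof: run \cref{alg:det-rounding} at accuracy $\eps/2$ after the \cref{obs:few-bits} preprocessing, verify the four hypotheses of \cref{lemma:sparsifier-condition} via \cref{lemma:support,lemma:size,lemma:degree} together with the $2^{-L_{\min}+1}$ truncation bound, and read the update time off \cref{lemma:update-time}. The only (harmless) divergence is the sparsity bound: the paper argues $|H| = \widetilde{O}(\eps^{-2})\norm{\bx}_1$ via the certificate's entrywise lower bound $\bx^{(H)}_e \geq \Omega(2^{-L_{\min}})$, whereas you count $\bigcup_{j \leq L_{\min}}\supp_j(\bx)$ and unroll the $\texttt{degree-split}$ recursion for $|F_{L_{\min}}|$; both yield the same conclusion.
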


\begin{proof}
  By preprocessing the input fractional matching $\bx$ to a $\bx^\prime$ with large-enough entries and low bit-complexity as in \cref{obs:few-bits} with accuracy parameter $\eps/2$, we may run \cref{alg:det-rounding} on $\bx^\prime$ with accuracy parameter $\eps / 2$ and $L \defeq 1 + \left\lceil\log\frac{4n^2W}{\eps}\right\rceil$.
  The initialization, update, and output time of \cref{alg:det-rounding} follow from definition and \cref{lemma:update-time}.
  Observe that since the rounding algorithm is only run until level $L_{\min}$, we have $|\bx^\prime_e - \widetilde{\bx}_e| \leq 2^{-L_{\min} + 1} \leq \frac{(\eps/2)^2}{144}$ for all $e \in E$.
  By \cref{lemma:support} we also have $\supp(\widetilde{\bx}) \subseteq \supp(\bx^\prime) \subseteq \supp(\bx)$.
  Therefore, together with \cref{lemma:degree,lemma:size}, \cref{lemma:sparsifier-condition} applies and the $H = \supp\left(\widetilde{\bx}\right)$ returned by \cref{alg:det-rounding} is an $\eps/2$-sparsifier of $\bx^\prime$ with certificate $\bx^{(H)}$, hence an $\eps$-sparsifier of $\bx$.
  Finally, as $\bx^{(H)}_e = \Theta(\widetilde{\bx}_e) \geq \Omega(2^{-L_{\min}}) = \widetilde{\Omega}(\eps^2)$, $H$ is $\widetilde{O}(\eps^{-2})$-sparse.
\end{proof}

\subsection{Randomized Degree Sparsifier}\label{subsec:randomized-sparsifier}

Toward getting the near-optimal weighted rounding algorithm for decremental dense graphs, in this section we consider the following notion of degree sparsifiers.
We remark that Property \labelcref{item:sparsifier-matching-size} below in fact is not used later in \cref{subsec:weighted rounding}.
Nevertheless, we choose to include it to make degree-sparsifier a stronger notion than the sparsifier in previous sections on unweighted graphs.\footnote{One can fairly directly extend Property \labelcref{item:sparsifier-matching-size} to be weighted, i.e., $\bw^\top \bx^{(H)} \geq (1-\eps)\bw^\top \bx$, at the cost of an $O(W)$ blow-up in the sparsity.}
This, as we will demonstrate later, also shows that the simple and standard sampling approach previously studied for bipartite graphs can also be used for rounding in general graphs.

\begin{definition}[Degree Sparsifiers]
  For a fractional matching $\bx \in \M_G$, a subgraph $H \subseteq \supp(\bx)$ is an \emph{$s$-sparse $\eps$-degree-sparsifier} for $s \defeq s(n, m, \eps)$ of $G$ if $|H| \leq s \cdot \norm{\bx}_1$ and there exists a fractional matching $\bx^{(H)} \in \M_G$ supported on $H$, which is similarly called a \emph{certificate} of $H$, such that
  \begin{enumerate}[(1)]
    \item\label{item:sparsifier-matching-size} $\norm*{\bx^{(H)}}_1 \geq (1-\eps)\norm*{\bx}_1$.
    \item\label{item:sparsifier-vertex} $\bx^{(H)}(v) \geq \bx(v) - \eps$ for all $v \in V$.
    \item\label{item:sparsifier-oddset} $\bx^{(H)}(B) \geq \bx(B) - \frac{\eps}{3} \cdot |B| \geq \bx(B) - \eps \cdot \left\lfloor |B|/2\right\rfloor$ for all odd sets $B \in \O_G$.
  \end{enumerate}
  \label{def:sparsifier}
\end{definition}

Given $\bx \in \M_G$, our randomized degree sparsifier $H$ is obtained from a simple sampling scheme that includes each edge $e$ into $H$ independently with probability
\begin{equation}
  \bp_e \defeq \min\left(1, \frac{\bx_e}{\gamma}\right) \quad \text{where}\;\gamma \defeq \frac{\eps^2}{320 \cdot d \cdot \ln{n}},
  \label{eq:p_e}
\end{equation}
where $d \defeq \max\{c, 1\}$ for $c > 0$ a given constant to \cref{lemma:sampling} below.

\begin{lemma}[Degree Sparsification]
  For any fractional matching $\bx \in \mathcal{M}_G$ with $\norm*{\bx}_1 \geq 1$ and constant $c > 0$, the random subgraph $H$ obtained by sampling each edge $e$ independently with probability $\bp_e$ defined in \eqref{eq:p_e}, is an $\widetilde{O}(\eps^{-2})$-sparse $\eps$-degree-sparsifier with probability at least $1 - n^{-c}$.
  \label{lemma:sampling}
\end{lemma}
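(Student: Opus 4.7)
The plan is to construct an unbiased estimator $\tilde{\bx}$ of $\bx$ supported on $H$, show via Bernstein's inequality that $\tilde{\bx}$ concentrates around $\bx$ simultaneously on every vertex and every odd set, and then take $\bx^{(H)} \defeq \tilde{\bx}/(1+\eps/3)$ as the certificate. Concretely, for each edge $e$ with $\bp_e < 1$ (equivalently $\bx_e < \gamma$), I would set $\tilde{\bx}_e \defeq \gamma \cdot \llbracket e \in H \rrbracket$, and for each $e$ with $\bp_e = 1$, set $\tilde{\bx}_e \defeq \bx_e$. In either case $\mathbb{E}[\tilde{\bx}_e] = \bx_e$ and $\tilde{\bx}$ is supported on $H$. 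The sparsity bound $|H| \leq \widetilde{O}(\eps^{-2}) \norm{\bx}_1$ follows from a standard Chernoff bound applied to $|H| = \sum_e \llbracket e \in H \rrbracket$, whose mean is at most $\norm{\bx}_1/\gamma = \widetilde{O}(\eps^{-2})\norm{\bx}_1$.

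The heart of the proof is to establish, with probability at least $1-n^{-c}$, that $|\tilde{\bx}(v) - \bx(v)| \leq \eps/10$ for every vertex $v \in V$, that $|\tilde{\bx}(B) - \bx(B)| \leq \eps|B|/10$ for every odd set $B \in \O_G$, and that $\norm{\tilde{\bx}}_1 \geq (1-\eps/10)\norm{\bx}_1$. For each such quantity, the deviation $\tilde{\bx}(\cdot) - \bx(\cdot)$ is a sum of independent zero-mean summands bounded in magnitude by $\gamma$ with total variance at most $\gamma \bx(\cdot)$, so Bernstein's inequality yields failure probabilities of $n^{-\Omega(d)}$ for a vertex and $n^{-\Omega(d|B|)}$ for an odd set of size $|B|$. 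The choice $\gamma = \Theta(\eps^2/(d \log n))$ with $d = \max(c,1)$ is tuned precisely so that these exponents dominate the corresponding union bound factors---$n$ vertices and at most $\binom{n}{|B|} \leq n^{|B|}$ odd sets of each size $|B| \leq n$.

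Given these concentration bounds, feasibility $\bx^{(H)} \in \M_G$ with $\bx^{(H)} = \tilde{\bx}/(1+\eps/3)$ follows directly: $\tilde{\bx}(v) \leq 1 + \eps/10 \leq 1+\eps/3$, and for any odd $|B| \geq 3$ we have $\tilde{\bx}(B) \leq \lfloor |B|/2 \rfloor + \eps|B|/10 \leq (1+\eps/3)\lfloor|B|/2\rfloor$, using the elementary fact $\lfloor|B|/2\rfloor \geq |B|/3$. The three degree-sparsifier properties reduce to algebraic rearrangements of the concentration bounds; the only slightly non-obvious check is Property~\labelcref{item:sparsifier-oddset}, for which the required inequality $(\bx(B) - \eps|B|/10)/(1+\eps/3) \geq \bx(B) - \eps|B|/3$ rearranges to $\bx(B) \leq 7|B|/10 + \eps|B|/3$, which is implied by $\bx(B) \leq \lfloor |B|/2 \rfloor \leq |B|/2$.

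The main obstacle will be the union bound over the potentially exponentially many odd sets in $\O_G$; this is the reason the sampling probability $\bp_e$ must be set as small as $\Theta(\bx_e \log n /\eps^2)$ (resulting in a sparsity blowup of $\widetilde{O}(\eps^{-2})$), so that Bernstein's bound decays like $n^{-\Omega(d|B|)}$ per odd set of size $|B|$---exactly strong enough to beat the $n^{|B|}$ union bound factor once $d = \max(c,1)$ is chosen to match the target failure probability $n^{-c}$.
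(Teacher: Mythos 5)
Your proposal is correct and follows essentially the same route as the paper: the same unbiased estimator (mass $\gamma$ on sampled light edges, $\bx_e$ on heavy ones), the same Bernstein/Chernoff-type concentration on vertex degrees, odd-set masses, and $\norm{\bx}_1$, the same union bound beating the $n^{|B|}$ count of odd sets, and the same rescaled certificate $\tilde{\bx}/(1+O(\eps))$. The only (immaterial) differences are your specific deviation constants and that you bound $|H|$ by a direct Chernoff bound on $\sum_e \llbracket e \in H\rrbracket$, whereas the paper deduces sparsity from $\bx^{(H)}_e \geq \widetilde{\Omega}(\eps^2)$ together with $\norm{\bx^{(H)}}_1 \leq O(\norm{\bx}_1)$.
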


To prove \cref{lemma:sampling}, we will use the following version of the Chernoff bound to show concentration on all the constraints.

\begin{proposition}[Chernoff Bound]
  Given $m$ independent random variables $X_1, \ldots, X_m$ in $[0, b]$, for $X = \sum_{i = 1}^{m}X_i$ and any $\delta > 0$ it holds that
  \[
    \Pr\left(\left|X - \mathbb{E}[X]\right| > \delta \cdot \mathbb{E}[X]\right) \leq 2\exp\left(\frac{-\delta^2 \cdot \mathbb{E}[X]}{(2 + \delta)b}\right)
    \,.
  \]
  \label{prpposition:chernoff}
\end{proposition}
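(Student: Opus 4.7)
The plan is to prove this via the standard moment generating function (MGF) approach, following Chernoff's original argument, and then convert the classical multiplicative form to the denominator $(2+\delta)$ form stated above.

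First, by the scaling $X_i \mapsto X_i/b$, it suffices to prove the inequality when $b = 1$, so I would assume each $X_i \in [0,1]$ and write $\mu \defeq \mathbb{E}[X]$. For the upper tail, I would apply Markov's inequality to $e^{tX}$ for $t > 0$, giving $\Pr(X \geq (1+\delta)\mu) \leq e^{-t(1+\delta)\mu} \prod_i \mathbb{E}[e^{tX_i}]$. The key bound is the convexity inequality $e^{tx} \leq 1 + (e^t - 1)x$ valid for all $x \in [0,1]$ and $t \geq 0$, which yields $\mathbb{E}[e^{tX_i}] \leq 1 + (e^t-1)\mathbb{E}[X_i] \leq \exp((e^t-1)\mathbb{E}[X_i])$. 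Multiplying over independent $X_i$ gives $\mathbb{E}[e^{tX}] \leq \exp((e^t - 1)\mu)$, and choosing the optimal $t = \ln(1+\delta)$ produces the classical bound
\[
\Pr(X \geq (1+\delta)\mu) \leq \left(\frac{e^\delta}{(1+\delta)^{1+\delta}}\right)^{\mu}.
\]
The lower tail $\Pr(X \leq (1-\delta)\mu)$ for $\delta \in (0,1]$ follows symmetrically by taking $t < 0$ and optimizing, producing $\left(\frac{e^{-\delta}}{(1-\delta)^{1-\delta}}\right)^{\mu}$.

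To recover the stated form, I would invoke the elementary calculus inequality
\[
\frac{e^\delta}{(1+\delta)^{1+\delta}} \leq \exp\!\left(-\frac{\delta^2}{2+\delta}\right) \quad \text{for all } \delta > 0,
\]
which is proved by taking logarithms and showing $g(\delta) \defeq \delta - (1+\delta)\ln(1+\delta) + \delta^2/(2+\delta) \leq 0$ via $g(0) = 0$ and checking $g'(\delta) \leq 0$ (an elementary exercise involving one more derivative). An analogous inequality $\frac{e^{-\delta}}{(1-\delta)^{1-\delta}} \leq \exp(-\delta^2/(2+\delta))$ for $\delta \in (0,1]$ (in fact the tighter $\exp(-\delta^2/2)$ is classical, which implies what we need) handles the lower tail. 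Combining the two tails via a union bound yields the factor of $2$ and the claimed expression. Re-scaling back by multiplying the interval bound $b$ through (so that the denominator becomes $(2+\delta)b$ and $\mu$ remains $\mathbb{E}[X]$) completes the proof.

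The main obstacle is purely the algebraic verification of the inequality $\frac{e^\delta}{(1+\delta)^{1+\delta}} \leq \exp(-\delta^2/(2+\delta))$, which needs to be carried out carefully since it must hold for \emph{all} $\delta > 0$ (not just the small-$\delta$ regime where both sides admit simple Taylor comparison). This is standard but slightly finicky: one differentiates $g$ twice, uses $g(0) = g'(0) = 0$, and checks a sign condition on $g''$. Aside from this algebraic step, all other ingredients (Markov, the $e^{tx}$ convexity bound, and the scaling reduction to $b=1$) are routine.
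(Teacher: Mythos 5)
Your proof is correct; note that the paper itself states this proposition as a standard fact and gives no proof, so there is nothing to compare against --- the MGF/Markov derivation with the reduction to $b=1$ and the algebraic conversion of $\bigl(e^\delta/(1+\delta)^{1+\delta}\bigr)^{\mu}$ into $\exp\bigl(-\delta^2\mu/(2+\delta)\bigr)$ is exactly the standard route. The ``finicky'' step is lighter than you fear: writing $h(\delta) \defeq (1+\delta)\ln(1+\delta)-\delta$, one has $h'(\delta)=\ln(1+\delta)\geq 2\delta/(2+\delta)\geq \delta(4+\delta)/(2+\delta)^2 = \frac{d}{d\delta}\bigl(\delta^2/(2+\delta)\bigr)$, and both sides vanish at $0$, so a single differentiation suffices. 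For completeness you should also remark that when $\delta>1$ the lower-tail event $X<(1-\delta)\mathbb{E}[X]$ is impossible since $X\geq 0$, so only $\delta\in(0,1]$ needs the second Chernoff computation.
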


Let $\bx^\prime \in \R^E$ be a random vector conditioned and supported on $H$ defined by
\[
\bx^\prime_e \defeq
  \begin{cases}
    \bx_e, & \text{if $\bx_e \geq \gamma$,}  \\
    \gamma, & \text{if $\bx_e < \gamma$ and $e \in H$,} \\
    0, & \text{if $\bx_e < \gamma$ and $e \in E \setminus H$.} \\
  \end{cases}
\]
Equivalently, for $\bx_e < \gamma$, $\bx_e^\prime$ is a random variable that takes value $\gamma$ with probability $\bp_e = \bx_e/\gamma$ and $0$ with probability $1-\bp_e$.
Therefore, $\bx^\prime$ is an unbiased estimator of $\bx$.
Let $F \defeq \{e \in E: \bx_e < \gamma\}$.
We prove the following claims about $\bx^\prime$.

\begin{claim}
  With probability at least $1 - 2 \cdot n^{-20d}$, we have $\left|\bx^\prime(v) - \bx(v)\right| \leq \frac{\eps}{2}$.
  \label{claim:chernoff-vertex-tight}
\end{claim}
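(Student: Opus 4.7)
The plan is to apply the Chernoff bound of Proposition~\ref{prpposition:chernoff} to $\bx'(v) = \sum_{e \in E_v}\bx'_e$, but only after peeling off the deterministic contribution. First I would split $E_v$ into the deterministic part $E_v \setminus F$ (where $\bx_e \geq \gamma$ forces $\bx'_e = \bx_e$) and the random part $E_v \cap F$ (where $\bx'_e$ is an independent two-point variable taking value $\gamma$ with probability $\bx_e/\gamma$ and $0$ otherwise). Setting $X \defeq \sum_{e \in E_v \cap F}\bx'_e$ and $\mu \defeq \mathbb{E}[X] = \sum_{e \in E_v \cap F}\bx_e$, I get $\bx'(v) - \bx(v) = X - \mu$ and $\mu \leq \bx(v) \leq 1$ since $\bx \in \M_G$.

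Next I would invoke Proposition~\ref{prpposition:chernoff} with $b = \gamma$ (each summand of $X$ lies in $[0,\gamma]$), choosing the multiplicative factor $\delta \defeq \eps/(2\mu)$ so that the additive deviation $\delta\mu$ equals the target $\eps/2$; the edge case $\mu = 0$ is trivial since then $X \equiv 0$. A direct simplification of the exponent gives
\[
\frac{\delta^2 \mu}{(2+\delta)\gamma} \;=\; \frac{\eps^2}{(8\mu + 2\eps)\gamma} \;\geq\; \frac{\eps^2}{10\gamma},
\]
using $\mu \leq 1$ and $\eps \leq 1$. Plugging in $\gamma = \eps^2/(320 d\ln n)$ from \eqref{eq:p_e} makes this exponent at least $32 d \ln n$, so the failure probability is at most $2 e^{-32 d \ln n} = 2 n^{-32 d} \leq 2 n^{-20 d}$, which is the desired bound.

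No serious obstacle is expected; the whole argument is essentially a calibrated Chernoff application. The only subtlety worth flagging is the threshold decomposition $E_v = (E_v\setminus F) \cup (E_v \cap F)$, which lets the Chernoff bound be invoked with $b = \gamma$ rather than $b = 1$. That factor of $1/\gamma$ in the exponent is exactly what converts the $O(\eps^2/(d\ln n))$-scale of $\gamma$ chosen in \eqref{eq:p_e} into an $\eps/2$-concentration with polynomially small failure probability, which is what the claim demands.
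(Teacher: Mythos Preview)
Your proposal is correct and matches the paper's proof essentially line for line: both peel off the deterministic edges in $E_v \setminus F$, apply Proposition~\ref{prpposition:chernoff} to $\sum_{e \in F_v}\bx'_e$ with $b=\gamma$ and $\delta=\eps/(2\mu)$, and use $\mu \leq 1$ together with the choice of $\gamma$ to push the exponent past $20d\ln n$. Your explicit handling of the $\mu=0$ edge case is a minor addition the paper omits.
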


\begin{proof}
  By definition, $\left|\bx^\prime(v) - \bx(v)\right| = \left|\norm*{\bx^\prime_{F_v}}_1 - \norm*{\bx_{F_v}}_1\right|$, and thus it suffices to bound the right-hand side.
  Using the Chernoff bound with $\delta \defeq \frac{\eps}{2\norm*{\bx_{F_v}}_1}$, we have
  \[
    \Pr\left(\left|\norm*{\bx^\prime_{F_v}}_1 - \norm*{\bx_{F_v}}_1\right| > \frac{\eps}{2}\right) \leq 2\exp\left(\frac{-\left(\frac{\eps}{2\norm*{\bx_{F_v}}_1}\right)^2 \cdot \norm*{\bx_{F_v}}_1}{\left(2 + \frac{\eps}{2\norm*{\bx_{F_v}}_1}\right)\frac{\eps^{2}}{320 \cdot d \cdot \ln{n}}}\right) \leq 2\exp\left(-20d\ln{n}\right) \leq 2 \cdot n^{-20d}
  \]
  using that $\bx_{F_v} \leq 1$.
\end{proof}

\begin{claim}
  With probability at least $1 - 2 \cdot n^{-2d|B|}$, we have $\left|\bx^\prime(B) - \bx(B)\right| \leq \eps \cdot \frac{|B|}{12}$.
  \label{claim:chernoff-oddset-tight}
\end{claim}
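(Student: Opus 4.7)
The plan is to mirror the strategy already used in Claim~\ref{claim:chernoff-vertex-tight} for the vertex deviation, but now applied to the odd-set sum $\bx'(B)$. First I will isolate the randomness: since $\bx'_e = \bx_e$ deterministically whenever $\bx_e \geq \gamma$, only edges in $F \cap E[B]$ contribute, giving
\[
\bx'(B) - \bx(B) = \sum_{e \in F \cap E[B]} (\bx'_e - \bx_e),
\]
and setting $X \defeq \sum_{e \in F \cap E[B]} \bx'_e$, each summand is independent, takes values in $\{0,\gamma\}$, and $\mathbb{E}[X] = \sum_{e \in F \cap E[B]} \bx_e \eqdef \mu$.

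Next I would use that $\bx \in \M_G$ to bound $\mu \leq \bx(B) \leq \lfloor |B|/2\rfloor \leq |B|/2$. This is the one place where the odd-set structure is actually needed, and it is what will convert a Chernoff bound of the shape $|B|^2/\mu$ into one that is linear in $|B|$. Then I would apply the Chernoff bound of Proposition~\ref{prpposition:chernoff} to $X$ with $b=\gamma$ and target deviation $\delta\mu = \eps|B|/12$, i.e.\ $\delta = \eps|B|/(12\mu)$ (and if $\mu=0$ there is nothing to prove). The exponent to bound is $\tfrac{\delta^2 \mu}{(2+\delta)\gamma}$.

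The analysis then splits into two regimes. When $\delta \leq 1$ (equivalently $\mu \geq \eps|B|/12$), the factor $(2+\delta)$ is $O(1)$ and the exponent is at least $\tfrac{\eps^2 |B|^2}{\Theta(1)\cdot \mu \gamma} \geq \tfrac{\eps^2 |B|}{\Theta(1)\cdot \gamma}$, where the second step uses $\mu \leq |B|/2$. When $\delta > 1$, the exponent simplifies to at least $\tfrac{\delta \mu}{\Theta(1)\cdot \gamma} = \tfrac{\eps |B|}{\Theta(1)\cdot \gamma}$, which is even larger since $\eps \leq 1$. Substituting $\gamma = \eps^2/(320 d \ln n)$ in either regime turns the exponent into $\Omega(d|B|\ln n)$, yielding $\Pr(|X - \mu| > \eps|B|/12) \leq 2\exp(-\Omega(d|B|\ln n))$.

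The main obstacle will be constant-tracking: the stated probability $2 n^{-2d|B|}$ is tight enough that I have to be careful to exploit $\mu \leq |B|/2$ (rather than $\mu \leq 1$, which would give only a $|B|^2$-sized exponent divided by an inconvenient $\mu$) and to use the precise $\tfrac{\delta^2}{2+\delta}$ form of the Chernoff exponent in both the small-$\delta$ and large-$\delta$ regimes simultaneously. Modulo these constant calculations, the argument is essentially identical to that of Claim~\ref{claim:chernoff-vertex-tight}, which is also the reason the large constant $320$ was chosen in the definition of $\gamma$ in \eqref{eq:p_e}.
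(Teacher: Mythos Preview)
Your proposal is correct and follows essentially the same approach as the paper: isolate the randomness to $F[B]$, apply the Chernoff bound of Proposition~\ref{prpposition:chernoff} with $b=\gamma$ and deviation $\eps|B|/12$, and use $\norm{\bx_{F[B]}}_1 \leq |B|/2$ to close. The one simplification the paper makes is to skip your regime split: it substitutes $\delta = \eps|B|/(12\mu)$ directly into $\tfrac{\delta^2\mu}{(2+\delta)\gamma}$, simplifies to a single expression $\tfrac{(320/12)\,d\ln n\,|B|^2}{24\mu + |B|\eps}$, and then bounds the denominator by $13|B|$ using $\mu \leq |B|/2$, which yields exactly $2d|B|\ln n$ without any case analysis and sidesteps the constant-tracking you flagged as the main obstacle.
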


\begin{proof}
  Again $\left|\bx^\prime(B) - \bx(B)\right| = \left|\norm*{\bx^\prime_{F[B]}} - \norm*{\bx_{F[B]}}\right|$ and therefore we will bound the right-hand side.
  We set $\delta \defeq \frac{|B|\eps}{12\norm*{\bx_{F[B]}}_1}$ and apply the Chernoff bound, giving
  \begin{align*}
    \Pr\left(\left|\norm*{\bx^\prime_{F[B]}}_1 - \norm*{\bx_{F[B]}}_1\right| > \frac{|B|}{12} \cdot \eps\right)
    &\leq 2\exp\left(\frac{-\left(\frac{|B|\eps}{12\norm*{\bx_{F[B]}}_1}\right)^2 \cdot \norm*{\bx_{F[B]}}_1}{\left(2 + \frac{|B|\eps}{12\norm*{\bx_{F[B]}}_1}\right)\frac{\eps^{2}}{320 \cdot d \cdot \ln{n}}}\right) \\
    &\leq 2\exp\left(\frac{-26d
    \ln{n}|B|^2}{24\norm*{\bx_{F[B]}}_1 + |B|\eps}\right).
  \end{align*}
  Since $\norm*{\bx_{F[B]}}_1 \leq \frac{|B|}{2}$, we get
  \[
    \Pr\left(\left|\norm*{\bx^\prime_{F[B]}}_1 - \norm*{\bx_{F[B]}}_1\right| > \frac{|B|}{12} \cdot \eps\right) \leq 2n^{-2d|B|}.
  \]
\end{proof}

\begin{claim}
  With probability at least $1 - 2 \cdot n^{-32d}$, we have $\left(1-\frac{\eps}{2}\right)\norm*{\bx}_1 \leq \norm*{\bx^\prime}_1 \leq \left(1 + \frac{\eps}{2}\right)\norm*{\bx}_1$.
  \label{claim:chernoff-sum}
\end{claim}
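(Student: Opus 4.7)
My plan is to mirror the structure of Claims~\ref{claim:chernoff-vertex-tight} and \ref{claim:chernoff-oddset-tight}: first reduce the deviation of $\|\bx^\prime\|_1$ from $\|\bx\|_1$ to a sum over the random edges $F$, and then invoke Proposition~\ref{prpposition:chernoff} with an appropriate $\delta$. Since $\bx^\prime_e = \bx_e$ deterministically for every $e$ with $\bx_e \geq \gamma$, we have
\[
\bigl|\|\bx^\prime\|_1 - \|\bx\|_1\bigr| = \bigl|\|\bx^\prime_F\|_1 - \|\bx_F\|_1\bigr|,
\]
so it suffices to concentrate $\|\bx^\prime_F\|_1 = \sum_{e \in F} \bx^\prime_e$, which is a sum of independent random variables in $[0,\gamma]$ with expectation $\|\bx_F\|_1$.

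Next, I would apply Proposition~\ref{prpposition:chernoff} with $b = \gamma$ and relative deviation $\delta \defeq \frac{\eps\|\bx\|_1}{2\|\bx_F\|_1}$, which is exactly what makes $\delta\|\bx_F\|_1 = (\eps/2)\|\bx\|_1$. The exponent in the Chernoff bound becomes
\[
\frac{\delta^2 \|\bx_F\|_1}{(2+\delta)\gamma} \;=\; \frac{\delta \cdot \tfrac{\eps}{2}\|\bx\|_1}{(2+\delta)\gamma}.
\]
I would then split into the cases $\delta \leq 2$ and $\delta > 2$. In the first case, $(2+\delta) \leq 4$ and $\delta \geq \eps/2$ (using $\|\bx_F\|_1 \leq \|\bx\|_1$), so the exponent is at least $\tfrac{\eps^2\|\bx\|_1}{16\gamma}$. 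In the second case, $(2+\delta) \leq 2\delta$, so the exponent is at least $\tfrac{\eps\|\bx\|_1}{4\gamma}$. Both of these are at least $\tfrac{\eps^2\|\bx\|_1}{16\gamma}$, which, by the definition $\gamma = \tfrac{\eps^2}{320 d \ln n}$ and the hypothesis $\|\bx\|_1 \geq 1$, is at least $20 d \ln n$. Pushing a bit harder (for example, sharpening constants in the case analysis or choosing a slightly smaller $\gamma$ to obtain the stated $32d$ exponent) gives the claimed bound of $2 n^{-32d}$.

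The only subtle point I anticipate is ensuring the ``$\|\bx\|_1 \geq 1$'' hypothesis actually bites: it is what converts a relative concentration bound on $\|\bx_F\|_1$ (which could itself be very small) into an absolute $O(\eps\|\bx\|_1)$ deviation guarantee that is strong enough to yield a high-probability tail. This is why both cases of the case analysis collapse to the same exponent, and why the final probability does not degrade when $\|\bx_F\|_1$ is much smaller than $\|\bx\|_1$. Otherwise, the argument is a routine application of Chernoff with bookkeeping identical to Claims~\ref{claim:chernoff-vertex-tight} and \ref{claim:chernoff-oddset-tight}.
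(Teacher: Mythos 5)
Your proposal is correct and follows essentially the same route as the paper: identical reduction to $\|\bx^\prime_F\|_1$, the same choice $\delta = \frac{\eps\|\bx\|_1}{2\|\bx_F\|_1}$, and the same Chernoff application with $b=\gamma$. The only difference is that your case split on $\delta\lessgtr 2$ loses a constant (yielding exponent $20d\ln n$ rather than $32d\ln n$); the paper avoids this by substituting $\delta$ into the exponent directly and bounding the denominator $2\|\bx_F\|_1 + \tfrac{\eps}{2}\|\bx\|_1 \leq 2.5\|\bx\|_1$, which gives exactly the stated $32d\|\bx\|_1\ln n$ without changing $\gamma$.
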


\begin{proof}
  An application of the Chernoff bound shows that
  \begin{align*}
    \Pr\left(\left|\norm*{\bx^\prime}_1 - \norm*{\bx}_1\right| > \frac{\eps}{2} \norm*{\bx}_1 \right)
    &= \Pr\left(\left|\norm*{\bx^\prime_F}_1 - \norm*{\bx_F}_1 \right| > \frac{\eps}{2} \norm*{\bx}_1 \right) \\
    &\leq 2\exp\left(\frac{-\left(\frac{\eps \cdot \norm*{\bx}_1}{2 \cdot \norm*{\bx_F}_1}\right)^2\cdot \norm*{\bx_F}_1}{\left(2 + \frac{\eps \cdot \norm*{\bx}_1}{2 \cdot \norm*{\bx_F}_1}\right)\frac{\eps^2}{320 \cdot d \cdot \ln{n}}}\right) \\
    &\leq 2\exp\left(-32 \norm*{\bx}_1 \ln{n}\right) \leq 2n^{-32}
  \end{align*}
  using that $\norm*{\bx_F}_1 \leq \norm*{\bx}_1$ and $\norm*{\bx}_1 \geq 1$.
\end{proof}

\begin{proof}[Proof of \cref{lemma:sampling}]
  Take $\bx^{(H)} \defeq \frac{\bx^\prime}{1 + \eps/2}$ as the fractional matching over $H$.
  By union bounds over \cref{claim:chernoff-vertex-tight,claim:chernoff-oddset-tight,claim:chernoff-sum} (note that there are at most $n^k$ odd sets of size $k$), with probability $1 - n^{-d} \geq 1 - n^{-c}$ 
  we have $\bx^{(H)}(v) = \frac{\bx^\prime(v)}{1+\eps/2} \leq 1$ for all $v \in V$, $\bx^{(H)}(B) = \frac{\bx^\prime(B)}{1+\eps/2} \leq \lfloor |B|/2 \rfloor$ for all $B \in \O_G$,
  and $(1-\eps) \cdot \norm{\bx}_1 \leq \norm*{\bx^{(H)}}_1 \leq (1+\eps)\cdot \norm{\bx}_1$.
  This implies $\bx^{(H)} \in \M_G$ and shows Property \labelcref{item:sparsifier-matching-size}.
  \cref{claim:chernoff-vertex-tight} also implies that $\bx^{(H)}(v) = \frac{\bx^\prime(v)}{1+\eps/2} \geq \bx(v) - \eps$ which establishes Property \labelcref{item:sparsifier-vertex}.
  Similarly, \cref{claim:chernoff-oddset-tight} implies that $\bx^{(H)}(B) = \frac{\bx^\prime(B)}{1 + \eps/2} \geq \bx(B) - \frac{\eps}{3} \cdot |B|$
  which establishes Property \labelcref{item:sparsifier-oddset}.
  For the sparsity of $H$, note that $\bx^{(H)}_e \geq \Omega(\eps^2)$ and by $\norm*{\bx^{(H)}}_1 \leq O(\norm*{\bx}_1)$ we have $|H| \leq \widetilde{O}(\eps^{-2}) \cdot \norm{\bx}_1$.
  Finally, $H \subseteq \supp(\bx)$ is apparent by definition.
  This concludes the proof.
\end{proof}

Combining the above sampling scheme with the dynamic set sampler of \cite{BhattacharyaKSW23rounding}, we get the following algorithm for maintaining a degree sparsifier.

\begin{lemma}[{\cite[Theorem 5.2]{BhattacharyaKSW23rounding}}]
  There is an output-adaptive data structure that, given $\bp \in [0,1]^{d}$ with the guarantee that $\min_{i \in [d]: \bp_i \neq 0}\bp_i \geq 1/\poly(n)$ for each update, initializes in $O(d)$ time and supports (1) updating an entry $\bp_i$ in $O(1)$ time and (2) sampling a set $T \subseteq [d]$ in $O(1 + |T|)$ time such that each $i \in [d]$ is included independently with probability $\bp_i$.
  \label{lemma:dynamic-set-sampler} 
\end{lemma}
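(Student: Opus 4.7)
The plan is to maintain a bucketing data structure that groups the indices by the order of magnitude of their probabilities, combined with geometric-gap sampling to enumerate hits cheaply. Since $\bp_i \geq 1/\poly(n)$ whenever $\bp_i > 0$, the support of $\bp$ partitions into $K = O(\log n)$ buckets $B_0, B_1, \ldots, B_{K-1}$, where $B_k$ collects those $i$ with $\bp_i \in [2^{-k-1}, 2^{-k})$. I would represent each bucket as a dynamic array together with an inverse mapping from each $i$ to its current bucket and position, so that a single update to $\bp_i$ can be serviced in $O(1)$ worst-case time: swap $i$ with the last element of its old bucket, pop, recompute the new bucket index from the updated $\bp_i$, and append there. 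Initialization scans $\bp$ once to place every index, costing $O(d)$ total.

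For sampling, I would process each nonempty bucket using a two-stage \emph{overshoot-and-correct} scheme. With the bucket-wide upper bound $q_k \defeq 2^{-k}$, repeatedly draw a $\mathrm{Geom}(q_k)$ skip inside $B_k$ to produce candidate positions in $O(1)$ expected time per candidate, stopping once the cumulative position exceeds $|B_k|$. For each candidate $i$, accept it into $T$ independently with probability $\bp_i / q_k \in [1/2, 1]$. The inclusion probability of $i$ is then exactly $q_k \cdot (\bp_i/q_k) = \bp_i$, and independence across indices follows from the independence of the geometric draws and of the Bernoulli acceptances. Since each candidate is accepted with probability at least $1/2$, the number of candidates generated in $B_k$ is in expectation at most twice $|T \cap B_k|$, giving $O(1 + |T \cap B_k|)$ expected work per bucket and $O(K + |T|) = O(\log n + |T|)$ overall.

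To shave the $\log n$ and meet the advertised $O(1 + |T|)$ bound, I would overlay a global geometric sweep across the buckets jointly: concatenate them into one virtual array, maintain a lazily-updated running total of the $q_k$-masses, and draw a single geometric stream at rate equal to the maximum $q_k$ currently present, discarding candidates that land in lower-rate buckets via a second rejection layer that keeps them with probability $q_{k(\text{hit})}/q_{\max}$. The main obstacle I expect is exactly this consolidation step: verifying that the merged sweep produces genuinely independent Bernoulli indicators with the correct marginals rather than a Poisson-like surrogate, that lazy maintenance of the mass totals and of $q_{\max}$ remains $O(1)$ amortized per $\bp_i$-update even under adversarial bucket migrations, and that the number of rejections stays proportional to $|T|$ rather than to $\log n$. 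Output-adaptiveness comes for free, because every random choice is drawn fresh at each sampling call, so the adversary gains no information from the algorithm's internal state beyond the returned set.
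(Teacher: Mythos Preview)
The paper does not prove this lemma; it quotes it verbatim as \cite[Theorem 5.2]{BhattacharyaKSW23rounding} and uses it as a black box. Consequently there is no in-paper argument to compare your proposal against.

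On the merits of the proposal itself: your bucketing plus geometric-skip-with-rejection scheme is the standard construction and cleanly yields $O(1)$ updates and $O(\log n + |T|)$ expected sampling time. The gap is in the final ``global sweep'' intended to remove the $\log n$ term. If you draw a single geometric stream at rate $q_{\max}$ over the concatenated array and then thin candidates that land in bucket $k$ by a factor $q_k/q_{\max}$, the \emph{number of candidates you must generate and test} is $\Theta(q_{\max}\cdot d)$ in expectation, independent of $|T|$. Take, e.g., one index with $\bp_i=1$ (so $q_{\max}=1$) and the remaining $d-1$ indices with $\bp_i=1/n^2$: then $\mathbb{E}|T|=O(1)$ but your stream produces $\Theta(d)$ candidates, almost all of which are rejected. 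The rejections are cheap per candidate but their \emph{count} is not bounded by $|T|$, so the claimed $O(1+|T|)$ bound does not follow from this construction. You flag this as the ``main obstacle'' but do not resolve it; as written, the proposal establishes $O(\log n + |T|)$ but not the stated $O(1+|T|)$.
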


\begin{lemma}
  There is a randomized output-adaptive $(\widetilde{O}(\eps^{-2}), O(m), O(1), O(1))$-degree-sparsifier-maintainer for any $\eps \geq 1/\poly(n)$ that succeeds w.h.p.
  \label{lemma:rand-sparsifier}
\end{lemma}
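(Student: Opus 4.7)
The plan is to combine the static sampling result of \cref{lemma:sampling} with the dynamic set sampler of \cref{lemma:dynamic-set-sampler}. The data structure will maintain the probability vector $\bp \in [0,1]^{E}$ defined by $\bp_e \defeq \min\left(1, \bx_e/\gamma\right)$ with $\gamma$ as in \eqref{eq:p_e}, stored inside a single instance of the dynamic set sampler. At initialization we compute $\bp$ and build the sampler in $O(m)$ time; each entry update to some $\bx_e$ changes a single coordinate $\bp_e$ and is propagated to the sampler in $O(1)$ time; when a sparsifier is requested, we draw a fresh set $H \subseteq E$ from the sampler in $O(1 + |H|)$ time. By \cref{lemma:sampling} the returned $H$ has size $\widetilde{O}(\eps^{-2}) \cdot \norm{\bx}_1$ and is an $\eps$-degree-sparsifier in the sense of \cref{def:sparsifier} with high probability.

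To meet the $1/\poly(n)$ lower bound on the nonzero entries of $\bp$ required by \cref{lemma:dynamic-set-sampler}, I would simply threshold: set $\bp_e = 0$ whenever $\bx_e < n^{-c}$ for a sufficiently large constant $c$. Since the total $\bx$-mass on such edges is at most $m \cdot n^{-c} \ll \eps$, this perturbs every degree $\bx(v)$ and every odd-set sum $\bx(B)$ by only an additive $1/\poly(n)$ term, which is absorbed in the $\eps$ slack of \cref{def:sparsifier}. A direct application of \cref{lemma:sampling} (with an appropriately larger constant in the definition of $\gamma$ to accommodate the union bound below) then certifies that any single sampled $H$ is a valid sparsifier with probability at least $1 - n^{-c'}$.

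The main (and rather mild) subtlety is output-adaptivity. Here the crucial point is that each call to the output routine invokes the sampler with \emph{fresh} internal randomness, so conditioned on the entire history of updates and the previously returned sparsifiers, the new $H$ is still distributed as an independent sample from the current $\bp$. Consequently \cref{lemma:sampling} continues to apply conditionally, and a union bound over the at most $\poly(n)$ output queries that can occur during the dynamic run (with $c'$ chosen large enough) guarantees that \emph{every} returned $H$ is a valid $\eps$-degree-sparsifier simultaneously w.h.p. Putting these three pieces together -- the probability vector, the thresholding, and the union bound over adaptive outputs -- yields the claimed $(\widetilde{O}(\eps^{-2}), O(m), O(1), O(1))$-degree-sparsifier-maintainer.
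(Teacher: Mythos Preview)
Your proposal is correct and follows essentially the same approach as the paper: maintain the probability vector $\bp$ inside the dynamic set sampler of \cref{lemma:dynamic-set-sampler}, threshold away tiny entries so that the sampler's $1/\poly(n)$ requirement is met, and apply \cref{lemma:sampling} to each freshly drawn sample. The paper's proof differs only cosmetically---it sets $\eps' = \eps/2$, zeroes out edges with $\bx_e < \eps'/(2n^2)$ (invoking \cref{obs:few-bits}), and then argues that an $\eps'$-degree-sparsifier of the truncated $\bx'$ is an $\eps$-degree-sparsifier of $\bx$; it also simply cites the output-adaptivity of \cref{lemma:dynamic-set-sampler} rather than spelling out the fresh-randomness-plus-union-bound argument you give.
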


\begin{proof}
  For a given $\eps > 0$, let $\eps^\prime \defeq \eps/2$.
  We can preprocess the matching $\bx$ to be $\bx^\prime$ by zeroing out edges with $\bx_e < \frac{\eps^\prime}{2n^2}$.
  By \cref{obs:few-bits}, we have $\norm*{\bx^\prime}_1 \geq (1-\eps^\prime)\norm*{\bx}$.
  Moreover, $|\bx^\prime(v) - \bx(v)| \leq \eps^\prime$ and $|\bx^\prime(B) - \bx(B)| \leq \eps^\prime$ hold for all $v \in V$ and $B \in \O_G$.
  Now we can simply run the dynamic sampler of \cref{lemma:dynamic-set-sampler}, with $\bp_e$ being the probability defined in \eqref{eq:p_e} with accuracy parameter $\eps^\prime$ and constant $c > 0$ chosen to make \cref{lemma:rand-sparsifier} succeed w.h.p.
  Note that as $\bx^\prime_e \geq \frac{\eps^\prime}{2n^2} \geq 1/\poly(n)$ for all $e$ with non-zero $\bx^\prime_e$, $\bp_e$'s are polynomially bounded and thus the runtime of \cref{lemma:dynamic-set-sampler} holds.
  Every time there is an update to $\bx_e$, we interpret that as an update to $\bx^\prime_e$ (by again zeroing out the coordinate if $\bx_e < \frac{\eps^\prime}{2n^2}$), and hence to $\bp_e$, and feed it to the dynamic sampler.
  Every time we are asked to output a sparsifier, we use the dynamic sampler to sample each edge independently with probability $\bp_e$, obtaining a subgraph $H$, in time $O(1 + |H|)$.
  By \cref{lemma:sampling}, $H$ is an $\widetilde{O}(\eps^{-2})$-sparse $\eps^\prime$-degree-sparsifier for $\bx^\prime$ w.h.p.
  This implies that $H$ is an $\eps$-degree-sparsifier for $\bx$ as well.
  Thus, we have $T_{\init}(n, m, \eps) = O(m)$, $T_{\update}(n, m, \eps) = O(1)$, and $T_{\output}(n, m, \eps) = O(1)$.
  The algorithm works against an output-adaptive adversary as \cref{lemma:dynamic-set-sampler} does.
\end{proof}

As alluded to at the beginning of the section, by including Property \labelcref{item:sparsifier-matching-size} into the definition, we get the following immediate corollary.
This can straightforwardly be generalized to the weighted setting by imposing a linear dependence on $W$, but we omit this part since it is irrelevant to this paper.

\begin{corollary}
  For $\eps \geq 1/\poly(n)$ there is a randomized output-adaptive $(\widetilde{O}(\eps^{-2}), O(m), O(1), O(1))$-sparsifier-maintainer for unweighted graphs that succeeds w.h.p.
  \label{cor:rand-sparsifier}
\end{corollary}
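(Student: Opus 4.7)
The plan is to observe that, in the unweighted case, a degree sparsifier (in the sense of \cref{def:sparsifier}) is automatically a sparsifier in the sense used at the start of \cref{sec:rounding}, after which the corollary will follow immediately from \cref{lemma:rand-sparsifier}.

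More concretely, I would first instantiate the randomized $(\widetilde{O}(\eps^{-2}), O(m), O(1), O(1))$-degree-sparsifier-maintainer of \cref{lemma:rand-sparsifier} on the input fractional matching $\bx\in\M_G$ and use its output subgraph $H\subseteq\supp(\bx)$ as the candidate sparsifier. Let $\bx^{(H)}\in\M_G$ be the certificate promised by \cref{def:sparsifier}; in particular $\supp(\bx^{(H)})\subseteq H$, the vertex and odd-set constraints of $\M_G$ hold for $\bx^{(H)}$, and $\norm{\bx^{(H)}}_1\geq(1-\eps)\norm{\bx}_1$ by Property \labelcref{item:sparsifier-matching-size}.

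The key (and essentially only) step is noting that $\bx^{(H)}$ is automatically a valid element of $\M_H$: because its support is contained in the edge set of $H$, each vertex constraint $\sum_{e\in E_v}\bx^{(H)}_e\leq 1$ and each odd-set constraint $\bx^{(H)}(B)\leq\lfloor|B|/2\rfloor$ inherited from $\M_G$ restrict to the corresponding constraints in $\M_H$. Thus, with the unweighted setting $\bw=\boldsymbol{1}^E$, Property \labelcref{item:sparsifier-matching-size} of the degree sparsifier gives $\bw^\top\bx^{(H)}=\norm{\bx^{(H)}}_1\geq(1-\eps)\norm{\bx}_1=(1-\eps)\bw^\top\bx$, which is exactly the certificate condition required of a sparsifier, and in particular also implies $M^*_{\bw}(H)\geq(1-\eps)\bw^\top\bx$. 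The sparsity bound $|H|\leq\widetilde{O}(\eps^{-2})\cdot\norm{\bx}_1$ carries over verbatim.

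Since all the parameters $(s,T_{\init},T_{\update},T_{\output})$ of the degree-sparsifier-maintainer transfer directly, and since the reduction above requires no additional computation beyond simply returning the subgraph produced by \cref{lemma:rand-sparsifier}, we obtain a randomized $(\widetilde{O}(\eps^{-2}),O(m),O(1),O(1))$-sparsifier-maintainer that succeeds w.h.p.\ and inherits output-adaptiveness from the underlying degree-sparsifier-maintainer. No step here appears delicate; essentially all of the work has already been done in \cref{lemma:sampling} and \cref{lemma:rand-sparsifier}, and the only content of the corollary is that, in the unweighted case, the extra Properties \labelcref{item:sparsifier-vertex} and \labelcref{item:sparsifier-oddset} of a degree sparsifier are superfluous and Property \labelcref{item:sparsifier-matching-size} alone suffices.
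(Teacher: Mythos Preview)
Your proposal is correct and matches the paper's approach exactly: the paper treats this as an immediate corollary of \cref{lemma:rand-sparsifier}, noting only that ``by including Property~\labelcref{item:sparsifier-matching-size} into the definition, we get the following immediate corollary.'' Your write-up simply unpacks this one-line justification in more detail, confirming that in the unweighted case $\bw=\boldsymbol{1}^E$ the bound $\norm{\bx^{(H)}}_1\geq(1-\eps)\norm{\bx}_1$ is precisely the certificate condition for a sparsifier.
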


\subsection{Weighted Rounding for Entropy-Regularized Matching}\label{subsec:weighted rounding}

While the rounding algorithms presented previously work with weights, their dependence on $W$ is polynomial, which in general might incur a $\poly(n)$ or $\eps^{-O(1/\eps)}$ runtime overhead.
As such, here we further leverage the primal-dual properties of entropy-regularized matchings and provide a weighted rounding algorithm in the decremental setting that has near-optimal update time in dense graphs, proving Part (II) of \cref{thm:decremental-fractional}.
Before diving into the technical calculations that form the rest of the sections, we first give a short and intuitive explanation of this weighted rounding algorithm.
It leverages the following properties.

\begin{enumerate}[(1)]
  \item\label{item:delete-small-edges} For \emph{any} fractional matching $\bx_e$, deleting edges $e$ with $\bx_e = O(\eps/n)$ does not affect the weight of $\bx$ by more than an $O(\eps)$ factor.
  \item\label{item:degree-sparsification-suffices} For the \emph{entropy-regularized} fractional matching, a degree-sparsifier obtained from the unweighted rounding algorithm keeps most of its weight.
\end{enumerate}

As such, \labelcref{item:delete-small-edges} ensures that we can only consider edges with $\bx_e \geq \Omega(\eps/n)$.
This implies that any deletion to the integral matching that we maintain will also delete a comparable portion from the underlying fractional matching, which bounds the number of rebuilds of the integral matchings.
Rounding with nearly optimal dependence on $W$ is then made possible by \labelcref{item:degree-sparsification-suffices}.

With this intuition we now proceed to the proofs.
Let $G = (V, E)$ be the input graph to the decremental matching problem with edge weights $\bw \in \N^E$.
In the remainder of this section we consider a fixed set of input $G^{(t)} = (V, E^{(t)})$, $\mu$, and $\gamma$ to \cref{problem:entropy-matching}, where $E^{(t)} \subseteq E$ is the edge set in \cref{alg:meta-general-ds} for the $t$-th rebuild.
Fixing $t$, we let $\bx^{*} \defeq \bx^{\mu}_{E^{(t)},\gamma} \in \M_{G^{(t)}}$ be the unique optimal entropy-regularized fractional matching.
Note that we adopt the notations from \cref{sec:congestion-balancing} for the special case of $\X \defeq \M_G$, and thus the subscripts of certain notations are changed to $E$ from $S$.
For instance, $Z_{E^{(t)},\gamma}^{\mu}$ denotes the optimal value of the entropy-regularized matching problem on $G^{(t)}$ with parameters $\mu$ and $\gamma$.

\begin{lemma}\label{lemma:primal-dual relation}
    There exists a pair of dual solutions $(\by,\bz)\in\mathbb{R}^V\times\mathbb{R}^{\O_G}$ such that for every edge $e \in E^{(t)}$ it holds that
    \[\bx_{e}^*=2^{\frac{1}{\mu}-1-\frac{\byz_{e}}{\mu \cdot \bw_{e}}+\log \frac{\gamma}{\bw_e}},\]
    where $\byz_{e} \defeq \by_u+\by_v+\sum_{B \supseteq \{u, v\}}\bz_B$ for $e = \{u, v\}$. Further, the optimal objective value $Z_{E^{(t)},\gamma}^{\mu}$ satisfies
    \[Z_{E^{(t)},\gamma}^{\mu}=\mu \cdot \sum_{e \in E^{(t)}}\bw_e\bx^{*}_e+\sum_{v \in V}\by_v+\sum_{B\in\O_G}\bz_B \cdot \left\lfloor\frac{|B|}{2}\right\rfloor.\]
\end{lemma}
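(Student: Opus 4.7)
The plan is to apply Lagrangian duality to the concave maximization
\[Z^{\mu}_{E^{(t)},\gamma}=\max_{\bx\in \M_{G^{(t)}}}\Bigl\{\bw^\top \bx+\mu\sum_{e}\bw_e\bx_e\log\tfrac{\gamma}{\bw_e\bx_e}\Bigr\}.\]
I would attach nonnegative multipliers $\by\in\R_{\geq 0}^V$ to the constraints $\bx(v)\leq 1$ and $\bz\in\R_{\geq 0}^{\O_G}$ to the constraints $\bx(B)\leq \lfloor|B|/2\rfloor$ (there is no need to dualize nonnegativity since by \cref{lemma:entropy-regularized objective property}\labelcref{item:differentiability at maximizer} we already know $\bx^*>0$), and form the Lagrangian
\[L(\bx,\by,\bz)=\bw^\top\bx+\mu\sum_e\bw_e\bx_e\log\tfrac{\gamma}{\bw_e\bx_e}-\sum_v\by_v(\bx(v)-1)-\sum_B\bz_B(\bx(B)-\lfloor|B|/2\rfloor).\]
Since $\M_{G^{(t)}}$ is a polytope with non-empty interior (strict feasibility is witnessed by any sufficiently small positive vector) and the objective is concave and upper semi-continuous, Slater's condition holds, strong duality applies, and optimal KKT multipliers $(\by,\bz)$ exist paired with $\bx^*$.

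Next, I would derive the closed form for $\bx_e^*$ from the stationarity condition $\partial L/\partial \bx_e=0$. Regrouping using $\bx(v)=\sum_{e\in E_v}\bx_e$ and $\bx(B)=\sum_{e\subseteq B}\bx_e$ produces the equation
\[\bw_e+\mu\bw_e\log\tfrac{\gamma}{\bw_e\bx_e^*}-\mu\bw_e=\byz_e,\]
where the $-\mu\bw_e$ term comes from differentiating $\bw_e\bx_e\log(1/\bx_e)$ in the regularizer. Solving for $\bx_e^*$ gives $\log(\gamma/(\bw_e\bx_e^*))=\frac{\byz_e}{\mu\bw_e}-\frac{1}{\mu}+1$, hence $\bx_e^*=(\gamma/\bw_e)\cdot 2^{1/\mu-1-\byz_e/(\mu\bw_e)}$, which is the desired identity.

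For the value identity, I would multiply the stationarity equation by $\bx_e^*$ and sum over $e\in E^{(t)}$. The three terms on the left sum to $\bw^\top\bx^*+\mu\sum_e\bw_e\bx_e^*\log(\gamma/(\bw_e\bx_e^*))-\mu\bw^\top\bx^*=Z^{\mu}_{E^{(t)},\gamma}-\mu\bw^\top\bx^*$, while on the right
\[\sum_e\byz_e\bx_e^*=\sum_v\by_v\bx^*(v)+\sum_B\bz_B\bx^*(B).\]
Invoking complementary slackness (whenever $\by_v>0$ the constraint $\bx^*(v)=1$ is tight, and similarly $\bz_B>0$ forces $\bx^*(B)=\lfloor|B|/2\rfloor$), the right-hand side collapses to $\sum_v\by_v+\sum_B\bz_B\lfloor|B|/2\rfloor$. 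Rearranging yields the claimed formula $Z^{\mu}_{E^{(t)},\gamma}=\mu\bw^\top\bx^*+\sum_v\by_v+\sum_B\bz_B\lfloor|B|/2\rfloor$.

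The main obstacle is setting up strong duality and KKT cleanly despite the large (but finite) collection of odd-set constraints; once that is done, the two identities follow from a single stationarity computation combined with complementary slackness. Everything else is routine algebra on the Lagrangian.
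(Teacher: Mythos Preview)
Your proposal is correct and follows essentially the same approach as the paper's proof: both set up the Lagrangian, invoke Slater's condition for strong duality and KKT, solve the stationarity equation for the closed form of $\bx_e^*$, and obtain the value identity by combining stationarity with complementary slackness. The only cosmetic difference is that the paper includes multipliers $\br_e$ for the nonnegativity constraints and then argues $\br=\boldsymbol{0}$ from $\bx^*>0$ via complementary slackness, whereas you shortcut this step by citing \cref{lemma:entropy-regularized objective property}\labelcref{item:differentiability at maximizer} upfront; the resulting stationarity equation and all subsequent algebra are identical.
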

\begin{proof}
    Denote
    \[L(\bx,\by,\bz,\br)\defeq f_{E^{(t)},\gamma}^{\mu}(\bx)+\sum_{v\in V}\by_v\left(1-\sum_{e\in E^{(t)}_v}\bx_e\right)+\sum_{B\in\O_G}\bz_B\left(\left\lfloor\frac{|B|}{2}\right\rfloor-\sum_{e\in E^{(t)}[B]}\bx_e\right)+\sum_{e\in E^{(t)}}\br_e\bx_e\]
    as the Lagrangian of \cref{problem:entropy-matching} (see \eqref{eq:matching-polytope} for the definition of $\M_G$). Strong duality of \cref{problem:entropy-matching} (by Slater's condition) implies that
    \[Z_{E^{(t)},\gamma}^{\mu}=f_{E^{(t)},\gamma}^{\mu}(\bx^*)=\max_{\bx}\min_{\by,\bz,\br\geq \boldsymbol{0}}L(\bx,\by,\bz,\br)=\min_{\by,\bz,\br\geq \boldsymbol{0}}\max_{\bx} L(\bx,\by,\bz,\br).\]
    Suppose $(\by,\bz,\br)$ is the minimizer of the dual problem.
    From the stationarity of the KKT condition, the corresponding optimal primal solution $\bx^*$ satisfies that
    \begin{equation}
      \bx^*_{e}=2^{\frac{1}{\mu}-1-\frac{\byz_{e}}{\mu \cdot \bw_{e}}+\frac{\br_{e}}{\mu \cdot \bw_{e}}+\log\frac{\gamma}{\bw_e}}
      \label{eq:def-x-star}
    \end{equation}
    for every edge $e \in E^{(t)}$.
    From the complementary slackness of the KKT condition, we know $\br_e\bx^*_e=0$ for every $e \in E^{(t)}$. Since $\bx^* > \boldsymbol{0}$ from \eqref{eq:def-x-star}, we have $\br=\boldsymbol{0}$, and therefore
    \[\bx^*_{e}=2^{\frac{1}{\mu}-1-\frac{\byz_{e}}{\mu \cdot \bw_{e}}+\log\frac{\gamma}{\bw_e}}. \]
    Further, by plugging in $(\bx^{*}, \by, \bz, \br)$ to $L$ and expanding the formula, we can see that the optimal objective value can be written as
    \[Z_{E^{(t)},\gamma}^{\mu}=L(\bx^*,\by,\bz,\br)=\mu\cdot \sum_{e \in E^{(t)}}\bw_e \bx^*_e+\sum_{v \in V}\by_v+\sum_{B\in\O_G}\bz_B \cdot \left\lfloor\frac{|B|}{2}\right\rfloor.\]
\end{proof}

An immediate corollary of \cref{lemma:primal-dual relation} that will be used throughout is the following.

\begin{corollary}
  It holds that
  \[ Z_{E^{(t)},\gamma}^{\mu}\geq \sum_{v \in V}\by_v+\sum_{B\in\O_G}\bz_B \cdot \left\lfloor\frac{|B|}{2}\right\rfloor.\]
  \label{cor:lower-bound-opt}
\end{corollary}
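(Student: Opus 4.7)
The plan is to derive the corollary directly from the exact equality established in \cref{lemma:primal-dual relation}, which states
\[
Z_{E^{(t)},\gamma}^{\mu}=\mu \cdot \sum_{e \in E^{(t)}}\bw_e\bx^{*}_e+\sum_{v \in V}\by_v+\sum_{B\in\O_G}\bz_B \cdot \left\lfloor\frac{|B|}{2}\right\rfloor.
\]
The desired inequality follows as soon as the first term $\mu \cdot \sum_{e \in E^{(t)}} \bw_e \bx^{*}_e$ is shown to be non-negative.

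To that end, I would simply observe the following sign facts: $\mu > 0$ by assumption in \cref{problem:entropy-matching}; $\bw_e \in \N$ so $\bw_e \geq 0$ (in fact strictly positive by the assumption that weights are positive integers); and $\bx^{*}_e \geq 0$ since $\bx^{*} \in \M_{G^{(t)}} \subseteq \R_{\geq 0}^{E^{(t)}}$. One can even invoke the closed form $\bx^{*}_e = 2^{1/\mu - 1 - \byz_e/(\mu \bw_e) + \log(\gamma/\bw_e)} > 0$ from \cref{lemma:primal-dual relation} to see that every $\bx_e^*$ is strictly positive, but non-negativity already suffices. Hence each summand $\mu \bw_e \bx^{*}_e$ is non-negative, and the whole sum $\mu \cdot \sum_{e \in E^{(t)}} \bw_e \bx^{*}_e$ is non-negative.

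Subtracting this non-negative quantity from the right-hand side of the equality in \cref{lemma:primal-dual relation} yields the claimed lower bound
\[
Z_{E^{(t)},\gamma}^{\mu} \geq \sum_{v \in V}\by_v+\sum_{B\in\O_G}\bz_B \cdot \left\lfloor\frac{|B|}{2}\right\rfloor,
\]
completing the proof. There is no real obstacle here; the statement is flagged as an \emph{immediate} corollary, and the only content is that the primal term dropped from the Lagrangian representation is non-negative under the standing positivity assumptions on $\mu$, $\bw$, and the entries of $\bx^{*}$.
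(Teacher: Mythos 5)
Your proposal is correct and matches the paper's (implicit) argument exactly: the paper states this as an immediate consequence of \cref{lemma:primal-dual relation}, and the only content is that the dropped term $\mu \cdot \sum_{e \in E^{(t)}}\bw_e\bx^{*}_e$ is non-negative, which follows from $\mu > 0$, $\bw_e > 0$, and $\bx^{*}_e \geq 0$ just as you argue.
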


For $\eps>0$ and $\bx \in \R_{\geq 0}^{E^{(t)}}$, let $E^{(t)}_{\eps}(\bx) \defeq \left\{e \in E^{(t)}\mid \bx_e \geq \frac{\eps}{3n}\right\}$.
Recall from \cref{tab:notation} that $\nu^*_{E^{(t)}} \defeq \max_{\bx\in \M_{G^{(t)}}}\bw^{\top}_{E^{(t)}}\bx$ which satisfies $\nu^{*}_{E^{(t)}} \leq nW$.

\begin{lemma}\label{lemma:covering of the edges}
    For $\eps>0$, $\nu^*_{E^{(t)}}\leq \gamma\leq m \cdot \nu^*_{E^{(t)}}$, and $\mu \leq \frac{\eps}{8\log(n^4W/\eps)}$, the optimal dual solution $(\by,\bz)$ approximately covers all edge weights, i.e., for every $e \in E^{(t)}$ it holds that $\byz_{e}\geq (1-\varepsilon/8)\bw_{e}$.
    Moreover, the covering is almost tight on the subgraph $E^{(t)}_{\eps}(\bx)$, i.e., $\byz_{e} \leq (1 + \eps/8)\bw_{e}$ holds for every $e \in E^{(t)}_{\eps}(\bx)$.
\end{lemma}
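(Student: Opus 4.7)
The plan is to first extract an explicit closed-form for $\byz_e$ from \cref{lemma:primal-dual relation}, and then bound the regularization term above and below using the hypotheses on $\gamma$ and $\mu$.

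First, I would start from the KKT identity
\[
\bx^{*}_{e}=2^{\frac{1}{\mu}-1-\frac{\byz_{e}}{\mu \bw_{e}}+\log\frac{\gamma}{\bw_e}}
\]
and take logarithms on both sides (treating the base consistently with the convention used in the proof of \cref{lemma:entropy-regularized objective property}, so that the exponential and logarithm cancel). Solving the resulting linear equation for $\byz_e$ gives the ``primal-dual formula''
\[
\byz_e \;=\; (1-\mu)\,\bw_e \;+\; \mu\,\bw_e\,\log\!\frac{\gamma}{\bw_e \bx^{*}_e},
\]
which will be the workhorse of the proof. I would verify (as a sanity check) that this is exactly the first-order stationarity identity one would derive from differentiating the Lagrangian of \cref{problem:entropy-matching} with respect to $\bx_e$ and using complementary slackness on the non-negativity multiplier $\br$.

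Next, for the global lower bound $\byz_e \geq (1-\eps/8)\bw_e$, I would argue that the logarithmic term is non-negative. Since $e\in E^{(t)}$ lies in some matching, the singleton $\{e\}$ shows $\bw_e \leq \nu^{*}_{E^{(t)}} \leq \gamma$; combined with $\bx^{*}_e \leq 1$ this gives $\bw_e \bx^{*}_e \leq \gamma$, so $\log(\gamma/(\bw_e \bx^{*}_e)) \geq 0$. Hence $\byz_e \geq (1-\mu)\bw_e$, and since $\mu \leq \eps/(8\log(n^4W/\eps)) \leq \eps/8$ we are done.

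Finally, for the almost-tight upper bound on $E^{(t)}_{\eps}(\bx^{*})$, I would upper bound the logarithm. For $e$ with $\bx^{*}_e \geq \eps/(3n)$ and $\bw_e \geq 1$ (integrality), we have $\bw_e \bx^{*}_e \geq \eps/(3n)$. Combining with $\gamma \leq m\nu^{*}_{E^{(t)}} \leq \binom{n}{2}\cdot nW \leq n^{3}W/2$ yields
\[
\log\!\frac{\gamma}{\bw_e \bx^{*}_e} \;\leq\; \log\!\frac{3}{2} + \log\!\frac{n^4 W}{\eps} \;<\; 1 + \log\!\frac{n^4 W}{\eps} \;\leq\; 1 + \frac{\eps}{8\mu},
\]
where the last inequality is exactly the hypothesis on $\mu$ (note that $\log(3/2)<1$ in either natural log or log base two). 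Multiplying by $\mu\bw_e$ and adding $(1-\mu)\bw_e$ gives $\byz_e \leq (1-\mu)\bw_e + \mu\bw_e + \tfrac{\eps}{8}\bw_e = (1+\eps/8)\bw_e$, as desired.

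There is no genuine obstacle here: the argument is essentially algebra on the closed-form identity, and both bounds follow by matching the choice of $\mu$ against the permitted range of $\gamma$. The only point worth being careful about is the constant in the upper bound — one must squeeze the $\log(3/2)$ slack out of the crude bound $\gamma \leq n^3 W$ by using $m \leq n^2/2$ rather than $m \leq n^2$, which is why we obtain $1 + \log(n^4 W/\eps)$ on the right-hand side rather than something larger.
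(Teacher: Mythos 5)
Your proof is correct and follows essentially the same route as the paper's: rearrange the KKT identity from \cref{lemma:primal-dual relation} to get $\byz_e=(1-\mu)\bw_e+\mu\bw_e\log\frac{\gamma}{\bw_e\bx^*_e}$, then use $\bx^*_e\leq 1$ and $\gamma\geq\bw_e$ for the lower bound and $\bw_e\bx^*_e\geq\eps/(3n)$, $\gamma\leq n^3W$ for the upper bound. The only difference is cosmetic bookkeeping of constants in the upper bound (you absorb the slack via $m\leq n^2/2$ and the $+\mu\bw_e$ term, the paper via the $1/3$ in the threshold), which does not change the argument.
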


\begin{proof}
  From \cref{problem:entropy-matching} and the choice of $\gamma$, we have $\max_{e \in E^{(t)}}\bw_e \leq \nu^*_{E^{(t)}}\leq \gamma \leq m\cdot \nu^*_{E^{(t)}}\leq n^3 W$.
  \Cref{lemma:primal-dual relation} shows that 
  \[\bx^*_{e}=2^{\frac{1}{\mu}-1-\frac{\byz_{e}}{\mu \cdot \bw_{e}}+\log\frac{\gamma}{\bw_e}}.\]
  Since the optimal primal solution is feasible, we have $\bx^{*}_{e} \leq 1$, implying
  \[ \byz_{e}\geq (1-\mu)\bw_{e}+\mu \bw_e\log \frac{\gamma}{\bw_e}\geq \left(1-\frac{\eps}{8}\right)\bw_e\]
  for all $e \in E^{(t)}$ because $\gamma\geq \bw_e$ and $\mu\leq \eps/8$.
  Similarly, for all $e\in E^{(t)}_{\eps}(\bx)$, we have
  \[\byz_{e}\leq \left(1-\mu+\mu\log\left(\frac{n}{3\eps}\right)\right)\bw_{e}+\mu \bw_e\log \frac{\gamma}{\bw_e}\leq \left(1+\frac{\eps}{8}\right)\bw_{e},\]
  since $\gamma\leq n^3W$ and $\mu \leq \frac{\eps}{8\log(n^4W/\eps)}$.
\end{proof}

We argue that $E^{(t)}(\eps, \bx)$ for any fractional matching $\bx\in\M_{G^{(t)}}$ keeps most of the weight of $\bx$.

\begin{lemma}\label{lemma:structured subgraph}
    For any fractional matching $\bx \in \M_{G^{(t)}}$, accuracy parameter $\eps>0$, $\nu^*_{E^{(t)}}\leq \gamma\leq m \cdot \nu^*_{E^{(t)}}$, and $\mu\leq \frac{\eps}{8\log(n^4W/\eps)}$, we have
    \[\sum_{e \in E^{(t)}_{\eps}(\bx)}\bw_e\bx_e \geq \sum_{e \in E^{(t)}}\bw_e\bx_e- \frac{4\eps}{7} \cdot Z_{E^{(t)},\gamma}^{\mu}.\]
\end{lemma}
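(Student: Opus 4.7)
The plan is to use LP duality together with the approximate dual-covering property of Lemma~\ref{lemma:covering of the edges} to bound the total $\bw$-weight on the "small" edges $E^{(t)}\setminus E^{(t)}_{\eps}(\bx)=\{e:\bx_e<\eps/(3n)\}$ in terms of the dual potential, which is in turn bounded by $Z^{\mu}_{E^{(t)},\gamma}$ via Corollary~\ref{cor:lower-bound-opt}.

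First, I would replace primal weights by dual covers. By Lemma~\ref{lemma:covering of the edges}, for every $e\in E^{(t)}$ we have $\bw_e\leq \byz_e/(1-\eps/8)$, where $\byz_e=\by_u+\by_v+\sum_{B\ni\{u,v\}}\bz_B$. Hence
\[
\sum_{e\in E^{(t)}\setminus E^{(t)}_{\eps}(\bx)}\bw_e\bx_e \;\leq\; \frac{1}{1-\eps/8}\sum_{e\in E^{(t)}\setminus E^{(t)}_{\eps}(\bx)}\byz_e\bx_e.
\]
Next, I would expand $\byz_e$ and swap the order of summation, grouping contributions by the vertex or odd set that owns the dual variable. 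Fixing a vertex $v$, since $G$ is simple there are at most $n$ edges incident to $v$, and each small edge carries $\bx_e<\eps/(3n)$, so $\sum_{e\in E^{(t)}_v\setminus E^{(t)}_{\eps}(\bx)}\bx_e\leq \eps/3$. Fixing an odd set $B\in\O_G$ with $|B|\geq 3$, there are at most $|B|^2/2\leq n|B|/2$ edges in $E^{(t)}[B]$, so the analogous sum is bounded by $|B|^2\eps/(6n)\leq |B|\eps/6$; and the elementary inequality $|B|/6\leq \lfloor|B|/2\rfloor/2$, valid for odd $|B|\geq 3$, turns this into $\tfrac{\eps}{2}\lfloor|B|/2\rfloor$. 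Combining,
\[
\sum_{e\in E^{(t)}\setminus E^{(t)}_{\eps}(\bx)}\byz_e\bx_e \;\leq\; \frac{\eps}{3}\sum_{v\in V}\by_v + \frac{\eps}{2}\sum_{B\in\O_G}\bz_B\left\lfloor\tfrac{|B|}{2}\right\rfloor \;\leq\; \frac{\eps}{2}\left(\sum_{v\in V}\by_v+\sum_{B\in\O_G}\bz_B\left\lfloor\tfrac{|B|}{2}\right\rfloor\right).
\]

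Finally I would apply Corollary~\ref{cor:lower-bound-opt} to bound the parenthesized dual potential by $Z^{\mu}_{E^{(t)},\gamma}$, and verify the arithmetic $\tfrac{\eps/2}{1-\eps/8}\leq \tfrac{4\eps}{7}$ for $\eps\leq 1$ to conclude
\[
\sum_{e\in E^{(t)}\setminus E^{(t)}_{\eps}(\bx)}\bw_e\bx_e\;\leq\;\frac{4\eps}{7}\cdot Z^{\mu}_{E^{(t)},\gamma}.
\]
Rearranging yields the claimed inequality. I do not expect a serious obstacle, but the delicate step is the odd-set case: a naive use of $\sum_{e\in E^{(t)}[B]\setminus E^{(t)}_{\eps}(\bx)}\bx_e\leq\lfloor|B|/2\rfloor$ from $\bx\in\M_{G^{(t)}}$ loses the crucial factor of $\eps$, so one must exploit both the number of edges in $B$ and the threshold $\eps/(3n)$ simultaneously, then convert a $|B|$-linear bound into a $\lfloor|B|/2\rfloor$-linear one to match the form of Corollary~\ref{cor:lower-bound-opt}.
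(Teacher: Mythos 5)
Your proposal is correct and follows essentially the same route as the paper: replace $\bw_e$ by $\byz_e/(1-\eps/8)$ via \cref{lemma:covering of the edges}, regroup by dual variables, bound the small-edge mass at each vertex by $\eps/3$ and in each odd set by $\eps|B|/6 \leq \tfrac{\eps}{2}\lfloor|B|/2\rfloor$, and finish with \cref{cor:lower-bound-opt}. The only cosmetic difference is that you bound the odd-set contribution by directly counting the $O(|B|^2)$ edges inside $B$ and using $|B|\leq n$, whereas the paper uses the handshake bound $\sum_{e\in E^{(t)}[B]}\bx_e \leq \tfrac12\sum_{v\in B}\sum_{e\in E^{(t)}_v}\bx_e$; both yield the identical intermediate bound $\eps|B|/6$.
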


\begin{proof}
    For those edges in $E^{(t)}_{\eps}(\bx)$, we have
    \begin{align*}
        \sum_{e\in E^{(t)} \setminus E^{(t)}_{\eps}(\bx)}\bw_{e}\bx_{e}
        &\stackrel{(i)}{\leq} \frac{1}{1-\eps/8}\sum_{e\in E^{(t)} \setminus E^{(t)}_{\eps}(\bx)}\byz_{e} \cdot \bx_{e}\\
        &=\frac{1}{1-\eps/8}\sum_{v\in V}\by_v\norm*{\bx_{E^{(t)} _v\setminus E^{(t)}_{\eps}(\bx)}}_1+\frac{1}{1-\eps/8}\sum_{B\in\O_G}\bz_B\norm*{\bx_{E^{(t)} [B]\setminus E^{(t)}_{\eps}(\bx)}}_1\\
        &\stackrel{(ii)}{\leq} \frac{\eps}{3(1-\eps/8)}\sum_{v \in V}\by_v+\frac{\eps}{3(1-\eps/8)}\sum_{B\in\O_G}\bz_B\frac{|B|}{2}\\
        &\stackrel{(iii)}{\leq} \frac{\eps}{3(1-\eps/8)}\norm*{\by}_1+\frac{\eps}{2(1-\eps/8)}\sum_{B\in\O_G}\bz_B\left\lfloor\frac{|B|}{2}\right\rfloor\\
        &\stackrel{(iv)}{\leq} \frac{4\eps}{7} \cdot Z_{E^{(t)},\gamma}^{\mu},
    \end{align*}
    where (i) is by \Cref{lemma:covering of the edges} which shows that $\byz_{e}\geq (1-\eps/8)\bw_{e}$ for all $e \in E^{(t)}$, (ii) is because
    \[ \sum_{e \in E_v^{(t)} \setminus E^{(t)}_{\eps}(\bx)}\bx_e \leq \frac{\eps}{3}\quad\text{and}\quad\sum_{e \in E^{(t)}[B] \setminus E^{(t)}_{\eps}(\bx)}\bx_e \leq \frac{1}{2} \cdot \sum_{v \in B}\sum_{e \in E^{(t)}_v \setminus E^{(t)}_{\eps}(\bx)} \leq \frac{\eps}{3} \cdot \frac{|B|}{2}, \]
    (iii) is by $|B|\geq 3$, and (iv) is by \Cref{cor:lower-bound-opt}.
\end{proof}

Now, we consider a fractional matching $\bx \in \M_{G^{(t)}}$ that is close to $\bx^{*}$.
We show that a degree-sparsifier of $\bx$ preserves most of its weight.
This is perhaps surprising as the definition of degree-sparsifier (\cref{def:sparsifier}) is purely unweighted, while we can use it to sparsify this particular weighted matching.
\begin{lemma}\label{lemma:weighted sparsifier}
  For any accuracy parameter $\eps>0$, $\nu^*_{E}\leq \gamma\leq m \cdot\nu_{E}^*$, and $\mu \leq \frac{\eps}{64\log(8n^4W/\eps)}$,
  given an $\bx \in \M_G$ with
  \begin{equation}
  \sum_{e \in E^{(t)}}\bw_e\left|\bx_e - \bx^{*}_e\right| \leq \frac{\eps}{8}\cdot Z_{E^{(t)},\gamma}^{\mu},
  \label{eq:distance}
  \end{equation}
  for any edge subset $E^\prime\subseteq E^{(t)}$ such that
  \[
    \sum_{e \in E^\prime}\bw_e \bx_e \geq (1-\eps/8)\sum_{e \in E^{(t)}}\bw_e\bx_e,
  \]
  we have that any $\eps/8$-degree-sparsifier $H$ of $\bx_{E^\prime \cap E^{(t)}_{\eps/8}(\bx)}$ satisfies
  \[
    M^{*}_{\bw}(H) \geq \left(1-\frac{\eps}{2}\right)\cdot \sum_{e \in E^\prime}\bw_e\bx_e.
  \]
\end{lemma}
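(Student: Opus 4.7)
\textbf{Proof Proposal for \cref{lemma:weighted sparsifier}.}

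The plan is to build a high-weight fractional matching supported on $H$ and then invoke LP-integrality of the matching polytope to conclude $M^*_{\bw}(H) \geq \bw^\top \bx^{(H)}$. The natural candidate is the certificate $\bx^{(H)} \in \M_G$ guaranteed by \cref{def:sparsifier}, whose vertex sums and odd-set sums are, by construction, only a $\tfrac{\eps}{8}$-additive perturbation away from those of $\bx_{E^* \cap E^{(t)}_{\eps/8}(\bx)}$, where I abbreviate $E^{*} \defeq E^\prime \cap E^{(t)}_{\eps/8}(\bx)$. The key lever is the dual pair $(\by,\bz)$ from \cref{lemma:primal-dual relation}: by \cref{lemma:covering of the edges} (applied with accuracy $\eps/8$, whose hypothesis is exactly our $\mu$-bound), the complementary quantity $\byz_e$ satisfies $(1-\eps/8)\bw_e \leq \byz_e \leq (1+\eps/8)\bw_e$ on every edge of $E^{(t)}_{\eps/8}(\bx)$, and in particular on $\supp(\bx^{(H)}) \subseteq H \subseteq E^{*}$.

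First I would rewrite $\sum_e \byz_e \bx^{(H)}_e$ as $\sum_v \by_v \bx^{(H)}(v) + \sum_B \bz_B\bx^{(H)}(B)$ and plug in the degree-sparsifier guarantees $\bx^{(H)}(v) \geq \bx_{E^*}(v)-\eps/8$ and $\bx^{(H)}(B) \geq \bx_{E^*}(B)-(\eps/24)|B| \geq \bx_{E^*}(B)-(\eps/8)\lfloor|B|/2\rfloor$. Refolding, this yields
\[
\sum_e \byz_e \bx^{(H)}_e \geq \sum_{e \in E^{*}}\byz_e \bx_e - \frac{\eps}{8}\Bigl(\norm*{\by}_1 + \sum_{B\in\O_G}\bz_B \left\lfloor|B|/2\right\rfloor\Bigr) \geq (1-\eps/8)\sum_{e\in E^*}\bw_e \bx_e - \frac{\eps}{8}Z_{E^{(t)},\gamma}^{\mu},
\]
where the second inequality uses \cref{lemma:covering of the edges} on $E^{*}$ and \cref{cor:lower-bound-opt} on the dual norms. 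Dividing by $(1+\eps/8)$ then gives $\bw^\top \bx^{(H)} \geq (1-\eps/8)^2(1+\eps/8)^{-1}\sum_{e\in E^{*}}\bw_e\bx_e - (\eps/8)(1+\eps/8)^{-1}Z_{E^{(t)},\gamma}^{\mu}$.

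Second, I would convert this bound from $E^{*}$ and $Z_{E^{(t)},\gamma}^{\mu}$ back into the target quantity $\sum_{e \in E^\prime}\bw_e\bx_e$. The gap $E^\prime \setminus E^{*} \subseteq E^{(t)}\setminus E^{(t)}_{\eps/8}(\bx)$ loses at most $\tfrac{4(\eps/8)}{7}Z_{E^{(t)},\gamma}^{\mu}$ worth of weight by \cref{lemma:structured subgraph} (again the $\mu$ hypothesis is exactly what we have). In the other direction, combining \cref{lemma:absolute entropy bound} (with accuracy $\eps/8$, which is valid because $\mu \leq \eps/(64\log(8n^4W/\eps)) \leq (\eps/8)/(8\log d)$ for $d \leq m$), the definition $Z_{E^{(t)},\gamma}^{\mu} = \bw^\top \bx^{*} + \mu \cdot r_{E^{(t)},\gamma}(\bx^{*})$, and the closeness assumption \eqref{eq:distance} gives $Z_{E^{(t)},\gamma}^{\mu} \leq (1+O(\eps))\sum_{e\in E^{(t)}}\bw_e\bx_e \leq (1+O(\eps))\sum_{e\in E^\prime}\bw_e\bx_e$, where the last step invokes the hypothesis $\sum_{e\in E^\prime}\bw_e\bx_e \geq (1-\eps/8)\sum_{e\in E^{(t)}}\bw_e\bx_e$. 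Substituting this chain of inequalities and collecting the $O(\eps)$ terms carefully yields $\bw^\top \bx^{(H)} \geq (1-\eps/2)\sum_{e\in E^\prime}\bw_e\bx_e$. Finally, since $\bx^{(H)} \in \M_G$ is supported on $H$, it is a fractional matching on $H$, and by LP-integrality of $\M_H$ we conclude $M^*_{\bw}(H) \geq \bw^\top \bx^{(H)}$, completing the proof.

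The main obstacle is bookkeeping: every structural input has its own $\eps$-dependent tolerance, and one has to verify the $\mu$-hypothesis for each application (of \cref{lemma:covering of the edges}, \cref{lemma:structured subgraph}, and \cref{lemma:absolute entropy bound} at accuracy $\eps/8$) and then balance the additive $Z_{E^{(t)},\gamma}^{\mu}$-errors against the multiplicative $(1\pm\eps/8)$-errors so that the final constant lands on $(1-\eps/2)$. The conceptual content, though, is essentially a one-line dual argument: the certificate $\bx^{(H)}$ inherits the vertex/odd-set profile of $\bx_{E^{*}}$ up to $O(\eps)$ slack, and on $E^{*}$ the dual of the entropy-regularized matching approximately equals the primal weight, so weight and ``dual mass'' are interchangeable up to $(1\pm\eps/8)$.
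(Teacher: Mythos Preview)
Your approach coincides with the paper's: take the certificate $\bx^{(H)}$, swap $\bw_e$ for the dual combination $\byz_e$ via \cref{lemma:covering of the edges}, push the degree-sparsifier slack through the decomposition $\sum_e\byz_e\,\bx^{(H)}_e = \sum_v\by_v\,\bx^{(H)}(v)+\sum_B\bz_B\,\bx^{(H)}(B)$, swap back, and absorb the additive $Z^{\mu}_{E^{(t)},\gamma}$-losses into the multiplicative target using \cref{lemma:absolute entropy bound}, \cref{lemma:structured subgraph}, and the two hypotheses on $\bx$ and $E'$.

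There is, however, one genuine gap. You claim the two-sided sandwich $(1-\eps/8)\bw_e \leq \byz_e \leq (1+\eps/8)\bw_e$ on all of $E^{(t)}_{\eps/8}(\bx)$, but the \emph{upper} bound in \cref{lemma:covering of the edges} is only available on $E^{(t)}_{\eps/8}(\bx^{*})$: its proof reads off $\byz_e$ from the explicit formula of \cref{lemma:primal-dual relation}, so the size of $\byz_e$ is governed by $\bx^{*}_e$, not by the approximate solution $\bx_e$. An edge with $\bx_e \geq \eps/(24n)$ but $\bx^{*}_e$ tiny can have $\byz_e$ much larger than $\bw_e$, and then your ``divide by $1+\eps/8$'' step fails. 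The hypothesis \eqref{eq:distance} is only an aggregate bound and does not rule this out pointwise.

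The paper accounts for this by introducing the further restriction $\widehat{E} \defeq E^{*} \cap E^{(t)}_{\eps/8}(\bx^{*})$ and separately controlling the weight lost when passing from $E'$ to $\widehat{E}$: it applies \cref{lemma:structured subgraph} once to $\bx$ and once to $\bx^{*}$, and combines these with \eqref{eq:distance} and the triangle inequality to show $\sum_{e\in E' \setminus \widehat{E}}\bw_e\bx_e \leq O(\eps)\cdot Z^{\mu}_{E^{(t)},\gamma}$. With this extra restriction in place the dual sandwich is legitimate on $\widehat{E}$, and the remaining bookkeeping in your outline goes through with the constants only slightly adjusted.
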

\begin{proof}
    Letting $\eps^\prime \defeq \eps/8$, we have $\mu \leq \frac{\eps^\prime}{8\log(n^4W/\eps^\prime)}$.
    Let $\bx^{(H)}$ be the certificate of $H$ being an $\eps/8$-degree-sparsifier.
    It suffices to prove that $\sum_{e \in E^\prime}\bw_e\bx_e^{(H)} \geq (1-\eps/2)\cdot \sum_{e \in E^\prime}\bw_e\bx_e$.
    From \Cref{lemma:structured subgraph} with accuracy parameter $\eps^\prime$, we have
    \begin{equation}
    \sum_{e \in E^\prime \setminus E^{(t)}_{\eps/8}(\bx)}\bw_e \bx_e \leq \frac{\eps}{14} \cdot Z_{E^\prime,\gamma}^{\mu} \leq \frac{\eps}{14} \cdot Z_{E^{(t)},\gamma}^{\mu}
    \quad\text{and}\quad
    \sum_{e \in E^\prime \setminus E^{(t)}_{\eps/8}(\bx^*)}\bw_e\bx^{*}_e \leq \frac{\eps}{14} \cdot Z_{E^{(t)},\gamma}^{\mu}.
    \label{eq:bound-small-edges}
    \end{equation}
    By \eqref{eq:distance}, \eqref{eq:bound-small-edges}, and triangle inequality we have
    \[
      \sum_{e \in E^\prime \setminus E^{(t)}_{\eps/8}(\bx^*)}\bw_e\bx_e \leq \sum_{e \in E^\prime \setminus E^{(t)}_{\eps/8}(\bx^*)}\bw_e\bx^{*}_e + \sum_{e \in E^\prime \setminus E^{(t)}_{\eps/8}(\bx^*)}\bw_e\left|\bx_e - \bx^{*}_e\right| \leq \frac{3\eps}{14} \cdot Z_{E^{(t)},\gamma}^{\mu}
    \]
    and thus
    \begin{equation}
      \sum_{e \in E^\prime \setminus \left(E^{(t)}_{\eps/8}(\bx) \cup E^{(t)}_{\eps/8}(\bx^*)\right)}\bw_e\bx_e \leq \frac{2\eps}{7} \cdot Z_{E^{(t)},\gamma}^{\mu}.
        \label{eq:norm}
    \end{equation}
    Applying \Cref{lemma:covering of the edges} with accuracy parameter $\eps^\prime$, for every edge $e\in E^{(t)} \cap E^{(t)}_{\eps/8}(\bx^*)$, we have $\byz_{e}\leq \left(1+\frac{\eps}{64}\right)\bw_{e}$.
    Therefore, letting $\widetilde{E} \defeq E^\prime \cap E^{(t)}_{\eps/8}(\bx)$ and $\widehat{E} \defeq \widetilde{E} \cap E^{(t)}_{\eps/8}(\bx^*)$ for clarity, we have
    \begin{align*}
      \sum_{e \in E^\prime}\bw_e\bx^{(H)}_e &\geq \sum_{e \in \widetilde{E}}\bw_e \bx^{(H)}_e \geq \left(1-\frac{\eps}{64}\right)\sum_{e\in\widetilde{E}}\byz_e \bx^{(H)}_e \\
      &= \left(1-\frac{\eps}{64}\right)\left(\sum_{v \in V}\by_v\norm*{\bx^{(H)}_{\widetilde{E}_v}}_1 + \sum_{B \in \O_G}\bz_B\norm*{\bx^{(H)}_{\widetilde{E}[B]}}_1\right) \\
      &\stackrel{(i)}{\geq} \left(1-\frac{\eps}{64}\right)\left(\sum_{v \in V}\by_v\norm*{\bx_{\widetilde{E}_v}}_1+\sum_{B \in \O_G}\bz_B\norm*{\bx_{\widetilde{E}[B]}}_1\right) - \frac{\eps}{8}\left(\norm*{\by}_1+\sum_{B \in \O_G}\bz_B \left\lfloor\frac{|B|}{2}\right\rfloor\right) \\
      &\geq \left(1-\frac{\eps}{64}\right)\left(\sum_{v \in V}\by_v\norm*{\bx_{\widehat{E}_v}}_1+\sum_{B \in \O_G}\bz_B\norm*{\bx_{\widehat{E}[B]}}_1\right) - \frac{\eps}{8}\left(\norm*{\by}_1+\sum_{B \in \O_G}\bz_B \left\lfloor\frac{|B|}{2}\right\rfloor\right) \\
      &\stackrel{(ii)}{\geq} \left(1-\frac{\eps}{64}\right)\left(1-\frac{\eps}{8}\right)\left(\sum_{e \in \widehat{E}}\bw_e\bx_e\right) - \frac{\eps}{8} Z_{E,\gamma}^{\mu} \\
      &\stackrel{(iii)}{\geq} \left(1-\frac{9\eps}{64}\right)\left(\sum_{e \in E^\prime}\bw_e\bx_e\right) - \frac{\eps}{6} Z_{E^{(t)},\gamma}^{\mu}\stackrel{(iv)}{\geq} \left(1-\frac{\eps}{2}\right)\left(\sum_{e \in E^\prime}\bw_e\bx_e\right),
    \end{align*}
    where (i) uses the assumption of $\bx^{(H)}$ being an $\eps/8$-degree-sparsifier of $\bx_{E^\prime \cap E^{(t)}_{\eps/8}(\bx)}$, (ii) uses \cref{lemma:covering of the edges,cor:lower-bound-opt}, and (iii) uses \eqref{eq:norm}.
    Finally, (iv) follows from
    \begin{align*}
      \sum_{e \in E^\prime} \bw_e \bx_e &\stackrel{(a)}{\geq} \left(1-\frac{\eps}{8}\right)\sum_{e \in E^{(t)}}\bw_e\bx_e \stackrel{(b)}{\geq} \left(1-\frac{\eps}{8}\right) \left(\sum_{e \in E^{(t)}}\bw_e \bx^{*}_e - \frac{\eps}{8} Z_{E^{(t)},\gamma}^{\mu}\right) \\
      &\stackrel{(c)}{\geq}\left(1-\frac{\eps}{8}\right)\left(\left(1-\frac{\eps}{16}\right)Z_{E^{(t)},\gamma}^{\mu}-\frac{\eps}{8}Z_{E^{(t)},\gamma}^{\mu} \right)
      \geq \left(1-\frac{5}{16}\eps\right)Z_{E^{(t)},\gamma}^{\mu},
    \end{align*}
    where (a) and (b) follow from the input assumption, and (c) is from \cref{lemma:absolute entropy bound} with accuracy parameter $\eps^\prime$.
    This concludes the proof.
\end{proof}

We are now ready to finish the proof of \cref{thm:weighted-rounding}.
For completeness we repeat the proof of \cref{thm:decremental-fractional} with a different set of parameters which allows us to perform rounding afterward.

\WeightedRounding*

\begin{proof}
  Let us assume $\eps \geq n^{-1/6}$ for general $G$ and $\eps \geq n^{-1}$ for bipartite $G$, as the update time of $\widetilde{O}(\eps^{-\expnear})$, $\widehat{O}(\eps^{-\expalmost})$, and $\widehat{O}(\eps^{-6})$ can be obtained by re-running the static algorithm of \cite{DuanP14} after every update.
  Letting $\X \defeq \M_G$,  \cref{thm:upper bound on reconstruction rounds approximate multiple phases} with accuracy parameter $\eps^\prime \defeq \eps/8$ and $\mu \defeq \frac{\eps^\prime}{128\log{((n^4W)/\eps^\prime)}}$ shows that by using \cref{thm:entropy-solver} with accuracy parameter $\frac{\mu{\eps^\prime}^2}{512}$ inside \cref{alg:rebuild} as the subroutine \texttt{Rebuild()}, there will be at most $\widetilde{O}(\eps^{-2})$ calls to $\texttt{Rebuild()}$ until the graph becomes empty.
  As such, we can maintain a $(1-\eps/8)$-approximate fractional maximum weight matching in amortized update times $\widetilde{O}(\eps^{-\expnear})$ and $\widehat{O}(\eps^{-\expalmost})$ for general $G$.
  As in the proof of \cref{thm:decremental-fractional}, for bipartite $G$, the update time improves to $\widehat{O}(\eps^{-2})$ by using \cite[Theorem 10.16]{ChenKLPGS22} instead of \cref{thm:entropy-solver}.

  Consider the $t$-th rebuild, where the current graph is $G^{(t)} = (V, E^{(t)})$.
  Until the next rebuild, let $E^\prime \subseteq E^{(t)}$ be the current edge set, i.e.,  the adversary has deleted $E^{(t)} \setminus E^\prime$ from the graph.
  To round the fractional matching into an integral matching, right after the $t$-th rebuild we run \cref{lemma:rand-sparsifier} to maintain an $\eps/8$-degree-sparsifier $H$ of $\bx_{E^\prime \cap E^{(t)}_{\eps/8}(\bx)}$, where $\bx \defeq \bx^{(t)}$ is the new fractional matching that we just got from $\texttt{Rebuild()}$.
  We then run the static algorithm of \cite{DuanP14} to get a $(1-\eps/4)$-approximate matching $M$ in $H$ in time $\widetilde{O}(n\eps^{-3})$, since $|H| \leq \widetilde{O}(n\eps^{-2})$ by the sparsity guarantee of \cref{lemma:rand-sparsifier}.
  We use $M$ as the output integral matching, and let $\nu$ be the value of $\bw(M)$ right after running \cite{DuanP14}.
  Until the next rebuild happens, for every deletion, we feed the update of $\bx_{E^\prime \cap E^{(t)}_{\eps/8}(\bx)}$ to \cref{lemma:rand-sparsifier} to maintain a degree sparsifier.
  Whenever the deletions make $\bw(M) < (1-\eps/8)\nu$, we query the degree-sparsifier-maintainer of \cref{lemma:rand-sparsifier} to get a new $H$ over which we reconstruct an integral $(1-\eps/4)$-approximate matching $M$ and its value $\nu$ using \cite{DuanP14}.

  We first analyze the quality of $M$.
  By \cref{lemma:approximate solution}, the $\bx^{(t)}$ returned by $\texttt{Rebuild()}$ satisfies\footnote{Note that there is a change of notation here, so $\sum_{e \in E^{(t)}}\bw_e\left|\bx_e^{(t)} - \left(\bx^{\mu}_{E^{(t)},\gamma}\right)_e\right|$ is the same as $\norm*{\bx^{(t)} - \bx^{\mu}_{S^{(t)},\gamma}}_{\bw,S^{(t)}}$ from \cref{sec:congestion-balancing}. See also \cref{tab:notation}.}
  \[ \sum_{e \in E^{(t)}}\bw_e\left|\bx^{(t)}_e - \left(\bx^{\mu}_{E^{(t)},\gamma}\right)_e\right| \leq \frac{\eps}{8}Z_{E^{(t)},\gamma}^{\mu}. \]
  By definition of \cref{alg:meta-general-ds}, until the next rebuild happens, we have
  \[ \sum_{e \in E^\prime}\bw_e\bx^{(t)}_e \geq \left(1-\frac{\eps^\prime}{2}\right)\sum_{e \in E^{(t)}}\bw_e\bx^{(t)}_e = \left(1 - \frac{\eps}{16}\right)\sum_{e \in E^{(t)}}\bw_e\bx^{(t)}_e. \]
  Since $\mu \leq \frac{\eps}{64\log(8n^4W/\eps)}$, the conditions of \cref{lemma:weighted sparsifier} are satisfied, showing that the $\eps/8$-degree-sparsifier $H$ of $\bx_{E^\prime \cap E^{(t)}_{\eps/8}(\bx)}$ we maintain indeed has $M^{*}_{\bw}(H) \geq \left(1-\frac{\eps}{2}\right)\left(\sum_{e \in E^\prime}\bw_e\bx^{(t)}_e\right)$.
  This implies that
  \begin{equation}
    \bw(M) \geq \left(1-\frac{\eps}{4}\right)\left(1-\frac{\eps}{2}\right)\left(\sum_{e \in E^\prime}\bw_e\bx^{(t)}_e\right) \geq \left(1-\frac{3\eps}{4}\right)\left(1-\frac{\eps}{16}\right)\sum_{e \in E^{(t)}}\bw_e\bx^{(t)}_e \geq \left(1-\frac{7\eps}{8}\right)M^{*}_{\bw}(G)
    \label{eq:quality-M}
  \end{equation}
  holds right after we run \cite{DuanP14} since $\bx^{(t)}$ was a $(1-\eps^\prime/2)$-approximate fractional matching after the rebuild.
  Because we re-run \cite{DuanP14} whenever $\bw(M) < \left(1 - \frac{\eps}{8}\right)\nu$, we have that $M$ is always a $(1-\eps)$-approximate maximum weight matching.

  We now analyze the additional update time spent in rounding.
  By \eqref{eq:quality-M}, whenever we have to re-run \cite{DuanP14}, we must have deleted a set of edges $D \subseteq E^{(t)} \setminus E^\prime$ from $M$ whose weights sum to at least
  $\sum_{e \in D}\bw_e \geq \frac{\eps}{8}\nu$.
  Because $M \subseteq E^{(t)}_{\eps/8}(\bx^{(t)})$, we have $\bx^{(t)}_e \geq \Omega(\eps/n)$ for all $e \in M$, and thus the adversary also deletes $\Omega(\nu \cdot \eps^2 / n)$ units of weight from $\bx^{(t)}$, i.e., $\sum_{e \in D}\bw_e\bx^{(t)}_e \geq \Omega(\nu \cdot \eps^2 / n)$.
  As a result, until the next rebuild of $\bx^{(t)}$ happens (i.e., when the weight of $\bx^{(t)}$ drops by an $\Theta(\eps)$ fraction), we will run \cite{DuanP14} at most $\widetilde{O}(n/\eps)$ times.
  Since there are $\widetilde{O}(\eps^{-2})$ rebuilds of $\bx^{(t)}$ by \cref{thm:upper bound on reconstruction rounds approximate multiple phases}, rounding incurs an $\widetilde{O}(n^2\eps^{-6})$ additional running time, which is amortized to $\widetilde{O}((n^2/m) \cdot \eps^{-6})$ time per update.
  This proves the update times of the algorithms.
  Finally, the algorithms are output-adaptive as \cref{lemma:rand-sparsifier} is.
\end{proof}

\section*{Acknowledgements}

Thank you to Aditi Dudeja for coordinating the posting to arXiv.
Thank you to Aaron Bernstein, Sayan Bhattacharya, Arun Jambulapati, Peter Kiss, Yujia Jin, Thatchaphol Saranurak, Kevin Tian, and David Wajc for helpful conversations at various stages of the project that ultimately led to this paper.
Part of the work for this paper was conducted while the authors were visiting the Simons Institute for the Theory of Computing.
Jiale Chen was supported in part by a Lawrence Tang Graduate Fellowship, a Microsoft Research Faculty Fellowship, and NSF CAREER Award CCF-1844855.
Aaron Sidford was supported in part by a Microsoft Research Faculty Fellowship, NSF CAREER Award CCF-1844855, NSF Grant CCF-1955039, a PayPal research award, and a Sloan Research Fellowship.
Ta-Wei Tu was supported in part by a Stanford School of Engineering Fellowship, a Microsoft Research Faculty Fellowship, and NSF CAREER Award CCF-1844855.

\bibliography{reference}

\appendix

\section{Generalizing \texorpdfstring{\cite{AhnG14}}{[AG14]}}\label{appendix:entropy}

As mentioned in \cref{sec:decremental-fractional}, \cite{AhnG14} can be generalized to prove \cref{lemma:main-general} in a fairly straightforward manner.
Their results, however, only claimed a $\poly(\eps^{-1})$ dependence in the runtime instead of an explicit one.
As we will swap out certain components in their algorithms with recently developed counterparts, for completeness we give a proof of \cref{lemma:main-general} with explicit dependence on $\eps^{-1}$.
Most of the proofs in this section are slight variants of analogous proofs in \cite{AhnG14}, and we do not claim novelty of them.

\paragraph{Overview.}
Here we outline the overall approach of \cite{AhnG14} to make it easier to understand what the subsequent lemmas and proofs are about.
To begin, note that by \Cref{fact:scaled-down} and concavity of the objective we can focus only on $\M_{G,\eps}$.
Since the constraints of the matching polytope $\mathcal{M}_{G,\eps}$ are linear and each $\bx \in \beta \cdot \P$ returned by the oracle only violates these constraints by a factor of $\kappa_{\P} \cdot \beta$, the main idea of \cite{AhnG14} is to use the multiplicative weights update (MWU) framework to produce a sequence of vectors in $\beta \cdot \P$ whose average only violates the constraints by a factor of $(1+O(\eps))$.
Here, the width of the MWU algorithms is $\kappa_{\P} \cdot \beta$.
As long as each vector in the sequence is an approximate maximizer of the objective, the average of them will be one as well (again we use the concavity of the objective).
However, classic MWU algorithms need to spend $\Omega(\nnz(\boldsymbol{A}))$ time in each iteration where $\boldsymbol{A}$ is the constraint matrix, which is exponential for the matching polytope even only considering small odd sets.
Thus, to circumvent to issue, \cite{AhnG14} showed that (1) a variation of the classic MWU algorithms where only constraints that are violated significantly are evaluated still works and (2) for (a slightly perturbed) $\M_{G,\eps}$, these maximally-violated constraints can be found and evaluated efficiently.
In what follows we adopt several lemmas from \cite{AhnG14}, reproving them analogously in the form we need if necessary.

\begin{lemma}[{\cite[Theorem 8(1)]{AhnG14}}]
  Suppose we are given an $(\beta, T_{\A}, \zeta)$-oracle $\A$ for $(\P, f)$, $\eps \geq \frac{1}{\poly(n)}$, concave function $f$, non-negative linear function $\ell$ where all $\boldsymbol{\ell}_e$'s are polynomially bounded, $\gamma_1 \geq \Omega(1)$ and $\gamma_2 \geq 0$.
  If
  \begin{equation}
    \{\bx \in \mathcal{P}: f(\bx) \geq \gamma_1\;\text{and}\;\ell(\bx) \leq \gamma_2\} \neq \emptyset,
    \label{eq:nonempty}
  \end{equation}
  then in $\widetilde{O}(T_{\A}(n,m))$ time we can find an $\bx \in \beta \cdot \mathcal{P}$ such that $f(\bx) \geq \zeta (1-\eps) \gamma_1$ and $\ell(\bx) \leq \gamma_2$.
  \label{lemma:bin-search}
\end{lemma}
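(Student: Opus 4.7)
The plan is to reduce the concave-maximization with linear-feasibility problem to $\widetilde{O}(1)$ oracle calls via Lagrangian relaxation plus binary search on the multiplier. For each trial $\lambda \geq 0$, I would invoke $\A$ on $g_\lambda \defeq f - \lambda \ell$, which fits the input form of the oracle provided $\lambda \boldsymbol{\ell}_e$ remains polynomially bounded. Writing $\bx_\lambda \in \beta \cdot \P$ for the returned point and using the witness $\bx_0$ from the non-emptiness assumption \eqref{eq:nonempty}, one has $\max_{\bx \in \P} g_\lambda(\bx) \geq g_\lambda(\bx_0) \geq \gamma_1 - \lambda \gamma_2$, and the oracle's $\zeta$-approximation guarantee yields the pointwise inequality
\begin{equation*}
f(\bx_\lambda) - \lambda \ell(\bx_\lambda) \;\geq\; \zeta(\gamma_1 - \lambda \gamma_2). \tag{$\star$}
\end{equation*}

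Next I would binary-search for $\lambda$ over a polynomially bounded range $[0, \Lambda]$, choosing $\Lambda$ large enough that $(\star)$ together with the polynomial upper bound on $f$ forces $\ell(\bx_\Lambda) \leq \gamma_2$, and small enough (using $\gamma_1 \geq \Omega(1)$) that the oracle preconditions---polynomially bounded linear coefficients and $\max g_\lambda \geq 1/\poly(n)$---are preserved at every probe. Since $\log \Lambda = \widetilde{O}(1)$, the binary search terminates in $\widetilde{O}(1)$ oracle calls, matching the claimed runtime $\widetilde{O}(T_\A(n,m))$. I would terminate with adjacent probes $\lambda^- < \lambda^*$ satisfying $\lambda^*/\lambda^- \leq 1 + \eps/\poly(n)$, $\ell(\bx_{\lambda^-}) > \gamma_2$, and $\ell(\bx_{\lambda^*}) \leq \gamma_2$, and would return the convex combination $\bx \defeq \alpha \bx_{\lambda^-} + (1-\alpha) \bx_{\lambda^*}$ with $\alpha \in [0, 1]$ chosen so that $\ell(\bx) = \gamma_2$; by convexity of $\beta \cdot \P$ and linearity of $\ell$, this $\bx$ lies in $\beta \cdot \P$ and is feasible.

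The main obstacle is extracting the objective bound $f(\bx) \geq \zeta(1-\eps)\gamma_1$, since the oracle output may jump discontinuously with $\lambda$. I would bypass continuity arguments entirely by averaging $(\star)$ at $\lambda^-$ and $\lambda^*$ with weights $\alpha$ and $1-\alpha$ and invoking concavity of $f$; after simplification using $\alpha \ell(\bx_{\lambda^-}) + (1-\alpha)\ell(\bx_{\lambda^*}) = \gamma_2$, the lower bound reduces to
\[
f(\bx) \;\geq\; \zeta \gamma_1 - \alpha(1-\alpha)(\lambda^* - \lambda^-)\bigl(\ell(\bx_{\lambda^-}) - \ell(\bx_{\lambda^*})\bigr).
\]
The error term is controlled using the a priori bound $\lambda^- \ell(\bx_{\lambda^-}) \leq f(\bx_{\lambda^-}) + \zeta \lambda^- \gamma_2 \leq \poly(n)$, which follows from $(\star)$ at $\lambda^-$ together with polynomial boundedness of $f$. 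Choosing the binary-search precision $\lambda^*/\lambda^- - 1 = \eps/\poly(n)$, which remains within the $\widetilde{O}(1)$ probe budget, suppresses the error below $\zeta \eps \gamma_1$ as required. A final subtlety is handling degenerate parameter regimes (e.g., very small $\gamma_2$ or the boundary case $\zeta = 1$, where $(\star)$ no longer forces $\ell(\bx_\Lambda) \leq \gamma_2$ for polynomial $\Lambda$); these are dealt with by a preparatory rescaling of $\ell$ and a short case-analysis on whether the initial probe at $\lambda = 0$ already returns a feasible solution.
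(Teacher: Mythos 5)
Your proposal follows essentially the same route as the paper: Lagrangian relaxation $g_\lambda = f - \lambda\ell$, a binary search on the multiplier to bracket the feasibility threshold, a convex combination of the two bracketing oracle outputs with $\ell(\bx) = \gamma_2$, and concavity of $f$ to transfer the oracle guarantee to the combined point. The substantive differences are in the bookkeeping, and one of them matters. For the upper endpoint of the search you rely on ``$(\star)$ together with the polynomial upper bound on $f$ forces $\ell(\bx_\Lambda)\le\gamma_2$,'' which gives only $\ell(\bx_\Lambda)\le \poly(n)/\Lambda + \zeta\gamma_2$; this fails outright when $\zeta=1$ and requires super-polynomial $\Lambda$ when $\gamma_2$ is tiny, and the ``preparatory rescaling of $\ell$'' you gesture at does not obviously repair either case. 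The paper avoids this entirely by searching over $[0,\gamma_1/\gamma_2]$ and taking $\bx_{\gamma_1/\gamma_2} \defeq \boldsymbol{0}$, which lies in $\P$ by downward closedness and trivially satisfies $\ell(\boldsymbol{0})=0\le\gamma_2$ --- no oracle call or boundedness of $f$ needed at the right endpoint. Relatedly, the paper's additive precision $\rho^+-\rho^-\le\eps\gamma_1/\gamma_2$ yields an error term that is automatically a $\zeta$-fraction of $\eps\gamma_1$ (the key cancellation being $\rho^-\ell(\bx)=\rho^-\gamma_2\ge\zeta\rho^-\gamma_2$), whereas your cross-term bound $\alpha(1-\alpha)(\lambda^*-\lambda^-)(\ell(\bx_{\lambda^-})-\ell(\bx_{\lambda^*}))$ does not scale with $\zeta$ and needs an a priori $\poly(n)$ bound on $\lambda^-\ell(\bx_{\lambda^-})$, hence on $f$ over $\beta\cdot\P$, which is not part of the lemma's hypotheses. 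Your core argument is sound, but to close it you should adopt the paper's choice of interval and trivial right endpoint rather than defer the degenerate regimes.
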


\begin{proof}
  The proof strategy is the same as that of \cite[Lemma 6.1]{AhnG14}.
  Let $g_{\rho}(\bx) \defeq f(\bx) - \rho \cdot \ell(\bx)$.
  We can use a binary search to find $0 \leq \rho^{-} < \rho^{+} \leq \frac{\gamma_1}{\gamma_2}$ and $\bx_{\rho^{-}}, \bx_{\rho^{+}} \in \beta \cdot \P$ such that
  $\rho^{+} - \rho^{-} \leq \frac{\eps\gamma_1}{\gamma_2}$ with $\ell(\bx_{\rho^{-}}) > \gamma_2$ and $\ell(\bx_{\rho^{+}}) \leq \gamma_2$ in $\widetilde{O}\left(T_{\A}(n, m)\right)$ time as follows.
  For $\rho = 0$, the $\bx_0$ returned by $\A$ on input $g_{0}$ satisfies $f(\bx_0) \geq \zeta \gamma_1$.
  If $\ell(\bx_0) \leq \gamma_2$ then we can simply return $\bx_0$ as the solution.
  Similarly, for $\rho = \frac{\gamma_1}{\gamma_2}$, the vector $\bx_{\gamma_1/\gamma_2} \defeq \boldsymbol{0}$ satisfies $\ell(\bx_{\gamma_1/\gamma_2}) \leq \gamma_2$.
  Starting from the left endpoint $\rho^{-} = 0$ and the right endpoint $\rho^{+} = \frac{\gamma_1}{\gamma_2}$, we do a binary search on $\rho \in [\rho^{-}, \rho^{+}]$ using $\A$ to optimize $g_{\rho}$.
  Let $\bx_{\rho}$ be the returned vector on $g_{\rho}$.
  We maintain that $\ell(\bx_{\rho^{-}}) > \gamma_2$ and $\ell(\bx_{\rho^{+}}) \leq \gamma_2$ until $\rho^{+} - \rho^{-} \leq \frac{\eps\gamma_1}{\gamma_2}$, i.e., if $\ell(\bx_{\rho}) > \gamma_2$, then we set $\rho^{-} \gets \rho$; otherwise, we set $\rho^{+} \gets \rho$.
  This takes $\widetilde{O}(T_\A(n,m))$ time.
  To see that each $g_{\rho}$ is indeed a valid function for $\A$ to optimize, observe that by the guarantee of \eqref{eq:nonempty}, we have $\max_{\bx \in \P}g_{\rho}(\bx) \geq \gamma_1 - \rho\gamma_2$.
  Since the binary search is terminated when $\rho^{+} - \rho^{-} \leq \frac{\eps \gamma_1}{\gamma_2}$, we have $\gamma_1 - \rho\gamma_2 \geq \frac{\eps}{2}$ and therefore $\max_{\bx \in \P}g_{\rho}(\bx) \geq \frac{1}{\poly(n)}$ for each $\rho$ that we invoke the oracle.
  This verifies \eqref{eq:assumption-max-g}.
  In the end, we take a linear combination of $\bx_{\rho^{-}}$ and $\bx_{\rho^{+}}$ to get an $\bx \defeq (1-\alpha)\bx_{\rho^{-}} + \alpha\bx_{\rho^{+}}$ for some $\alpha \in [0, 1]$, which is in $\beta \cdot \P$ by convexity of $\P$, with $\ell(\bx) = \gamma_2$.
  This $\bx$ is then returned as the solution.
  
  To analyze the quality of $\bx$, we use by the guarantee of \eqref{eq:nonempty} that
  \begin{equation}
  g_{\rho^{-}}(\bx_{\rho^{-}}) \geq \zeta(\gamma_1 - \rho^{-}\gamma_2)\quad\text{and}\quad g_{\rho^{+}}(x_{\rho^{+}}) \geq \zeta(\gamma_1 - \rho^{+}\gamma_2).
  \label{eq:g}
  \end{equation}
  It then follows that
  \begin{align*}
    f(\bx)
    &\stackrel{(i)}{\geq} (1-\alpha)f(\bx_{\rho^{-}}) + \alpha f(\bx_{\rho^{+}})
    \stackrel{(ii)}{=} (1-\alpha)\left(g_{\rho^{-}}(\bx_{\rho^{-}}) + \rho^{-}\ell(\bx_{\rho^{-}})\right) + \alpha\left(g_{\rho^{+}}(\bx_{\rho^{+}}) + \rho^{+}\ell(\bx_{\rho^{+}})\right) \\
    &\stackrel{(iii)}{\geq} \zeta \left((1-\alpha)\left(\gamma_1 - \rho^{-}\gamma_2\right) + \alpha\left(\gamma_1 - \rho^{+}\gamma_2\right)\right) + (1-\alpha)\rho^{-}\ell(\bx_{\rho^{-}}) + \alpha \rho^{+}\ell(\bx_{\rho^{+}}) \\
    &\stackrel{(iv)}{\geq} \zeta \left(\gamma_1 - \alpha \cdot \eps \cdot \gamma_1 - \rho^{-}\gamma_2\right) + \ell(\left(1-\alpha)\rho^{-}\bx_{\rho^{-}} + \alpha\rho^{+}\bx_{\rho^{+}}\right) \\
    &\stackrel{(v)}{\geq} \zeta (1-\eps)\gamma_1 - \zeta \rho^{-}\gamma_2 + \rho^{-}\ell(\bx) + \alpha (\rho^{+} - \rho^{-})\ell(\bx_{\rho^{+}}) \geq \zeta (1-\eps)\gamma_1,
  \end{align*}
  where (i) is by concavity of $f$, (ii) is by definition of $g_{\rho}$, (iii) is by \eqref{eq:g}, (iv) is by $\rho^{+} - \rho^{-} \leq \frac{\eps\gamma_1}{\gamma_2}$ and linearity of $\ell$, and (v) is by definition of $\bx$ and again linearity of $\ell$.
  This proves the lemma.
\end{proof}

In addition to the oracle optimizing the objective, as described in the overview we also need to find constraints that are violated by the current solution significantly.
Let
\begin{equation}
\widetilde{\M_{G,\eps}} \defeq
\left\{
\begin{array}{ll}
\bx(v) \leq \widetilde{b}_v, & \forall\;v \in V \\
\bx(B) \leq \widetilde{b}_B, & \forall\;B \in \O_{G,\eps} \\
\end{array}
\right\} \cap \R_{\geq 0}^E, \quad\text{where}\;\widetilde{b}_v \defeq 1-4\eps\;\text{and}\;\widetilde{b}_B \defeq \left\lfloor\frac{|B|}{2}\right\rfloor - \frac{\eps^2 |B|^2}{4}
\end{equation}
be a slightly perturbed version of $\M_{G,\eps}$.
Formally, for $\bx \in \R_{\geq 0}^{E}$, let
\begin{equation}
  \lambda_{\bx} \defeq \max\left\{\max_{v \in V}\lambda_{\bx,v}, \max_{B \in \O_{G,\eps}}\lambda_{\bx,B}\right\},\;\text{where}\;\lambda_{\bx,v} \defeq \frac{\bx(v)}{\widetilde{b}_v}\;\text{and}\;\lambda_{\bx,B} \defeq \frac{\bx(B)}{\widetilde{b}_B}.
  \label{eq:lambda}
\end{equation}
An algorithm $\mathcal{B}$ is a \emph{$T_{\B}$-evaluator} for $\widetilde{\M_{G,\eps}}$ if given any vector $\bx \in \R_{\geq 0}^{E}$, when $\lambda_{\bx} > 1 + 8\eps$, it computes $\lambda_{\bx}$ and  the subset of small odd sets $\L_{\bx} \defeq \left\{B \in \O_{G,\eps}: \lambda_{\bx,B} \geq \lambda_{\bx} -\eps^{3}/10\right\}$ in $T_{\mathcal{B}}(n, m)$ time.
On the other hand, it correctly identifies when $\lambda_{\bx} \geq 1 + 8\eps$ in the same runtime.
Let $\lambda_{\bx,O_{G,\eps}} \defeq \max_{B \in \O_{G,\eps}}\left\{\lambda_{\bx,B}\right\}$.
The following theorem from \cite{AhnG14} characterizes the structure of $\L_{\bx}$.

\begin{proposition}[{\cite[Theorem 5]{AhnG14}}]
  If $\lambda_{\bx,\O_{G,\eps}} \geq 1 + 3\eps$, then $\L_{\bx}$ forms a laminar family.
\end{proposition}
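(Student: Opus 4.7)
The plan is to argue by contradiction via an uncrossing swap. Suppose two odd sets $B_1, B_2 \in \L_{\bx}$ \emph{cross}, meaning $B_1 \setminus B_2$, $B_2 \setminus B_1$, and $B_1 \cap B_2$ are all nonempty. I will exhibit an odd set $B^{\star} \in \O_{G,\eps}$ with $\lambda_{\bx, B^{\star}} > \lambda_{\bx}$, contradicting the definition of $\lambda_{\bx}$ as the maximum over $\O_{G,\eps}$. The natural candidates are $B^{\cap} \defeq B_1 \cap B_2$ and $B^{\cup} \defeq B_1 \cup B_2$, possibly corrected by a single-vertex perturbation to preserve parity.

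The arithmetic driving the contradiction is twofold. First, a double-counting identity on edges gives the subadditivity
\[
\bx(B^{\cap}) + \bx(B^{\cup}) = \bx(B_1) + \bx(B_2) + \sum_{\substack{e=\{u,v\} \in E \\ u \in B_1 \setminus B_2,\; v \in B_2 \setminus B_1}} \bx_e \;\geq\; \bx(B_1) + \bx(B_2).
\]
Second, letting $a \defeq |B_1 \setminus B_2|$, $b \defeq |B_2 \setminus B_1|$, $c \defeq |B^{\cap}|$, the elementary identity $|B^{\cap}|^{2} + |B^{\cup}|^{2} - |B_1|^{2} - |B_2|^{2} = 2ab$ translates via $\widetilde{b}_B = \lfloor |B|/2 \rfloor - \eps^{2}|B|^{2}/4$ into
\[
\widetilde{b}_{B^{\cap}} + \widetilde{b}_{B^{\cup}} \;\leq\; \widetilde{b}_{B_1} + \widetilde{b}_{B_2} - \tfrac{\eps^{2}ab}{2},
\]
since the $\lfloor \cdot \rfloor$ terms coincide exactly in the odd--odd parity case. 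Combining these two bounds with $\lambda_{\bx, B_i} \geq \lambda_{\bx} - \eps^{3}/10$ shows that whichever of $B^{\cap}, B^{\cup}$ maximizes $\bx(\cdot)/\widetilde{b}_{\cdot}$ attains a ratio strictly exceeding $\lambda_{\bx}$ as soon as the multiplicative gain $\eps^{2}ab/2 \geq \eps^{2}/2$ dominates the additive slack $\eps^{3}/10 \cdot (\widetilde{b}_{B_1} + \widetilde{b}_{B_2}) = O(\eps^{2})$; the hypothesis $\lambda_{\bx,\O_{G,\eps}} \geq 1 + 3\eps$ provides the constant factor needed to close this.

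The main obstacle is ensuring the candidate sets genuinely live in $\O_{G,\eps}$, i.e., have odd size in $[3, 1/\eps]$. For parity: since $|B_1|,|B_2|$ are odd, $|B^{\cap}|$ and $|B^{\cup}|$ share parity, so the odd--odd case is immediate, while in the even--even case I pick $v \in B_1 \triangle B_2$ and replace the pair by $(B^{\cap} \cup \{v\},\; B^{\cup} \setminus \{v\})$, both of odd size, incurring only $O(1)$ distortion in both the subadditivity bound and the $\widetilde{b}$-comparison---absorbed by the quadratic slack once $|B_i|$ is large enough, which is guaranteed because $\lambda_{\bx,\O_{G,\eps}} \geq 1 + 3\eps$ implies $\widetilde{b}_{B_i}$ is bounded away from zero. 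For size: $|B^{\cap}| \leq 1/\eps$ is automatic, but $|B^{\cup}|$ may exceed $1/\eps$; in this regime the penalty $-\eps^{2}|B^{\cup}|^{2}/4$ already drives $\widetilde{b}_{B^{\cup}}$ near zero or negative, and a direct accounting shows that $B^{\cap}$ alone already beats $\lambda_{\bx}$ via the subadditivity. Assembling these cases rules out any crossing pair in $\L_{\bx}$, establishing laminarity.
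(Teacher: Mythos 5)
First, note that the paper does not prove this proposition at all: it is imported verbatim as \cite[Theorem 5]{AhnG14} and used as a black box, so there is no in-paper proof to compare against. Your attempt to reprove it via uncrossing is the right general strategy, and your core case --- $|B_1\cap B_2|$ odd with $B^{\cup}\in\O_{G,\eps}$ --- is essentially correct: the edge-counting identity, the $2ab$ identity for the quadratic terms, and the exact cancellation of the floor terms all check out, and the resulting inequality $\frac{\eps^3}{10}(\widetilde{b}_{B^\cap}+\widetilde{b}_{B^\cup}) \geq (\lambda_{\bx}-\eps^3/10)\cdot\frac{\eps^2 ab}{2}$ is indeed contradictory since the left side is at most $\eps^2/10$ and the right side is at least roughly $\eps^2/2$.

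However, the two edge cases you wave at contain genuine gaps. (1) In the even-intersection case, replacing $(B^\cap, B^\cup)$ by $(B^\cap\cup\{v\}, B^\cup\setminus\{v\})$ for $v\in B_1\triangle B_2$ does \emph{not} incur only an absorbable distortion: tracking the edge counts shows you lose $\sum_{u}\bx_{\{u,v\}}$ over neighbors $u$ of $v$ in $(B_1\setminus B_2)\setminus\{v\}$, which can be as large as $\Theta(1)$, whereas the quadratic slack $\eps^2 ab/2$ you are relying on can be as small as $\eps^2/2$ (e.g., $a=b=1$). The standard fix for this parity case uses the pair $(B_1\setminus B_2,\, B_2\setminus B_1)$ and must charge the resulting deficit $2\bx(B^\cap)+\bx(B^\cap, B_1\triangle B_2)\leq\sum_{v\in B^\cap}\bx(v)$ against the \emph{vertex} constraints, which is exactly why $\widetilde{b}_v$ is perturbed to $1-4\eps$; your argument never touches the vertex constraints. (2) For $|B^\cup|>1/\eps$, your claim that ``$B^\cap$ alone already beats $\lambda_{\bx}$ via the subadditivity'' does not follow: subadditivity lower-bounds the \emph{sum} $\bx(B^\cap)+\bx(B^\cup)$, and once $B^\cup\notin\O_{G,\eps}$ you have no upper bound on $\bx(B^\cup)$ from the polytope to isolate $\bx(B^\cap)$. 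The correct route is again through the vertex constraints: $\bx(B)\leq\frac{1}{2}\sum_{v\in B}\bx(v)\leq\frac{|B|}{2}(1-4\eps)\lambda_{\bx}$ combined with $\lambda_{\bx,B}\geq\lambda_{\bx}-\eps^3/10$ and the hypothesis $\lambda_{\bx,\O_{G,\eps}}\geq 1+3\eps$ forces every $B\in\L_{\bx}$ to satisfy $\eps|B|\leq 4-\sqrt{14}<0.27$, so that $|B^\cup|\leq 1/\eps$ holds automatically and the case never arises. Both repairs require using the $1-4\eps$ degree slack and the hypothesis in an essential way that your proposal omits, so as written the argument does not close.
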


We also give the following simple algorithm which helps us compute the multiplicative weights induced by a laminar family without any dependence on $\eps^{-1}$.

\begin{lemma}
  Given any $\bx \in \R_{\geq 0}^{E}$, a laminar family $\L \subseteq \O_{G,\eps}$, and value $p_B$ associate with each $B \in \L$, there is an $\widetilde{O}\left(m + \sum_{B \in \L}|B|\right)$ time algorithm that computes
  \[
    \boldsymbol{\ell}_e \defeq \sum_{B \in \L: u, v \in B}p_B
  \]
  for each edge $e = \{u, v\}$ in $E$.
  \label{lemma:compute-score}
\end{lemma}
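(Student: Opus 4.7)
The plan is to exploit the laminar structure of $\L$ to compute all values $\boldsymbol{\ell}_e$ via a single tree traversal plus LCA queries. First I would represent $\L$ as a laminar forest, where the parent of each $B \in \L$ is the smallest $B' \in \L$ with $B \subsetneq B'$ (if any). This forest can be built in $\widetilde{O}(\sum_{B \in \L}|B|)$ time: sort the sets by size, and for each $v \in V$ sort the sets containing $v$ by size to identify immediate enclosures. Simultaneously, for each vertex $v \in V$, I would record the smallest set $B(v) \in \L$ containing $v$ (if any), which is immediate from the same sorted lists.

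Next, I would compute, for every node $B$ of the laminar forest, the prefix sum $P_B \defeq \sum_{B' \in \L:\; B' \supseteq B} p_{B'}$. This is done by a single root-to-leaf DFS in $O(|\L|)$ time, accumulating $p_{B'}$ along the root-to-$B$ path. The crucial observation is that, since $\L$ is laminar, the sets containing both endpoints $u, v$ of an edge $e = \{u, v\}$ form a chain in the forest, namely the root-to-$B^\star_e$ path where $B^\star_e \defeq \mathrm{LCA}(B(u), B(v))$ is the smallest set of $\L$ containing both (if it exists). Consequently, $\boldsymbol{\ell}_e = P_{B^\star_e}$ whenever $B^\star_e$ is defined, and $\boldsymbol{\ell}_e = 0$ otherwise (when $B(u)$ or $B(v)$ is undefined, or when they lie in different trees of the forest).

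The remaining task is to answer an LCA query in the laminar forest for each of the $m$ edges. Using a standard LCA data structure (e.g.\ Bender--Farach-Colton via the Euler-tour/RMQ reduction, or Harel--Tarjan), I would preprocess the forest in $O(|\L|)$ time and then answer each LCA query in $O(1)$ time, giving $O(m + |\L|)$ for all edges. Since $|\L| \le \sum_{B \in \L}|B|$, the overall runtime is $\widetilde{O}\bigl(m + \sum_{B \in \L}|B|\bigr)$, as claimed. The only subtlety worth noting is bookkeeping the ``undefined LCA'' case cleanly (e.g.\ by augmenting the forest with a virtual root of weight $0$), but no individual step is a real obstacle here; the proof is almost entirely a matter of pairing the laminar-tree viewpoint with a linear-time LCA oracle.
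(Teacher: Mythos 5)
Your proof is correct and follows essentially the same route as the paper: both build the laminar tree/forest, observe that the sets containing both endpoints of an edge form the root-to-LCA path, and combine a linear-time LCA oracle with root-to-node prefix sums of the $p_B$'s. The only cosmetic difference is that the paper augments the tree with a root $V$ and singleton leaves $\{v\}$ so that the LCA of $\{u\}$ and $\{v\}$ can be queried directly, whereas you track $B(u), B(v)$ and add a virtual root; these are equivalent.
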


\begin{proof}
  Let $\mathcal{T}$ be a tree on $\L \cup \{V\} \cup \{\{v\}: v \in V\}$ such that $U$ is an ancestor of $W$ if and only if $W \subseteq U$.
  By laminarity of $\L$ such a tree exists and can be constructed in $\widetilde{O}\left(m + \sum_{B \in \L}|B|\right)$ time.
  Observe that for each $e = \{u, v\} \in E$, the sets of $B \in \L$ containing both $u$ and $v$ correspond to a path from the root of $\mathcal{T}$ to the lowest common ancestor (LCA) of $\{u\}$ in $\{v\}$ in the tree.
  The LCAs of all edges can be computed in $\widetilde{O}(m)$ time \cite{GabowT83}, so can the sum of $p_B$'s of all root-to-vertex paths be populated in $O(n)$ time.
\end{proof}

The following claim is proven in \cite{AhnG14}.

\begin{claim}
  It holds that $\widetilde{\M_{G,\eps}} \subseteq \M_G \subseteq \frac{1}{1-4\eps} \cdot \widetilde{\M_{G,\eps}}$.
  \label{claim:perturbation}
\end{claim}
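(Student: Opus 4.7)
The plan is to verify each inclusion separately by checking constraints pointwise. The two inclusions are structurally dual: the first must absorb the extra constraints in $\widetilde{\M_{G,\eps}}$ corresponding to large odd sets that were dropped in the definition (since $\widetilde{\M_{G,\eps}}$ only enforces odd-set constraints for $B \in \O_{G,\eps}$, i.e., $|B| \leq 1/\eps$), while the second must show that a mild shrinking by $(1-4\eps)$ creates enough slack to handle the tightened right-hand sides in the constraints defining $\widetilde{\M_{G,\eps}}$.

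For the first inclusion $\widetilde{\M_{G,\eps}} \subseteq \M_G$, take any $\bx \in \widetilde{\M_{G,\eps}}$. Degree constraints for $\M_G$ follow immediately from $\bx(v) \leq 1-4\eps \leq 1$. For the odd-set constraints, I would split into two cases. When $B \in \O_G$ has $|B| \leq 1/\eps$, the constraint $\bx(B) \leq \lfloor |B|/2\rfloor - \eps^2|B|^2/4 \leq \lfloor |B|/2 \rfloor$ is directly enforced in $\widetilde{\M_{G,\eps}}$. When $|B| > 1/\eps$, I would use the slackness in the degree constraints to bound $\bx(B) \leq \tfrac{1}{2}\sum_{v\in B}\bx(v) \leq \tfrac{|B|(1-4\eps)}{2} = \tfrac{|B|}{2} - 2\eps|B|$; since $\eps|B| > 1$, this is at most $\tfrac{|B|}{2} - 2 \leq \lfloor |B|/2\rfloor$.

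For the second inclusion $\M_G \subseteq \tfrac{1}{1-4\eps}\widetilde{\M_{G,\eps}}$, take $\bx \in \M_G$ and set $\bx' := (1-4\eps)\bx$; it suffices to show $\bx' \in \widetilde{\M_{G,\eps}}$. Degree constraints are immediate: $\bx'(v) \leq 1-4\eps = \widetilde{b}_v$. For each $B \in \O_{G,\eps}$, I need $(1-4\eps)\lfloor |B|/2\rfloor \leq \lfloor |B|/2 \rfloor - \eps^2|B|^2/4$, equivalently $16\lfloor |B|/2\rfloor \geq \eps|B|^2$. Since $|B| \leq 1/\eps$ we have $\eps|B|^2 \leq |B|$, while $\lfloor |B|/2\rfloor \geq |B|/3$ for odd $|B| \geq 3$, yielding $16\lfloor |B|/2\rfloor \geq 16|B|/3 > |B| \geq \eps|B|^2$, as required.

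The main potential subtlety is the case of large odd sets in the first inclusion: dropping these constraints from the definition of $\widetilde{\M_{G,\eps}}$ looks at first glance like it could produce a point outside $\M_G$, but the tightened degree bound $1-4\eps$ provides just enough slack to re-derive the large-odd-set constraint via the trivial ``sum of degrees'' upper bound. Both inclusions are otherwise routine arithmetic.
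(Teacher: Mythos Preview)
Your proposal is correct and follows essentially the same approach as the paper: verify each inclusion by checking constraints, using the tightened degree bound $1-4\eps$ to recover the large-odd-set constraints in the first inclusion, and checking the arithmetic inequality $(1-4\eps)\lfloor |B|/2\rfloor \leq \lfloor |B|/2\rfloor - \eps^2|B|^2/4$ for $|B|\leq 1/\eps$ in the second. The paper's proof is terser (it leaves the arithmetic verification of the second inequality implicit), but the structure is identical.
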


\begin{proof}
  Consider an $\bx \in \widetilde{\M_{G,\eps}}$, which clearly satisfies the degree constraints and odd set constraints of size at most $1/\eps$.
  For an odd set $B \in \O_G$ with $|B| > 1/\eps$, it follows that
  \[
    \bx(B) \leq \frac{1}{2}\sum_{v \in B}\bx(v) \leq \frac{(1-4\eps)|B|}{2} \leq \left\lfloor \frac{|B|}{2} \right\rfloor.
  \]
  That $\M_G \subseteq \frac{1}{1-4\eps} \cdot \widetilde{\M_{G,\eps}}$ follows from
  \[
    \left(1-4\eps\right) \cdot \left\lfloor\frac{|B|}{2}\right\rfloor \leq \left\lfloor \frac{|B|}{2}\right\rfloor - \frac{\eps^2|B|^2}{4}.
  \]
\end{proof}

\begin{algorithm2e}[!ht]
  \caption{Concave function optimization over matching polytope} \label{alg:mwu}
  
  \SetEndCharOfAlgoLine{}
  \SetKwInput{KwData}{Input}
  \SetKwInput{KwResult}{Output}
  \SetKwProg{KwProc}{function}{}{}

  \KwData{$n$-vertex $m$-edge graph $G=(V, E)$ and $\eps \in (3/\sqrt{n}, 1/16)$.}
  \KwData{convex, downward closed $\P \subseteq \R_{\geq 0}^{E}$ and concave $f: \R_{\geq 0}^{E} \to \R_{-\infty}$.}
  \KwData{$(\beta, T_{\A}, \xi)$-oracle $\A$ for $(\P, f)$ and $T_{\B}$-evaluator $\B$ for $\widetilde{\M_{G,\eps}}$.}

  \vspace{0.4em}
  Let $\delta \gets \frac{1}{8}$, $\alpha \gets 50\eps^{-3}\ln n$, and $\lambda_0 \defeq \kappa_{\P} \cdot \beta$.\;
  Use $\A$ on $f$ to find an $\bx \in \beta \cdot \P$ with $f(\bx) \geq \max_{\bx^{\prime} \in \P}f(\bx)$ and set $\OPT \defeq f(\bx)$. Note that $\lambda_{\bx} \leq \lambda_0$ by definition of $\kappa_\P$ and $\beta$. \label{line:initial}\;
  \While{true} {
    Compute $\lambda_{\bx}$ and $\L_{\bx}$ using the evaluator $\B$ in $T_{\B}(n, m)$ time.\footnotemark\;
    \lIf{$\lambda_{\bx} \leq 1 + 8\eps$} {
      \textbf{return} $\bx_f \defeq \frac{\bx}{1 + 8\eps}$.
    }
    Repeatedly set $\delta \gets \max\{2\delta/3, \eps\}$ until $\lambda_{\bx} > 1 + 8\delta$.\;
    Let $\bp_v \defeq \exp\left(\alpha \lambda_{\bx,v} - \alpha \lambda_{\bx}\right) / \widetilde{b}_v$ for $v \in V$ with $\lambda_{\bx, v} \geq \lambda_{\bx} - \eps^{3}/10$, and $\bp_v \defeq 0$ otherwise.\;
    Let $\bp_B \defeq \exp\left(\alpha \lambda_{\bx,B} - \alpha \lambda_{\bx}\right) / \widetilde{b}_B$ for $B \in \L$, and $\bp_B \defeq 0$ for $B \not\in \L_{\bx}$.\;  
    Let $\gamma_{\bx} \defeq \sum_{v \in V}\bp_v \widetilde{b}_v + \sum_{B \in \O_{G,\eps}}\bp_B \widetilde{b}_B$.\;
    Compute $\boldsymbol{\ell}_e \defeq p_u + p_v + \sum_{B \in \L_{\bx}: u, v \in B}p_B$ for each $e = \{u, v\}$ in $\widetilde{O}(m)$ time by \cref{lemma:compute-score}, using that $\L_{\bx}$ forms a laminar family.\label{line:compute-score}\;
    Use \cref{lemma:bin-search} with $f$, $\gamma_1 \defeq \OPT$, $\ell(\bx^\prime) \defeq \boldsymbol{\ell}^\top\bx^\prime$, and $\gamma_2 \defeq \gamma_{\bx}$ to find an $\widetilde{\bx} \in \beta \cdot \P$ such that $f(\widetilde{\bx}) \geq \zeta (1-\eps)\OPT$ and $\ell(\widetilde{\bx}) \leq \gamma_{\bx}$ in $\widetilde{O}(T_{\A}(n, m))$ time.\;
    \While{\cref{lemma:bin-search} failed to find such an $\widetilde{\bx}$} {
      $\OPT \gets (1-\eps)\OPT$ and re-run \cref{lemma:bin-search} with the new $\OPT$.\;
    }
    Update $\bx$ by $\bx_e \gets (1-\sigma)\bx_e + \sigma \widetilde{\bx}_e$ where $\sigma \defeq \frac{\delta}{4\alpha\lambda_0}$.\;
  }
\end{algorithm2e}

We now present the algorithm for concave optimization over the matching polytope.
\cref{alg:mwu} is a specialization of \cite[Algorithm 2]{AhnG14} to the case that the constraint matrices $\boldsymbol{A}$ and $\bb$ correspond to that of the perturbed matching polytope $\widetilde{\M_{G,\eps}}$, and then modified to work for concave function optimization.
The algorithm is divided into \emph{superphases}, where each superphase contains several \emph{phases}, and each phase contains several \emph{iterations}.
The only real difference between \cref{alg:mwu} and \cite[Algorithm 1]{AhnG14} is that we substitute the subroutine optimizing linear objective (e.g., the solver for LP8 in \cite{AhnG14}) to an oracle capable of optimizing general concave objective.
Other changes to the algorithm are for consistency with the notations and terminologies that we are using throughout the paper and to highlight when each superphase, phase, and iteration begins and ends.
What may appear confusing at first is that \cite{AhnG14} used $\delta$ for the input accuracy parameter and $\eps$ as the variable used in their algorithm that gradually decreases from $1/8$ to $\delta$.
Instead, we use $\eps$ for the accuracy parameter, and thus the roles of these two symbols are interchanged.
Also, \cite[Algorithm 2]{AhnG14} has other degrees of freedom where they can choose a function $f(\eps) < \eps$ (which is not to be confused with the concave function $f$ in our context) and $\alpha \leq \frac{1}{f(\eps)}\ln\frac{M\lambda_0}{\eps}$.
Nevertheless, for simplicity, we fix $f(\eps) \defeq \eps^3 - 10$ and $\alpha \defeq 50\eps^{-3}\ln n$ as in the uncapacitated $\bb$-matching algorithm in \cite{AhnG14}.

\begin{lemma}[{\cite[Algorithm 2, Theorem 16, and Lemma 18]{AhnG14}}]
  For any $\eps \geq \widetilde{\Omega}(n^{-1/2})$, downward closed, convex $\P \subseteq \R_{\geq 0}^{E}$, concave function $f: \R_{\geq 0}^E \to \R$ such that $\max_{\bx \in \P \cap \M_G}f(\bx) \geq 1$, given an $(\beta, T_{\A}, \zeta)$-oracle $\mathcal{A}$ for $(\P, f)$ and a $T_{\B}$-evaluator $\mathcal{B}$ for $\widetilde{\M_{G,\eps}}$, if $\kappa_{\P} \cdot \beta < n$, then \cref{alg:mwu} takes
  \[ \widetilde{O}\left(\left(T_{\A}(n, m) + T_{\B}(n, m)\right) \cdot \beta\kappa_\P\eps^{-4}\right) \]
  time and computes an $\bx_f \in \left(\beta \cdot \P\right) \cap \M_G$ such that $f(\bx_f) \geq \zeta(1-13\eps)\max_{\bx \in \P \cap \M_G}f(\bx)$.
  \label{lemma:main-appendix}
\end{lemma}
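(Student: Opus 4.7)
The plan is to adapt the multiplicative-weights-update (MWU) analysis of \cite{AhnG14} to the concave setting by substituting their linear optimization subroutine with the oracle $\A$ combined with \cref{lemma:bin-search}. The argument breaks into three parts: (i) feasibility of the returned $\bx_f$, (ii) quality $f(\bx_f) \geq \zeta(1-13\eps)\OPT$, and (iii) the iteration/time bound.

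For feasibility, I would observe that the maintained $\bx$ is always a convex combination of vectors in $\beta \cdot \P$ (the initial vector from Line~\ref{line:initial} and each $\widetilde{\bx}$ from \cref{lemma:bin-search}), hence $\bx \in \beta \cdot \P$ by convexity. Upon termination $\lambda_{\bx} \leq 1 + 8\eps$, so $\bx/(1+8\eps) \in \widetilde{\M_{G,\eps}}$, and \cref{claim:perturbation} gives $\widetilde{\M_{G,\eps}} \subseteq \M_G$. Downward closedness of $\beta \cdot \P$ then yields $\bx_f \in (\beta \cdot \P) \cap \M_G$.

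For the quality bound I would maintain the invariant $f(\bx) \geq \zeta(1-\eps)\OPT^{\star}$, where $\OPT^{\star}$ is the current running estimate of $\OPT$. Initially $\OPT^{\star} \defeq f(\bx)$ from Line~\ref{line:initial}, which is at least $\zeta \max_{\bx' \in \P} f(\bx') \geq \zeta\OPT$. In each iteration, concavity of $f$ gives
\[
f\bigl((1-\sigma)\bx + \sigma \widetilde{\bx}\bigr) \;\geq\; (1-\sigma)f(\bx) + \sigma f(\widetilde{\bx}) \;\geq\; \zeta(1-\eps)\OPT^{\star},
\]
since $\widetilde{\bx}$ satisfies $f(\widetilde{\bx}) \geq \zeta(1-\eps)\OPT^{\star}$ whenever \cref{lemma:bin-search} succeeds. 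When \cref{lemma:bin-search} fails, the precondition \eqref{eq:nonempty} must be violated, i.e., no $\bx' \in \P \cap \M_G$ (approximately; recall $\widetilde{\M_{G,\eps}}$ vs.\ $\M_G$ loses a $(1-4\eps)$ factor via \cref{claim:perturbation}) achieves objective $\OPT^{\star}$, so each geometric decrease $\OPT^{\star} \gets (1-\eps)\OPT^{\star}$ is justified and overall $\OPT^{\star} \geq (1-5\eps)\OPT$ throughout. Combining with the $1/(1+8\eps)$ scaling at termination and concavity of $f$ (together with $f(\bzero) \geq 0$, as $f$ is concave and nonneg. at one interior point of $\P$ by the normalization $\max f \geq 1$), one obtains $f(\bx_f) \geq \zeta(1-13\eps)\OPT$.

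The main obstacle I expect is the iteration bound, which requires the MWU potential argument. The plan is to track
\[
\Phi(\bx) \;\defeq\; \sum_{v \in V} \exp(\alpha\, \lambda_{\bx,v}) \;+\; \sum_{B \in \O_{G,\eps}} \exp(\alpha\, \lambda_{\bx,B}),
\]
whose logarithm is a smoothed version of $\lambda_{\bx}$. By the usual MWU calculation, each iteration multiplies $\Phi$ by roughly $1 - \Theta(\alpha\sigma\delta)$ provided the weighted score $\boldsymbol{\ell}^{\top}\widetilde{\bx}$ of the oracle's output is bounded by $\gamma_{\bx} = \sum_v \bp_v \widetilde{b}_v + \sum_B \bp_B \widetilde{b}_B$; this is exactly what \cref{lemma:bin-search} enforces, using that the true optimum $\bx^\star \in \M_G \subseteq (1+O(\eps))\widetilde{\M_{G,\eps}}$ certifies feasibility of the constraint $\ell(\cdot) \leq \gamma_{\bx}$. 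Since $\Phi$ lies in $[1, \poly(n)]$, within a phase of fixed $\delta$ at most $O(\alpha\lambda_0 \log n / \delta^2)$ iterations occur before $\lambda_{\bx} \leq 1 + 8\delta$ and $\delta$ is halved. With $\alpha = \widetilde{O}(\eps^{-3})$ and $\delta$ ranging geometrically from $1/8$ down to $\eps$, the dominating term is the last phase $\delta \approx \eps$, giving $\widetilde{O}(\lambda_0 / \eps^5)$ iterations naively; a tighter accounting that attributes only the iterations added per halving of $\delta$ (rather than restarting the count) recovers the stated $\widetilde{O}(\kappa_{\P}\beta\eps^{-4})$ bound. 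Each iteration costs $\widetilde{O}(T_{\A}(n,m) + T_{\B}(n,m) + m)$, the extra $m$ arising from \cref{lemma:compute-score} at Line~\ref{line:compute-score}, which together with the iteration count yields the claimed runtime.
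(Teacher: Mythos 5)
Your overall strategy matches the paper's: keep the AhnG14 MWU skeleton, replace the linear subroutine by the oracle $\A$ wrapped in \cref{lemma:bin-search}, and observe that the potential analysis only needs $\boldsymbol{\ell}^\top\widetilde{\bx}\leq\gamma_{\bx}$. Your feasibility argument and your accounting for the quality loss (the $(1-\eps)$ from \cref{lemma:bin-search}, the geometric decrease of $\OPT$ bottoming out near $(1-4\eps)f^*$ via \cref{claim:perturbation}, and the $1/(1+8\eps)$ rescaling) are essentially the paper's proof. One caveat: your claim that $f(\boldsymbol{0})\geq 0$ follows from ``concave and nonnegative at one interior point'' is false as stated (a concave function positive at an interior point can be negative at $\boldsymbol{0}$); the paper makes the same silent use of $f(\boldsymbol{0})\geq 0$ when rescaling, so this is not a gap relative to the paper, but your justification does not work.

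The genuine problem is in the iteration bound, which the paper discharges by citing [AhnG14, Lemmas~15, 18 and Theorem~16] and you attempt to re-derive. Your assertion that $\Phi(\bx)\in[1,\poly(n)]$ is wrong: $\Phi$ sums $\exp(\alpha\lambda_{\bx,B})$ over all $B\in\O_{G,\eps}$, of which there are $n^{\Theta(1/\eps)}$, and with $\alpha=50\eps^{-3}\ln n$ even a single term can be superpolynomial. A vanilla ``$\log\Phi$ drops by $\Theta(\alpha\sigma\delta)$ per step, and $\log\Phi=O(\log n)$'' argument therefore does not close. The actual argument maintains the delicate invariant $\Phi_{\bx}\leq\gamma_{\bx}+\eps\gamma_{\bx}/\lambda_0$, i.e., the exponentially many non-near-maximal constraints contribute negligibly to $\Phi$ relative to $\gamma_{\bx}$; establishing and preserving this invariant is exactly where the hypothesis $\kappa_{\P}\beta<n$ (i.e., $\lambda_0<n$) and the specific choice of $\alpha$ enter, and your proof never uses that hypothesis. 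Relatedly, your per-phase count of $O(\alpha\lambda_0\log n/\delta^2)$ followed by an unspecified ``tighter accounting'' to get from $\eps^{-5}$ down to $\eps^{-4}$ is not a proof; the correct accounting charges each phase only the $O(\alpha\delta)$ decrease of $\log\Phi$ needed to pass from $\lambda_{\bx}\leq 1+12\delta$ to $\lambda_{\bx}\leq 1+8\delta$ at rate $\Theta(\delta^2/\lambda_0)$ per iteration, giving $O(\alpha\lambda_0/\delta)$ iterations per phase and a geometric sum dominated by $\delta\approx\eps$. Either supply these details or, as the paper does, invoke the AhnG14 lemmas verbatim after checking that their proofs use nothing about $\widetilde{\bx}$ beyond $\boldsymbol{\ell}^\top\widetilde{\bx}\leq\gamma_{\bx}$.
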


\begin{proof}
  Let us assume $\eps \leq 1/16$.
  Let $f^{*} \defeq \max_{\bx \in \P \cap \M_G}f(\bx)$ and $\bx^{*} \defeq \argmax_{\bx \in \P \cap \M_G}f(\bx)$.
  The vector $\bx$ in \cref{alg:mwu} is in $\beta \cdot \P$ at all times, and thus $\bx_f \in \beta \cdot \P$ as well.
  \cref{lemma:bin-search} never fails after $\OPT$ decreases to $(1-4\eps)f^{*}$: we can take $\bx^{*} \cdot (1-4\eps)$ with $f(\bx^{*} \cdot (1-4\eps)) \geq (1-4\eps) f^{*}$ and by \cref{claim:perturbation}, $\bx^{*} \cdot (1-4\eps) \in \P \cap \widetilde{\M_{G,\eps}}$ and in particular $\ell\left(\bx^{*} \cdot (1-4\eps)\right) \leq \gamma$ is guaranteed.
  This shows that indeed the value of $\gamma_1$ in \cref{lemma:bin-search} is $\Omega(1)$ as $f(\bx^{*}) \geq \Omega(1)$.
  Also, all $\widetilde{\bx}$'s returned by \cref{lemma:bin-search} have $f(\widetilde{\bx}) \geq \zeta (1-\eps)(1-4\eps)f^{*}$, proving $f(\bx) \geq \zeta (1-\eps)(1-4\eps)f^{*}$ and therefore $f(\bx_f) \geq \frac{\zeta (1-\eps)(1-4\eps)}{1+8\eps}f^{*} \geq \zeta (1-13\eps)f^{*}$ by concavity of $f$.
  At the end of the algorithm, we have that $\lambda_{\bx} \leq 1 + 8\eps$ and thus $\bx_{f} \in \widetilde{\M_{G,\eps}} \subseteq \M_G$ by \cref{claim:perturbation}.
  The numbers $\bp_v$'s are polynomially bounded since $\lambda_{\bx, v} \geq \lambda_{\bx}-\eps^3/10$ and thus $n^{-5} \leq \exp(\alpha \lambda_{\bx,v} - \alpha\lambda_{\bx}) \leq 1$.
  Similarly $\bp_B$'s are polynomially bounded.
  This ensures that $\ell(\bx) \defeq \boldsymbol{\ell}^\top \bx$ is a valid input to \cref{lemma:bin-search}.
  
  It thus remains to bound the number of iterations in \cref{alg:mwu} until it terminates since each iteration takes $O(T_{\A}(n,m) + T_{\B}(n, m))$ time.
  Let $\Phi_{\bx} \defeq \sum_{v \in V}\exp(\alpha \lambda_{\bx,v}) + \sum_{B \in \O_{G,\eps}}\exp(\alpha \lambda_{\bx,B})$ as defined in \cite[Definition 4]{AhnG14}.
  \cite[Theorem 16]{AhnG14} showed that \cite[Algorithm 2]{AhnG14} converges in $\widetilde{O}(\lambda_0 \cdot (\eps^{-2} + \alpha \eps^{-1}))$ iterations if we start with $\Phi_{\bx} \leq \gamma_{\bx} + \frac{\eps \gamma_{\bx}}{\lambda_0}$.
  This can be verbatim carried to \cref{alg:mwu} since the analysis is oblivious to the function $f$ we are optimizing: it applies as long as the vector $\widetilde{\bx}$ we found in each iteration satisfies $\boldsymbol{\ell}^\top \widetilde{\bx} \leq \gamma_{\bx}$, which is guaranteed by \cref{lemma:bin-search}.
  Roughly speaking, in \cite[Lemma 15]{AhnG14} they showed that $\Phi_{\bx}$ decreases after every iteration if the initial condition $\Phi_{\bx} \leq \gamma_{\bx} + \frac{\eps \gamma_{\bx}}{\lambda_0}$ is satisfied.
  The proof only used how $\Phi_{\bx}$ can change for the new $\bx$ given by $\bx_e \gets (1-\sigma)\bx_e + \sigma \widetilde{\bx}_e$ via the fact that $\boldsymbol{\ell}^\top \widetilde{\bx} \leq \gamma_{\bx}$.
  \cite[Theorem 16]{AhnG14} then used \cite[Lemma 15]{AhnG14} to argue the total decrease of $\Phi_{\bx}$ after $\widetilde{O}(\lambda_0 \cdot (\eps^{-2} + \alpha \eps^{-1}))$.
  Finally, we use \cite[Lemma 18]{AhnG14} which showed that indeed for the specific value of $\alpha$ that we choose, when $\lambda_0 < n$ the initial condition $\Phi_{\bx} \leq \gamma_{\bx} + \frac{\eps \gamma_{\bx}}{\lambda_0}$ is satisfied.
  This shows that the number of iterations \cref{alg:mwu} has is $\widetilde{O}(\kappa_{\P}\beta\eps^{-4})$.
\end{proof}

\subsection{Finding Maximally-Violated Constraints}

It remains to give an evaluator for $\widetilde{\M_{G,\eps}}$ that finds all the maximally-violated odd sets.
\cite[Lemma 17]{AhnG14} reduces this to the case where $\lambda_{\bx,\O_{G,\eps}} \geq 1 + 3\eps$.
In this regime, \cite[Theorem 5]{AhnG14} applies (recall that it says $\L_{\bx}$ forms a laminar family), and \cite[Theorem 6]{AhnG14} presents an $\widetilde{O}(m + n \cdot \poly(\eps^{-1}))$ algorithm.
We briefly sketch the algorithm and analyze its dependence on $\eps^{-1}$ in the remainder of the section.

The algorithm uses a binary search to find an estimate $\widetilde{\lambda}$ such that $\widetilde{\lambda} - \frac{\eps^3}{100} \leq \lambda_{\bx, \O_{G,\eps}} \leq \widetilde{\lambda}$, where $\lambda_{\bx, \O_{G,\eps}} \defeq \max_{B \in \O_{G,\eps}}\lambda_{\bx, B}$.
Fix a current value of $\widetilde{\lambda}$.
For each odd $3 \leq \ell \leq 1/\eps$, \cite{AhnG14} constructs an integral weighted graph $G_{\varphi}(\ell, \widetilde{\lambda})$ with $V(G_{\varphi}(\ell, \widetilde{\lambda})) = V(G) \cup \{s\}$ with $O(\min\{m, n\eps^{-5}\})$ edges whose weights sum up to $O(n \eps^{-5})$ in $O(m)$ time.
Let $\mathsf{cut}(B)$ for $B \subseteq V$ be the sum of weights of edges between $B$ and $(V(G) \cup \{s\}) \setminus B$ in $G_{\varphi}(\ell, \widetilde{\lambda})$.
If $\widetilde{\lambda}$ is an accurate estimate, i.e., $\widetilde{\lambda} - \frac{\eps^3}{100} \leq \lambda_{\bx, \O_{G,\eps}} \leq \widetilde{\lambda}$, then \cite[Property 1]{AhnG14} showed that
\begin{enumerate}[(1)]
  \item every $\ell$-sized odd set $B \in \L_{\bx}$ satisfies $\mathsf{cut}(B, \overline{B}) < \kappa(\ell)$, where $\kappa(\ell) \defeq \left\lfloor \varphi \widetilde{\lambda}\left(1 - \frac{\eps^2\ell^2}{2}\right)\right\rfloor + \frac{12\ell}{\eps} + 1 < 2\varphi$ for $\varphi \defeq 50\eps^{-4}$, and
  \item\label{item2} every $\ell$-sized odd set $B \in \O_{G,\eps}$ that satisfies $\mathsf{cut}(B, \overline{B}) < \kappa(\ell)$ belongs to the collection $\L_{\bx}^\prime \defeq \left\{B \in \O_{G,\eps}: \lambda_{\bx, B} \geq \lambda_{\bx, \O_{G,\eps}} - \eps^3\right\}$.
\end{enumerate}

They then applied the following algorithm of \cref{lemma:find-odd-sets} below with $\kappa \defeq \kappa(\ell) = O(\eps^{-4})$ to obtain a collection of vertex sets $\L$, for which they showed that $\L_{\bx} \subseteq \L$ and therefore we can extract all size-$\ell$ sets in $\L_{\bx}$ in $\widetilde{O}(m)$ time by simply checking their value of $\lambda_{\bx, B}$.\footnote{Since $\L = \L_1 \cup \cdots \cup \L_{O(\log{n})}$ where sets in $\L_i$ are disjoint by \cref{lemma:find-odd-sets}, for each $1 \leq i \leq O(\log{n})$ we can compute $\lambda_{\bx, B}$ for $B \in \L_i$ in $O(m)$ time.}
The value of $\widetilde{\lambda}$ can then be adjusted based on whether any size-$\ell$ set in $\L_{\bx}$ is found (recall that we are doing a binary search to determine an accurate estimate $\widetilde{\lambda}$).

To compute the collection $\L$, \cite{AhnG14} used the minimum odd-cut approach of \cite{PadbergR82} using the construct of partial Gomory-Hu trees.
A \emph{$\kappa$-partial Gomory-Hu tree} of a (possibly weighted) graph $G$ is a partition $\mathcal{U} = \{U_1, \ldots, U_k\}$ of $V(G)$ and a weighted tree $\mathcal{T}$ on $\mathcal{U}$ such that each $x, y \in U_i$ belonging to the same set has minimum cut value greater than $\kappa$ in $G$, and each $x \in U_i$ and $y \in U_j$ in different sets has minimum cut (both the value and the actual cut) equal to that induced by $\mathcal{T}$.

\begin{lemma}[{\cite[Lemma 19 and Algorithm 3]{AhnG14}}]
  Given a $T(n, m, \kappa)$-time algorithm $\mathcal{G}$ that constructs a $\kappa$-partial Gomory-Hu tree of an $n$-vertex $m$-edge graph, there is an algorithm that computes an $\widetilde{O}(n)$-sized collection $\L$ of odd sets in $G$ where $s \in V(G)$ in $\widetilde{O}\left(T(n, m, \kappa)\right)$ time such that (i) $s \not\in B$ for every $B \in \L$, (ii) $E_G(B, \overline{B}) \leq \kappa$ for every $B \in \L$, and (iii) every odd set $B^\prime$ not containing $s$ with $E_G(B^\prime, \overline{B^\prime}) \leq \kappa$ intersects with some $B \in \L$.
  Moreover, $\L$ is of the form $\L_{1} \cup \cdots \cup \L_{O(\log{n})}$, where each $\L_i$ contains disjoint vertex subsets.
  The algorithm is deterministic if $\mathcal{G}$ is.
  \label{lemma:find-odd-sets}
\end{lemma}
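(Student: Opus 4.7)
My plan is to build a $\kappa$-partial Gomory-Hu tree $\mathcal{T}$ of $G$ using $\mathcal{G}$ and extract the collection $\L$ from $\mathcal{T}$'s structure. First I would invoke $\mathcal{G}$ to build $\mathcal{T}$ on a partition $\mathcal{U}$ of $V$ in time $T(n, m, \kappa)$, and root $\mathcal{T}$ at the part $U_s$ containing $s$. The starting observation is that every tree edge $e$ of $\mathcal{T}$ has weight $w_e \leq \kappa$ by the defining property of a $\kappa$-partial Gomory-Hu tree (any two vertices in distinct parts have min-cut at most $\kappa$, otherwise they would lie in the same part). Consequently, each subtree set $S_e \subseteq V \setminus U_s$ (the union of parts below edge $e$) satisfies $E_G(S_e, \overline{S_e}) = w_e \leq \kappa$ automatically, so it is a valid candidate odd set whenever $|S_e|$ is odd and at least $3$. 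These $O(n)$ subtree candidates form the core of $\L$, and to handle parity-related edge cases I would additionally include individual parts $U_i \neq U_s$ of odd size $\geq 3$ whose external cut $E_G(U_i, V \setminus U_i)$ is at most $\kappa$; computing these external cut values takes $O(m)$ total time since each edge of $G$ contributes to at most two parts' external cuts.

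The hitting property (iii) rests on the structural fact that any $B'$ satisfying the hypothesis must be a union of whole parts of $\mathcal{T}$; otherwise $B'$ would properly split some part $U_i$, yielding $E_G(B' \cap U_i, U_i \setminus B') \leq E_G(B', \overline{B'}) \leq \kappa$, contradicting the partial Gomory-Hu guarantee that any two vertices in $U_i$ have min-cut exceeding $\kappa$. Viewing $B'$ as a subset of tree nodes, I would prove the hitting property by a parity case analysis on the subtree sizes along the root-to-$B'$ paths: either the subtree of an outermost $B'$-part $U_j$ has odd size, in which case $S_{U_j}$ is a subtree candidate hitting $B'$; or some descendant subtree has odd size and still intersects $B'$; or, as a fallback, $U_j$ itself has odd size and the bound above gives $E_G(U_j, V \setminus U_j) \leq E_G(B', \overline{B'}) \leq \kappa$ (using $B' \setminus U_j$ contributes no extra boundary when $U_j$ is outermost and $B'$ is part-respecting), so $U_j$ enters as an individual-part candidate. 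A careful walk over the laminar structure of subtrees combined with the structural lemma covers all parity configurations.

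For the $O(\log n)$-level disjoint decomposition I would apply a heavy-path decomposition to the rooted $\mathcal{T}$: subtree candidates lying on a single heavy path are nested and can be assigned to different levels by depth in the auxiliary heavy-path tree (which has depth $O(\log n)$), while subtree candidates on distinct heavy paths are automatically disjoint; the pairwise-disjoint individual-part candidates fill a final level. Since each tree node lies on $O(\log n)$ heavy-path segments along any root-to-leaf descent, this yields $|\L| = \otilde(n)$ spread across $O(\log n)$ levels of disjoint sets, in $T(n, m, \kappa) + \otilde(m)$ total time. The main obstacle will be the parity case analysis verifying the hitting property, particularly ensuring that when no subtree candidate suffices (e.g., all relevant subtrees on the ancestor chain have even size) the augmented individual-part candidates correctly cover $B'$ with the external cut bound maintained. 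Determinism is preserved throughout since everything after the initial Gomory-Hu tree construction is deterministic bookkeeping over $\mathcal{T}$.
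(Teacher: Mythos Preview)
Your approach diverges from the paper's in an essential way and does not work as stated. The paper simply defers to \cite[Algorithm 3]{AhnG14}, which is \emph{iterative}: it builds a $\kappa$-partial Gomory-Hu tree, extracts a family of pairwise-disjoint odd sets (one round's $\L_i$), contracts each of them to a single vertex, and repeats on the contracted graph. Termination in $O(\log n)$ rounds is what yields both the $\widetilde{O}(T(n,m,\kappa))$ runtime and the decomposition $\L=\L_1\cup\cdots\cup\L_{O(\log n)}$ into disjoint layers. You attempt to replace this with a single Gomory-Hu tree computation plus heavy-path bookkeeping; both the hitting property (iii) and the layer decomposition break.

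For (iii), your fallback clause is incorrect: the bound $E_G(U_j,\overline{U_j})\le E_G(B',\overline{B'})$ need not hold, and in fact $\L$ can be empty while a valid $B'$ exists. Take $\kappa=2$, vertices $\{s,a,b,c,d\}$, edges $s\text{--}a$, $a\text{--}c$, $a\text{--}d$ each of weight $1$, and $a\text{--}b$ of weight $\kappa+1$. The $\kappa$-partial Gomory-Hu tree has parts $\{s\},\{a,b\},\{c\},\{d\}$ with $\{s\}$ adjacent to $\{a,b\}$ and $\{a,b\}$ adjacent to $\{c\}$ and $\{d\}$, all tree edges of weight $1$. Rooted at $\{s\}$, the subtree sizes are $1,1,4$, so there is no odd subtree of size $\ge 3$; no individual part has odd size $\ge 3$ either. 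Hence your $\L=\emptyset$. Yet $B'=\{a,b,c\}$ is an odd set with $s\notin B'$ and $E_G(B',\overline{B'})=2\le\kappa$, and it is a union of parts as your own structural argument predicts. The iterative algorithm handles this because after contracting the (size-$1$, odd) leaf subtrees in round one, new odd subtrees of size $\ge 3$ appear in later rounds.

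For the layer decomposition, heavy-path depth does not control nesting depth of subtree sets. Two nodes on the same heavy path share the same heavy-path-tree depth, yet their subtrees are strictly nested, so assigning levels by heavy-path depth does not produce disjoint families; a single heavy path of length $k$ already forces a chain of $k$ nested subtrees. The $O(\log n)$ bound in the lemma genuinely comes from the number of contraction rounds, not from any static decomposition of one tree.
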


\begin{proof}
  The lemma is the same as \cite[Lemma 19]{AhnG14} modulo the additional guarantee that $\L$ is the union $O(\log{n})$ collections of disjoint sets.
  This is manifest from the implementation of \cite[Algorithm 3]{AhnG14}, using that it has $O(\log{n})$ iterations, and in each of them the sets it found are disjoint.
\end{proof}

We deliberately make the statement of \cref{lemma:find-odd-sets} flexible to the choice of partial Gomory-Hu tree algorithm, given that faster algorithms were developed recently.
\cite{AhnG14} used the following algorithm in the regime where $\kappa$ is small.

\begin{lemma}[\cite{HariharanKPB07,HariharanKP07}]
  There is a randomized $\widetilde{O}(m + n\kappa^2)$ time algorithm for constructing a $\kappa$-partial Gomory-Hu tree on an $m$-edge $n$-vertex unweighted graph.
  \label{lemma:near-linear-time-gomory-hu-tree}
\end{lemma}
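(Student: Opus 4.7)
The plan is to invoke the construction of \cite{HariharanKPB07,HariharanKP07} essentially as a black box, but I will sketch the key ideas of how I would approach it, since the lemma is a direct citation to prior work. The overall strategy is to combine two standard ingredients: (i) a cut-preserving sparsifier that reduces the edge count to $\widetilde{O}(n\kappa)$ while preserving all cuts of value at most $\kappa$ faithfully, and (ii) a Gusfield-style recursive contraction scheme for building the Gomory-Hu tree, in which each individual min-cut call is performed by an algorithm whose runtime scales favorably with the cut value $\kappa$.

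More concretely, I would first apply a Benczur-Karger-style sparsifier, computable in $\widetilde{O}(m)$ time, that reduces the graph to $\widetilde{O}(n\kappa)$ edges while preserving every cut of value $\leq \kappa$ up to a $(1 \pm o(1))$ factor; on an unweighted graph the small cuts are preserved exactly after a slight adjustment of thresholds. Next, I would run Gusfield's variant of the Gomory-Hu algorithm, which performs $n - 1$ $s$-$t$ min-cut computations in the \emph{uncontracted} graph and thus avoids maintaining contracted graphs explicitly. For each min-cut call, I would use a min-cut algorithm tailored to cuts of value at most $\kappa$: when the current $s$-$t$ min-cut exceeds $\kappa$, abort the computation and keep $s$ and $t$ in the same cluster $U_i$; otherwise produce the actual minimum cut. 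Using an algorithm that finds an $s$-$t$ min-cut of value at most $\kappa$ in $\widetilde{O}(m + n\kappa^2)$ time (e.g., via augmenting paths with bounded search depth, or via the tree-packing approach of Gabow which packs $\kappa$ edge-disjoint spanning trees in $\widetilde{O}(m\kappa)$ time and then locates a 1-respecting minimum cut in each) would yield an overall bound of $\widetilde{O}(n \cdot (m + n\kappa^2))$ before sparsification, which the sparsifier reduces to $\widetilde{O}(m + n\kappa^2)$ once the graph has been replaced by its $\widetilde{O}(n\kappa)$-edge cut sparsifier and the per-call cost amortized correctly.

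The main obstacle I anticipate is obtaining the claimed $n\kappa^2$ dependence rather than the naive $n^2\kappa$ or $n m \kappa$ bounds. This requires carefully amortizing work across the $n - 1$ min-cut calls: the cited works achieve this through a randomized selection of cut pairs and a persistent data structure that reuses tree-packing information across calls, so that the \emph{total} work for all $n-1$ bounded min-cut computations is $\widetilde{O}(n\kappa^2)$ rather than the per-call cost multiplied by $n$. Correctly handling the randomness to ensure correctness with high probability, while preserving the partial-tree guarantees (same-cluster vertices have min-cut $> \kappa$, different-cluster vertices have their exact min-cut encoded by the tree), is the most delicate part. Since the result is already established in \cite{HariharanKPB07,HariharanKP07}, I would not redo this analysis and would instead cite it directly.
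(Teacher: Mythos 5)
The paper offers no proof of this lemma: it is stated purely as a citation to \cite{HariharanKPB07,HariharanKP07} and used as a black box, which is exactly what your proposal ultimately does as well, so the approaches coincide. (One minor note on your supporting sketch: for preserving cuts of value at most $\kappa$ \emph{exactly} one would use a Nagamochi--Ibaraki sparse certificate with $O(n\kappa)$ edges rather than a Bencz\'ur--Karger sparsifier, which only preserves cuts approximately; but since you defer to the citation, nothing hinges on this.)
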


Alternatively, we can use the recent almost-linear time Gomory-Hu tree algorithm which does not depend on $\kappa$ at the cost of having a subpolynomial factor $m^{o(1)}$.
Note that contracting edges greater than $\kappa$ in a Gomory-Hu tree trivially gives a $\kappa$-partial Gomory-Hu tree.

\begin{lemma}[\cite{AbboudLPS23}]
  There is a randomized $m^{1+o(1)}$ time algorithm that constructs a Gomory-Hu tree on an $m$-edge weighted graph w.h.p.
  \label{lemma:almost-linear-time-gomory-hu-tree}
\end{lemma}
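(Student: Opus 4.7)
The plan is to combine two deep recent developments: the \emph{isolating cuts} framework (Li-Panigrahi, Abboud-Krauthgamer-Li-Panigrahi-Saranurak-Trabelsi) which reduces Gomory-Hu tree construction to polylogarithmically many max-flow calls in a sequence of shrinking instances, together with an almost-linear time max-flow/min-cut subroutine for weighted graphs such as that of \cite{ChenKLPGS22,BrandCLPGSS23}. Concretely, the target is to build the Gomory-Hu tree using $m^{o(1)}$ total work on top of $n^{o(1)}$-many maximum-flow computations, each of cost $m^{1+o(1)}$.

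The first key step is the isolating cuts lemma: given a set $T$ of ``terminals'' in a weighted graph, one can, using $O(\log |T|)$ maximum $s$-$t$ flow computations on graphs of total size $\widetilde{O}(m)$, simultaneously compute for every $t \in T$ a minimum cut separating $t$ from $T \setminus \{t\}$ (an \emph{isolating cut} for $t$). The proof is by random bit-by-bit bipartition of $T$: for the $i$-th bit, contract all terminals with bit $0$ into a source $s_i$ and those with bit $1$ into a sink $t_i$, compute the global min-$s_i$-$t_i$ cut, and for each terminal intersect over all bits where it is on the ``minority'' side. A union bound argument shows this recovers the correct isolating cut for each terminal w.h.p.

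The second step is a divide-and-conquer Gomory-Hu algorithm that uses isolating cuts as its inner loop. Starting from $V$ as the single super-node, at each level one samples a random terminal set $T$ of appropriate density, computes all isolating cuts, and uses the resulting cut structure to refine the current Gomory-Hu partial tree (connecting each terminal to its ``isolating side'' with an edge of weight equal to the isolating cut value). This either makes the recursion geometrically smaller or forces a structural ``expander-like'' instance on which a different subroutine (based on expander decomposition and the Karger-Stein-style contraction on the Steiner mincut) applies. Designing the recursion so that only $n^{o(1)}$ total max-flow calls occur is the crux; one standard way is to interleave with an expander decomposition that bounds the number of recursive subproblems.

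The main obstacle is precisely this recursion/potential analysis: one must argue that the total work across all levels telescopes to $m^{1+o(1)}$, despite the fact that naive implementations incur at least one max-flow call per ``Steiner vertex,'' which could be $\Omega(n)$. Controlling this requires (i) that each isolating-cuts invocation reduces a carefully chosen potential (e.g., the number of distinct Gomory-Hu tree edges still to be discovered) by a large factor, and (ii) that the expander-decomposition fallback works on weighted graphs with only polylogarithmic overhead. Handling edge weights rather than just unit capacities is another subtlety: the isolating cuts construction goes through verbatim, but the expander decomposition and the charging argument must be stated in a capacity-weighted sense. Modulo these technical ingredients, plugging in the $m^{1+o(1)}$ weighted max-flow of \cite{ChenKLPGS22,BrandCLPGSS23} as a black box yields the claimed $m^{1+o(1)}$ overall running time.
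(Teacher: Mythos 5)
This lemma is not proved in the paper at all: it is imported verbatim as a black box from \cite{AbboudLPS23} (the paper's ``proof'' is the citation), and the only thing the surrounding text does with it is contract the heavy edges of the resulting tree to obtain a $\kappa$-partial Gomory--Hu tree for use in \cref{lemma:find-odd-sets}. So the task here was to invoke the result, not to re-derive it. Your sketch is a reasonable survey-level roadmap of how the cited result is actually obtained (isolating cuts, an almost-linear weighted max-flow oracle, and a recursive refinement of a partial Gomory--Hu tree), but as a proof it has a genuine gap that you yourself flag: the entire difficulty of \cite{AbboudLPS23} is the recursion/potential analysis guaranteeing that only $n^{o(1)}$ max-flow calls are made in total and that the weighted/expander-decomposition machinery goes through, and your write-up defers exactly these steps (``modulo these technical ingredients''). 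A proposal that black-boxes the crux of the theorem it is trying to prove does not constitute a proof; if the intent is to use the result, the honest move is to cite it, which is what the paper does.

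Two smaller technical inaccuracies in your description of the isolating-cuts lemma are worth flagging. First, the bit-partition of the terminal set is \emph{deterministic}: one assigns each terminal a distinct $\lceil\log|T|\rceil$-bit label and, for each bit position, separates the two label classes by one max-flow; correctness follows from submodularity/posimodularity of cuts, not from a union bound over random labels, and the construction is not ``w.h.p.''\ at that step. Second, the minimum isolating cut for a terminal $t$ is not obtained by ``intersecting over all bits where $t$ is on the minority side''; rather, one takes the connected component $U_t$ containing $t$ after removing the union of the $O(\log|T|)$ cuts (which isolates $t$ from all other terminals), contracts $V\setminus U_t$, and runs one additional max-flow per terminal on these contracted instances, whose total size is $O(m)$ because the sets $U_t$ are disjoint. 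Omitting this second phase changes both the correctness argument and the running-time accounting.
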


\begin{lemma}[{\cite[Theorem 6 and Lemma 17]{AhnG14}}]
  There are randomized $\widetilde{O}(m\eps^{-1} + n\eps^{-9})$-evaluator and $\widehat{O}(m\eps^{-1})$-evaluator for $\widetilde{\mathcal{M}_{G,\eps}}$.
  \label{lemma:evaluators}
\end{lemma}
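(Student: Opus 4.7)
The plan is to follow the recipe of Ahn and Guha that is already laid out in the preceding discussion, with the only real task being to make the $\eps$-dependence in the runtime explicit. First I would handle the degree constraints separately: the quantities $\lambda_{\bx, v} = \bx(v)/\widetilde{b}_v$ can be computed for all $v \in V$ in $O(m)$ time by a single pass over the edges, which immediately identifies each maximally-violated degree constraint and its value.

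For the odd-set constraints, I would first invoke \cite[Lemma 17]{AhnG14} to reduce to the regime where $\lambda_{\bx, \O_{G,\eps}} \geq 1 + 3\eps$; outside this regime the degree constraints already determine $\lambda_{\bx}$ up to the required tolerance. In the interesting regime, \cite[Theorem 5]{AhnG14} guarantees that $\L_{\bx}$ forms a laminar family, so it suffices to \emph{find} this family. I would perform a binary search over an estimate $\widetilde{\lambda}$ of $\lambda_{\bx, \O_{G,\eps}}$ until $\widetilde{\lambda} - \eps^3/100 \leq \lambda_{\bx, \O_{G,\eps}} \leq \widetilde{\lambda}$; since $\lambda_{\bx,\O_{G,\eps}}$ is polynomially bounded, this takes $\widetilde{O}(1)$ iterations. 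Within each iteration, I would loop over each odd $3 \leq \ell \leq 1/\eps$, construct the auxiliary graph $G_\varphi(\ell, \widetilde{\lambda})$ of \cite{AhnG14} (in $O(m)$ time and with $O(\min\{m, n\eps^{-5}\})$ edges), and apply \cref{lemma:find-odd-sets} with $\kappa \defeq \kappa(\ell) = O(\eps^{-4})$ to obtain a collection $\L$ of odd sets containing every size-$\ell$ member of $\L_{\bx}$. By the decomposition $\L = \L_1 \cup \cdots \cup \L_{O(\log n)}$ into disjoint families, I can filter $\L$ down to the actual size-$\ell$ members of $\L_{\bx}$ in $\widetilde{O}(m)$ time by computing $\lambda_{\bx, B}$ for all $B$ one layer at a time.

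The two runtimes then follow simply by plugging in the two available partial Gomory-Hu tree subroutines. Using \cref{lemma:near-linear-time-gomory-hu-tree} gives $\widetilde{O}(m + n\kappa^2) = \widetilde{O}(m + n\eps^{-8})$ per invocation of \cref{lemma:find-odd-sets}, and summing over the $O(\eps^{-1})$ values of $\ell$ and $\widetilde{O}(1)$ binary-search rounds yields $\widetilde{O}(m\eps^{-1} + n\eps^{-9})$. Alternatively, using the almost-linear-time algorithm of \cref{lemma:almost-linear-time-gomory-hu-tree} (after contracting edges of weight $> \kappa$ in the resulting Gomory-Hu tree) gives $m^{1+o(1)}$ per call, hence $\widehat{O}(m\eps^{-1})$ in total.

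There is no genuine obstacle here: the structural content (that $\L_{\bx}$ is laminar, that each $B \in \L_{\bx}$ satisfies $\mathrm{cut}(B, \overline{B}) < \kappa(\ell)$ in $G_\varphi(\ell,\widetilde{\lambda})$, and that a partial Gomory-Hu tree recovers all small-cut odd sets) is inherited wholesale from \cite{AhnG14}. The only care needed is bookkeeping $\kappa = O(\eps^{-4})$ through the $n\kappa^2$ term of \cref{lemma:near-linear-time-gomory-hu-tree} and through the $O(\eps^{-1})$ outer loop over $\ell$, which together produce the stated $n\eps^{-9}$ overhead.
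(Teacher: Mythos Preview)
Your proposal is correct and follows essentially the same approach as the paper: binary search over $\widetilde{\lambda}$ in $\widetilde{O}(1)$ rounds, an outer loop over $O(\eps^{-1})$ values of $\ell$, and then plugging either \cref{lemma:near-linear-time-gomory-hu-tree} (yielding $\widetilde{O}(m+n\eps^{-8})$ per call) or \cref{lemma:almost-linear-time-gomory-hu-tree} (yielding $m^{1+o(1)}$ per call) into \cref{lemma:find-odd-sets}, with the reduction to $\lambda_{\bx,\O_{G,\eps}}\geq 1+3\eps$ handled via \cite[Lemma~17]{AhnG14}. The paper's proof is the same bookkeeping, just more terse.
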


\begin{proof}
  The runtime of the algorithms can be analyzed as follows.
  The binary search takes $\widetilde{O}(1)$ iterations.
  In each iteration, $O(\eps^{-1})$ values of $\ell$ are enumerated, and for each of them we either use \cref{lemma:near-linear-time-gomory-hu-tree} for constructing partial Gomory-Hu tree in \cref{lemma:find-odd-sets}, resulting in an $\widetilde{O}(m + n\eps^{-8})$ time algorithm for computing size-$\ell$ sets in $\L_{\bx}$, or we use \cref{lemma:almost-linear-time-gomory-hu-tree} and get an $\widehat{O}(m)$ time algorithm.
  
  By \cite[Lemma 17]{AhnG14}, if $\lambda_{\bx} > 1+8\eps$ but $\lambda_{\bx,O_{G,\eps}} < 1+3\eps$, then $\L_{\bx} = \emptyset$.
  Since the above algorithms based on \cite[Theorem 6]{AhnG14} correctly identify $\L_{\bx}$ when $\lambda_{\bx,O_{G,\eps}} \geq 1 + 3\eps$, by checking whether there is indeed an odd set $B$ returned with $\lambda_{\bx,B} \geq 1 + 3\eps$ we can deduce whether we should return an empty set or not.
  Similarly, we can also deduce whether $\lambda_{\bx} \leq 1 + 8\eps$ by inspecting if there is a vertex or returned odd set which is violated by a $1 + 8\eps$ factor by $\bx$.
\end{proof}

With \cref{lemma:main-appendix,lemma:evaluators}, \cref{lemma:main-general} follow by appropriately adjusting the parameter $\eps$.

\begin{remark}
  From the discussion above in combination with the deterministic rounding algorithm in \cref{sec:rounding} we can also see that to obtain a deterministic $\widetilde{O}_{\eps}(1)$ update time decremental matching algorithm, it suffices to get a deterministic, $\widetilde{O}_{\kappa}(m)$ time algorithm for constructing $\kappa$-partial Gomory-Hu tree.
  We leave this as an interesting open question and future direction.
\end{remark}

\section{Omitted Proofs in \texorpdfstring{\cref{sec:rounding}}{Section 6}} \label{appendix:rounding-proofs}

In this section we provide proofs of the following lemmas.

\Support*

\begin{proof}
  By reverse induction on $i$ we show that $\supp(\bx^{(i)}) \subseteq \supp(\bx^{(i+1)}) \subseteq \supp(\bx)$.
  From \eqref{eq:def-x} we have $\supp(\bx^{(i)}) = F_i \cup \bigcup_{0 \leq j \leq L_{\min}}\supp_j(\bx) \cup \bigcup_{L_{\min} + 1 \leq j \leq i}E_j$ and thus it suffices to show that $F_i \subseteq \supp(\bx)$ and $E_j \subseteq \supp(\bx)$ for all $L_{\min} + 1 \leq j \leq i$ at all times.
  That $E_j \subseteq \supp(\bx)$ is immediate as $E_j$ is set to $\supp_j(\bx)$ in $\texttt{Rebuild()}$, and in each subsequent update we will remove $e$ from $E_j$ if $\bx_e$ is changed.
  By the property of $\texttt{degree-split}$, we have $F_i \subseteq E_{i+1} \cup F_{i+1}$, which by the inductive hypothesis satisfies $E_{i+1} \cup F_{i+1} \subseteq \supp(\bx^{(i+1)}) \subseteq \supp(\bx)$, and thus $F_i \subseteq \supp(\bx^{(i+1)}) \subseteq \supp(\bx)$ after a call to $\texttt{Rebuild()}$.
  Similarly, in each update after the rebuild, we will remove $e$ from $F_i$ whose $\bx_e$ is changed, and thus the containment is maintained.
  This proves the lemma.
\end{proof}

\SizeOfMatching*

\begin{proof}
  Choosing the subgraph returned by \texttt{degree-split} with large weight ensures that, right after a call to $\texttt{Rebuild(}i\texttt{)}$, we have $\bw(F_{i-1}) \geq \frac{\bw(E_i) + \bw(F_i)}{2}$ and thus $\bw^\top \bx^{(i-1)} \geq \bw^\top \bx^{(i)}$ by that
  \begin{equation}
    \bx^{(i-1)}_e - \bx^{(i)}_e = F_{i-1}(e) \cdot 2^{-(i-1)} - \left(F_{i}(e) + E_{i}(e)\right) \cdot 2^{-i}.
    \label{eq:diff}
  \end{equation}
  On the other hand, there are at most $2^{i-2} \cdot \frac{\eps \cdot \bw^\top \bx}{L_{\max} \cdot W}$ updates
  between two calls to $\texttt{Rebuild(}i\texttt{)}$ and thus at most $2^{i-1} \cdot \frac{\eps \cdot \bw^\top \bx}{L_{\max} \cdot W}$ edges are deleted from $F_{i-1}$, decreasing $\bw^\top \bx^{(i-1)}$ by at most $\frac{\eps}{L_{\max}} \cdot \bw^\top \bx$.
  By \eqref{eq:diff}, any change to $E_j$ for $j \neq i$ does not affect the difference between $\bw^\top \bx^{(i-1)}$ and $\bw^\top \bx^{(i)}$.
  Furthermore, $E_i$ and $F_i$ can only decrease until $\texttt{Rebuild(}i\texttt{)}$ is called (if $\texttt{Rebuild(}i+1\texttt{)}$ is called and $F_i$ increases consequently, then $\texttt{Rebuild(}i\texttt{)}$ will be called as well).
  This shows except for operations that remove edges from $F_{i-1}$, $\bw^\top \bx^{(i-1)} - \bw^\top \bx^{(i)}$ cannot decrease, and as such
  \begin{equation}
    \bw^\top \bx^{(i-1)} \geq \bw^\top \bx^{(i)} - \frac{\eps}{L_{\max}} \cdot \bw^\top \bx
    \label{eq:weight-decrease}
  \end{equation}
  holds at all times for each $i \in \{L_{\min} + 1, \ldots, L_{\max}\}$.
  Chaining \eqref{eq:weight-decrease} for all $i$  and noticing that $\bx = \bx^{(L_{\max})}$ and $\widetilde{\bx} \defeq \bx^{(L_{\min})}$ conclude the proof.
\end{proof}

\UpdateTime*

\begin{proof}
  We first show using a similar argument as the proof of \cref{lemma:degree} that $|F_i| \leq O(2^i \cdot \norm*{\bx}_1)$.
  By backward induction on $i$ we prove
  \begin{equation}
    |F_i| \leq \left\lceil 2^{i} \cdot \sum_{j=i+1}^{L_{\max}} |S_j| \cdot 2^{-j} \right\rceil.
    \label{eq:induction-2}
  \end{equation}
  For $i = L_{\max}$ this is trivially true.
  For $i < L_{\max}$, right after a call to $\texttt{Rebuild(}i+1\texttt{)}$ when $E_{i+1} \gets S_{i+1}$, by Property \labelcref{item:edge-set-size} of the subroutine $\texttt{degree-split}$ we have
  \[
    |F_i| \leq \left\lceil\frac{|S_{i+1}| + |F_{i+1}|}{2}\right\rceil \leq \left\lceil\frac{|S_{i+1}| + \left\lceil 2^{i+1} \cdot \sum_{j=i+2}^{L_{\max}}|S_j| \cdot 2^{-j}\right\rceil}{2}\right\rceil \leq \left\lceil 2^{i} \cdot \sum_{j=i+1}^{L_{\max}}|S_j| \cdot 2^{-j} \right\rceil.
  \]
  On the other hand, for an update after which $\texttt{Rebuild(}i+1\texttt{)}$ is not called, as in the proof of \cref{lemma:degree}, if the right-hand side decreases by one, then we will remove an extra edge from $F_i$ and thus the inequality still holds.
  This proves the \eqref{eq:induction-2} which in turn implies $|F_i| \leq O(2^i \cdot \norm*{\bx}_1)$.

  This shows that the call to $\texttt{degree-split}$ in $\texttt{Rebuild(}i\texttt{)}$ takes $O(2^i \cdot \norm*{\bx}_1)$ time.
  Since $\texttt{Rebuild(}i\texttt{)}$ causes $\texttt{Rebuild(}i-1\texttt{)}$, the total running time of $\texttt{Rebuild(}i\texttt{)}$ is $\sum_{j=0}^{i}O(2^j \cdot \norm*{\bx}_1) = O(2^i \cdot \norm*{\bx}_1)$.
  Because $\texttt{Rebuild(}i\texttt{)}$ is called once every $2^{i-2} \cdot \frac{\eps \cdot \bw^\top \bx}{L_{\max} \cdot W} \geq 2^{i-2} \cdot \frac{\eps \cdot \norm*{\bx}_1}{L_{\max} \cdot W}$ updates, the amortized update time of \cref{alg:det-rounding} is $\widetilde{O}(W \cdot \eps^{-2} \cdot L_{\max})$.
\end{proof}

\end{document}